\renewcommand{\baselinestretch}{1.2}  
\newtheorem{proposition}{Proposition}
\newcolumntype{L}[1]{>{\raggedright\let\newline\\\arraybackslash\hspace{0pt}}m{#1}}
\newcolumntype{C}[1]{>{\centering\let\newline\\\arraybackslash\hspace{0pt}}m{#1}}
\newcolumntype{R}[1]{>{\raggedleft\let\newline\\\arraybackslash\hspace{0pt}}m{#1}}
\renewcommand{\headrulewidth}{2pt}
\renewcommand{\footrulewidth}{0.1pt}
\renewcommand{\headrulewidth}{0pt} 
\renewcommand{\footrulewidth}{0pt}
\begin{document}

\frontmatter

\includepdf{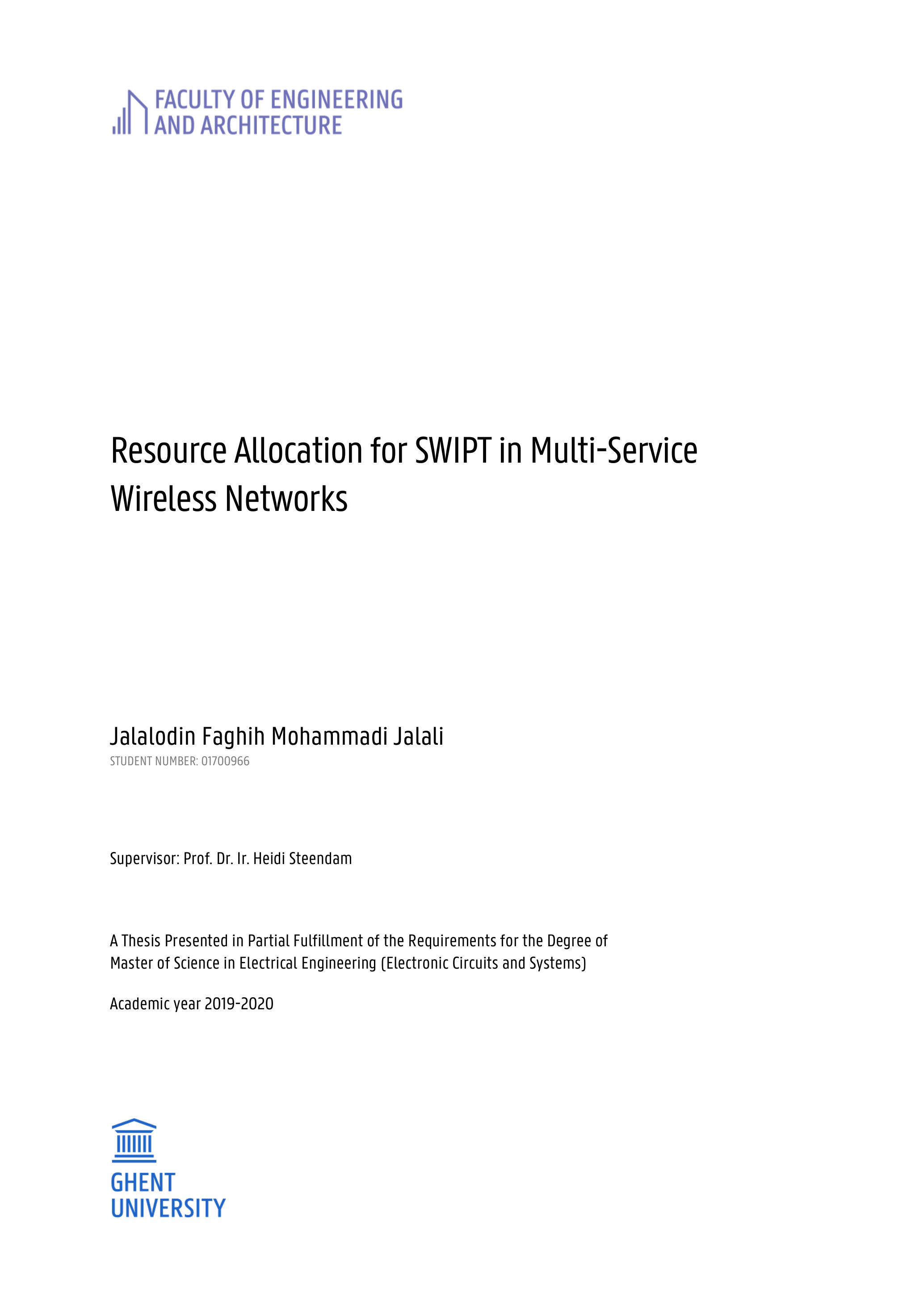}           
\newpage\thispagestyle{empty}\mbox{}  
\newpage\thispagestyle{empty}\mbox{}  
\newpage\thispagestyle{empty}\mbox{}  
\includepdf{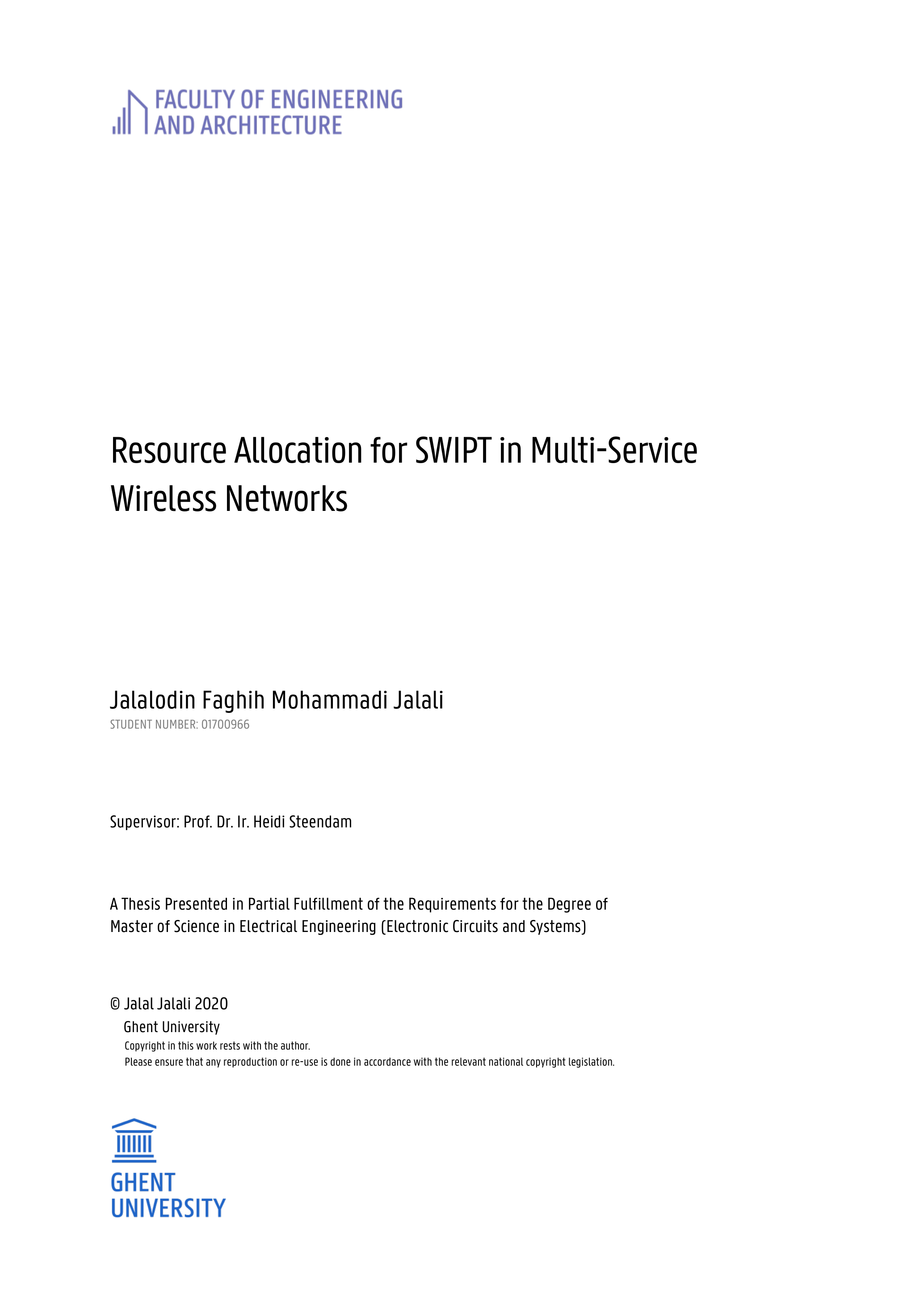}           
\newpage\thispagestyle{empty}\mbox{}  
\chapter*{Acknowledgments}

After an intensive year, I put the finishing touches on this Master's thesis. 
I have been able to discover new elements in various areas of the beautiful world of telecommunication. 
For this, I would like to sincerely thank a number of people who have supported me during this long journey.\\ 

First of all, I want to thank my supervisor, Prof. Dr. Ir. Heidi Steendam, for giving me the opportunity to work on this topic for my thesis. 
Her trust, guidance, and support helped me to successfully complete it.
I was fortunate for her supervision, especially her patience, understanding, and availability for discussing how to write scientific papers.\\

In addition, I express my gratitude to many friends and colleagues for our fruitful discussions and friendship. 
Among them, my special appreciation goes to my best friend Ata for his comments and invaluable knowledge, which reaped us significant research outputs. 
I am so lucky and honored to have amazing friends, including Pandidan, Sirus, Khasmakhu, Mamadu, Hani, Arakani, Mahboubeh, Keyvan, and Dennis, who gave me crucial feedback. 
Tinus, Wim, Jacques, and Junior count among those amazing friends whose help and concern I will always remember.\\

Finally, I would like to express my most profound gratefulness to my beloved parents, Afsaneh and Nouri, who are always sources of strength and perseverance. 
I also want to thank my spiritual mother, Nancy, for being part of my life, giving me guidance and offering me help far from my home country.   
Lastly, I would like to thank my partner,
Mary Grace, for her unconditional trust, care, admiration, and love.

\vspace{2cm}
\noindent
\rightline{Jalal Jalali}
\rightline{   June 2020}
\newpage\thispagestyle{empty}\mbox{}  
\chapter*{Permission to Use Content}
The author grants permission for this Master's thesis to be used for consultation and for parts of it to be copied for personal use.
For all other uses, however, the copyright must be respected, particularly with regard to the obligation to explicitly credit the source when quoting the results.

\vspace{2cm}
\noindent
\rightline{Jalal Jalali}
\rightline{   June 2020}

\newpage
\thispagestyle{plain} 
\mbox{}
\chapter*{}

{
\setlength{\baselineskip}{14pt}
\setlength{\parindent}{0pt}
\setlength{\parskip}{8pt}

\begin{center}
\vspace{-15em}
\noindent 
\textbf{\huge
Resource Allocation for SWIPT in}

\textbf{\huge
Multi-Service Wireless Networks}

by 

Jalal JALALI

A Thesis Presented in Partial Fulfillment of the Requirements for the Degree of  \\Master of Science in Electrical Engineering

Academic year 2019-2020

Supervisor: Prof. Dr. Ir. Heidi STEENDAM\\

Faculty of Engineering and Architecture\\
Ghent University

Department of Telecommunications and Information Processing\\
Chairman: Prof. Dr. Ir. Joris WALRAEVENS
\\ 

\end{center}

\vspace{2em}
\textbf{\Large{Summary}}

\vspace{0.7em}
The novel resource allocation for simultaneous wireless information and power transfer (SWIPT) is presented as a means of not only helping to communicate and access information with increasing efficiency in the next generation of mobile data networks, but also contributing to minimizing a network's overall power consumption by providing a green energy source.
First, a unique architecture is proposed that harvests energy from an access point (AP) without the receiver needing a splitter.
In the proposed system model, a portion of the spectrum is used for information decoding (ID) while the remaining portion is exploited for energy harvesting (EH) in an orthogonal frequency division multiple access (OFDMA) network. 
To investigate the performance gain, an optimization problem is formulated that maximizes the harvested energy of a multi-user single-cell OFDMA downlink (DL) network with SWIPT and also satisfies a minimum data-rate requirement for all users.
A locally optimal solution for the underlying problem, which is essentially non-convex due to the coupling of the integer variable, is obtained by using optimization tools. 
Second, the proposed system model is improved in order to investigate the resource allocation problem of needing to maximize throughput based on the separated receiver architecture in an OFDMA multi-user multi-cell system that uses SWIPT.
The resulting problem, which jointly optimizes the subcarrier assignment and power allocation, is a mixed-integer non-linear problem (MINLP) that is difficult to solve.
Third, a state-of-the-art harvesting technique at the receiver that is based on receiver antenna selection with a co-located architecture is explored to optimize the energy efficiency (EE) of a SWIPT-enabled multi-cell multi-user OFDMA network. 
This is referred to as a “generalized antenna-switching technique”.
Extensive simulation demonstrates the superiority of the proposed methodologies and presents interesting results.

\vspace{2em}
\textbf{\Large{Keywords}}

\vspace{0.7em}
D.C. programming, energy efficiency, energy harvesting, green communication, majorization minimization, MINLP, multi-user communication, non-convex optimization, OFDMA, resource allocation, SWIPT, throughput.

}
\newpage
\newpage\thispagestyle{empty}\mbox{}  
\tableofcontents                      
\listoffigures                        

\chapter{Abbreviations}
\fancyhf{}
\renewcommand{\headrulewidth}{2pt}
\fancyhead[LE,RO]{\thepage}
\fancyhead[RE]{\textit{ \nouppercase{Abbreviations}} }
\fancyhead[RE]{\textit{ \nouppercase{Abbreviations}} }
\fancyhead[LO]{\textit{ \nouppercase{\rightmark}} }
\renewcommand{\footrulewidth}{0.1pt}
\fancyfoot[CE,CO]{\nouppercase{Abbreviations}}
\fancyfoot[LE,RO]{JFMJ}

\begin{flushleft}
\vspace{-20mm}
\renewcommand{\baselinestretch}{1.5}
\normalsize
\hspace{20mm}
\begin{tabular}{ll}
5G    & Fifth Generation \\
6G    & Sixth Generation \\
ABCS  & Ambient Backscatter Communication System \\
AF    & Amplify-and-Forward \\
AP    & Access Point \\
AS    & Antenna Switching\\ 
AWGN  & Additive White Gaussian Noise \\
BS    & Base Station \\
CRN   & Cognitive Radio Network \\ 
CSI   & Channel State Information \\ 
D.C.  & Difference of Two Convex Functions \\
D2D   & Device-to-Device \\
DC    & Direct Current \\
DF    & Decode-and-Forward \\
DL    & Downlink \\
EE    & Energy Efficiency \\
EH    & Energy Harvesting\\
GI    & Guard Interval\\
ICT   & Information and Communication Technology\\
ID    & Information Decoding \\
IFFT  & Inverse Fast Fourier Transform \\
IoT   & Internet of Things \\
ISI   & Inter-Symbol Interference\\ 
KKT   & Karush-Kuhn-Tucker\\
LoS   & Line of Sight \\
LTE   & Long Term Evolution\\
MEC   & Mobile Edge Computing \\
\end{tabular}
\end{flushleft}

\newpage

\begin{flushleft}
\renewcommand{\baselinestretch}{1.5}
\normalsize
\hspace{20mm}
\begin{tabular}{ll}
MILP  & Mixed Integer Linear Programming \\
MIMO  & Multiple-Input Multiple-Output \\
MINLP & Mixed Integer Non-Linear Programming \\
MISO  & Multiple-Input Single-Output \\
MM    & Majorization Minimization \\
NOMA  & Non-Orthogonal Multiple Access \\
OFDM  & Orthogonal Frequency Division Multiplexing \\
OFDMA & Orthogonal Frequency Division Multiple Access \\
PS    & Power Switching \\
QoS   & Quality of Service\\
RF    & Radio Frequency \\
RFID  & RF Identification \\
SBS   & Small Base Station \\
SE    & Spectral Efficiency \\ 
SISO  & Single-Input Single-Output \\ 
SNIR  & Signal-to-Interference plus Noise Ratio \\
SNR   & Signal-to-Noise Ratio \\
SWIPT & Simultaneous Wireless Information and Power Transfer \\
SDMA  & Spatial Division Multiple Access \\
TDMA  & Time Division Multiple Access \\
TS    & Time Switching\\ 
UAV   &  Unmanned Aircraft Vehicles \\
UE    & User Equipment \\
UL    & Uplink \\
WDT   & Wireless Data Transfer \\
WPBC  & Wirelessly Powered Backscatter Communication \\
WPCN  & Wirelessly Powered Communication Network \\
WPT   & Wireless Power Transfer \\
WSN   & Wireless Sensor Network \\
\end{tabular}
\end{flushleft}

\chapter{Operators and Symbols}
\fancyhf{}
\renewcommand{\headrulewidth}{2pt}
\fancyhead[LE,RO]{\thepage}
\fancyhead[RE]{\textit{ \nouppercase{Operators and Symbols}} }
\fancyhead[RE]{\textit{ \nouppercase{Operators and Symbols}} }
\fancyhead[LO]{\textit{ \nouppercase{\rightmark}} }
\renewcommand{\footrulewidth}{0.1pt}
\fancyfoot[CE,CO]{\nouppercase{Operators and Symbols}}
\fancyfoot[LE,RO]{JFMJ}

\textbf{Operators}

\begin{flushleft}
\renewcommand{\baselinestretch}{1.5}
\normalsize
\hspace{5mm}
\begin{tabular}{ll}
    $|\mathcal{A}|$ & Cardinality of set $\mathcal{A}$\\
    $|x|$ & Absolute value norm of $x$\\
    $[x]^+$ & max$\{0,x\}$ \\
    $[x]^{-1}$ & Matrix inverse\\
    $[x]^{T}$ & Matrix transpose  \\
    $(\text{·})^{(t)}$ & Value of a variable under the $t^{th}$ iteration\\
    $f(\text{·}): \mathcal{A} \rightarrow \mathcal{B}$ & $f$ is a function from set $\mathcal{A}$ into $\mathcal{B}$  \\
    $\inf f$ & Infimum of function $f$  \\
    $\sup f$ & Supremum of function $f$  \\
    $\min f$ & Minimum of function $f$  \\
    $\max f$ & Maximum of function $f$  \\
    $\nabla f$ & Gradient of function $f$  \\
    $\mathcal{L}(\text{·})$ & Lagrangian function  \\
    $\mathcal{D}(\text{·})$ & Dual function  \\
    $\sum$ & Summation operator\\
    $\mathcal{CN}(\mu,\sigma^{2})$ & Circularly symmetric Gaussian distribution with mean $\mu$ and variance $\sigma^{2}$\\ 
\end{tabular}
\end{flushleft}

\newpage
\textbf{Symbols}

\begin{flushleft}
\renewcommand{\baselinestretch}{1.5}
\normalsize
\hspace{10mm}
\begin{tabular}{ll}
    $p^*$  & Primal optimal value \\
    $d^*$  & Dual optimal value \\
    $\mathscr{B}$ & Bandwidth \\
    $\mathcal{N}$ & Number or subcarriers \\
    $\mathcal{K}$ & Number or users in the network \\
    $\mathcal{J}$ & Number or cells in the network \\
    $h$ & Downlink channel gain for the wireless information transfer\\
    $g$ & Downlink channel gain for the wireless information transfer\\
    $a$ & Binary subcarrier indicators\\
    $x$ & Binary antenna selection indicators\\
    $\textrm{EH}$ & Amount of the harvested energy \\
    $R$ & Amount of the data-rate \\
    $\eta$  & Amount of the energy efficiency\\
    $\epsilon$ & Power conversion efficiency of an energy harvesting device\\
    $\kappa$ & Power amplifier efficiency\\
    $\sigma^2$ & Noise power\\
    $p_{max}$ & Maximum transmit power \\
    $R_{min}$ & Minimum data-rate requirement \\
    EH$_{min}$ & Minimum harvested energy requirement \\
\end{tabular}
\end{flushleft}

\newpage\thispagestyle{empty}\mbox{}  
\newpage\thispagestyle{empty}\mbox{}  

\mainmatter

\begin{savequote}[0.55\linewidth]
``Invention is the most important product of man’s creative brain. The ultimate purpose is the complete mastery of mind over the material world, the harnessing of human nature to human needs.''
\qauthor{ –  Nikola Tesla}
\end{savequote}

\chapter{Introduction}\vspace{5mm}
\fancyhf{}
\renewcommand{\headrulewidth}{2pt}
\fancyhead[LE,RO]{\thepage}
\fancyhead[RE]{\textit{ \nouppercase{\leftmark}} }
\fancyhead[LO]{\textit{ \nouppercase{\rightmark}} }
\renewcommand{\footrulewidth}{0.1pt}
\fancyfoot[CE,CO]{\nouppercase{\leftmark}}
\fancyfoot[LE,RO]{JFMJ}
\label{chap:intro}

\vspace{15mm}

Wireless communication is a broad and dynamic field that is enjoying the rapid development of new technologies needed to cope with the massive growth in the number of wireless communication devices and many practical applications. 
In the past few decades, wireless networks and communication devices have become an indispensable part of modern life. 
The next generation of wireless networks (called “fifth generation” (5G) by the standardization community) is designed to communicate and access information more efficiently. 
5G is being deployed to provide high data-rates for mobile users, massive internet of things (IoT) applications, device-to-device (D2D) communications, and low latency or extreme real-time tactile communications with high availability based on real-time immediate interactions. 
High data-rate access is extremely important, but the huge amount of power consumed by modern communication applications and networks is a major factor in global warming. 
This has inspired the notion of green and sustainable radio communication.
New wireless ecosystems and data networks supporting higher system throughput with greater energy efficiency are needed.
At the same time, they must provide a variety of services such as wireless power transfer, mobile node positioning, cooperative sensing of the surrounding environment, and distributed processing of wireless audio and video signals.

In this regard, a series of procedures, specifications, requirements, and constraints concerning the delivery of each of these services is accomplished through gradual design. 
At best, this leads to ad-hoc treatment of just one of the many services because the lack of structure causes the wireless infrastructure to be poorly managed. 
In contrast, a unique shared infrastructure provides plenty of room to holistically analyze and more systematically design the complete wireless ecosystem and optimize a multi-service wireless network. 
This systematic approach to multi-service wireless networks is called  “X-service design”.
In it, all the services and (computational and radio) resources are optimized adaptively and controlled cooperatively.

Today's cellular networks with approximately 6 million macro-cells worldwide consume a peak rate of almost 12 billion watts – the majority of the world’s wireless information and communication technology (ICT) power consumption \cite{SWIPT_New_Paradigm_for_Green_Communications}.
How can we reduce this enormous consumption of power? 
By remembering Nikola Tesla's late-nineteenth-century dream of a “wirelessly powered world”. 
Radio signals convey power and can potentially also be used to deliver energy to remote devices. 
Tesla's seemingly far-fetched idea has recently triggered interest in wireless power transfer (WPT), which when coupled with information transfer is referred to as simultaneous wireless information and power transfer (SWIPT)~\cite{4595260}. 
WPT can be a by-product of wireless data transfer (WDT) networks in which devices capture ambient power, and thus contribute to minimizing a network's overall power consumption using green energy. 
This partially responds to the urgent question, although it is far from an ideal answer. 
If a certain quality of power transfer must be secured, base stations (BSs) can play an active role in power delivery and reserve part or all of their resources for this specific objective of reducing a network's overall power consumption. 
This is one example of positioning services \cite{6477849}. 

Let's think for a moment of a hypothetical meeting of the minds between Claude Shannon, the father of information theory, and Nikola Tesla.
Tesla tried to build a circuit to deliver power to a load wirelessly, Shannon wanted to use such a circuit just for sending information \cite{5513714}.
Path loss, energy harvester sensitivities that require a significant signal level, and the limits of radio transmit power mean that WPT can be effective only over distances like those found in ultra-dense networks \cite{7476821}.
Thus, network densification could be a point of convergence for positioning, WPT, and WDT networks – with the ultimate goal that could be described as “zero RF power” SWIPT-enabled networks. 
This is where Shannon meets Tesla.

\section{The Architecture of an RF Energy Harvesting Network}
Wireless communication systems equipped with energy harvesting (EH) receivers have increasingly been attracting attention \cite{Proceeding}.
A radio frequency (RF) EH device has a sustainable power supply from a radio environment that provides harvestable energy from RF signals for information processing and transmission.
A practical example of this is wireless sensor networks (WSNs), in which ambient energy is converted to electrical energy via an EH device – not only to enable a long period of operation in WSNs but also as an alternative to replacing the battery~\cite{Ambient_Harvest}.
However, traditional EH devices may not be appropriate for many applications due to their complex mechanical constraints such as form factor and cost, and they may not always be available in indoor environments.
Moreover, conventional EH approaches usually depend on renewable energy sources like solar, tide, wind, thermal, geothermal, and vibrations, which are usually unpredictable and hard to control~\cite{K_S}.
Hence, proactive WPT as an EH method that only needs an RF EH circuit and has low cost and small form factor should be considered when studying how to jointly design WPT and WDT in a SWIPT-enabled network~\cite{Fundamental,23}.

In fact, WPT presents a viable solution for facilitating sustainable communication networks serving energy-limited communication devices in which wireless devices can communicate through electromagnetic waves in the RF band.
In an RF energy harvesting mode, the range from 3kHz to as high as 300GHz is designated for radio signals to carry energy as electromagnetic radiation.
Thus, by recalling that the RF signals carry both information and energy, RF energy radiated by the transmitter(s) can be recycled at the receivers to prolong the network's lifetime~\cite{1}. 

In order to be able to harvest energy from an RF source, a general network architecture aided with a means of harvesting modules must be present. 
Figure (\ref{fig:1}), which was adapted directly from \cite{S_Harvest}, illustrates the block diagram of a network equipped with an RF EH device. 
As can be seen in figure (\ref{fig:1.1a}), the application module provides the data to be processed by a low-power microcontroller, while the low-power transceiver is employed to either transmit the processed information or to receive data for further processing. 
The next major component in figure (\ref{fig:1.1a}) is the RF energy harvester that collects RF signals and converts them into electricity. 
The converted signal then goes to a power management module, 
which either stores the electricity obtained from the RF energy harvester in an energy storage device, such as a rechargeable battery, thus helping the users to save excess energy for future use (\textit{harvest-store-use} mode), 
or uses it directly to transmit information without saving the energy (\textit{harvest-user} mode) \cite{Modeling}. 

\begin{figure}[!t]
\vspace*{1mm}
\centering
\begin{subfigure}{0.75\textwidth}
\hspace*{-1cm}
\includegraphics[width=14.5cm,trim=4 4 4 4,clip]{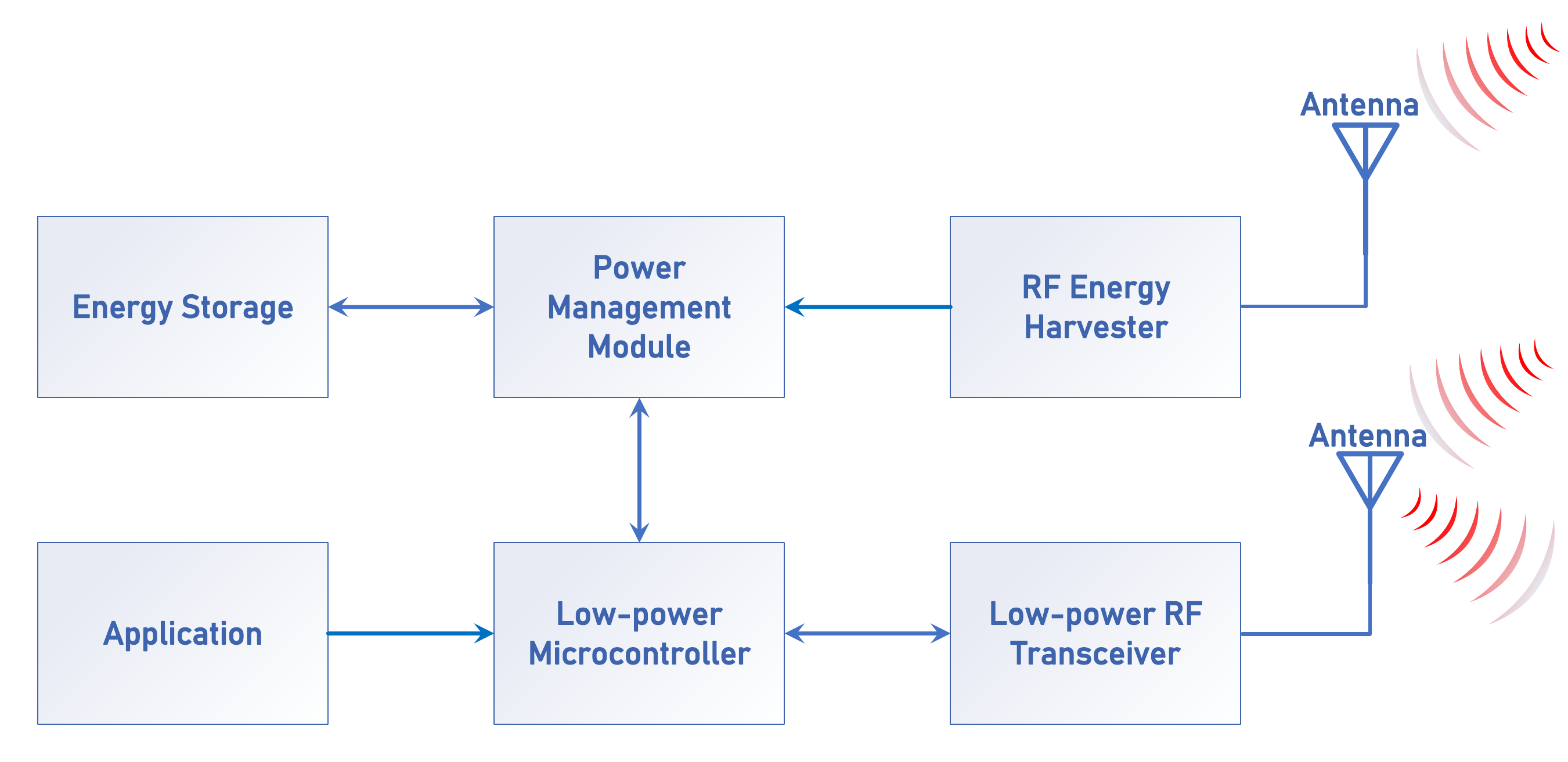}
\caption{General network architecture for energy harvesting}
\label{fig:1.1a}
\end{subfigure}
\begin{subfigure}{0.75\textwidth}
\hspace*{-2cm}
\vspace*{2mm}
\includegraphics[width=14.5cm,trim=4 4 4 4,clip]{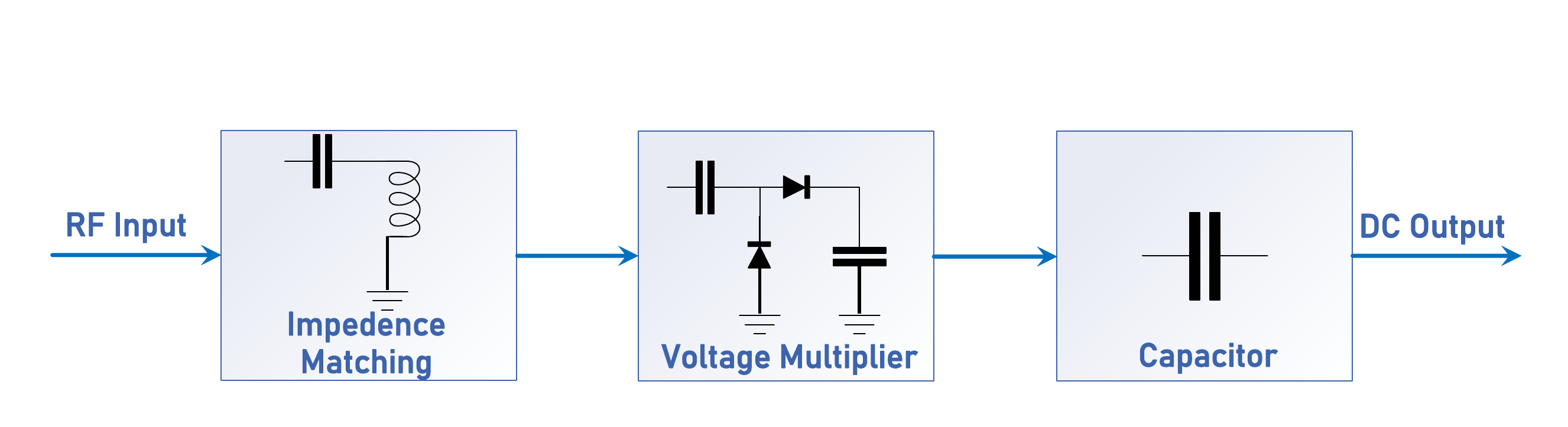}
\caption{RF energy harvester}
\label{fig:1.1b}
\end{subfigure}
\caption{The block diagram of an energy harvesting device [10]. }
\label{fig:1}
\end{figure}

Figure (\ref{fig:1.1b}) illustrates an RF harvester device with input from an RF antenna, an impedance matching circuit, a voltage multiplier, and a capacitor to create the output.
It is worth mentioning that an RF energy harvester typically operates over a range of frequencies:
The RF antenna that provides input to an RF EH unit can be designed to work on either single or multiple frequency bands, facilitating the energy harvesting from single or multiple sources at the same time.
To maximize the power transferred from the antenna to the voltage multiplier, an impedance matching in the form of a resonator circuit operating at the designed frequency, is utilized.
Figure~(\ref{fig:1.1b}) shows the diodes of the rectifying circuit that are the main component of the voltage multiplier that converts RF signals into direct current (DC) voltage levels that can be used to load an electronic circuit, 
where the capacitor ensures that the energy generated is smoothly delivered to the load~\cite{1435362,Fully_integrated,Recycling_ambient,Rate_Energy}.
Since RF signals carry both energy and information, an RF energy harvesting device like that shown in figure (\ref{fig:1}), could theoretically also simultaneously perform information decoding from the same RF signal input using the same antenna or antenna array. This concept has been defined as SWIPT. 

\section{Energy Harvesting Modes}
Future generation wireless networks not only have limited spectrum resources but also must operate with low-power batteries. 
Recently, energy-efficient communication systems, or “green radios”, have been increasingly attracting attention from the research community due to their ability to improve system performance while simultaneously diminishing the energy consumption of the communication devices \cite{An_Overview}.
Reducing wireless network energy consumption is not only essential for prolonging battery lives but also crucial for the environment. 

Although ICT is to blame for more than $2$ percent of CO$_2$ emissions worldwide, it also presents solutions for drastically diminishing the remaining $98$ percent of CO$_2$ emissions~\cite{webb2008smart}.
To significantly minimize ICTs' carbon footprint and environmental impact, we need new and efficient communication techniques~\cite{Green_radio}.
In recent literature, energy efficiency (EE), which measures the number of bits communicated per unit of energy consumed (bits/joules delivered to the receivers), has emerged as the performance metric to evaluate a communication system's energy consumption and guarantee green communication \cite{Tradeoff_Green,Toward_dynamic}.
However, today's galloping development of wireless communication technologies is increasing energy consumption and carbon emissions full tilt, further aggravating environmental concerns. 
According to \cite{Global_footprint}, the percentage of global carbon emissions due to ICTs is estimated to reach $5$ percent by the end of 2020, and the situation will only be exacerbated in coming years with the arrival of beyond 5G- and sixth generation (6G)-enabled networks.

In this regard, networks that harvest energy can definitely decrease the carbon consumption of high data-rate wireless systems by exploiting energy from the environment \cite{4}. 
In communication devices, energy harvested from RF signals is a random parameter that depends on the channel fading coefficients using circuitry like that discussed in the previous section~\cite{5513714,3}. 
Wireless communication energy harvesting by means of an RF EH device can be explored in a variety of ways, among which, the three most common configurations – WPT, WPCN, WPBC, and SWIPT – will be discussed below.

\subsection{WPT}

As shown in figure (\ref{fig:1.WPT}) in this model, RF energy is transferred in the downlink (DL) direction.
A WPT-enabled network includes a transmitter connected to the main power system, such as a base station (BS) or an access point (AP), that supplies power to an electrical component (or a portion of a circuit that consumes electrical power) without any wired interconnections.
The WPT-enabled network uses electromagnetic waves in the surrounding environment that can be obtained from \textit{near-field} (non-radiative) or \textit{far-field} (radiative) regions to power electrical components \cite{Radio_Science}.
In general, near-field techniques do not support the mobility of an energy harvesting device.
Therefore, it is preferable to transfer information through a far-field RF band in which 
the distance is much greater than the diameter of the transmit antenna,
instead of using methods such as inductive coupling, capacitive coupling, or magnetic resonant coupling for a near-field that corresponds to the area of just one wavelength of the transmitting antenna \cite{Near_field}.

WPT-enabled RF signals are anticipated to engender lots of applications and opportunities by providing cost-efficient, predictable, dedicated, on-demand, perpetual, and reliable energy supplies to energy-constrained wireless networks, where no wires, contacts, or batteries are needed. 
Numerous research activities are laying the groundwork for the future of wireless networking that transcends conventional communication-centric transmission.
For instance, the authors in \cite{3} proposed a randomly deployed power-beacon-based hybrid cellular network that wirelessly powers mobile users as they recharge their devices.
In \cite{7}, the total outage probability of an ad-hoc network overlaid with power-beacon was analyzed. The authors employed the stochastic geometry method in a harvest-store-use mode to study network performance in terms of the power and channel outage probability.
In \cite{One-Bit-Feedback}, multi-user multiple-input multiple-output (MIMO) WPT was considered with a new channel learning method that requires only one feedback bit from each EH receiver to the power transmitter per feedback interval.
The power transmitter uses the feedback information to coordinate the transmit beamforming in subsequent training intervals and concurrently obtains updated estimates of the MIMO channels to different EH users by solving an optimization problem. 
Building on that, \cite{Adaptively_Directional} presented an adaptive directional WPT methodology for prolonging the lifetime of a WSN by offering a sustainable power supply to the distributed sensor nodes.
Although today's wireless networks were designed solely for communication purposes, with their adoption of new technologies, they are evolving toward 6G.
Nonetheless, WPT development is still in its infancy and has not even entered its first generation: No single standard has yet been released on far-field WPT.
The WPT framework will be particularly suitable for future wireless networks with ubiquitous and autonomous low-power and energy-limited devices, massive IoT connections, and D2D {\text{communications}}.

\begin{figure}[!t]
\centering
\includegraphics[width=16cm,trim=4 4 4 4,clip]{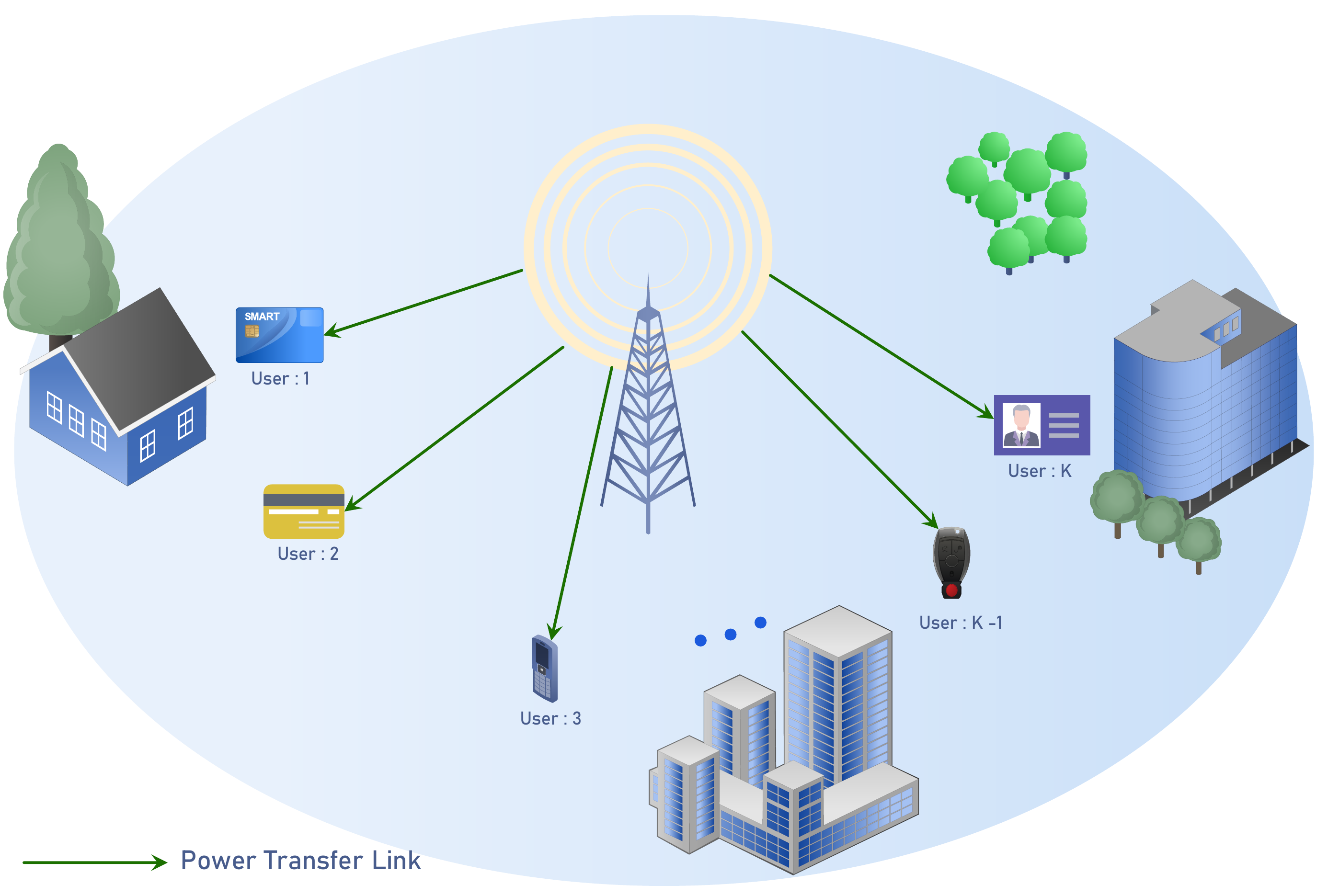}
\caption{The WPT system model. }
\label{fig:1.WPT}
\end{figure}

\subsection{WPCN}
Wirelessly powered communication networking (WPCN) is a new networking paradigm in which batteries for wireless communication devices can be remotely provisioned through microwave WPT technology.
Figure (\ref{fig:1.wpcn}) shows WPCN approach of transferring energy in the downlink (DL) direction while information is communicated in the uplink (UL) direction.
The receiver is a low-power device that harvests energy in the DL, and then uses the harvested energy to transfer data in the UL.
WPCN can effectively reduce cost and enhance communication performance by eliminating the battery charge limit. 
In addition, WPCN enjoys full control over its power transfer: Transmit power waveforms can be adjusted to provide a permanent energy supply under varying physical conditions and service requirements \cite{Wireless_powered}.

WPCN is intended to strike a balance between energy supply limitations and data transmission in order to optimize communication network performance.
WPCN could be suitable for a variety of low-power applications (up to several milliwatts) such as wireless sensor WSNs, IoT, and RF Identification (RFID) networks. 
WPCN enables low-power applications to operate longer while actively transmitting at much larger data-rates and from a greater distance than conventional backscatter-based communications \cite{8766912}.

\begin{figure}[!b]
\centering
\includegraphics[width=16cm,trim=4 4 4 4,clip]{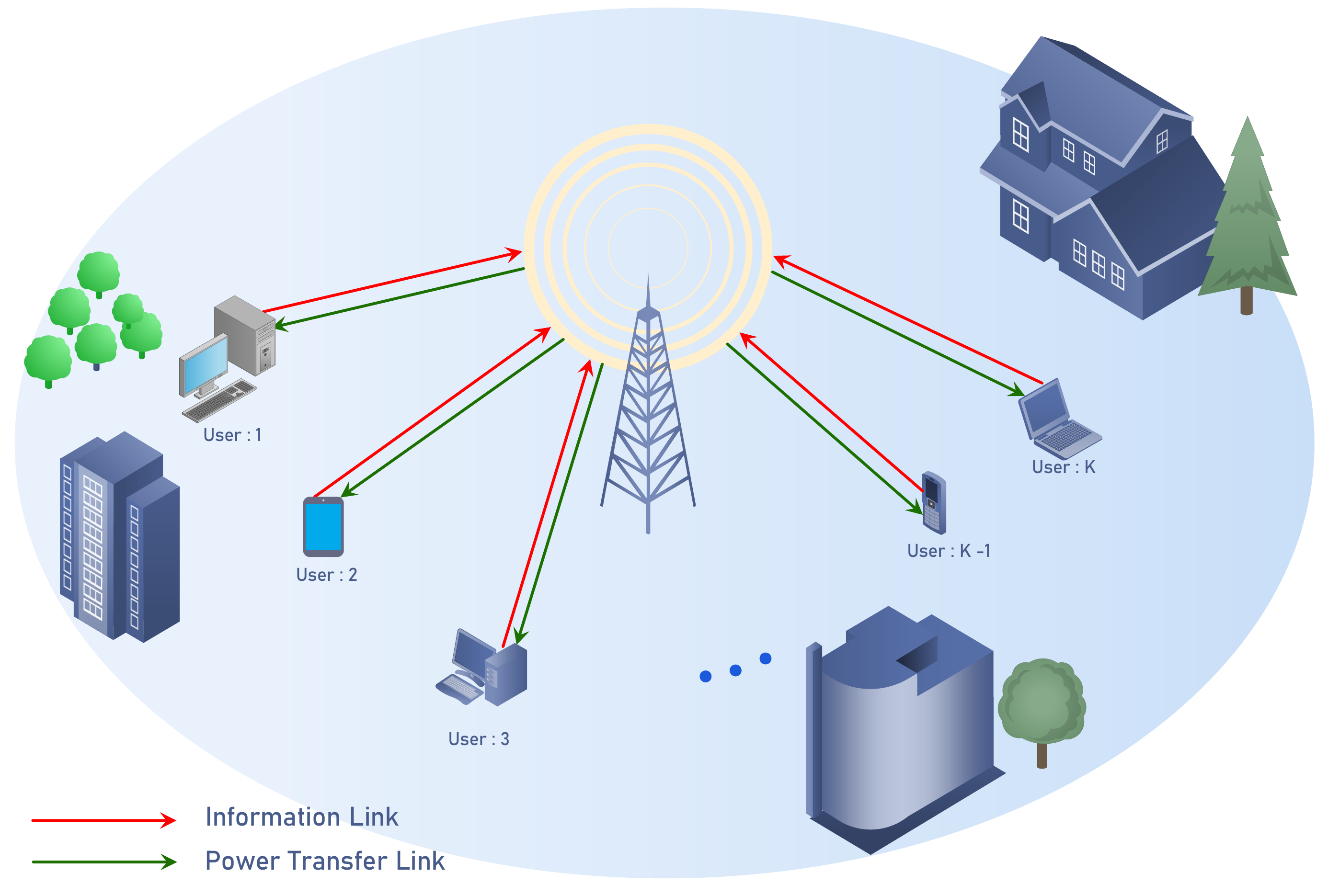}
\caption{The WPCN system model. }
\label{fig:1.wpcn}
\end{figure}

WPCNs promise to significantly enhance performance, but building an efficient WPCN is challenging.
In \cite{Throughput_WPCN}, the authors considered a hybrid AP based on the harvest-store-use mode with a constant power supply coordinating the wireless energy transmissions in the DL direction to a set of distributed users that have no other energy sources. 
Once the users have harvested enough energy, they send their independent information signals to the hybrid AP in the UL direction using time-division-multiple-access (TDMA). 
Because of WPCN's distance-dependent signal attenuation in both the DL WPT and the UL WDT, a user who is far from the hybrid AP receives less wireless energy than a user who is closer to the DL communication.
In order to have reliable information transmission, the distant user must transmit more power in the UL direction. 
This phenomenon is known as the \textit{doubly near-far} problem. 
\cite{Optimal_WPCN} studied an optimal resource allocation policy in which WPCN had simultaneous WPT in the DL and WDT in the UL. 
They addressed the doubly near-far problem in which the hybrid AP operates in full-duplex.
The authors in \cite{8580593} considered multiple-input single-output (MISO) WPCN, in which the single-antenna users harvest energy from a multi-antenna AP in DL direction and then retransmit information to the AP in the UL direction – using the TDMA or the spatial division multiple access (SDMA) technique. 
With multiple antennas, the AP can utilize energy beamforming in
the DL and employ the multiplexing-gain or receive beamforming-gain in the UL direction \cite{Energy_Beamforming_WPCN}.
In addition to all these techniques in various directions of research, WPCNs continue to present new research problems for future applications. 

\subsection{WPBC}
Although WPT and WPCN offer many advantages, these EH schemes still face critical limitations when adopted in low-power low-cost networks, such as WSN, RFID, IoT, and D2D applications. 
On one hand, in WPTs, users only harvest energy with no RF data transmission or reception, whereas in WPCNs, users may need lots of time to harvest the RF energy needed for data transmission, which limits the system's performance.  
To overcome these deficiencies, wirelessly powered backscatter communication (WPBC) networks, also known as ambient backscatter communication systems (ABCSs), are proposed as an alternative that can significantly improve network performance.
This innovative technique facilitates ubiquitous communication: Devices can interact among themselves at unprecedented scales and in previously inaccessible locations by using existing ambient RF signals, rather than generating their own radio waves \cite{8368232}.
In WPBCs, energy is transferred to a tag in the DL direction, and information is transferred to a reader in the UL direction, as shown in figure~(\ref{fig:1.WPBC}). 
More specifically, the information on a tag (a tagged object) is communicated to a nearby reader (e.g., an RFID reader) via backscatter modulation (tag-to-reader UL).
Since tags do not require oscillators to generate carrier signals and do not have any dedicated power infrastructure, backscatter communications – a passive method of communication – benefit from much less power consumption than conventional radio communications.

\begin{figure}[!t]
\centering
\vspace*{5mm}
\includegraphics[width=16cm,clip]{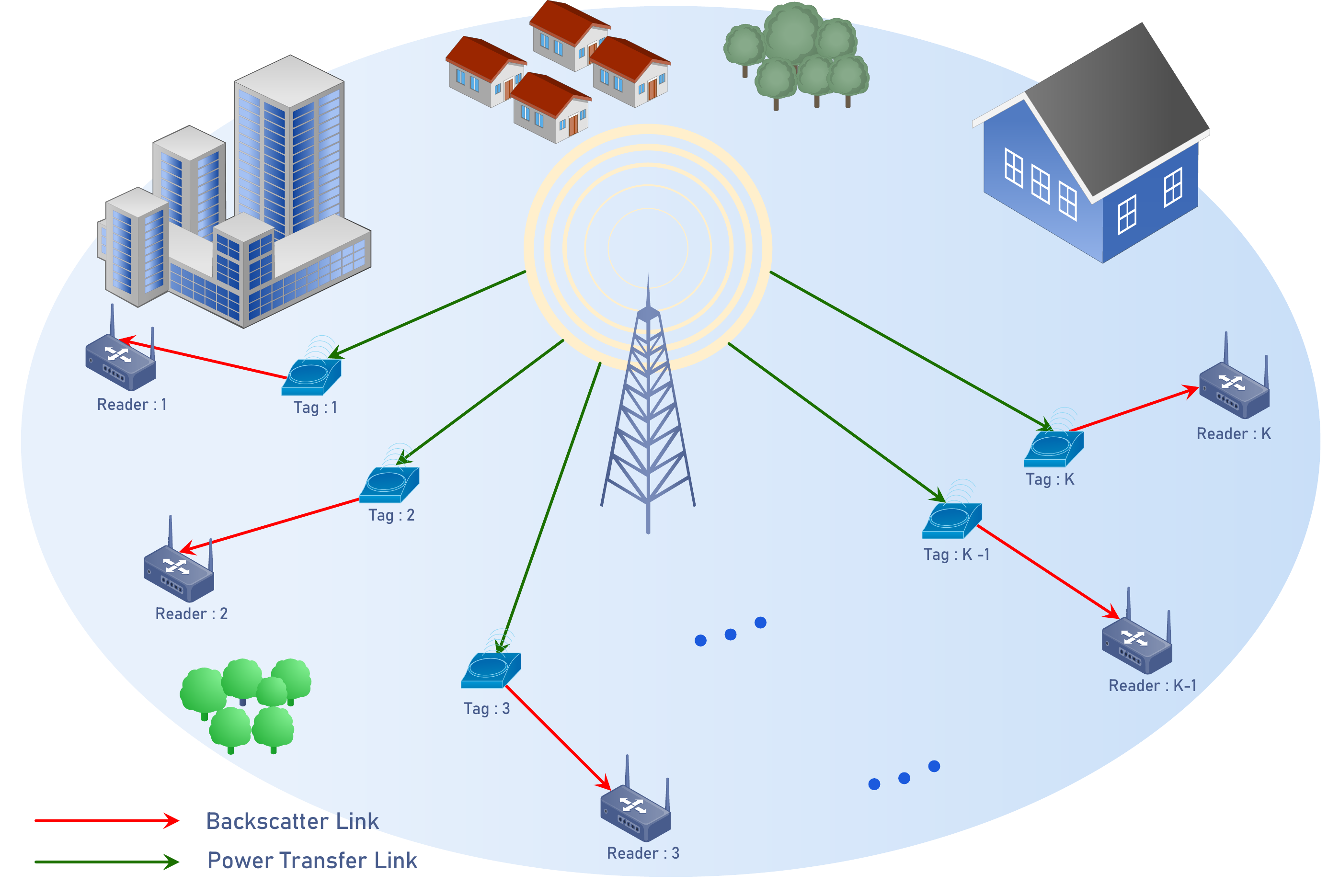}
\caption{The WPBC system model, where the backscatter modulation on a tag is used to reflect and modulate the incoming RF signal in order to communicate with a reader (e.g., a wireless router).}
\label{fig:1.WPBC}
\end{figure}
Quite a lot of research has been conducted on the topic of WPBC networks.
In \cite{8863924}, optimal resource allocation for the WPBC is investigated, where 
energy-constrained users harvest energy from RF signals transmitted by a multi-antenna source to power their future active wireless information backscatter transmission to an AP.
A theoretical trade-off in WPBC networks~–~between the reliability of the backscatter communication and the harvested energy at the tag, measured in terms of signal-to-noise ratio (SNR) at the reader – was studied in \cite{7950915}.
The authors in \cite{7876867} proposed a novel network architecture that enables D2D communication to be modeled as a WPBC. 
In \cite{6685977}, the authors aimed to integrate the WPBC and RFIDs to study perceptions of power and spectral efficiency with respect to energy constraints.
They did this by deriving the diversity-multiplexing trade-off for RFID MIMO channels.
Notwithstanding the progress gained through implementing WPBC networks, diverse issues such as security, reliability, the data-rate of communications, and the development of a global standard still need to be addressed. 
That is beyond the scope of this thesis.

\subsection{SWIPT}\vspace{-3mm}
Because RF signals can simultaneously carry information and energy, an intriguing new research domain has recently developed from multiple WPT technologies. 
What is called a simultaneous wireless information and power transfer (SWIPT) is a promising technique for wireless communication systems~\cite{K_S}.
The SWIPT paradigm enables energy-constrained wireless user equipments (UEs) to harvest energy and process the information simultaneously by utilizing RF signals transmitted from a BS, mobile BS (e.g., a drone), or AP.
In this model, energy and information signals are simultaneously transferred in the DL direction from one or multiple BSs or APs to one or multiple receivers for simultaneous information decoding (ID) and energy harvesting (EH).
The ideal receiver for enabling SWIPT has circuitry to perform ID and EH at the same time \cite{4595260}, rather than two dissimilar types of circuitry to perform EH and ID separately at the receiver~\cite{6133872}.

It is also worth noting that performing EH and ID operations at the same time does not necessarily mean that these operations are carried out on the same received signal: 
That is practically impossible since the information content of the RF domain signals would be entirely destroyed by harvesting energy on the signals.
Furthermore, a single antenna receiver may not be able to create a reliable energy supply due to its limited resources for collecting energy. 
Enabling SWIPT requires using separate antennas for both the EH and ID receivers, or splitting the received RF signal into two separate parts, one for EH and the other for ID operations, by using a \textit{splitter}.

EH and ID receivers for enabling SWIPT can be classified into two broad architectural categories: separated or co-located receiver. 
In a separated architecture, as shown in figure (\ref{fig:1.seperated}), the EH and ID receivers are two distinct devices with separate antennas that experience different channels from the transmitter.
The EH receiver is a low-power device capable of harvesting energy and the ID receiver is only able to process data. 
Since the ability to harvest energy deteriorates with distance, EH receivers ought to be closer to the BS or AP than ID receivers (which have to be spatially separated). 
This explains why an inner radius and outer radius are used to distinguish between EH and ID receivers in figure(\ref{fig:1.seperated}). 
On the other hand, with co-located SWIPT architecture, each receiver gets identical channels from the transmitter and is a low-power device that can perform EH and ID at the same time, as illustrated in figure (\ref{fig:1.colocated}). 

\begin{figure}[p]
\centering
\includegraphics[width=15.2cm,trim=4 4 4 4,clip]{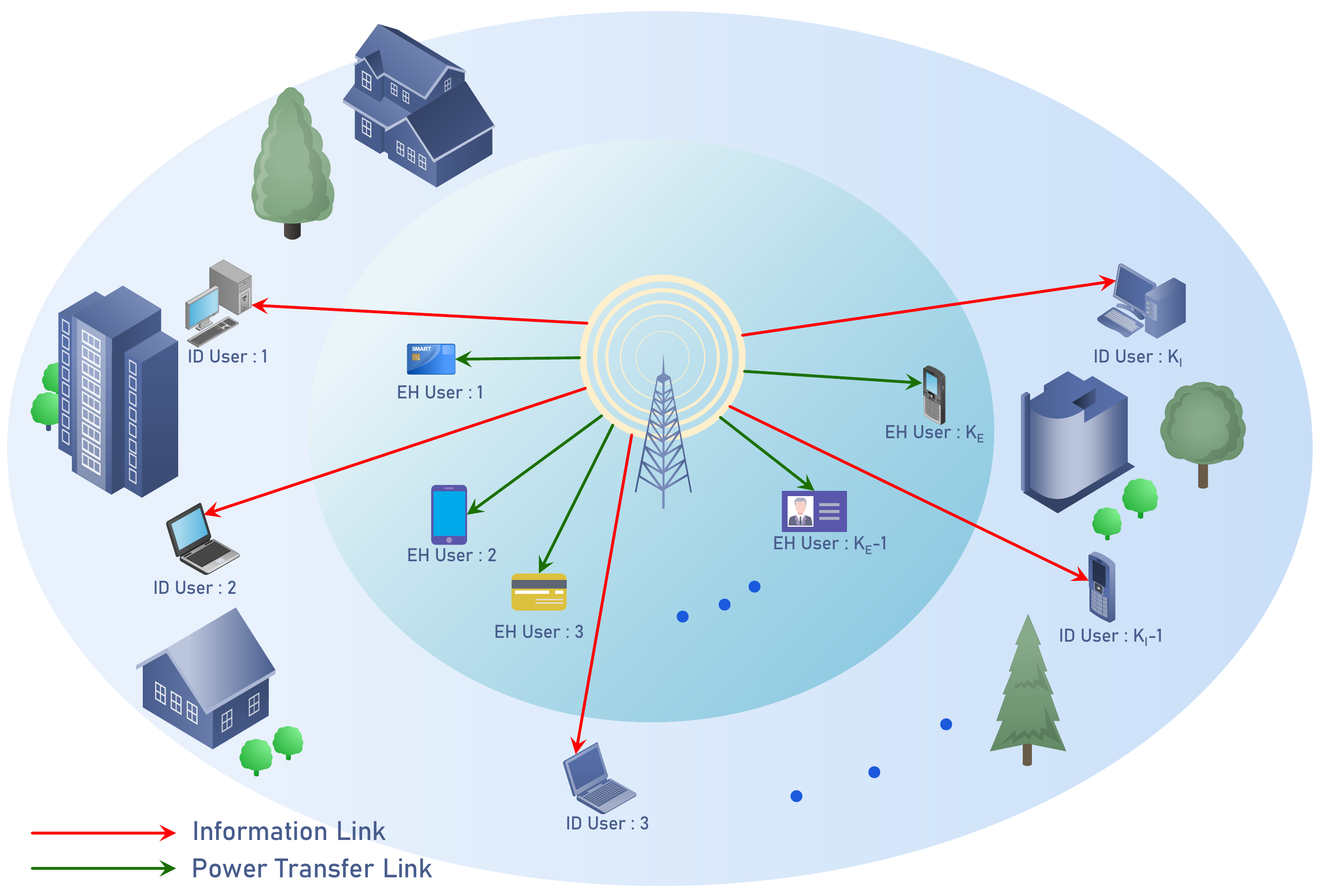}
\caption{The separated SWIPT system model. }
\label{fig:1.seperated}
\end{figure}

\begin{figure}[p]
\centering
\includegraphics[width=15.2cm,trim=4 4 4 4,clip]{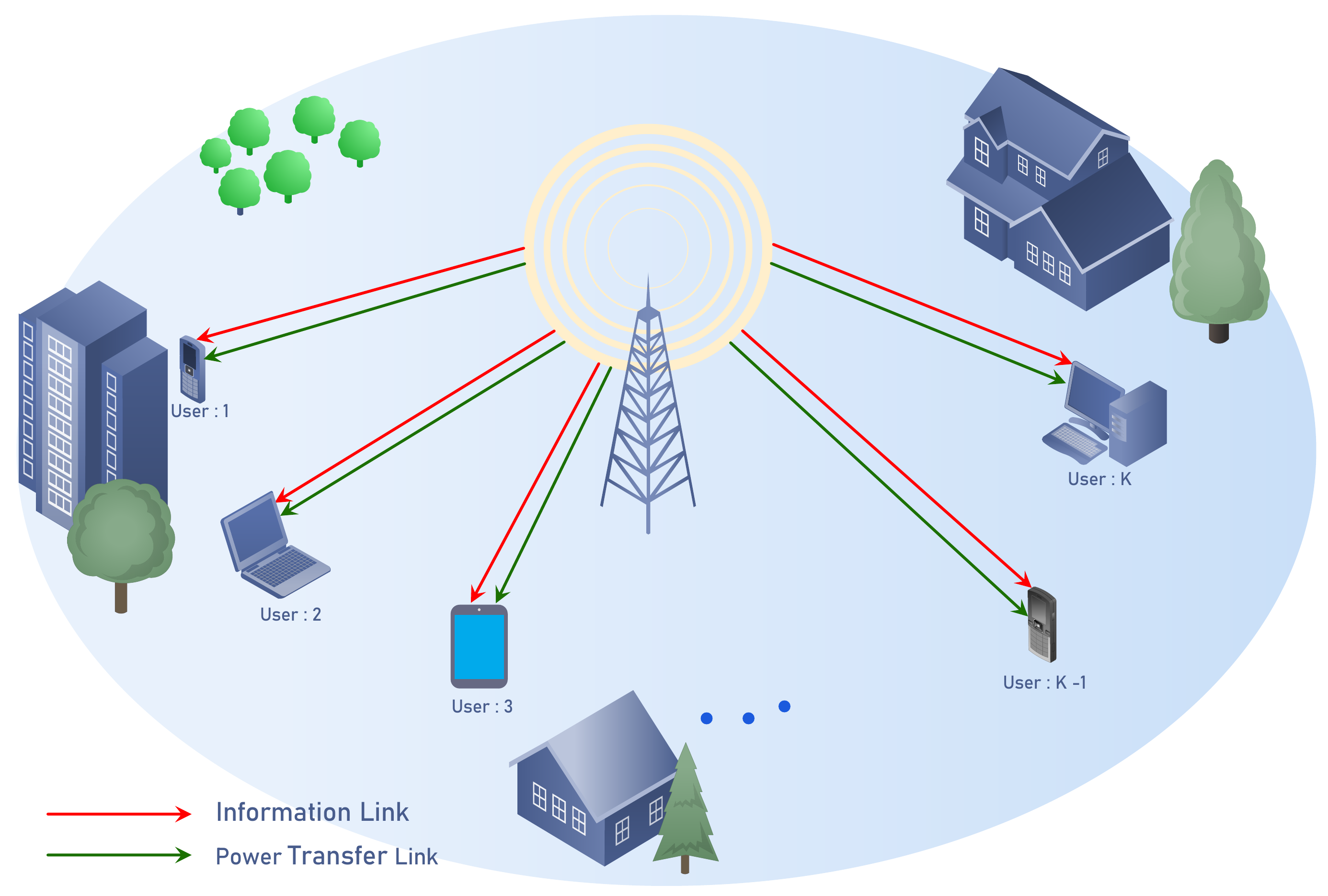}
\caption{The co-located SWIPT system model. }
\label{fig:1.colocated}
\end{figure}

\begin{figure}[p]
\vspace{-5mm}
\centering
\begin{subfigure}{0.75\textwidth}
\includegraphics[width=9.8cm,trim=4 4 4 4,clip]{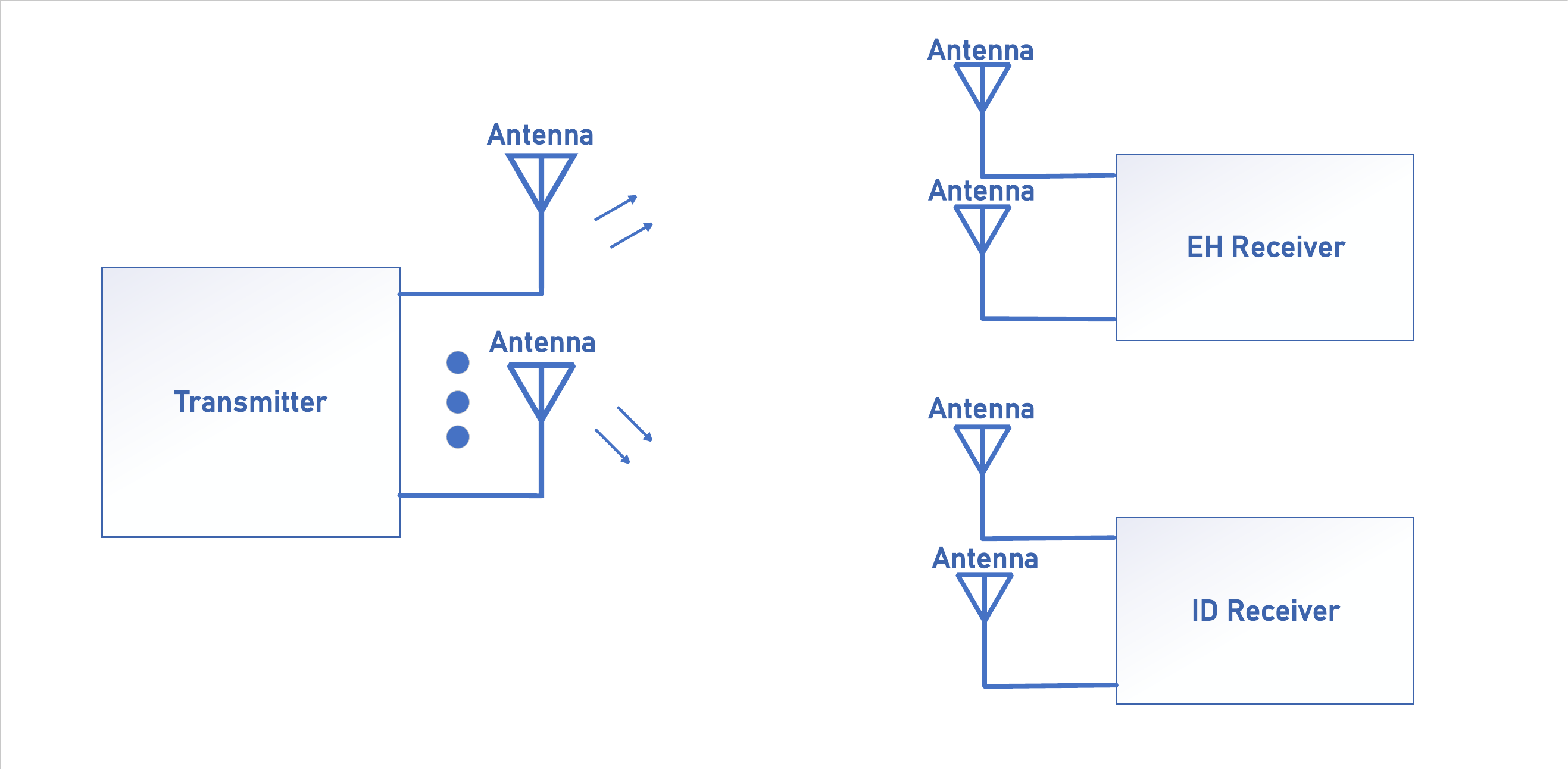}
\caption{Separated receiver architecture.}
\label{fig:1.4}
\end{subfigure}
\begin{subfigure}{0.75\textwidth}
\includegraphics[width=9.8cm,trim=4 4 4 4,clip]{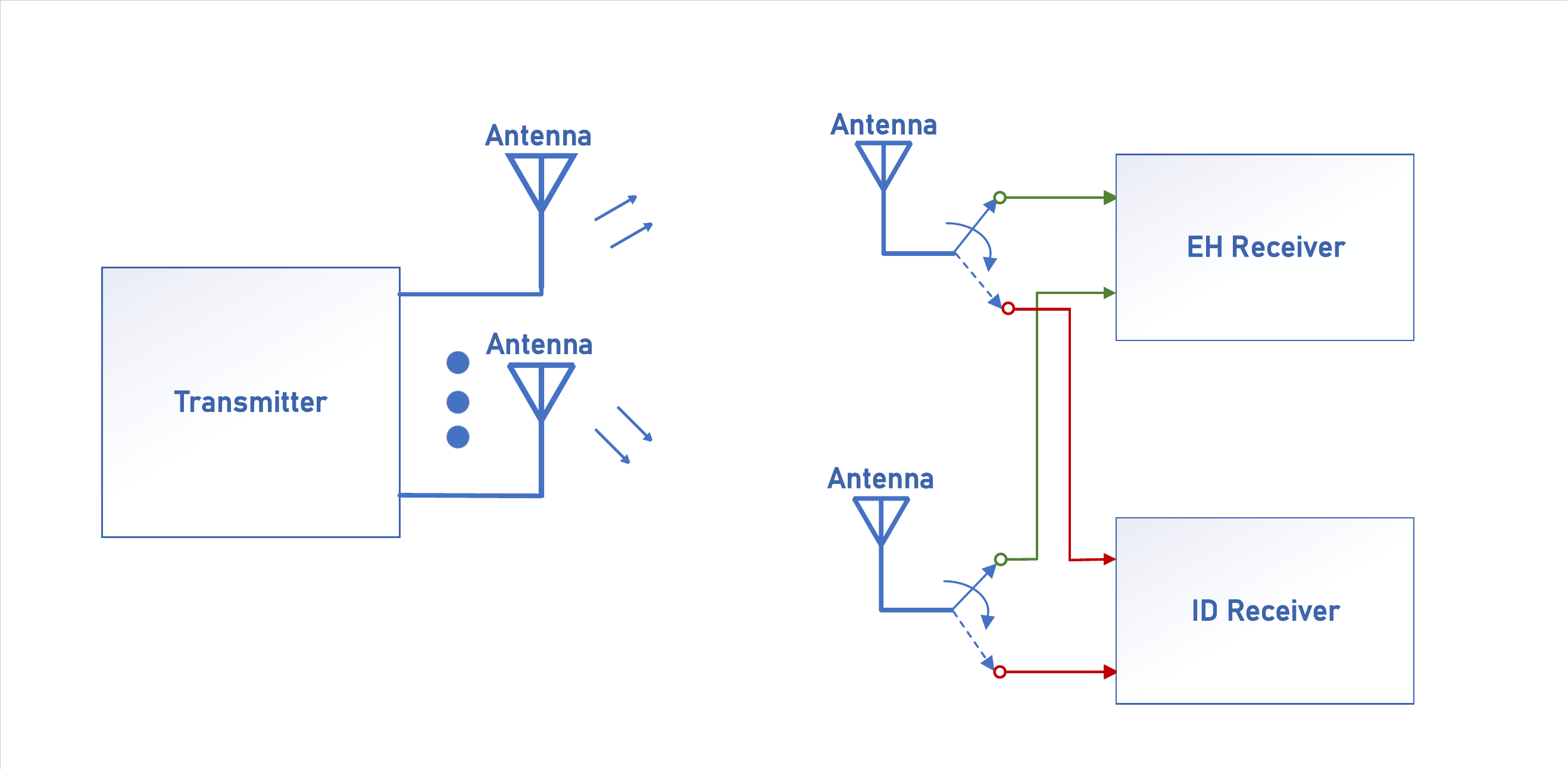}
\caption{Time switching (TS) approach to realize co-located SWIPT architecture.}
\label{fig:1.5}
\end{subfigure}
\begin{subfigure}{0.75\textwidth}
\includegraphics[width=9.8cm,trim=4 4 4 4,clip]{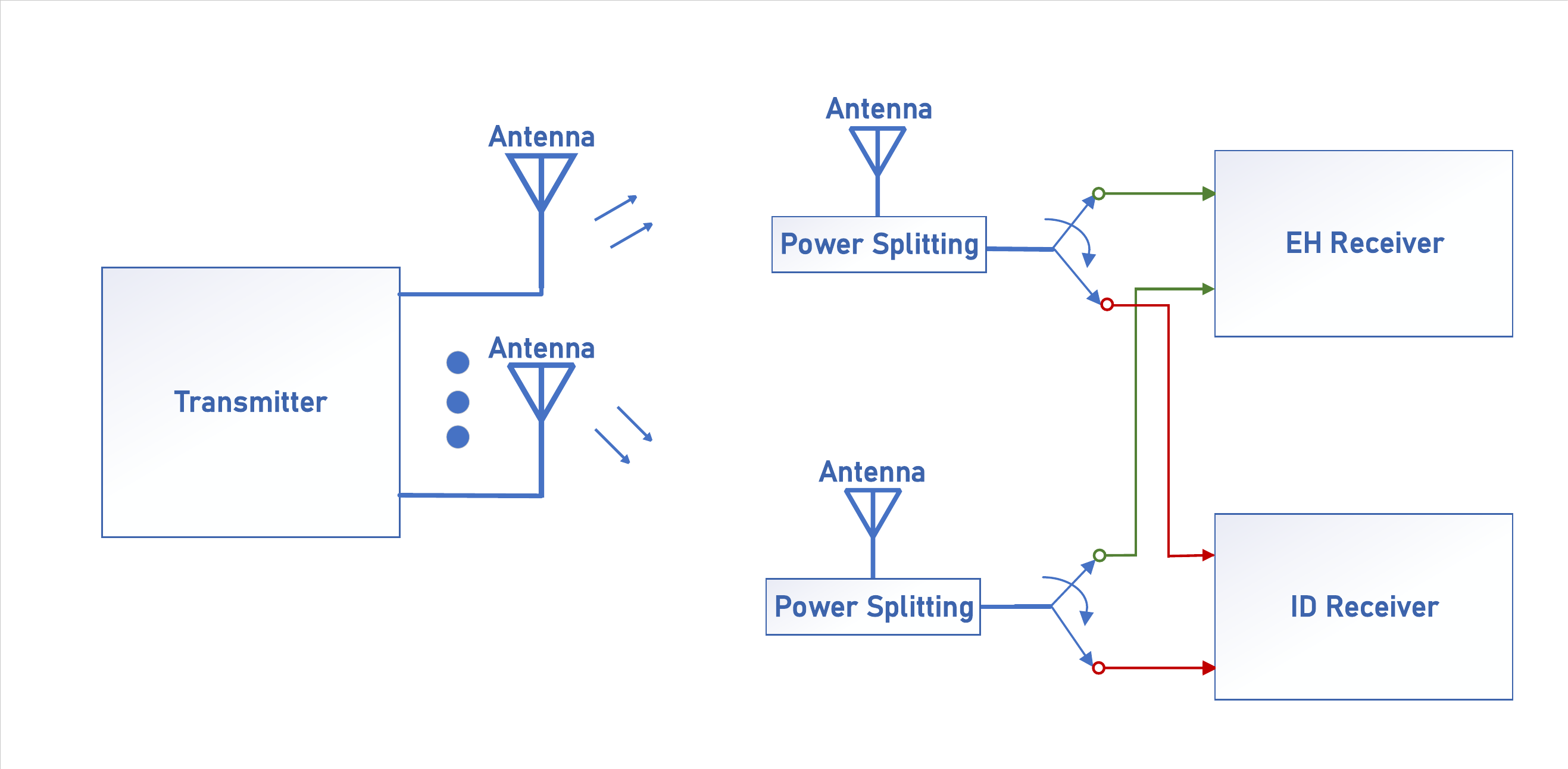}
\caption{Power splitting (PS) approach to realize co-located SWIPT architecture.}
\label{fig:1.6}
\end{subfigure}
\begin{subfigure}{0.75\textwidth}
\includegraphics[width=9.8cm,trim=4 4 4 4,clip]{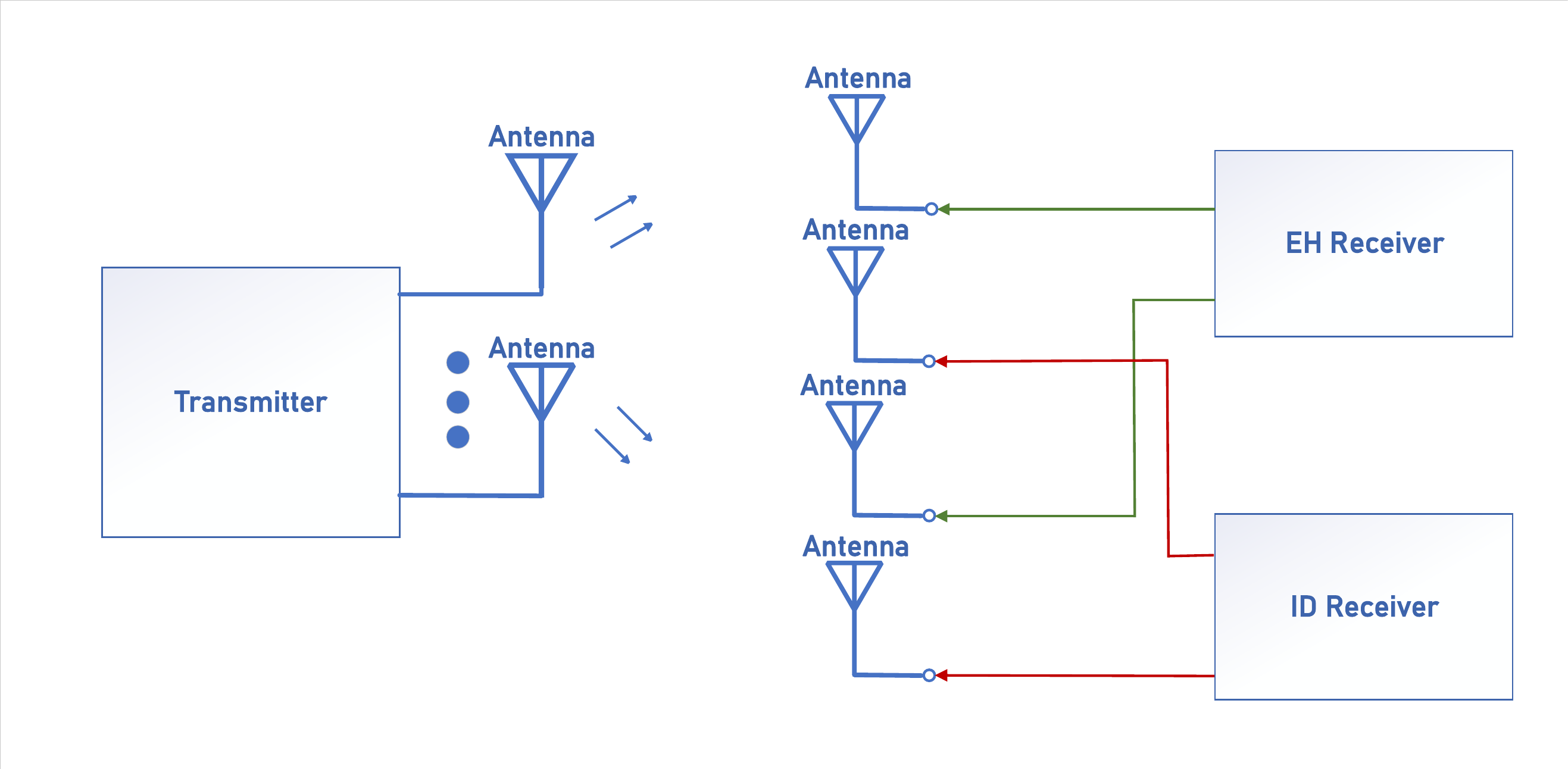}
\caption{{Antenna switching (AS) approach to realize~co-located SWIPT~architecture.}}
\label{fig:1.7}
\end{subfigure}
\caption{Integrated receiver architecture designs for SWIPT. }
\label{chap1:fig:co-located:methods}
\end{figure}

Three practical approaches to designing co-located receiver architecture for SWIPT are time switching (TS), power splitting (PS), and antenna switching (AS). 
The EH and ID receivers share the same antennas to realize the co-located receiver architecture, as shown in figure (\ref{chap1:fig:co-located:methods}). 
The receiver in the TS approach in figure (\ref{fig:1.5}) includes an EH module, an ID module, and a switch to periodically adjust the receiving antenna for particular operations.
The receiver switches between EH and ID modes based on a pre-defined, but optimizable, time factor or TS sequence. 
The TS approach necessitates careful information/energy scheduling and accurate time perception.
In the PS approach in figure (\ref{fig:1.6}), the receiver divides the received signal into two streams of different power levels for EH and ID operations based on an optimizable PS ratio.

Finally, figure (\ref{fig:1.7}) shows how the receiver is equipped with independent antennas for EH and ID operations in the antenna switching (AS) approach to enable SWIPT by means of a low complexity AS algorithm. 
In general, an antenna array is configured at the receiver end to take advantage of spatial multiplexing which divides the antennas into two subsets for EH and ID operations.
One subset of antennas operates on the EH mode while the rest executes the ID operation.
It should be stressed that the AS approach is somewhat easier and more suitable for practical SWIPT architecture designs than the TS and PS approaches \cite{6804407}. 
In addition, the AS approach can be similarly adopted to optimize the separated receiver architecture as shown in figure (\ref{fig:1.4}) \cite{6781609}.

\section{A SWIPT Literature Review}

In this section, we review some of the relevant work in literature on SWIPT, including multi-carrier SWIPT systems, SWIPT in cognitive radio networks, cooperative relaying in SWIPT networks, and multiple antenna communication in SWIPT systems.
All the application areas associated with SWIPT apply the far-field WPT technique for transferring power within communication systems.

\subsection{Multi-Carrier SWIPT Systems}
The basic idea of multi-carrier modulation is to divide the transmitted bitstream into several smaller blocks, each containing the original bitstream, that are sent over many different subcarriers. 
Under ideal propagation conditions, the subcarriers are orthogonal. 
In order to alleviate the effect of inter-symbol interference (ISI) on each subcarrier, each subcarrier should have a bandwidth lower than the channel coherence bandwidth in a multi-carrier network. 
Orthogonal frequency division multiplexing (OFDM) is one of the well-established, discrete implementations of a multi-carrier scheme for high data-rate wireless communications that is embraced in various standards such as IEEE 802.11a/g/n, IEEE 802.15 ultrawideband, WiMAX, and 3GPP-Long Term Evolution (LTE).
This stimulated us to study how SWIPT can manage high date-rate interference-free communication in multi-carrier-based systems.

In this regard, \cite{5513714} investigated SWIPT over a single-user OFDM channel in order to obtain the optimal trade-off between the achievable data-rate and the transferred power given a specific total available power. 
The authors studied a power allocation algorithm design assuming that the receiver is capable of using one received signal to simultaneously perform EH and ID operations.
A heuristic algorithm was proposed in \cite{IA} in a study of resource allocation policies for maximizing the harvested energy for a single user in an OFDM SWIPT system. 
In \cite{8548551}, the EE optimization problem for OFDM-based 5G wireless networks with SWIPT was studied,
with subcarrier and power allocation jointly optimized to maximize the system EE for single-user and multi-user cases using the Dinkelbach iterative and the Lagrange dual methods. 
The authors in \cite{chap4:6555184} investigated the sum data-rate maximization problem in a multi-user OFDM system with SWIPT capability 
that allows the user to either perform an ID operation when it is active (served by the transmitter) or to be in an EH mode when it is idle – but not simultaneously.
In order to realize SWIPT in a broadband system with transmit beamforming and PS-receiver architecture, \cite{Simultaneous_Information} presented a novel strategy for designing power control algorithms for both single-user and multi-user OFDM systems: These exploit channel diversity to concurrently enhance throughput and WPT efficiency in DL and UL directions with variable and fixed coding rates.
In \cite{18}, the weighted sum data-rate optimization problem was studied taking into account TS and PS approaches in a co-located architecture.
In \cite{18}, the optimal design for SWIPT in DL OFDM systems was investigated with users performing EH and ID operations on the same signals received from a fixed AP. 
Studying a system model very similar to that in \cite{18}, the authors of \cite{9Xu} sought to minimize the fraction of all users' outage when constrained by both minimum harvest energy and total transmit power. 

Two types of multiple access schemes – time division multiple access (TDMA) and orthogonal frequency division multiple access (OFDMA) – are studied for transmitting information in SWIPT networks in particular.
TDMA permits several users to share the same frequency channel by dividing the signal into different non-overlapping time slots. 
Since TDMA occupies the entire available bandwidth, ISI mitigation techniques are required to handle interference. 
On the other hand, OFDMA technology is a promising solution for effectively dividing the available bandwidth into orthogonal sub-channels so 
they can be flexibly allocated among existing users.
Although in OFDMA based networks, “intra-cell” interference does not exist, “inter-cell” interference arising from neighboring cells deteriorates the OFDMA networks' overall performance.
In this respect, resource allocation is highly significant in TDMA/OFDMA-based SWIPT cellular wireless networks: 
It mitigates the effect of interference with respect to limited bandwidth, stringent power constraints, and the growing demand for wireless communication services.
In \cite{18}, both TDMA and OFDMA were examined for information transmission. 
For TDMA-based information transmission, each user applies TS so that the ID operation is used throughout the user's scheduled information time slot and the EH operation is applied in all other time slots.
For the OFDMA-based transmission strategy, authors assumed that the PS approach was employed at each receiver with all subcarriers sharing the same PS ratio at each receiver.
These transmission scenarios were employed to address the problem of maximizing the weighted sum data-rate for all users by varying the power allocation in time and/or frequency and also the TS/PS ratios, subject to satisfying a maximum transmit power allowance and a minimum harvested energy constraint for each user.
Following Dinkelbach's method, the authors in \cite{8183429} investigated the secrecy EE maximization problem by proposing power allocation and PS optimization algorithms in an OFDMA SWIPT network with two users, where each user uses the PS receiver approach to enable SWIPT.
The resource allocation algorithm design for EE optimization was studied in \cite{Wireless_Information}, in which an EE maximization problem jointly optimizes subcarrier and power allocation as well as the PS ratios of PS hybrid receivers in an OFDMA system with SWIPT. 
The underlying problem in \cite{Wireless_Information}, which can be solved iteratively using the Dinkelback algorithm, was expressed as a non-convex optimization problem 
that had to take into account the maximum transmit power, the receivers' minimum amount of EH power, and the minimum data-rate requirements of both delay constrained services and the network.
Multi-carrier SWIPT is flourishing, with many exciting research directions yet to be explored.  

\subsection{SWIPT in Cognitive Radio Networks}
Radio spectrum underutilization results from conventional spectrum allocation with a cognitive radio network (CRN) introduced to enable highly reliable and efficient opportunistic spectrum access while increasing the number of wireless devices.
CRN allows secondary users to operate in the frequency bands allocated to primary users, as long as they cause no harmful uncontrollable interference to primary users \cite{hossain2009dynamic}.
OFDMA technology combined with a CRN can help allocate radio resources to secondary users more efficiently because it supports gained spectrum allocation by dividing the entire bandwidth into a set of subcarriers.
Two different types of cognitive radio based on spectrum-sensing and full-coordination are recognized in the literature~–~spectral-sensing to identify the channels in the radio frequency spectrum and full-coordination to assess all the attributes that a wireless network or node can be aware of during the communication process.
A CRN is subdivided into two main architectural designs: infrastructure-based and infrastructure-less.
In an infrastructure-based CRN, every unlicensed user transmits their data or routing parameters through a central BS, AP, or a relay, 
while in an infrastructure-less CRN, unlicensed users communicate with each other directly using the existing communication protocols (e.g., in a peer-to-peer fashion) that require no central entity.
The combination of infrastructure-based and infrastructure-less architectures is called a “hybrid” CRN.

In order to achieve both spectrum (spectral) efficiency (SE) and EE via dynamic spectrum access in CRN, secondary users in a CRN can be accompanied by the RF EH capability, as has been described very well in~\cite{Dynamic_spectrum}.
Focusing on the trade-off between spectrum sensing, data transmission, and RF energy harvesting, the study provides a detailed discussion of the dynamic channel selection problem in a multi-channel EH-based CRN.
The authors of \cite{7848925} investigated how hybrid EH cooperative spectrum sensing in heterogeneous CRNs can enable self-sustaining green communications by reducing energy cost while capitalizing on the idle spectrum.
\cite{8874991}~analyzed maximizing the throughput of cooperative CRNs with EH to obtain optimal time allocation between primary and secondary users, and balancing the trade-off between EH and packet transmission.
Secrecy outage performance for MISO CRNs with EH in order to maximize the secrecy and EE was studied in \cite{8910443}.
The same secrecy outage performance for an underlay MIMO CRN by means of transmitting antenna selection was investigated in \cite{7849037,7882641}. 
In both studies, EH from the primary transmitter drives the transmitter of secondary users in order to improve both EE and SE.
In \cite{8693979}, a joint optimization framework to distribute power among users at the secondary BS, allocate power for cooperative transmission, assign a time slot for the transmission of each user, and find the PS ratio for EH and ID receivers was explored for SWIPT-based cooperative CRNs.
The authors of \cite{7880677} tried to maximize the data-rate for underlay multi-hop CRNs using TS receivers. 
The problem of maximizing EH by taking into account an optimal resource allocation policy while also satisfying constraints of minimum data-rate, transmit power, and subcarrier was studied in \cite{Wideband_cognitive} with respect to max-min fairness in wideband CRN with SWIPT.

\subsection{Cooperative Relaying in SWIPT Networks}
Relay techniques are proposed for facilitating communication between the source and the destination \cite{Hierarchical}.
They offer a low-cost way of expanding coverage enlargements, gaining diversity, and enhancing data-rates. 
However, since the relay power supplies might be insufficient, it needs an additional, possibly green, energy source to cooperate. 
Luckily, WPT techniques have the potential to assist cooperative devices by recharging energy-limited relays to use as tokens for future cooperation. 
Recent research on using SWIPT for self-powered relaying has concentrated on two conventional cooperative relaying systems: amplify-and-forward (AF) and decode-and-forward~(DF)~\cite{8628978}.

The past few years have witnessed new research on cooperative relaying integrated with SWIPT networks.
The authors of \cite{Bidirectional_Wireless} studied a cooperative relaying SWIPT network with a bidirectional relay capable of transferring wireless power in the DL direction and relaying information by AF in the UL direction.
Two approaches were examined with PS and TS at the relay to maximize the user's data-rate. 
Energy harvesting and power consumption constraints at both the relay and the user were noted.  
In \cite{7362039}, the authors considered a cooperative PS-based SWIPT network with one source-destination pair and multiple EH DF relays that were intended to obtain a closed-form formulation of the outage probability achieved by the multi-relay cooperative protocol as well as its approximation at high SNR.
PS-based SWIPT in cooperative networks with spatially random DF relays was studied in \cite{6779694} in order to characterize the outage probability and diversity gain by applying the stochastic geometry principle.
The optimization problems of PS ratios at the relays were formulated in \cite{7419826} for both DF and AF relaying protocols in a multi-relay two-hop relay SWIPT system.
Joint relay selection and resource allocation optimization for EE maximization was studied in \cite{8954679} in order to provide a performance comparison between DF and AF relays without external power supplies in PS-enabled SWIPT.
The author of \cite{7953588} investigated whether a joint resource allocation would maximize the achievable data-rate in PS-based and AS-based SWIPT for a two-hop cooperative transmission in which a half-duplex multi-antenna relay adopts DF relaying strategy.
Relay selection strategies for both PS- and TS-based SWIPT in a cooperative AF relaying network were studied in \cite{8292374}.
The authors tried to maximize either the amount of EH at the user's end under the constraint of the minimum available data-rate in a PS-based SWIPT network or the overall user data-rate while guaranteeing a minimum EH in a TS-based SWIPT network. 
A novel relay selection and resource allocation in a two-hop relay-assisted multi-user PS-based SWIPT OFDMA network were analyzed in \cite{8668723} to optimize the subcarrier assignment, power allocation, and users' PS ratios as well as relays to ensure maximization of the system's data-rate while also satisfying minimal EH and maximal transmit power constraints. 

\subsection{Multiple Antenna Communication in SWIPT Systems}
Nowadays, new communication technologies can incorporate multiple antennas to increase data-rate through multiplexing, improve performance through diversity, and use WPT techniques to generate sufficient power for reliable communication.
These approaches require centralized or distributed antenna array deployment at the transmitter and/or receivers to achieve substantial array/capacity gains over single-input single-output (SISO) systems through spatial beamforming/multiplexing \cite{935916}. 
Multiple-input single-output (MISO) and MIMO are regarded as proper ways to boost the reliability and capacity of SWIPT-enabled communication networks.
Most work that has tried to integrate SWIPT in multiple antenna wireless networks assumes that two defined user groups need to be served, one for receiving information and the other for receiving power to recharge their power sources. 
However, there are also cases of co-located receiver architecture. 
The following paragraphs present interesting research studies conducted on SWIPT in MISO and MIMO systems to help the reader understand the challenges of multiple antenna configurations in SWIPT. 

In \cite{17}, the capacity region of a MISO broadcast channel featuring SWIPT was evaluated. 
The approach used was based on solving a sequence of weighted sum data-rate maximization problems subject to a maximum transmit power constraint for the AP, and a set of minimum harvested energy constraints for individual EH receivers. 
In this study a multi-antenna AP simultaneously delivers information and energy via RF signals to multiple single-antenna receivers in a separated receiver architecture.
The authors of \cite{8116441} studied joint power allocation and PS ratio optimization for a multi-user MISO SWIPT network with the aim of maximizing the users' minimum signal-to-interference plus noise ratio (SINR) under the maximum transmit power and the minimum energy harvesting constraints.

In \cite{8120246}, a multi-user MISO full-duplex system with PS-based SWIPT was proposed for a resource allocation policy design that jointly optimizes the PS ratios, the beamforming matrix, and the transmit power – subject to satisfying maximal SINR and harvested power constraints. 
A novel beamforming design to minimize transmission power in a multi-user MISO SWIPT system was proposed in \cite{8422097} for TS and PS receiver architectures. 
Joint transceiver optimization of beamforming vectors and TS ratios for MISO SWIPT systems was examined in \cite{8306833}.
In \cite{8481704}, the joint design of the beamforming vector and the artificial noise covariance matrix for a MISO SWIPT with multiple eavesdroppers was studied by analyzing a proportional secrecy EE maximization problem.

Two different scenarios for the MIMO broadcast system were investigated in \cite{6489506}, namely, separate and co-located receivers architecture in which all the transmitters and receivers were equipped with multiple antennas.
In the first scenario (separated receivers), the best transmission strategy for MIMO SWIPT systems was designed to attain different trade-offs between maximum information data-rate and
energy transfer in the boundary of a so-called rate-energy region. 
For the other scenario (co-located receivers), TS and PS approaches were applied to determine an outer bound for the achievable rate-energy region.
The authors of \cite{7934322} similarly described the trade-off between maximum energy transfer versus maximum information data-rate under the nonlinear EH model.
A novel power allocation algorithm for SWIPT-enabled multi-user MIMO DL systems with separated receiver architecture was proposed in \cite{6692369}, based on the block diagonalization precoding technique that completely suppresses multi-user interference while maximizing a network's data-rate.

In \cite{6777403}, the problem of antenna selection (at the transmitter) and transmit covariance matrix design was studied in the effort to jointly maximize the data-rate of a MIMO broadcast system with SWIPT, given the multidimensional trade-off between the minimum data-rate requirement of ID users and the minimum energy harvesting of EH users in a separated architecture. 
A general multi-objective optimization problem, in which data-rate and harvested power are simultaneously optimized for each user, is proposed in \cite{7581107} for a multi-user MIMO network implementing SWIPT.
A secrecy data-rate maximization problem for SWIPT in cognitive MIMO networks was studied in \cite{7132724}: 
An interference power constraint was imposed to protect the primary user while the secondary EH receiver had a minimum energy harvesting constraint.
Unlike the MIMO SWIPT systems, the weighted minimum mean squared error criterion – rather than the data-rate of the network – was investigated in \cite{7063588} for a separated SWIPT architecture.
The secure transmission issue for PS-enabled SWIPT in a multi-user MIMO system with multiple external eavesdroppers is presented in \cite{8171731}. Robust beamforming with imperfect channel state information (CSI) was designed with the maximum transmission power optimized subject to a minimum achievable secrecy data-rate and EH.
EE maximization in SWIPT-based MIMO broadcast channels for IoT applying the TS approach was studied in \cite{8233108} as a way of obtaining resource allocation policies that consider per-user minimum harvested energy constraints.

\section{Thesis Overview and Contributions}
Although studies have thus far concentrated on solving problems in wireless communication networks empowered by WPT technologies, various problems in the field remain to be answered.
In this thesis, we attempt to fill in several crucial gaps in the literature by designing, analyzing, and optimizing resource allocation problems with different objectives for SWIPT in multi-service energy-constrained wireless networks.
The overview and specific contributions of the main chapters are:

\textbf{Chapter \ref{CHAP2} {Optimization Techniques}}

In our study, we refer to several basic optimization techniques. 
Chapter \ref{CHAP2} discusses them and provides some approaches to optimization. 

\textbf{Chapter \ref{CHAP3} {SWIPT in Single Small-Cell Networks}}

In chapter \ref{CHAP3}, we provide a novel architecture for harvesting energy from an AP without needing a splitter at the receiver.
We propose a new system model in which a designated portion of the spectrum is used for ID operation while another portion is exploited for EH operation, and investigate how much performance is gained.
The main objective of this chapter is to design a resource allocation policy that maximizes the harvested energy of a multi-user DL OFDMA network with SWIPT while satisfying a minimum data-rate requirement for all users.
We then use optimization tools to obtain a locally optimal solution for the underlying problem, which is essentially non-convex.

The results of this chapter with slight improvements to the system model, as presented in the extended abstract, will be submitted to \textit{IEEE Communications Letters}:

\begin{addmargin}[2.5em]{0em}
——, “Optimal Resource Allocation for MC-NOMA in SWIPT-enabled Networks,” to be submitted to \textit{IEEE Communications Letters}.
\end{addmargin}

\textbf{Chapter \ref{CHAP4} {SWIPT in Multi-Cell Networks}}

In chapter \ref{CHAP4}, we study allocating resources to maximize the data-rate based on the separated receiver architecture in a SWIPT OFDMA multi-user multi-cell system.
We distribute the users between the inner and outer cells. 
Nearby users can harvest energy from the AP; 
users at a distance get their wireless information from the nearest AP.
Although it seems that the AP supports EH users by consuming some power, in this chapter, we discuss whether it is justifiable in light of the fact that this might cause the performance gain of a multi-user multi-cell OFDMA network to deteriorate. 
The resulting problem, which jointly optimizes the subcarrier assignment and the power allocation, is an intractable mixed-integer non-linear problem.
Because of that, we use a minorization maximization approach based on the difference of convex functions programming with a surrogate function to approximate the non-convex objective function.

\textbf{Chapter~\ref{CHAP5} {Antenna Selection Technique in SWIPT}}

Finally, in chapter~\ref{CHAP5}, we describe a new harvesting technique at the receiver that is based on the receiver antenna selection for a multi-user multi-cell SWIPT OFDMA system with a co-located architecture.
This we call a “generalized antenna switching technique”.
As a performance metric, we optimize EE, which is highly appreciated for investigating resource allocation in the next generation of wireless communication.
The underlying problem in this chapter is non-convex because we incorporate both interference and integer variables.
We relax the integer variable, and then apply the big-M formulation to make sure that relaxed variables take binary values.
After that, we use the minorization maximization approach employing a first-order Taylor approximation. 
As a last step to convexify the EE optimization problem, we apply the Dinkelback algorithm to transform the objective function into a non-fractional function.
The simulation result concludes the amount of performance gain that can be obtained through generalized antenna switching architecture.

The results of this chapter with an improvement in the system model will be submitted to \textit{IEEE JSTSP} (Special Issue on Signal Processing Advances in Wireless Transmission of Information and Power). A very simplified version of the system model without SWIPT has already been accepted by the VTC 2020 conference:

\begin{addmargin}[2.5em]{0em}
\textbf{Jalal Jalali}, Ata Khalili, and Heidi Steendam, “Antenna Selection and Resource Allocation in Downlink MISO OFDMA Femtocell Networks,” accepted for presentation at \textit{IEEE VTC} 2020.
\end{addmargin}

\begin{addmargin}[2.5em]{0em}
——, “Simultaneous Wireless Information and Power Transfer via a Joint Resource Allocation and Generalized Antenna Switching Strategy,” to be submitted to \textit{IEEE JSTSP}.
\end{addmargin}

\textbf{Chapter \ref{CHAP6} {Conclusion and Future Work}}

Chapter~\ref{CHAP6} summarizes the results and provides suggestions for future research.
\newpage\thispagestyle{empty}\mbox{}  
\newpage\thispagestyle{empty}\mbox{}  
\chapter{Optimization Techniques}
\label{CHAP2}
\fancyhf{}
\renewcommand{\headrulewidth}{2pt}
\fancyhead[LE,RO]{\thepage}
\fancyhead[RE]{\textit{ \nouppercase{\leftmark}} }
\fancyhead[LO]{\textit{ \nouppercase{\rightmark}} }
\renewcommand{\footrulewidth}{0.1pt}
\fancyfoot[CE,CO]{\nouppercase{\leftmark}}
\fancyfoot[LE,RO]{JFMJ}

\vspace{15mm}

In this chapter, we introduce some basic properties of convex functions that can be useful for better understanding this thesis \cite{boyd2004convex}.

\section{Convex Analysis}

\subsection{Definitions }
Let $f(\text{·}):\mathbb{R}^{n} \rightarrow\mathbb{R}$ be a convex function.
Then, $f(\text{·})$ is convex, if for each $\lambda \in [0,1]$, we have
\begin{equation}\label{F2:1}
f(\lambda \mathbf{x}_{1}+
(1-\lambda)\mathbf{x}_{2})\leq \lambda 
f(\mathbf{x}_{1})+
(1-\lambda)f(\mathbf{x}_{2}), 
\end{equation}
for all $\mathbf{x}_{1}$, $\mathbf{x}_{2} \in \mathbb{R}^{n}$.~Geometrically speaking, the above inequality states that the line segment between $(x_1, f(x_1))$ and $(x_2, f(x_2))$, which is the chord from $x_1$ to $x_2$, lies on top of the graph of $f(\text{·})$.~A function $f(\text{·})$ is said to be \textit{strictly convex} if (\ref{F2:1}) holds with strict inequality.~Moreover, supposing that $f(\text{·})$ is differentiable, i.e., its gradient exists, then each convex function satisfies the following inequality
\begin{equation}\label{F2:2}
f(\mathbf{{x}})\geq 
f(\tilde{\mathbf{{x}}}) + 
\nabla_{\mathbf{x}} 
f(\tilde{\mathbf{x}})^T (\mathbf{x}-\tilde{\mathbf{x}}),
\end{equation}
where $\nabla_{\mathbf{x}}$ is the gradient vector with respect to $\mathbf{x}$ at $\tilde{\mathbf{x}}$, and ${\square}^T$ is the transpose operation on ${\square}$.
The inequality (\ref{F2:2}) asserts that for a convex function, a global underestimator of the function can be easily derived via its first-order Taylor approximation.
Consequently, the first-order Taylor approximation of a convex function is always a global underestimator of the function.
This inequality additionally confirms that global information of a convex function can be obtained through its local information, i.e., its value and derivative at a point. 

\section{Duality Theorem}
We consider the following optimization problem, also known as the \textit{primal problem}, written in its general form as
\begin{align} \label{F(2-3)}
&\min_{\mathbf{x} \in \mathcal{X}}~f(\mathbf{x})
\\s.t.:~
& g_{i}(\mathbf{x})\leq 0,~ 
\forall i=1,...,I,\nonumber\\ 
& h_{l}(\mathbf{x})=0,~ 
\forall l=1,...,L,\nonumber
\end{align}
where $f(\text{·}):~\mathbb{R}^{n} \rightarrow\mathbb{R}$ is the objective function, and $\mathbf{x} \in \mathbb{R}^{n}$ is the vector of optimization variables inside the feasible set $\mathcal{X}$.
This optimization problem has $I$ inequality constraints and $L$ equality constraints.
Furthermore, we also refer to $p^*$ as the optimal value of the optimization problem in (\ref{F(2-3)}).

The Lagrangian duality of the objective function of (\ref{F(2-3)}) is given by 
\begin{equation}
\mathcal{L} (\mathbf{x},\boldsymbol{\mu},\boldsymbol{\nu}) = 
f(\mathbf{x}) + 
\sum_{i=1}^{I}\mu_{i}g_{i}(\mathbf{x}) + 
\sum_{l=1}^{L}\nu_{l}h_{l}(\mathbf{x}),
\end{equation} 
where $\boldsymbol{\mu}$ and $\boldsymbol{\nu}$ are called the vector of \textit{Lagrangian multiplier} or the \textit{dual variables} with respect to inequality and equality constraints associated with the problem (\ref{F(2-3)}) that have $\mu_{i}$'s and $\nu_{l}$'s as the elements of the corresponding vectors.
The essential purpose of Lagrangian duality is to get somehow rid of the constraints in (\ref{F(2-3)}) by adding a weighted sum of the constraint functions to the objective function.
We can now define the corresponding \textit{Lagrange dual function} (or just \textit{dual function}), which is formally stated as 
\begin{equation}\label{F2:5}
\mathcal{D}(\boldsymbol{\mu},\boldsymbol{\nu}) =
\inf_{\mathbf{x}} 
\mathcal{L} (\mathbf{x},\boldsymbol{\mu},\boldsymbol{\nu}). 
\end{equation}
Note that even though the primal problem could be non-convex, the dual problem is always a convex optimization problem since the dual function is a point-wise infimum.
This infimum can be seen as the greatest lower bound of a family of affine functions with respect to $\boldsymbol{\mu}$ and $\boldsymbol{\nu}$. 

The Lagrange dual function in (\ref{F2:5}) gives us a lower bound on the optimal value $p^*$ of the primal problem (\ref{F(2-3)}).
In order to find the best lower bound for the primal problem, the following optimization problem can be defined from the Lagrange dual function
\begin{equation}\label{F2:6}
\max_{\boldsymbol{\mu},\boldsymbol{\nu}} ~
\mathcal{D} (\boldsymbol{\mu},\boldsymbol{\nu}).
\end{equation}
This problem is known as the \textit{Lagrange dual problem} corresponding to the primal problem.
Moreover, if $\boldsymbol{\mu}^*$ and $\boldsymbol{\nu}^*$ are the optimal values for the Lagrange dual problem in (\ref{F2:6}), they are traditionally called \textit{dual optimal} or \textit{optimal Lagrange}.
It should also be noted since the objective to be maximized is concave in (\ref{F2:6}), the Lagrange dual problem is a convex optimization problem no matter the primal problem in (\ref{F(2-3)}) is convex or not. 

\subsection{Weak Duality and Duality Gap}
Let $\mathbf{x}^*$ be a feasible solution for the primal problem, i.e., $p^*$ and $(\boldsymbol{\mu}^*,\boldsymbol{\nu}^*)$ are a feasible solution to the dual problem, that is, $d^*$.
According to weak duality, we have the following inequality for a general (possibly non-convex) problem
\begin{equation}\label{F2:7}
d^* \leq p^*.
\end{equation} 
It must be noted that the weak duality inequality also holds when $d^*$ and $p^*$ are infinite.
On the other hand, the difference between the primal optimal value and dual primal value, i.e., $p^*-d^*$ is called the \textit{optimal duality gap}.
It should be stated that the optimal duality gap is always non-negative.
Since the dual problem is always convex, and often can be solved efficiently to determine $d^*$, the inequality in (\ref{F2:7}) is quite useful in finding a lower bound on the optimal value of a problem that is difficult to solve.

\subsection{Strong Duality and Slater condition}
If the duality gap is zero, i.e., $p^* = d^*$, the \textit{strong duality} holds.~The strong duality indicates that the best bound that can be achieved from the Lagrange dual function is tight.
Moreover, in strong duality, since the gap between primal and dual is zero, solving the dual problem is equivalent to solving the primal problem.

A sufficient condition for strong duality to hold for a convex optimization problem is the \textit{Slater condition} or \textit{Slater's condition}.
In particular, if the Slater condition holds for the primal problem, then the duality gap is zero, which implies strong duality for convex problems.
And if the dual optimal value is finite, then it is attained, i.e., a dual feasible $(\boldsymbol{\mu}^*,\boldsymbol{\nu}^*)$ exists that satisfies ${\mathcal{D}(\boldsymbol{\mu}^*,\boldsymbol{\nu}^*) = d^* = p^*}$.
In general, there exist so many results that establish conditions on the optimization problem, that yield strong duality.
These conditions are coined \textit{constraint qualifications}, where the Slater condition is only a simple specific example of many.

\section{Abstract Lagrangian Duality}
In order to study duality in optimization models, two approaches exist historically, and the duality results are manifested as referred to as: i) \textit{Classical Lagrangian} and  ii) \textit{Abstract Lagrangian}.
Among these two forms, the classical Lagrangian form is more extensively used in the literature.~What we have discussed so far is indeed the classical Lagrangian form of duality.
As seen, classical Lagrangian typically starts from a primal problem while the Lagrangian and the Dual Lagrangian problems are established subsequently.~However, at a more abstract level, an abstract Lagrangian function is used to derive the primal and dual optimization problems.
Here, we briefly discuss an abstract version of Lagrangian duality that is elaborated in more significant details in~\cite{goh2002duality}.
In this version, through a certain real-valued abstract Lagrangian function, the primal and dual costs are taken into account, such that
\begin{align*}
(\textrm {Primal problem})&~~ 
\min_{\mathbf{x}\in \mathcal{X}}~
\mathcal{F}(\mathbf{x})~~\textrm{where}~
\mathcal{F}(\mathbf{x})=~\sup_{\mathbf{y}\in Y}~
\mathcal{L}(\mathbf{x},\mathbf{y}),\\
(\textrm {Dual problem})~&~~
\max_{\mathbf{y}\in \mathcal{Y}}~
\mathcal{G}(\mathbf{y})~~\textrm{where}~
\mathcal{G}(\mathbf{y})=~
\inf_{\mathbf{x}\in \mathcal{X}}~
\mathcal{L}(\mathbf{x},\mathbf{y}),
\end{align*}
where $\mathcal{L}:\mathcal{X}\times \mathcal{Y} \longrightarrow R$ is the abstract Lagrangian function pertaining to $\mathcal{X}$ and $\mathcal{Y}$ as appropriate domains defined in some primal and dual spaces, respectively.
Moreover, the supremum can be seen as the least upper bound of a family of affine functions with respect to $\mathbf{x}$ and $\mathbf{y}$.
This approach to duality is based on conjugate duality, where a convexity assumption is always made~\cite{rockafellar1974conjugate}.~This approach also puts a strong emphasis on the minimax and saddle point theorems, which are given below.

$\bullet$~\textbf{Minimax Theorem}:
This theorem provides the condition that guarantees the \textit{strong max-min property} or the \textit{saddle point} as follows
\begin{equation}\label{F2:8}
\sup_{\mathbf{y}\in 
\mathcal{Y}}~\inf_{\mathbf{x}\in \mathcal{X}}~
\mathcal{H}(\mathbf{x},\mathbf{y})=
\inf_{\mathbf{x}\in \mathcal{X}}~
\sup_{\mathbf{y}\in \mathcal{Y}}~
\mathcal{H}(\mathbf{x},\mathbf{y}).
\end{equation}
It should be noted that the above equality, strong max-min property, holds only in special cases.
This is, in particular, true, when for example, $\mathcal{H}:\mathcal{X}\times \mathcal{Y} \longrightarrow R$ is the Lagrangian of a problem where the strong duality holds. 

$\bullet$~\textbf{Saddle Point Theorem}:
Under suitable conditions, there exists a \textit{saddle point} for $\mathcal{S}(\text{·})$ referred to as a pair $(\mathbf{x}^{*},\mathbf{y}^{*})\in \mathcal{X}\times \mathcal{Y}$ such that for all~$(\mathbf{x},\mathbf{y})\in \mathcal{X}\times \mathcal{Y}$,
\begin{equation}\label{F2-9}
\mathcal{S}(\mathbf{x}^{*},\mathbf{y})\leq
\mathcal{S}(\mathbf{x}^{*},\mathbf{y}^{*})\leq
\mathcal{S}(\mathbf{x},\mathbf{y}^{*}).
\end{equation}
In (\ref{F2-9}), $\mathcal{S}:\mathcal{X}\times \mathcal{Y} \longrightarrow R$ is the Lagrangian of a problem where the strong duality holds.
In other words, $\mathcal{S}(\textbf{x}^{*},\textbf{y}^{*}) = \sup_{\textbf{y}\in \mathcal{Y}} ~\mathcal{S}(\textbf{x},\textbf{y}^*)$, and $\mathcal{S}(\textbf{x}^{*},\textbf{y}^{*}) = \inf_{\textbf{x}\in \mathcal{X}}~\mathcal{S}(\textbf{x}^*,\textbf{y})$.
This indicates that the strong max-min property (\ref{F2:8}) holds with the common value of $\mathcal{S}(\textbf{x}^{*},\textbf{y}^{*})$.

\section{Complementary Slackness and KKT Optimality Conditions}
Suppose that both the primal and dual optimal values exist and are equal. This means the strong duality holds.
We further assume that $\mathbf{x^*}$ and $(\boldsymbol{\mu}^*,\boldsymbol{\nu}^*)$ to be a primal optimal and a dual optimal point, respectively. Therefore, we have
\begin{align}\label{F2:10}
f(\mathbf{x^*})=
\mathcal{D}(\boldsymbol{\mu}^*,\boldsymbol{\nu}^*) \leq
f(\mathbf{x^*}) + 
\sum_{i=1}^{I}\mu^*_{i}g_{i}(\mathbf{x^*}) +
\sum_{l=1}^{L}\nu^*_{l}h_{l}(\mathbf{x^*}) \leq 
f(\mathbf{x^*}).
\end{align}
The first inequality in (\ref{F2:10}) holds since the infimum of the Lagrangian over $\mathbf{x}$ is less than or equal to its value at $\mathbf{x} = \mathbf{x^*}$.
However, the last inequality follows from 
{${\mu_{i}^* \geq 0,~g_{i}(\mathbf{x^*})\leq0 ,~ \forall i=1,...,I}$}, 
and 
{${h_{l}(\mathbf{x^*})\leq0,~ \forall l=1,...,L}$}.
An important conclusion that one can make from (\ref{F2:10}) is that
\begin{equation}
\mu^*_{i}g_{i}(\mathbf{x^*})=0,~~~~
\forall i=1,...,I.
\end{equation}
This condition is called the \textit{complementary slackness}.~It confirms that one can go from the optimal primal solution to the optimal dual solution, and vice versa, if the strong duality holds.
Moreover, the complementary slackness verifies that a solution is optimal, by checking if there is a dual solution.

Now, we introduce the Karush-Kuhn-Tucker (KKT) conditions assuming that all the functions both in the objective and the constrains in (\ref{F(2-3)}) are differentiable.
Just same as was assumed in (\ref{F2:10}), let's also suppose the primal and dual variables at the optimum point, for which strong duality obtains, are $\mathbf{x}^{*}$ and $(\boldsymbol{\mu}^{*},\boldsymbol{\nu}^{*})$, respectively.
The KKT conditions have the following properties
\begin{subequations}
\begin{align}
g_{i}(\mathbf{x}^{*})&\leq 0,~ 
\forall i=1,...,I,\\ 
h_{l}(\mathbf{x}^{*})&=0,~ 
\forall l=1,...,L,\\
\mu_{i}^{*}&\geq 0, ~ 
\forall i=1,...,I,\\
\mu_{i}^{*}g_{i}(\mathbf{x}^{*})&=0,~ 
\forall i=1,...,I,\\
\nabla_{\mathbf{x}}f(\mathbf{x}^{*}) + 
\sum_{i=1}^{I}\mu_{i}^{*} \nabla_{\mathbf{x}}g_{i}(\mathbf{x}^{*}) +
\sum_{l=1}^{L}\nu_{l}^{*} \nabla_{\mathbf{x}}h_{l}(\mathbf{x}^{*})&=0,\label{2-12e}
\end{align} 
\end{subequations}
where $\mu_{i}^{*}$ and $\nu_{l}^{*}$ are the elements of Lagrangian vectors $\boldsymbol{\mu}^{*}$ and $\boldsymbol{\nu}^{*}$,~respectively.
Also, $\nabla_{\textbf{x}}$ denotes the gradient of a function with respect to \textbf{x} in (\ref{2-12e}).
Note that the KKT conditions are necessary and sufficient conditions for the optimality of the convex optimization problem with differentiable objective and constraint functions. 
However, if the problem is non-convex, the KKT conditions would only provide the necessary conditions for optimality given that the objective and constraints are differentiable.

\section{Interior-Point Methods}
The literature on interior-point methods is very extensive, 
and research is still flourishing. 
This paragraph can only serve as a very condensed introduction.
Interior-point methods can be seen as a branch in the classification of convex optimization algorithms that solve linear and nonlinear convex optimization problems.
KKT conditions, another branch of optimization algorithms in this classification, obtain a collection of linear equations 
that can be solved analytically, for example a quadratic optimization problem with linear equality constraints.
Interior-point methods, on the other hand, solve an optimization problem with linear equality and inequality constraints by the relaxation of a problem with only a set of linear equality constraints.
The motivation for calling such methods an “interior-point method” lies in the fact that 
these methods begin their search for an optimal solution in the interior of the feasible region, and travel on a path towards the boundary, converging at the optimum.

\section{MM Approach and D.C. Programming}
MM algorithms are an appropriate tool to reduce a given optimization problem into a series of simpler problems.
In this sense, an MM algorithm is not an algorithm, but rather an appropriate principal way of designing optimization algorithms for high dimensional settings, where 
the classical methods of optimization do not work well.
MM algorithms are not new.
The celebrated Expectation Maximization algorithm is a particular case of MM algorithms that is extensively used in electrical engineering applications and in other fields.
The reason for selecting the MM acronym is two-fold.
An MM algorithm operates on a more straightforward and simpler surrogate function that majorizes/minorizes (the first M of MM) the objective function in a minimization/maximization (the second M of MM) optimization problem.
Thus, the MM stands for either \textit{Majorization Minimization} or \textit{Minorization Maximization}, depending on the application.
In the next few paragraphs, we consider a majorization minimization problem to explain how the algorithm works.  

Consider the following optimization problem
\begin{align}
\min_{\textbf{x}\in \mathcal{X}}~&f(\textbf{x}),
\end{align}
where $\textbf{x}$ is the optimization variable vector belonging to the feasible set ${\mathcal {X}}$.
In order to majorize the function $f(\textbf{x})$ at $\textbf{x}^n$, there exists a surrogate function $g(\textbf{x}|\textbf{x}^n)$ that satisfies two conditions
\begin{align}
f(\textbf{x}^n) &=     
g(\textbf{x}^n|\textbf{x}^n), \label{F2:14}\\
f(\textbf{x})   & \leq 
g(\textbf{x}|\textbf{x}^n), ~~~\textbf{x} \neq \textbf{x}^n. \label{F2:15}
\end{align}
The first condition (\ref{F2:14}) is called the tangency condition at the current iteration step.
This condition grantees $g(\textbf{x}^n|\textbf{x}^n)$ is tangent to $f(\textbf{x})$ at $\textbf{x}^{n}$.~The second condition, on the other hand, (\ref{F2:15}) makes sure the $g(\textbf{x}|\textbf{x}^n)$ is dominant in a sense that it always lies above the surface of $f(\textbf{x})$ except at $\textbf{x}^{n}$.
Besides, if a function $g(\textbf{x}|\textbf{x}^n)$ majorizes the function $f(\textbf{x})$ at $\textbf{x}^{n}$, it can be easily perceived that -$g(\textbf{x}|\textbf{x}^n)$ minorizes -$f(\textbf{x})$.

Another very important result of the MM algorithms is the descent property.
Starting from $\textbf{x}^{0}\in \mathcal{X}$ as an initial point for the feasible set ${\mathcal {X}}$, an MM algorithm generates a sequence of feasible point $\textbf{x}^{n}$.
At point $\textbf{x}^{n}$ in the majorization step, a continuous surrogate function is constructed that satisfies the domination condition in (\ref{F2:15})
\begin{equation}\label{F2:16}
g(\textbf{x}|\textbf{x}^{n})\geq 
f(\textbf{x}) + 
g(\textbf{x}^{n}|\textbf{x}^{n})-
f(\textbf{x}^{n}),~\textbf{x} \neq 
\textbf{x}^{n}.
\end{equation}
Hence, in the minimization step, the following update rule can be applied
\begin{equation}\label{F2:17}
\textbf{x}^{n+1}\in \min_{\textbf{x}\in \mathcal{X}} g(\textbf{x}|\textbf{x}^{n}).
\end{equation}
It is easy to show that the generated sequence $f(\textbf{x}^{n})$ is non-increasing. 
Thus, we have
\begin{equation}\label{F2:18}
f(\textbf{x}^{n+1}) \leq 
g(\textbf{x}^{n+1}|\textbf{x}^{n}) - g(\textbf{x}^{n}|\textbf{x}^{n}) + 
f(\textbf{x}^{n}) \leq 
g(\textbf{x}^{n}|\textbf{x}^{n})-
g(\textbf{x}^{n}|\textbf{x}^{n}) + 
f(\textbf{x}^{n}) = f(\textbf{x}^{n}),
\end{equation}
where the first inequality comes from (\ref{F2:16}), and the second inequality is the direct consequence of (\ref{F2:17}).
The property in (\ref{F2:18}), the descent property, gives a remarkable numerical stability to MM algorithms.
Hence, instead of minimizing the cost function $f(\textbf{x})$ directly, the MM algorithms stably optimize a sequence of tractable approximate surrogate objective functions $g(\textbf{x}|\textbf{x}^{n})$ that minorize $f(\textbf{x})$ as tightly as possible.

The MM algorithms can easily be connected to other algorithmic frameworks \cite{MM,Sequential_convex,yuille2003concave}.
One of the application areas of the MM algorithms is in \textit{Difference of Convex functions} (D.C.) programming problems. The general form of D.C. functions is
\begin{align}\label{F2:19}
&\min_{\textbf{x}}~  
f_{0}(\textbf{x})- 
h_{0}(\textbf{x})\\ 
\textit{s.t.:}~ 
&f_{i}(\textbf{x})- h_{i}(\textbf{x})\leq 0, ~\forall i=1,...,m,
\end{align}
where $f_{i}$'s and $h_{i}$'s are all convex functions.~We further assume that $f_{i}$'s and $h_{i}$'s are twice differentiable, and are strictly convex without loss of generality according to (\ref{F2:1}).
Among various algorithms having desirable properties for the solution of D.C. problems, the MM scheme, which solves a sequence of convex problems acquired by linearizing non-convex parts in the objective function as well as the constraints, is preferred.
Accordingly, an approximate solution can be found that iteratively solves (\ref{F2:19}) through defining the following convex subproblem
\begin{align}
&\min_{\textbf{x}}~ 
g_{0}(\textbf{x}|\textbf{x}^{n})\\
\textit{s.t.:}~ 
&g_{i}(\textbf{x}|\textbf{x}^{n})\leq 0, ~\forall i=1,...,m,
\end{align}
where 
\begin{equation}
g_{i}(\textbf{x}|\textbf{x}^{n})=
f_{i}(\textbf{x})-
\Big(
h_{i}(\textbf{x}^{n}) + 
\nabla_{\textbf{x}} h_{i}
(\textbf{x}^{n})^{T}(\textbf{x}-\textbf{x}^{n})
\Big),~~~
\forall i \in \{0,...,m\}.
\end{equation}

The aforementioned approximation satisfies the MM principle and is a tight upper bound of $f_{i}-h_{i}$ with equality achieved at $\textbf{x}=\textbf{x}^{n}$. 
This technique is used several times throughout the thesis.
Moreover, the solution methodology for the MM algorithm is summarized in \text{\textbf{Algorithm~\ref{Alg:chap2}}}. 
A valid question to be asked at this point would be how good the convergence behaviors of the MM algorithms are.
For the answer, the interested reader is referred to \cite{SCA2014MR,General_inner,Local_convergence}. 

\begin{algorithm*}[t]
\caption{The MM Approach}
\begin{algorithmic}[1]
\STATE {$\mathbf{Initialize}$} \\{
\begin{addmargin}[1em]{0em}
{iteration index $n=0$ with the maximum number of iteration $N_{\max}$  \\
and find a feasible point $\mathbf{x}^{0}$.}
\end{addmargin}}
\STATE \textbf{repeat}
\STATE{
\begin{addmargin}[1em]{0em}
Find $\mathbf{x}^{n}$ by solving the optimization problem (\ref{F2:17}) and store as $\mathbf{x}$.
\end{addmargin}}
\STATE{
\begin{addmargin}[1em]{0em}
Set $n=n+1$ and $\mathbf{x}^{n}=\mathbf{x}$.
\end{addmargin}}
\STATE \textbf{until} some convergence criterion is met or $n=N_{\max}$
\STATE \textbf{return} optimal $\mathbf{x}$
\end{algorithmic}
\label{Alg:chap2}
\end{algorithm*}

\section{Optimization Packages}
Many optimization tools and packages exist for solving any given optimization problem.
This paragraph introduces a few of the most popular optimization packages.
The GLPK is a package designated for solving large-scale linear programming (LP), mixed integer programming (MIP), and other similar problems.
The Gurobi optimizer is a well-known commercial optimization solver for LP, quadratic programming (QP), and MIPs (including mixed-integer linear programming (MILP), mixed-integer quadratic programming (MIQP), and mixed-integer quadratically constrained programming (MIQCP)).
The Mosek solver is another widely used optimization package that solves LP, QP, MIP, second-order cone programming (SOCP), and semi-definite programming (SDP).
The SeDuMi and SDPT3 are two other leading solvers for SDPs.
The list of packages goes on and on, with new packages always being added and the well-known ones continuously being updated to respond to the ever-rising demand for higher performance speed to solve problems with unimaginably large dimensions.
The purpose of the thesis is not to develop such optimization solvers nor to improve them, but rather to use the existing ones to solve the problems at hand.
A whole different research domain investigates and designs optimization solvers for specific needs.
\newpage\thispagestyle{empty}\mbox{}  
\newpage\thispagestyle{empty}\mbox{}  
\chapter{SWIPT in Single Small-Cell Networks}
\label{CHAP3}
\fancyhf{}
\renewcommand{\headrulewidth}{2pt}
\fancyhead[LE,RO]{\thepage}
\fancyhead[RE]{\textit{ \nouppercase{\leftmark}} }
\fancyhead[LO]{\textit{ \nouppercase{\rightmark}} }
\renewcommand{\footrulewidth}{0.1pt}
\fancyfoot[CE,CO]{\nouppercase{\leftmark}}
\fancyfoot[LE,RO]{JFMJ}

\vspace{15mm}

As explained in the introductory chapter, wireless power transfer (WPT) provides wireless devices with continuous and stable energy.
Radio frequency (RF)-enabled simultaneous wireless information and power transfer (SWIPT) has the capability of using an innovative way to simultaneously transfer information over a single radio waveform as power.
Consequently, SWIPT-based networks contribute exceptional benefit to users by conveniently utilizing radio signals to transfer both energy and information.
SWIPT is attracting ever more attention due to its ability to provide green communication services more efficiently by broadcasting information and power on orthogonal and non-orthogonal resources.
Furthermore, resource allocation for SWIPT-enabled networks carefully takes into account the performance of both information and power transfer, unlike traditional wireless systems.
In this sense, resource allocation in SWIPT systems can improve network performance and make maximum use of network resources while satisfying the quality of service requirement by flexibly allocating and dynamically adjusting the network's available resources.

A great deal of research has been conducted on resource allocation for SWIPT in various types of wireless communication networks.
In \cite{Simultaneous_Information} for instance, the focus was on resource allocation based on the orthogonal frequency division multiplexing (OFDM) for different system configurations in a SWIPT-enabled mobile network 
that maximizes throughput by using separated receiver architecture.
The optimal design for maximizing the weighted sum data-rate over all users for SWIPT in downlink (DL) OFDM systems was explored in \cite{18}.
In it, users harvest energy and decode information using the same receiving signals 
that carry both energy and information from a fixed access point (AP). 
Each user also applies either power splitting (PS) or time switching (TS) receiver architecture to coordinate the energy harvesting (EH) and information decoding (ID) operations.
The authors in \cite{chap3:8276564} investigated resource allocation for maximizing the effective capacity and effective energy efficiency (EE) of a DL multi-user OFDM system by considering both PS and TS architectures for receivers using SWIPT. 
In \cite{Wireless_Information}, the ideal resource allocation algorithm design to maximize EE of data transmission was studied by considering PS hybrid receivers in an orthogonal frequency division multiple access (OFDMA) network.
Most of the previous works had investigated a SWIPT system based on the splitter using PS or TS architecture to separate the received signal for either EH or ID operation. 
However, \cite{IA} considered a SWIPT-assisted joint subcarrier and power allocation based on an OFDM system using a heuristic solution methodology in which neither time nor power splitter was used to maximize EH.
In this chapter, we aim to fill the knowledge gap in existing literature by improving the paradigm of resource allocation policy in SWIPT networks without the need for a splitter in a co-located architecture.

We study a simple scenario often found in the literature that captures the essential characteristics of SWIPT-enabled networks in this chapter: the optimal design for resource allocation in a multi-user OFDMA network with SWIPT using the same signals received from a fixed AP to perform both EH and ID operations.
The complexity of the receiver is significantly reduced because there is no need for a splitter to perform appropriately: The receiver has no time or power splitter.
The resource allocator only needs to know which group of subcarriers is allocated for EH and which for ID operation.
This information is derived from the channel state information (CSI) and the relevant algorithm.
We do not address the effect of interference in this chapter;
instead we investigate the problem of resource allocation design in a single small-cell network, 
which allows us to address the problem of maximizing the energy harvested by all users.
More specifically, the problem of joint power allocation and subcarrier assignment is investigated in order to maximize harvested energy while fulfilling the minimum data-rate requirement for each user.
Since each subcarrier can be configured independently in OFDMA systems, information and power are transferred separately on different subcarriers with different waveforms.
We thus try to determine a resource allocation policy that includes joint power allocation and subcarrier assignment algorithms based on an OFDMA network for a DL of a single small-cell multi-user system by means of SWIPT.
The underlying optimization problem calls for mixed-integer linear programming (MILP).
To tackle the problem, we employ the majorization minimization (MM) optimization approach to obtain a close-to-optimal resource allocation policy.
Simulation results demonstrate that our proposed algorithm achieves excellent performance as compared to other possible solutions presented in the literature.

\section{System Model}
We consider a DL of an OFDMA network in a single small-cell scenario consisting of an indoor AP and $K$ co-located users as shown in figure~(\ref{fig:3.1}).
In particular, users receive the intended signal from the AP for EH and ID operations, simultaneously.
Furthermore, the AP and all the receivers are equipped with a single antenna in this configuration.
We additionally assume that the entire frequency band of $\mathscr{B}$ is partitioned into $N$ subcarriers, each having a bandwidth of $\mathscr{W}$.
It also needs to be stated that a portion of the spectrum is used for ID while the remaining portion is exploited for EH proclaiming a demand for the use of two separate filters at receivers~\cite{4623916}.
Moreover, the set of users and subcarriers are denoted by $\mathcal{K}=\{1,2,...,K\}$ and $\mathcal{N}=\{1,2,...,Z,Z+1,...,N\}$, respectively.
Let $\mathcal{N}_i=\{1,2,,...,Z\}$ denote the set of subcarriers for ID, whereas the remaining subcarriers {$\mathcal{N}_{e}=\mathcal{N}-\mathcal{N}_{i}=\{Z+1,Z+2,...,N\}$} is used to indicate the set of subcarriers for EH. Note that the optimal value of $Z$, i.e. the cardinally of the set $\mathcal{N}_i$, can be obtained \cite{IA}. However, we choose not to adapt the problem of finding the optimal $Z$ in our proposed problem design for the sake of avoiding repetition. 
\begin{figure}[h]
\centering
\includegraphics[width=17cm,trim=4 4 4 4,clip]{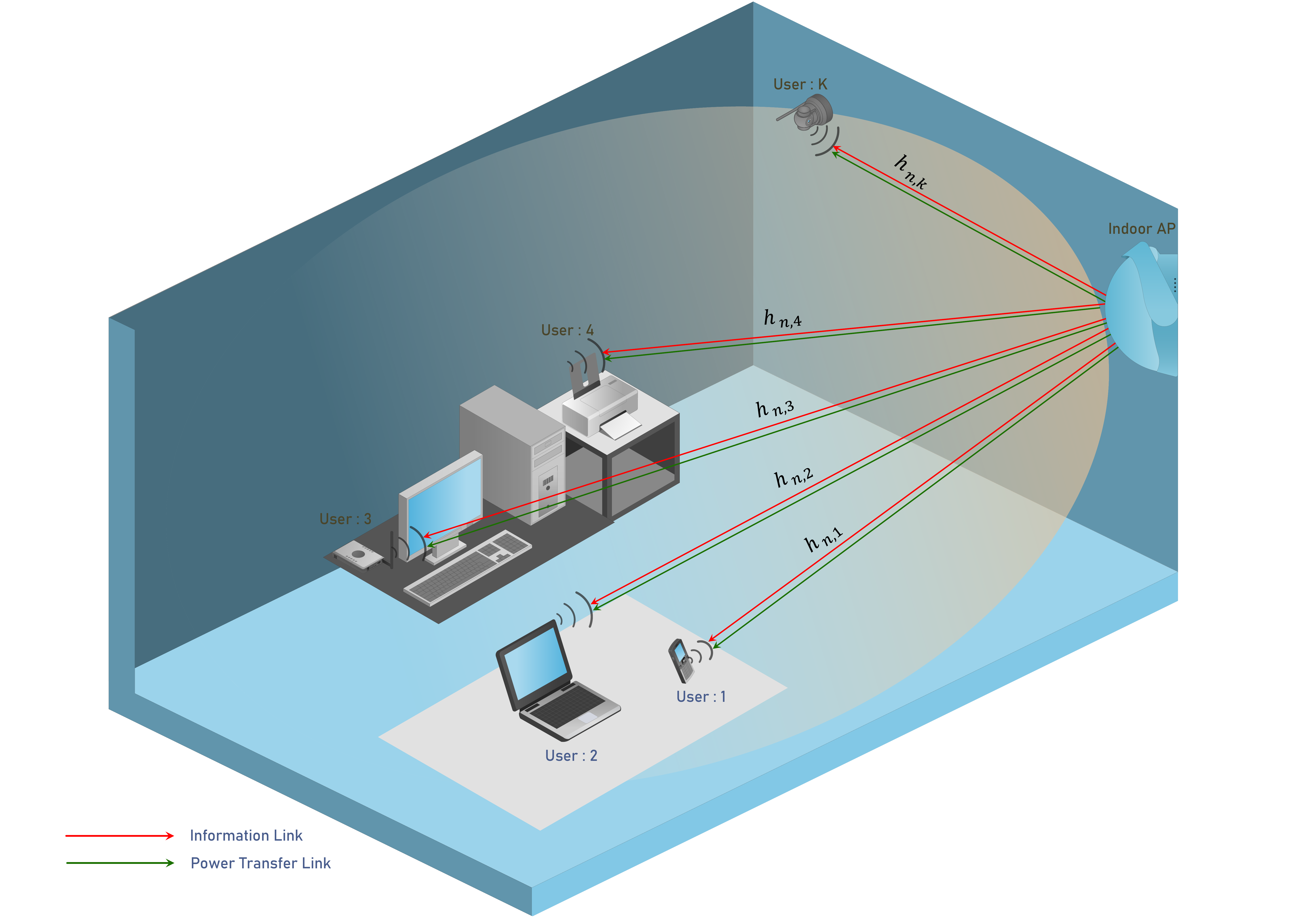}
\caption{SWIPT in a DL of a co-located multi-user small single-cell OFDMA network.}
\label{fig:3.1}
\end{figure}

Moreover, it is further assumed that all the subcarriers are considered to be perfectly orthogonal to one another, and no inter-subcarrier interference exists.
Hence, the subcarrier assignment variable is given by
\begin{equation*}
a_{n,k} = 
\begin{cases}
1, & \text{if subcarrier $n$ is assigned to user \textit{k}}, \\
0, & \text{otherwise}.
\end{cases}
\end{equation*}
Let $h_{n,k}$ denote the DL channel coefficient from the AP to the $k^{th}$ user over the subcarrier $n$.
We assume that the perfect CSI is available at a centralized resource allocator to design resource allocation policy.
Specifically, it is presumed that the AP broadcasts orthogonal preambles, pilot signals, in the DL to the users.
Then, through a feedback channel, each user estimates the CSI and transfers this information back to the AP.
Afterward, the corresponding AP listens to the sounding reference signals communicated by the users and sends the CSI to the centralized controller for resource allocation design.
Now, by denoting $p_{n,k}$ as the DL transmit power of AP to the $k^{th}$ user over the subcarrier $n$, the DL signal-to-noise ratio (SNR) of user $k$ in subcarrier $n$ is defined as
\begin{equation}
\label{F3-1}
\gamma_{n,k}=
\frac{a_{n,k}p_{n,k}|h_{n,k}|^2}
{\sigma^{2}_{n,k}},   
\end{equation}
where $\sigma^{2}_{n,k}$ denotes the additive noise power.~More specifically, at receiver $k$, the received signal on each subcarrier is corrupted by noise $n_{n,k}$.
This noise is modeled as an additive white Gaussian noise (AWGN) random variable with zero mean and variance $\sigma^{2}_{n,k}$, denoted by a circularly symmetric Gaussian distribution referred to as $n_{n,k} \sim  \mathcal{CN}(0,\sigma^{2}_{n,k})$.
However, for the sake of simplicity, we consider $\sigma^{2}_{n,k}$ = $\sigma^{2}$ throughout this chapter meaning that the variance of the noise is the same over all subcarriers for all users.
According to the Shannon capacity formula, the data-rate of the $k^{th}$ user over the subcarrier $n$ can be expressed as
\begin{equation}\label{F3-2}
R_{n,k}=
\log_2
\bigg(1+\gamma_{n,k}\bigg).
\end{equation}
For facilitating the presentation, we denote $\textbf{p} \in \mathbb{R}^{1\times KN}$ and $\textbf{a} \in \mathbb{Z}^{1\times KN}$ as vectors of optimization problem for power allocation and subcarrier assignment, respectively.
Consequently, the data-rate of the $k^{th}$ user in DL is given as
\begin{equation} \label{F3-3}
R_{k}(\textbf{a},\textbf{p})=
\sum_{n\in\mathcal{N}_i}R_{n,k}.
\end{equation}
Furthermore, to guarantee the quality of service (QoS) of users, a minimum data-rate denoted by $R_{min}$, should be provided for each user. That is
\begin{equation} \label{F3-5}
R_{k}(\textbf{a},\textbf{p})\geq R_{min}, ~
\forall k \in \mathcal{K}.	
\end{equation}
Moreover, the amount of the harvested energy can be stated as 
\begin{equation} \label{F3-4}
\textrm{EH}(\textbf{a},\textbf{p})=
\sum_{k\in\mathcal{K}}
\epsilon_k
\bigg(
\sum_{n\in\mathcal{N}_e}
a_{n,k}p_{n,k}|h_{n,k}|^{2}
\bigg),
\end{equation}
where $\epsilon_k$ is the power efficiency of the $k^{th}$ user capable of energy harvesting, which takes its value from the interval $0 < \epsilon_k < 1$.
It should be noted that $\sigma^{2}$ also contributes to the EH formula in (\ref{F3-4}).
However, since its value is very small, it is neglected in the representation of the EH formula.

\section{Optimization Problem Formulation}
The main objective persuaded in this chapter is to assign subcarrier(s) and set the transmit power(s), for each user, such that the total harvested energy in (\ref{F3-4}) is maximized.
Thus, we can formulate the optimization problem as
\vspace{-5mm}
\begin{subequations}
\begin{align}
&\max_{\textbf{a},\textbf{p}} ~ \textrm{EH}(\textbf{a},\textbf{p})
\label{F3-6}
\\
s.t.: 
&~C_{1}:
\sum_{k \in \mathcal{K}} 
a_{n,k}\leq 1,  ~~~~~~ 
\forall  n \in \mathcal{N}_i, 
\label{F3-7}
\\ 
&~C_{2}:
\sum_{k \in \mathcal{K}} 
a_{n,k}\leq 1,  ~~~~~~ 
\forall  n \in \mathcal{N}_e, 
\label{F3-7-1}
\\ 
&~C_{3}:
\sum_{k\in \mathcal{K}}~
\sum_{n\in \mathcal{N}}
a_{n,k}p_{n,k}\leq
p_{max} ,
\label{F3-8}
\\
&~C_{4}:
R_{k}(\textbf{a},\textbf{p})\geq R_{min}, ~
\forall k \in \mathcal{K}, 
\label{F3-9}
\\
&~C_{5}:
a_{n,k}\in\{0,1\} , ~~~~~~ 
\forall k \in \mathcal{K} ,~ 
\forall n \in \mathcal{N} .
\label{F3-10}
\end{align}
\label{F3-6new}%
\end{subequations}
In this optimization problem, $C_{1}$ and $C_{2}$ indicate that each subcarrier can be assigned to at most one user for ID and EH operation, respectively.
$C_{3}$ states that power constraint for the AP with a maximum transmit power allowance of $p_{max}$.
The forth constraint, $C_{4}$, guarantees the QoS for each user.~Finally, by keeping in mind that the subcarrier assignment variable is binary, the last constraint, $C_{5}$, makes sure that different subcarriers take their values from a binary set.
This means whether a given subcarrier is going to be selected to maximize the energy harvesting, i.e., the objective of the optimization problem at hand.

One can readily conclude that the optimization problem in~(\ref{F3-6new}) is a non-convex mixed-integer linear programming (MILP) problem \cite{MINLP} due to the binary constraint for the subcarrier assignment in $C_{5}$ and the non-linearity of the QoS constraint.
In general, it is impossible to find an optimal solution for a non-convex MILP in a polynomial-time.
However, in the next section, we exploit an approach to find a locally optimal solution for the considered system.
Furthermore, we propose a suboptimal resource allocation algorithm, which has a polynomial-time computational complexity to strike a balance between complexity and system performance.
	
\section{Solution to the Optimization Problem}  
In this section, we try to solve the problem in (\ref{F3-6new}) using the MM approach by constructing a sequence of surrogate functions to approximate the non-convex problem.
The MM procedure in our setting, i.e., the problem (\ref{F3-6new}), consists of two major steps.
In the first minorization step, a surrogate function is found that locally approximates the transformed objective function with their difference maximized at the current point.
That is to say, the transformed objective function is lower-bounded by the surrogate function up to a constant value.
Next, the surrogate function is maximized in the maximization step.
By inspiring this approach, we make a convex approximation which can be globally and efficiently solved using optimization packages incorporating the MM algorithm.
Nonetheless, finding a proper surrogate function that yields a low-complexity algorithm is not a straightforward task.
On the one hand, a surrogate function that attempts to mimic the form of the objective function and obtains a fast convergence speed is preferable.
On the other hand, the surrogate function should be easy to maximize in such a way that the computational cost per iteration remains low.
Realizing the appropriate trade-off between the above-mentioned conflicting goals necessitates experiences in applying inequalities to particular problems, as becomes evident in the next subsection.
For more details please refer to~\cite{MM,SCA,ASM,SCA2,SCA3} and the references therein.

\subsection{Joint Power Allocation and Subcarrier Assignment Algorithm}
In this subsection, we propose a locally optimal solution for the optimization problem in (\ref{F3-6new}).
It is worth mentioning that the multiplication of two variables in the objective function of the problem in (\ref{F3-6new}) as well as the constrain $C_4$ are the obstacles for the design of a computationally efficient resource allocation algorithm.
Since the multiplication of two variables in (\ref{F3-8}), i.e., $a_{n,k}p_{n,k}$, is non-convex, we define the product terms as $\tilde{p}_{n,k}=a_{n,k}p_{n,k}$.
In order to handle this difficulty of the non-convexity, we adopt the big-M formulation \cite{big_M} to decouple the product terms.
Therefore, the following additional constraints are imposed accordingly as 
\begin{align}
&C_{6}:
\tilde{p}_{n,k}\leq p_{max}a_{n,k},~~~~~~~~~~~~~~~~~ 
\forall k \in \mathcal{K},~ 
\forall n \in \mathcal{N}, 
\\ 
&C_{7}:
\tilde{p}_{n,k}\leq p_{n,k},~~~~~~~~~~~~~~~~~~~~~~~ 
\forall k \in \mathcal{K},~ 
\forall n \in \mathcal{N}, 
\\ 
&C_{8}:
\tilde{p}_{n,k}\geq  p_{n,k}-(1-a_{n,k})p_{max}, ~ 
\forall k \in \mathcal{K},~ 
\forall n \in \mathcal{N}, 
\\ 
&C_{9}:
\tilde{p}_{n,k}\geq  0,~~~~~~~~~~~~~~~~~~~~~~~~~~~ 
\forall k \in \mathcal{K},~ 
\forall n \in \mathcal{N},  
\end{align}
where $\tilde{\textbf{p}}\in \mathbb{R}^{1\times KN}$ is the collection of all $\tilde{p}_{n,k}$'s.~Therefore, the original optimization problem in (\ref{F3-6new}) can be recast in equivalent form as
\begin{subequations}
\begin{align}
&\max_{{\textbf{a}},\textbf{p},\tilde{\textbf{p}}} 
~\widehat{\overline{\textrm{EH}}}(\textbf{a},\textbf{p},\tilde{\textbf{p}})
\\
s.t.: &~C_{1}-C_2,C_5-C_9, 
\\ &~C_{3}:
\sum_{k\in \mathcal{K}}~
\sum_{n\in \mathcal{N}}
\tilde{p}_{n,k}\leq p_{max},
\label{F3-22}
\\ &~C_{4}:
\sum_{ n \in \mathcal{N}}\log_2(1+\frac{\tilde{p}_{n,k}|h_{n,k}|^2}{\sigma^{2}})\geq   R_{min},~ \forall k \in \mathcal{K},
\end{align}
\label{F3-new20}%
\end{subequations}
where, by revisiting the definition of the harvested energy in (\ref{F3-4}), the objective function of (\ref{F3-new20}) is as follows
\begin{equation}
\widehat{\overline{\textrm{EH}}}(\textbf{a},\textbf{p},\tilde{\textbf{p}})=
\sum_{k\in\mathcal{K}}
\epsilon_k\bigg(
\sum_{n\in\mathcal{N}_e}
\tilde{p}_{n,k}|h_{n,k}|^{2}\bigg).
\end{equation}
The optimization problem of (\ref{F3-new20}) is still a non-convex MILP problem, which is complicated to solve.
To facilitate the solution design, we restate the integer variable in constraint $C_{5}$ as the intersection of the following regions \cite{TWC_Ata,che2014joint}
\begin{align} 
&\dot{C}_{5}:
0\leq a_{n,k}\leq 1,~
\forall k \in \mathcal{K},~ 
\forall n \in \mathcal{N}, 
\\
&\ddot{C}_{5}:
\sum_{k \in \mathcal{K}}
\sum_{n \in \mathcal{N}}
a_{n,k}-(a_{n,k})^{2}\leq 0.
\end{align}
The above transformations make integer optimization variables continuous with values between zero and one.
Therefore, the original problem in (\ref{F3-new20}) can be rewritten as 
\begin{align}
&\max_{{\textbf{a}},\textbf{p},\tilde{\textbf{p}}} ~ 
\widehat{\overline{\textrm{EH}}}(\textbf{a},\textbf{p},\tilde{\textbf{p}})
\label{F3-30}
\\
s.t.: &~C_1-C_4,\dot{C}_{5},\ddot{C}_{5},C_6-C_9. 
\nonumber
\end{align}
Note that the optimization problem in (\ref{F3-30}) is a continuous optimization problem with respect to all variables.
However, our concern is to obtain integer solutions for $a_{n,k}$.
To this end, we add a penalty term to the objective function. 
The integer variable is now relaxed to take any values between zero to one.
Thereby, the problem can be restated as follows
\begin{align}
&\max_{\textbf{a},\textbf{p},\tilde{\textbf{p}}}
\mathcal{L}(\textbf{a},\textbf{p},\tilde{\textbf{p}},\lambda)
\label{F3-31}
\\
s.t.: &~C_1-C_4,\dot{C}_{5},C_6-C_9, 
\nonumber
\end{align}
where $	\mathcal{L}(\textbf{a},\textbf{p},\tilde{\textbf{p}},\lambda)$ is the \textit{abstract Lagrangian duality} \cite{rockafellar1974conjugate} associated to (\ref{F3-30}), and is defined as  
\begin{align}
\label{F3-32}	
\mathcal{L}(\textbf{a},\textbf{p},\tilde{\textbf{p}},\lambda)=
\widehat{\overline{\textrm{EH}}}(\textbf{a},\textbf{p},\tilde{\textbf{p}})-
\lambda\bigg(\sum_{k \in \mathcal{K}}\sum_{n\in \mathcal{N}}a_{n,k}-(a_{n,k})^{2}\bigg).
\end{align}
In (\ref{F3-32}), $\lambda$ acts as a penalty factor to penalize the objective function when $a_{n,k}$ is not an integer variable.
	
\begin{proposition}\label{weak_duality}
In the abstract Lagrangian of (\ref{F3-31}),~the $\lambda$ acts as a penalty factor to penalize the objective function when the subcarrier allocation variable $a_{n,k}$ does not have a binary value.
For sufficiently large values of $\lambda$, the optimization problem in (\ref{F3-31}) is equivalent to (\ref{F3-30}) with both problems yielding the same optimal results.
\end{proposition}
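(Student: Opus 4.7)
The plan is to show that the term $\sum_{k,n}(a_{n,k} - a_{n,k}^2)$ appearing in (\ref{F3-32}) is a valid penalty for the binary requirement, so that driving $\lambda$ large forces the relaxed problem (\ref{F3-31}) to behave as the mixed-integer problem (\ref{F3-30}). First I would note the elementary identity $a - a^2 = a(1-a) \geq 0$ on $[0,1]$, with equality exactly when $a \in \{0,1\}$; this identifies the penalty as a non-negative ``fractionality'' measure that vanishes precisely on the feasible set of the original problem (\ref{F3-30}). Consequently, every point feasible for (\ref{F3-30}) is feasible for (\ref{F3-31}) with zero penalty, so writing $p^*$ and $q^*(\lambda)$ for the respective optimal values gives the weak-duality bound $q^*(\lambda) \geq p^*$ for every $\lambda \geq 0$, in line with (\ref{F2:7}).

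Next I would obtain the reverse inequality by a compactness-plus-boundedness argument. The feasible region of (\ref{F3-31}) is closed and bounded thanks to $\dot{C}_{5}$, $C_3$, and $C_6$--$C_9$, and the continuous objective $\widehat{\overline{\textrm{EH}}}$ is uniformly bounded above on this set by a constant $M$ depending only on $p_{max}$, the $\epsilon_k$, and the channel gains. If at an optimum of (\ref{F3-31}) the vector $\textbf{a}^{(\lambda)}$ were not binary, then the penalty would contribute a strictly positive amount $\beta(\lambda) > 0$, so $q^*(\lambda) \leq M - \lambda\,\beta(\lambda)$; combined with $q^*(\lambda) \geq p^*$ this forces $\beta(\lambda) \leq (M - p^*)/\lambda \to 0$. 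Extracting a convergent subsequence of optimal triples $(\textbf{a}^{(\lambda_j)}, \textbf{p}^{(\lambda_j)}, \tilde{\textbf{p}}^{(\lambda_j)})$ from the compact feasible region and passing to the limit using continuity of the quadratic fractionality functional shows the limit has binary $\textbf{a}$, is feasible for (\ref{F3-30}), and satisfies $\lim_j q^*(\lambda_j) \leq p^*$; together with the monotone non-increasing behavior of $q^*$ in $\lambda$ this yields $\lim_{\lambda\to\infty} q^*(\lambda) = p^*$.

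The main obstacle is strengthening this limiting statement to the exact claim that $q^*(\lambda) = p^*$ at a \emph{finite} threshold, as suggested by the wording ``sufficiently large values''. I plan to address this by invoking the exact-penalty theory for abstract Lagrangian problems already referenced in Chapter~\ref{CHAP2} via \cite{rockafellar1974conjugate}: the feasible region is a compact polytope in $(\textbf{a},\textbf{p},\tilde{\textbf{p}})$, the objective and active constraints are Lipschitz, and the penalty function $a \mapsto a(1-a)$ has a uniform positive slope away from $\{0,1\}$ on the compact interval $[0,1]$, all of which are standard hypotheses guaranteeing the existence of a finite exact-penalty constant $\lambda_0$. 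For any $\lambda \geq \lambda_0$, every optimum of (\ref{F3-31}) must have binary $\textbf{a}^{(\lambda)}$, the penalty contribution drops out of (\ref{F3-32}), and the two problems return the same optimal value, completing the equivalence asserted in the proposition.
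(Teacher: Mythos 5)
Your argument is correct and rests on the same core mechanism as the paper's proof: the term $\sum_{k,n}(a_{n,k}-a_{n,k}^2)$ is non-negative on $[0,1]$ and vanishes exactly on binary points, every feasible point of (\ref{F3-30}) is feasible for (\ref{F3-31}) with zero penalty (giving $\mathcal{G}(\lambda)\geq p^*$ for all $\lambda$, which is the paper's weak-duality inequality (\ref{F3-36})), and a bounded objective then forces the penalty to vanish as $\lambda$ grows. Where you diverge is in how the two halves are packaged. The paper phrases everything in the minimax language of abstract Lagrangian duality and disposes of the non-binary case by contradiction: if the penalty were positive at the optimum, $\mathcal{G}(\lambda)$ would tend to $-\infty$, contradicting $\mathcal{G}(\lambda)\geq p^*$; it then asserts monotonicity of $\mathcal{G}$ and the existence of a threshold $\lambda_0$. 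You instead give an explicit quantitative bound $q^*(\lambda)\leq M-\lambda\beta(\lambda)$, extract a convergent subsequence of optima over the compact feasible set, and pass to the limit — which cleanly establishes $\lim_{\lambda\to\infty}q^*(\lambda)=p^*$ and, importantly, makes visible the gap between the asymptotic statement and the finite-threshold claim that the paper glosses over. Your appeal to exact-penalty theory to close that gap is the right instinct, but two details deserve care: the function $a\mapsto a(1-a)$ does \emph{not} have uniformly positive slope away from $\{0,1\}$ (its derivative vanishes at $a=\tfrac{1}{2}$); the property you actually need is the error bound $a(1-a)\geq \tfrac{1}{2}\min(a,1-a)=\tfrac{1}{2}\,\mathrm{dist}(a,\{0,1\})$, which does hold on $[0,1]$. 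Moreover, exact penalization of binariness alone presumes that a nearly binary point satisfying $C_1$--$C_4$, $C_6$--$C_9$ can be rounded to an exactly binary feasible point of (\ref{F3-30}) with controlled loss — a calmness condition on the full MILP feasible set that neither you nor the paper verifies; since the paper's own Case-1/Case-2 argument shares this soft spot, your proposal matches it in rigor while being more transparent about where the delicate step lies.
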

\vspace{-2mm}
\begin{proof}
By using the abstract Lagrangian duality, the proof of~\textbf{Proposition~\ref{weak_duality}} is presented.
Accordingly, the primal and dual problem of (\ref{F3-30}) can be written respectively as
\vspace{-2mm}
\begin{align}
(\textrm {Primal problem})&
~~ 
p^{*}=
\max_{\textbf{a},\textbf{p},\tilde{\textbf{p}}}~
\min_{\lambda}
\mathcal{L}(\textbf{a},\textbf{p},\tilde{\textbf{p}},\lambda),\\
(\textrm {Dual problem})~&
~~
d^{*}=
\min_{\lambda}~\max_{\textbf{a},\textbf{p},\tilde{\textbf{p}}}
\mathcal{L}(\textbf{a},\textbf{p},\tilde{\textbf{p}},\lambda).
\end{align}
Now, by defining $\mathcal{G}(\lambda)\triangleq\max\limits_{\textbf{a},\textbf{p},\tilde{\textbf{p}}} \mathcal{L}(\textbf{a},\textbf{p},\tilde{\textbf{p}},\lambda)$,~the following inequality holds according to the weak duality theorem 
\begin{eqnarray}\label{F3-36}
p^*= 
\max_{\textbf{a},\textbf{p},\tilde{\textbf{p}}}\min_{\lambda} 
\mathcal{L}(\textbf{a},\textbf{p},\tilde{\textbf{p}},\lambda) 
\leq 
\min_{\lambda\geq 0}\mathcal{G}(\lambda)=
d^*.
\end{eqnarray}
Additionally, it should be pointed out that for $\textbf{a},\textbf{p},\tilde{\textbf{p}}~s.t.:C_1-C_4,\dot{C}_{5},\ddot{C}_{5},C_6-C_9$, two cases can be observed.

\textbf{\quad\emph{Case~1}}: In the first case, it is assumed that at the optimal subcarrier allocation point, we have
\begin{equation}
\ddot{C}_{5}:~
\sum_{k \in \mathcal{K}}
\sum_{n\in \mathcal{N}}
a_{n,k}-(a_{n,k})^{2}= 0.
\end{equation}
Consequently, $d^*$ becomes a feasible solution of (\ref{F3-30}).
Afterward, substituting the optimal value of $\lambda$, i.e., $\lambda^{*}$, into the abstract Lagrangian duality problem of (\ref{F3-31}) results in 
\begin{equation}
d^*=
\mathcal{G}(\lambda^*) =
\max_{\textbf{a},\textbf{p},\tilde{\textbf{p}}}
\textrm{EH}(\textbf{a},\textbf{p},\tilde{\textbf{p}})=
p^*.
\end{equation}
Furthermore, considering the optimization problem in (\ref{F3-31}) and assuring that the subcarrier allocation variable, $a_{n,k}$, takes its values in the region \textbf{a}$~\in \dot{C}_5,\ddot{C}_5$, one can conclude that $\mathcal{G}(\lambda)$ is a monotonically decreasing function with respect to $\lambda$.
On the other hand, it can be asserted that 
$d^*$=$\min_{\lambda\geq 0} \mathcal{G}(\lambda)$.
Therefore, we have
\begin{equation}
\mathcal{G}(\lambda)= d^*, 
\forall ~\lambda\geq \lambda_{0},
\end{equation}
where $\lambda_0$ is a given value.
This confirms that for any value of $\lambda_{}\geq\lambda^*$, the solution of (\ref{F3-31}) can return the optimal solution of (\ref{F3-30}).

\textbf{\quad\emph{Case~2}}: In the second case, it is assumed that the subcarrier allocation variable, $a_{n,k}$, takes some value from the interval between zero and one.
This is equivalent to satisfying the following inequality 
\begin{equation}
\ddot{C}_{5}:\sum_{k\in \mathcal{K}}\sum_{n\in \mathcal{N}}a_{n,k}-(a_{n,k})^{2}> 0.
\end{equation}
In this regard, one may conclude $\mathcal{G}(\lambda^*)$ tends to $-\infty$ at the optimal point by referring to (\ref{F3-31}) and (\ref{F3-36}).~Nevertheless, this cannot happen.
Since the primal solution must always be greater than zero, this would be a contradiction as it states that $\mathcal{G}(\lambda^*)$ would be limited from below by the solution of (\ref{F3-30}).
Hence, $\sum_{k \in \mathcal{K}}\sum_{n\in \mathcal{N}}a_{n,k}-(a_{n,k})^{2}= 0$.~Therefore, the solution of (\ref{F3-31}) yields the optimal solution of (\ref{F3-30}).
\end{proof}
\vspace{-5mm}
Now, we can express the optimization problem in (\ref{F3-30}) in terms of difference of convex functions (D.C.) as follows
\begin{align}\label{F3-41}
&\max_{\textbf{a},\textbf{p},\tilde{\textbf{p}}}~
\widehat{\overline{\textrm{EH}}}
(\textbf{a},\textbf{p},\tilde{\textbf{p}})-
\lambda\big(\mathcal{V}(\textbf{a})-\mathcal{W}(\textbf{a})\big)
\\
s.t.: &~C_1-C_4,\dot{C}_{5},C_6-C_9,
\nonumber
\end{align}
where
\vspace{-5mm}
\begin{align}
\mathcal{V}(\textbf{a})&=
\sum_{k\in\mathcal{K}}
\sum_{n\in\mathcal{N}}
a_{n,k}, \\
\mathcal{W}(\textbf{a})&=
\sum_{k\in\mathcal{K}}
\sum_{n\in\mathcal{N}}
(a_{n,k})^2,
\end{align}
are all convex functions, and $\lambda$ is the penalty factors to penalize the objective function when $a_{n,k}$ is not integer values.~Although all terms in the objective function of (\ref{F3-41}) are convex, the subtraction of two convex functions are not necessarily convex
\cite{yuille2003concave,che2014joint,DC}.~To make a convex approximation for the objective function, we adopt the MM algorithm by constructing a surrogate function via a first-order Taylor approximation as
\begin{align}\label{F3-42}
\mathcal{W}(\textbf{a})
\simeq 
\mathcal{W}(\textbf{a}^{t-1})+
\nabla_{\textbf{a}}\mathcal{W}
(\textbf{a}^{t-1})^{T}.(\textbf{a}-\textbf{a}^{t-1})
\triangleq 
\tilde{\mathcal{W}}(\textbf{a}), 
\end{align}
where $t$ denotes the iteration number, the $\textbf{a}^{t-1}$ is the solution of the problem at $(t-1)^{th}$ iteration, and $\nabla_{\square}$ represents the gradient with respect to ${\square}$.~Approximation (\ref{F3-42}), satisfies the MM criteria and is a tight upper bound of $\mathcal{W}(\textbf{a})$ \cite{MM,DC}.~Therefore, using the MM approach while constructing a sequence of surrogate functions at the $t^{th}$ iteration, we can solve the following convex problem instead of dealing with the non-convex optimization problem in (\ref{F3-41}). Thus, we have 
\begin{align}\label{F3-43}
&\max_{\textbf{a},\textbf{p},\tilde{\textbf{p}}}~
\widehat{\overline{\textrm{EH}}}
(\textbf{a},\textbf{p},\tilde{\textbf{p}})-
\lambda\big(
\mathcal{V}(\textbf{a})-
\tilde{\mathcal{W}}(\textbf{a})
\big)\\
s.t.: 
&~C_1-C_4,\dot{C}_{5},C_6-C_9.
\nonumber
\end{align}
It is easy to demonstrate that the optimization problem~(\ref{F3-43}) is convex and can be solved efficiently via D.C. approximation based on the interior point methods~\cite{MM}.~As a consequence, the solution of (\ref{F3-43}) would be an approximation to the solution of the original problem given~in~(\ref{F3-41}). However, in D.C. programming, the iteration begins from a feasible initial point and solves the optimization problem iteratively until it eventually approaches a close-to-optimal solution~\cite{che2014joint,DC,CL_Ata}.~Besides, it must be noticed that the MM approach produces a sequence of improved feasible solutions with the adopted D.C. approximation, which would ultimately converge to a locally optimal solution $(\textbf{a}^{*},\textbf{p}^*,\tilde{\textbf{p}}^*)$ using standard convex program solvers such as CVX.

\begin{proposition}
By incorporating D.C. approximation, the solution of the problem~(\ref{F3-43}) {\text{becomes}} a tightly lower-bounded solution from below for the original problem (\ref{F3-41}) at the end of each {\text{iteration.}}
\end{proposition}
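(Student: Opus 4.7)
The plan is to exploit the convexity of $\mathcal{W}(\textbf{a}) = \sum_{k\in\mathcal{K}}\sum_{n\in\mathcal{N}}(a_{n,k})^2$ directly through the first-order underestimator inequality (\ref{F2:2}). First, I would denote the objective of (\ref{F3-41}) by $F(\textbf{a},\textbf{p},\tilde{\textbf{p}}) \triangleq \widehat{\overline{\textrm{EH}}}(\textbf{a},\textbf{p},\tilde{\textbf{p}}) - \lambda\mathcal{V}(\textbf{a}) + \lambda\mathcal{W}(\textbf{a})$ and the surrogate objective of (\ref{F3-43}) by $\tilde{F}(\textbf{a},\textbf{p},\tilde{\textbf{p}};\textbf{a}^{t-1}) \triangleq \widehat{\overline{\textrm{EH}}}(\textbf{a},\textbf{p},\tilde{\textbf{p}}) - \lambda\mathcal{V}(\textbf{a}) + \lambda\tilde{\mathcal{W}}(\textbf{a})$. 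Observe that both problems share an identical feasible region defined by $C_1-C_4$, $\dot{C}_5$, $C_6-C_9$, so any feasible point for one is feasible for the other.

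Next, I would invoke the global underestimator property of convex functions stated in (\ref{F2:2}). Since $\mathcal{W}$ is convex and differentiable, it yields $\mathcal{W}(\textbf{a}) \geq \tilde{\mathcal{W}}(\textbf{a}) = \mathcal{W}(\textbf{a}^{t-1}) + \nabla_{\textbf{a}}\mathcal{W}(\textbf{a}^{t-1})^{T}(\textbf{a} - \textbf{a}^{t-1})$ for every $\textbf{a}$, with equality if and only if $\textbf{a} = \textbf{a}^{t-1}$ (the tangency condition analogous to (\ref{F2:14})). Multiplying by the nonnegative penalty $\lambda$ and subtracting the two objectives gives
\[
F(\textbf{a},\textbf{p},\tilde{\textbf{p}}) - \tilde{F}(\textbf{a},\textbf{p},\tilde{\textbf{p}};\textbf{a}^{t-1}) = \lambda\bigl(\mathcal{W}(\textbf{a}) - \tilde{\mathcal{W}}(\textbf{a})\bigr) \geq 0,
\]
so $\tilde{F} \leq F$ pointwise on the common feasible set, and the two coincide exactly at the linearization point.

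I would then lift this pointwise inequality to an optimal-value statement. Letting $(\textbf{a}^{(t)},\textbf{p}^{(t)},\tilde{\textbf{p}}^{(t)})$ denote any maximizer of (\ref{F3-43}) at the current iteration, the chain $\tilde{F}(\textbf{a}^{(t)},\textbf{p}^{(t)},\tilde{\textbf{p}}^{(t)};\textbf{a}^{t-1}) \leq F(\textbf{a}^{(t)},\textbf{p}^{(t)},\tilde{\textbf{p}}^{(t)}) \leq \max F$ shows that the surrogate optimum lower-bounds the true optimum of (\ref{F3-41}). Tightness follows from the tangency at $\textbf{a}^{t-1}$: evaluating $\tilde{F}$ at $\textbf{a} = \textbf{a}^{t-1}$ returns exactly $F(\textbf{a}^{t-1},\cdot)$, so the bound is not merely valid but attained at the current iterate. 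Combined with the fact that the maximizer of $\tilde{F}$ cannot do worse than $\textbf{a}^{t-1}$ itself, this also delivers the standard MM ascent $F(\textbf{a}^{(t)},\cdot) \geq F(\textbf{a}^{(t-1)},\cdot)$ consistent with (\ref{F2:18}).

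The only real subtlety is keeping the inequality directions straight: because the subtracted convex piece $\mathcal{W}$ is linearized from below, the surrogate objective lies \emph{under} the true one, so in a maximization setting the surrogate is a minorant (not a majorant) and its optimum constitutes a lower bound rather than an upper bound. Aside from that, everything reduces to a one-line application of (\ref{F2:2}), so no machinery beyond the elementary convexity of the sum-of-squares $\mathcal{W}$ is needed.
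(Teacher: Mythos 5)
Your proposal is correct and rests on exactly the same mechanism as the paper's proof: the first-order underestimator property $\mathcal{W}(\textbf{a})\geq\tilde{\mathcal{W}}(\textbf{a})$ of the convex function $\mathcal{W}$, tangency at the expansion point $\textbf{a}^{t-1}$, and optimality of the surrogate maximizer, which together give both the minorization $\tilde{F}\leq F$ and the ascent chain the paper writes out. The only difference is presentational — you state the pointwise lower bound first and then derive the monotone-improvement chain, whereas the paper goes straight to the chain of inequalities — so the two arguments are essentially identical.
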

\vspace{-2mm}
\begin{proof}
In the $t^{th}$ iteration, the objective function of (\ref{F3-41}) is 
\begin{equation*}
\widehat{\overline{\textrm{EH}}}
(\textbf{a}^t,\textbf{p}^t,\tilde{\textbf{p}}^t)-
\lambda\big(\mathcal{V}(\textbf{a}^t)-
\mathcal{W}(\textbf{a}^t)\big).    
\end{equation*}
Subsequently, in the next iteration, we have
\begin{align}
\widehat{\overline{\textrm{EH}}}
(\textbf{a}^{t+1},&\textbf{p}^{t+1},\tilde{\textbf{p}}^{t+1})-
\lambda\big(\mathcal{V}(\textbf{a}^{t+1})-
\mathcal{W}(\textbf{a}^{t+1})\big) 
\nonumber\\&
\geq
\widehat{\overline{\textrm{EH}}}
(\textbf{a}^{t+1},\textbf{p}^{t+1},\tilde{\textbf{p}}^{t+1})-
\lambda\big(\mathcal{V}(\textbf{a}^{t+1})-
\mathcal{W}(\textbf{a}^{t})\big)+
\lambda\nabla_{\textbf{a}}\mathcal{W}
(\textbf{a}^{t})^{T}.(\textbf{a}-\textbf{a}^{t})
\nonumber\\
\nonumber&=	
\max_{{\textbf{a}},{\textbf{p}}, \tilde{\textbf{p}}}
\widehat{\overline{\textrm{EH}}}
(\textbf{a},\textbf{p},\tilde{\textbf{p}})-
\lambda\big(\mathcal{V}(\textbf{a})-
\mathcal{W}(\textbf{a}^{t})-
\nabla_{\textbf{a}}\mathcal{W}
(\textbf{a}^{t})^{T}.(\textbf{a}-\textbf{a}^{t})\big)
\\
&\geq
\widehat{\overline{\textrm{EH}}}(\textbf{a},\textbf{p},\tilde{\textbf{p}})-
\lambda\big(\mathcal{V}(\textbf{a})-
\mathcal{W}(\textbf{a}^{t})-
\nabla_{\textbf{a}}\mathcal{W}
(\textbf{a}^{t})^{T}.(\textbf{a}^{t}-\textbf{a}^{t})\big)
\nonumber\\
\nonumber
& =
\widehat{\overline{\textrm{EH}}}
(\textbf{a}^{t},\textbf{p}^{t},\tilde{\textbf{p}}^{t})-
\lambda\big(\mathcal{V}(\textbf{a}^{t})-
\mathcal{W}(\textbf{a}^{t})\big).
\end{align}
This completes the proof. 
\end{proof}
\vspace{-5mm}
One can readily verify that the objective function of (\ref{F3-43}) takes larger values as the iteration continues.
Hence, we adopt an iterative solution to tighten the obtained upper bound based on the \textbf{Algorithm~\ref{alg2D.C.}}. 

\begin{algorithm}[H]
\caption{Proposed Iterative Method via D.C. Programming Based on the MM Approach}
\label{alg2D.C.}
\begin{algorithmic}[1]
\STATE {$\mathbf{Initialize}$} \\{
\begin{addmargin}[1em]{0em}
{MM iteration index $t=0$ with maximum number of MM iteration $T_{max}$, \\
feasible set vector $\mathbf{a}^{0}$, $\mathbf{p}^{0}$, and $\tilde{\mathbf{p}}^{{0}}$},\\
and the penalty factor $\lambda\gg1$.
\end{addmargin}}
\STATE{\textbf{repeat} }
\STATE{
\begin{addmargin}[1em]{0em}
Update $\tilde{\mathcal{W}}(\textbf{a})$ based on (\ref{F3-42}). 
\end{addmargin}}
\STATE{
\begin{addmargin}[1em]{0em}
Solve optimization problem of (\ref{F3-43}) and store the intermediate resource allocation policy $\textbf{a}^{t}$, $\textbf{p}^{t}$, and $\tilde{\textbf{p}}^t$.
\end{addmargin}}
\STATE{
\begin{addmargin}[1em]{0em}
Set $t=t+1$.
\end{addmargin}}
\STATE{
\begin{addmargin}[1em]{0em}
Set \{$\mathbf{a}^t,\mathbf{p}^t$,$\tilde{\mathbf{p}}^t$\}= \{$\mathbf{a},\mathbf{p}$,$\tilde{\mathbf{p}}$\}.
\end{addmargin}}
\STATE \textbf{until} Convergence or $t=T_{max}$
\STATE \textbf{return} \{$\mathbf{a}^{*},{\mathbf{p}}^{*}$,$\tilde{\mathbf{p}}^*$\} $=$ \{$\mathbf{a}^{t},{\mathbf{p}}^{t}$,$\tilde{\mathbf{p}}^t$\}
\end{algorithmic}
\end{algorithm}
		
\section{Complexity Analysis}
As can be observed, the joint optimization problem in (\ref{F3-43}) involves $KN$ variables and $N+K+5NK$ linear constraints.
Consequently, it can be concluded that the overall computational complexity of optimization problem is $\mathcal{O}(NK)^{2}(N+K+5NK)$.
This is asymptotically equal to $\mathcal{O}(NK)^{3}$, exhibiting a polynomial-time complexity.

\section{Simulation Results}
\begin{table}[t]
\centering
\caption{Simulation Parameters}
\label{chap3:Simulation_Parameters}
\begin{tabular}{|c|c|}\hline
{\bf Parameter} & {\bf Value} \\ \hline \hline
{Cell coverage ($d_{max}$)} & {$20$ m}\\
{Reference distance ($d_{0}$)}&{$5$ m}\\
{The number of user ($K$)} & {$4$}\\
{The number of subcarriers ($N$)} & {$16$} \\
{Noise power ($\sigma^2$}) & {$-120$} dBm \\
{The bandwidth of each subcarrier} & {$180$} kHz \\
{Path loss exponent ($\alpha$)} & {$2.76$} \\
{Path loss model for cellular links} & {$31.7+27.6\log(\frac{d}{d_0})$} \\
{Multi-path fading distribution} & {Rician fading with factor $3$ dB} \\
{Power conversion efficiency ($\epsilon$)}  & {$30\%$} \\
{The maximum transmit power of the AP ($p_{max}$)} & {$30$ dBm} \\
{The minimum data-rate requirement for the $k^{th}$ user (${R}_\textnormal{min}$)} & {$1$ bps$/$Hz} \\
{Channel realization number} & $100$\\
\hline 
\end{tabular}
\end{table}	
In this section, the performance gain of the proposed joint subcarrier assignment and power allocation algorithm for SWIPT in the DL direction of a single-cell multi-user OFDMA system is evaluated through extensive simulations.
The radius of the cell, $d_{max}$, is 20 meters, with a reference distance, $d_{0}$, of 5 meters.
Moreover, there are $K = 4$ uniformly and randomly located users between, $d_{0}$, the reference distance and maximum coverage of the small-cell, $d_{max}$.
We further assume a frequency-selective fading channel and consider the central carrier frequency is set to 3 GHz with the bandwidth of each subcarrier being {\text{180 kHz}}.
The number of subcarriers is $N=16$, where the optimal set cardinality of subcarriers for ID and EH is determined based on~\cite{IA}.
The variance of the background noise at the receiver is equal to ${\sigma_{n,k}^2=\sigma^2=-120}$ dBm throughout the simulations.~Since a line-of-sight (LoS) signal is expected in the received signal, the small-scale fading channel is modeled as Rician fading with Rician factor $\rho=3$ dB.
Besides, the Rician flat fading channel gains include a distance-dependent path loss model of $31.7+10 \alpha \log(\frac{d}{d_0})$ [dB] (where $d$ is the distance between the transmitter and the receiver) and a log-normal shadowing component with~$8$ dB standard deviation where the path loss exponent is equal to $\alpha = 2.8$~\cite{339880}.
These parameters for propagation modeling and simulations follow the suggestions in 3GPP evaluation methodology~\cite{chap3:3GPP}.~The power conversion efficiency of all users, $\epsilon_k$, is assumed to be the same and is equal to  $\epsilon_k = \epsilon = 0.3 $.~The target transmission rate {\text{$R_{min}=1$ bit/second/Hz (bps/Hz)}} unless otherwise stated.~Moreover, we conduct Monte Carlo simulations by generating random realizations of the channel gains to obtain the average harvested energy of the network.
Finally, the setting summarized in \textbf{Table}~(\ref{chap3:Simulation_Parameters}) are used unless otherwise specified.

\subsection{Total Harvested Energy versus the Maximum Transmit Power}
Figure~(\ref{plot:3.1}) shows the total harvested energy versus the maximum transmit power $p_{max}$.
As can be observed, the average total harvested energy grows monotonically as the maximum transmit power increases.
Besides, harvested energy for the lower values of the maximum transmit power is low as compared to higher values. This is due to the inability of the AP to contribute to energy harvesting as it is forced to ensure the QoS requirements.
Consequently, for the higher values of the maximum transmit power, yet with the same data-rate requirement, the AP can help users to harvest more energy since fewer subcarriers are assigned to ID and more to EH.
Hence, excess subcarriers are utilized to provide more harvested energy. 
\begin{figure}[!b]
\centering
\includegraphics[width=12cm]{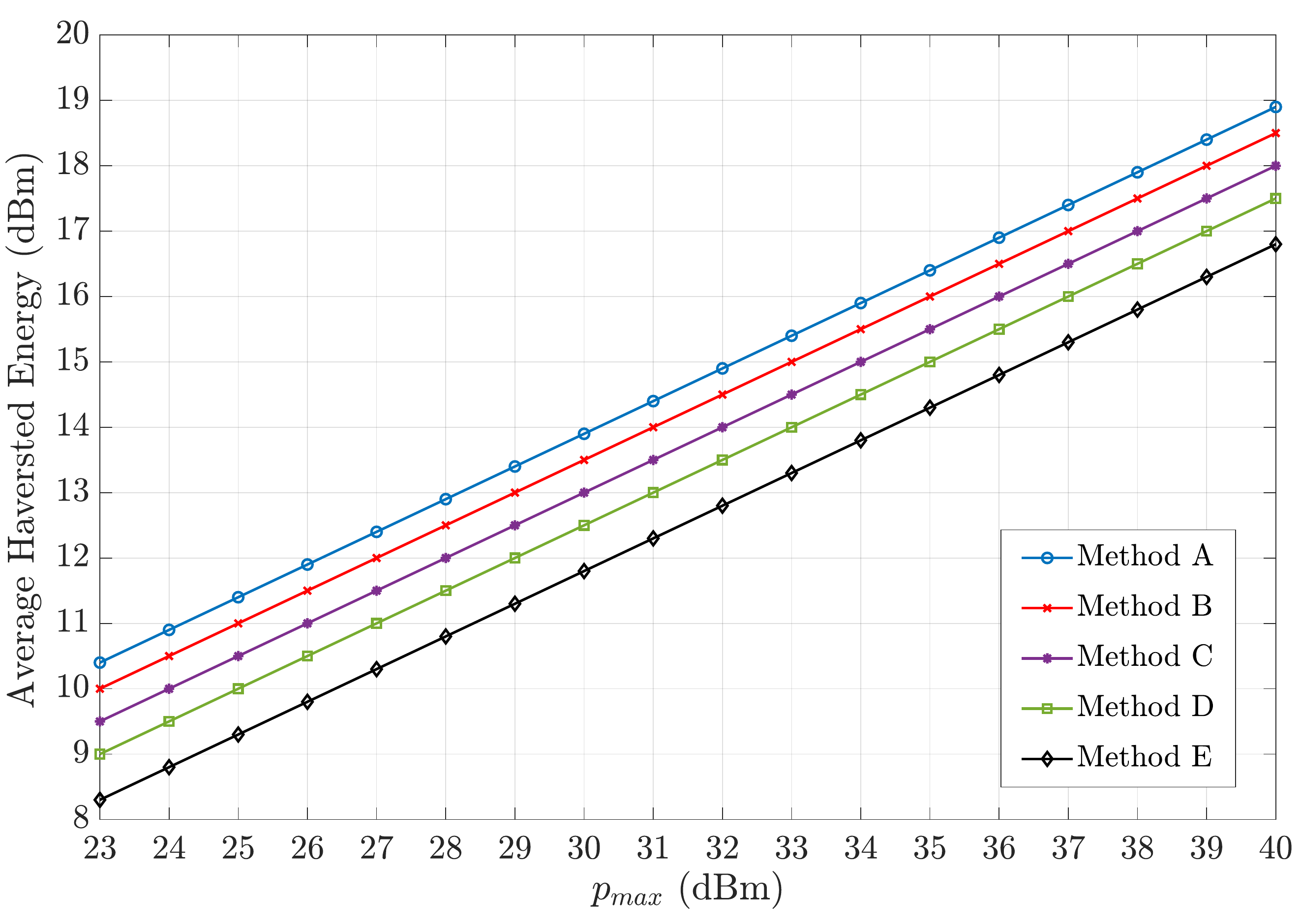}
\caption{Total harvested energy versus the maximum transmit power}
\label{plot:3.1}
\end{figure}

For comparison, the performance of the proposed joint optimization algorithm, Method~A, is compared with the following benchmark algorithms as of four Methods~B-E.
Method B is the proposed method in \cite{IA}.~In this method, a subset of subcarriers is assigned to EH to maximize the harvesting energy based on the resource allocation design.
On the other hand, the remaining ones are allocated for ID with an imposed QoS requirement.
Method~C examines the proposed method in~\cite{18} in which each subcarrier set is divided into two groups.
Specifically, one group is employed for EH, and the other one is utilized for ID while considering fixed PS ratios.
Method~D is our proposed method with equal power allocation.
This method is based on the proposed algorithm for the subcarrier assignment, while equal power is allocated across subcarrier for each user.
Method~E is our proposed method with an equal power allocation alongside a random subcarrier assignment. This method considers equal power, where the subcarrier assignment is done randomly for meeting the data-rate requirement.
Once the minimum data-rate requirement is satisfied, the remaining subcarriers are assigned to EH.

It can be seen that our proposed method outperforms the other benchmark algorithms due to the joint optimization framework.
As can be seen, Method B performs better than Method~C.
This is because Method B considers the power allocation and subcarrier assignment based on a heuristic search, while in Method~C, each subcarrier is split into two streams with a fixed PS ratio.
In particular, in Method C, the receiver does not adapt to the channel condition of each subcarrier, and each subcarrier splits the same ratio of power for ID and EH.
This results in the degradation of the performance gain and less harvested energy comparatively.
Another interesting observation is that considering a joint subcarrier assignment and power allocation in the optimization problem leads to a remarkable enhancement of the performance gain due to mitigating the harmful effect of deep fades experienced in the wireless channel as the resources allocator carefully takes this phenomenon under consideration.
It should be noted that the reason the Method~E performs worse than the rest is that the subcarriers are assigned randomly together with an equal power assignment to users.
Although the performance can improve if the subcarriers are assigned according to the designed resource allocation policy as it is the case in Method~D.

\subsection{Average Harvested Energy versus Distance}
Figure~\ref{plot:3.2} depicts the harvested energy versus distance between transmitter and receiver.
It is observed that as the distance increases, the harvested energy decreases.
The reason for this is that by increasing the distance, the channel strength would become weak, and subsequently, more subcarriers with more power are needed to be assigned to meet the minimum required data-rate.
Hence, less energy would be harvested by the users since the AP first needs to consume power to secure the quality of experience provisioning.
Once the minimum data-rate is met, the rest of the subcarriers would be exploited for EH. It should be noted that the Methods A-E are the same as defined in the last subsection.
The explanations and superiority of Method A also stay the same and is because of our joint optimization network resource allocation framework.
\begin{figure}[!h]
\centering
\includegraphics[width=12cm]{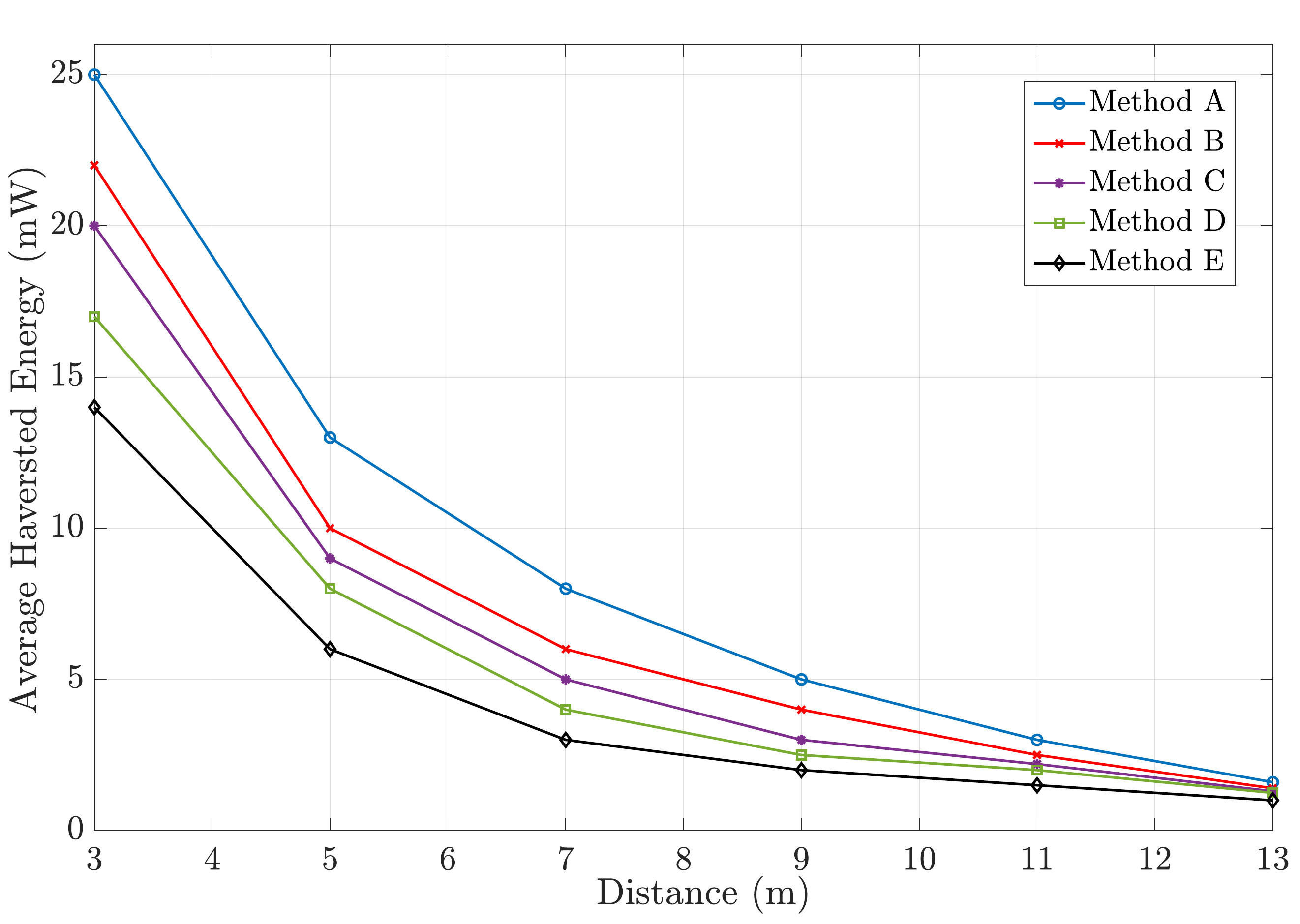}
\caption{Average harvested energy versus distance.}
\label{plot:3.2}
\end{figure}

\subsection{Average Harvested Energy versus Number of Iterations}
Figure~(\ref{plot:3.3}) illustrates the convergence of the proposed algorithm for different initialization for the power allocation.
It can be observed that the average harvested energy rises up gradually as the number of iterations increases and tends to a stationary point no matter how the power is allocated among users over all subcarriers initially.
This is because a more accurate, locally optimal solution can be found by increasing the number of iterations.
Likewise, this is equivalent to saying that the algorithm is getting closer to satisfying convergence requirements.
In the figure, we compare three initial selections for the power control; 
$\mathbf{p}^0(i) = \frac{p_{max}}{N}$, i.e., equal power is assigned to all users over all subcarriers, be it to an EH or an ID operation, $\mathbf{p}^0(i) = p_{max}$, i.e., 
all subcarriers have the maximum power, and $\mathbf{p}^0(i) = 0$, no power is assigned to different subcarriers to start with.
Note that the last two power assignments do not satisfy the constraint ${C_{3}}$ in (\ref{F3-22}).
Moreover, even though these curves are obtained for different initial power allocations, they all converge to almost the same value.
However, the convergence rate differs significantly from the different power allocation settings: while the initial setting $\mathbf{p}^0(i) = 0$ converges the slowest to the optimal harvested energy, the setting $\mathbf{p}^0(i)=\frac{p_{max}}{N}$, which satisfies the constraints of the optimization problem, converges the fastest, i.e., less than nine iterations are required.\\
\begin{figure}[!t]
\centering
\includegraphics[width=12cm]{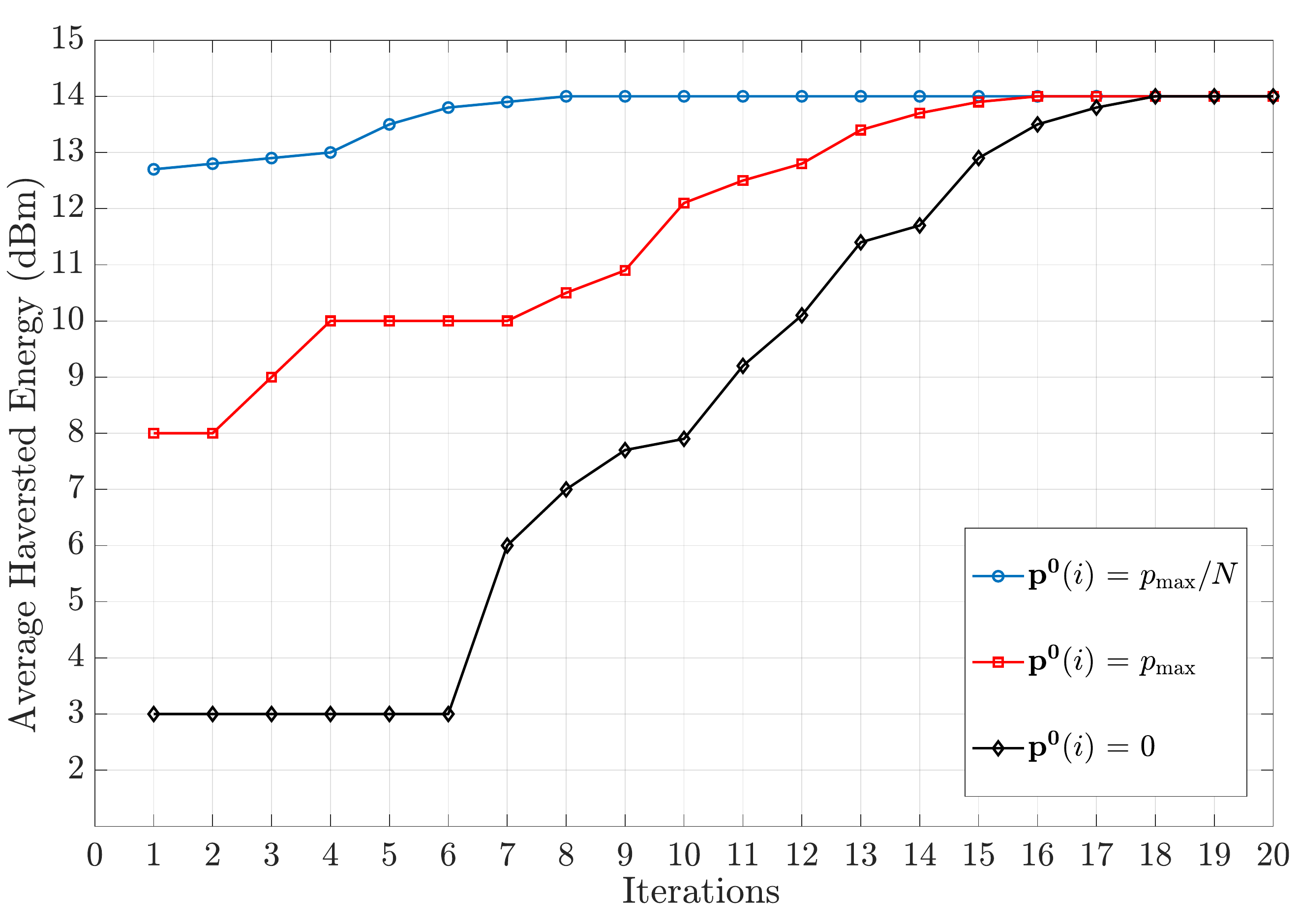}
\caption{Average harvested energy versus number of iterations.}
\label{plot:3.3}
\end{figure}

\begin{figure}[!t]
\centering
\includegraphics[width=12cm]{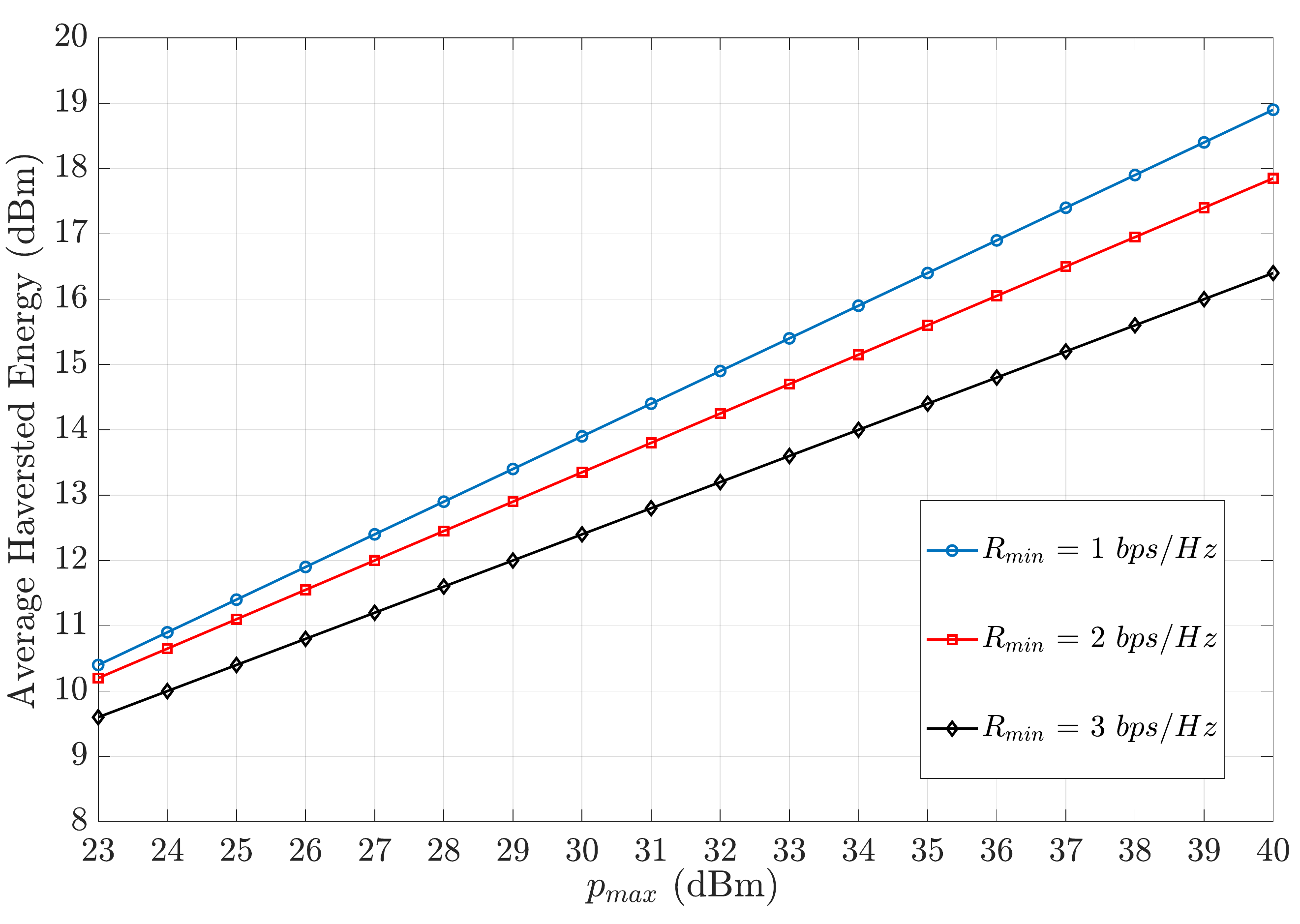}
\caption{Harvested energy versus sum transmit power at different target data-rates.}
\label{plot:3.4}
\end{figure}
\subsection{Harvested Energy versus Maximum Transmit Power at Different Target Data-rates}
Figure (\ref{plot:3.4}) illustrates the harvested energy versus the maximum transmit power, $p_{max}$, for different data-rate requirements.
It can be seen that less energy would be harvested by increasing the minimum data-rate requirement, $R_{min}$.
The reason behind it is quite evident.
Considering a fixed maximum transmit power at the AP, more subcarriers with more power would be required to be assigned to each user when the minimum data-rate requirement becomes larger.
This means that fewer subcarriers would be assigned to EH operation with lower allocated power.
This, in turn, causes less energy to be harvested by the users over the EH subcarriers as the minimum data-rate requirement increases.

\section{Summary}
In this chapter, we investigated the problem of joint subcarrier assignment and power control for SWIPT in a multi-user single small-cell network to maximize the harvesting energy while respecting the minimum required data-rate for each user.
We reduced the complexity of the receiver in our proposed algorithm as the receiver does not need a splitter to perform appropriately; that is, neither time nor power splitter is utilized at the receiver.
The only knowledge that the resource allocator needs to know is which group of subcarriers is allocated for EH and which group for ID operation, where this knowledge is derived based on the CSI and the designed resource allocation policy.
The problem considered in this chapter was a non-convex MILP problem, which is generally difficult to solve.
In order to circumvent this difficulty, we relaxed the integer variables and employed the MM approach to convexify the problem.
Afterward, the problem was solved in an iterative manner to obtain a locally optimal solution.
Simulation results demonstrated that the designed algorithms performed better than other algorithms that have been addressed in the literature.
\newpage\thispagestyle{empty}\mbox{}  
\newpage\thispagestyle{empty}\mbox{}  
\chapter{SWIPT in Multi-Cell Networks}
\label{CHAP4}
\fancyhf{}
\renewcommand{\headrulewidth}{2pt}
\fancyhead[LE,RO]{\thepage}
\fancyhead[RE]{\textit{ \nouppercase{\leftmark}} }
\fancyhead[LO]{\textit{ \nouppercase{\rightmark}} }
\renewcommand{\footrulewidth}{0.1pt}
\fancyfoot[CE,CO]{\nouppercase{\leftmark}}
\fancyfoot[LE,RO]{JFMJ}

\vspace{15mm}

Most of the literature on the subject of simultaneous wireless information and power transfer (SWIPT) considers a single-cell network, in which users’ data-rate is a function of their signal-to-noise ratio (SNR).
Although the absence of interference simplifies problem solving, it is clear that these kinds of system models do not accurately represent real-world networks.
In contrast to chapter \ref{CHAP3}, here we examine a multi-cell network that incorporates interference in the data-rate function.
It is not always the best practice to maximize energy harvesting when it could result in the network having a lower total data-rate or throughput. 
Data-rate is conventionally defined as the number of delivered information bits per channel use measured in~bit/second/Hz~(bps/Hz).
In this chapter, we use a smart resource allocation policy to maximize the data-rate, subject to a minimum harvested energy constraint for users. 

The problem of resource allocation in multi-cell networks has already been addressed by other researchers, for instance, the authors of~\cite{06294504}, who studied an energy-efficient communication in the downlink (DL) of an orthogonal frequency division multiple access (OFDMA) multi-cell network with large numbers of single antenna cooperative base stations (BSs). 
They based their work on joint BS zero-forcing beamforming with full channel state information (CSI) while looking at the trade-off between energy efficiency, backhaul capacity, and network capacity.
The authors of \cite{08885781} studied an adaptive fairness scheduling algorithm in co-located SWIPT-enabled multi-cell DL networks that employs an adjustable weighted sum method as the utility function of each user to study fairness among users.
With the aim of maximizing energy harvesting efficiency, the authors of~\cite{08690892} investigated the beamforming design for multi-cell multi-user networks with SWIPT in a co-located receiver architecture.
A max-min signal-to-interference plus interference ratio (SINR) problem in the DL of a dense multi-cell SWIPT-enabled network was explored in~\cite{7552733}, in which the users near their serving BS can perform both energy harvesting (EH) and information decoding (ID) using the power splitting receiver approach, whereas far-field users can only perform ID.
In this chapter, we present a resource allocation policy in a {\text{multi-cell multi-user}} SWIPT-enabled network with a separated receiver architecture.
In our system model design, each cell has two ring-shaped boundary regions, in which EH users are placed inside the inner boundary near small base stations (SBSs) and ID users are located in the outer region.

As described in the first chapter, the literature presents several interesting receiver designs for enabling SWIPT, of which the four most viable designs are time switching~(TS), power splitting~(PS), antenna switching~(AS), and the separated receiver architecture.
The complex circuitries associated with TS, PS, and AS receiver approaches for SWIPT add one or more additional optimization parameter(s) to the design of a proper resource allocation policy in order to segregate the received signal so it can carry out distinct or simultaneous ID and EH operations.
The separated receiver architecture as an enabler of SWIPT networks could be very useful for dramatically reducing the architectural requirements within the transmit-receive operations.
Separated design architecture in itself effectively reduces the design complexity of a receiver capable of SWIPT.
Here, we investigate resource allocation for a SWIPT-enabled multi-cell multi-user OFDMA network with a separated receiver architecture.
More precisely, we focus on designing a resource allocation algorithm for an OFMDA scheme with SWIPT, in which each SBS serves multiple ID and EH users with a single antenna.
For this configuration, we propose a joint subcarrier assignment and power allocation optimization problem that maximizes the total data-rate of the network with the aim of satisfying a minimum data-rate requirement, fulfilling a minimum amount of harvested energy, and respecting the maximum transmit power allowance constraints. 
The users’ data-rate is proportional to their SINR due to the shared frequency spectrum between the cells.
Interference in the data-rate function would make the optimization problem non-linear and non-convex – and therefore, very challenging.
Next we propose an efficient algorithm via the majorization minimization (MM) approach based on the difference of convex functions (D.C.) programming and variable relaxation. 
This algorithm is intended to deal with these issues and effectively help to obtain a locally optimal solution for the original problem.
Simulation results confirm that our proposed algorithm achieves excellent performance as compared to other studies described in the literature.

\section{System Model}
In this section, we aim at extending the scenario of having only one small-cell, as in \text{chapter \ref{CHAP3}}, to a multi-cell case in an indoor application.
As shown in figure~(\ref{fig:4-1}), we consider a DL OFDMA-based network with $J$ cells each having one serving SBS.
The set of all SBSs is denoted by ${j \in \mathcal{J}=\{1,2,...,J\}.}$
We further assume that each SBS is equipped with a single antenna.~The single antenna assumption also holds for all receivers.
Furthermore, two sets of users are distinguished in each cell.
Let us define the set of ID and EH users at the $j^{th}$ cell as $\mathcal{K}^{ID}_{j}=\{1,2,...,K^{ID}_{j}\}$ and $\mathcal{K}^{EH}_{j}=\{1,2,...,K^{EH}_{j}\}$,~respectively, where~$\mathcal{K_I}=\sum_{j\in\mathcal{J}}\mathcal{K}_j^{ID}$ indicates the total number of ID users, {$\mathcal{K_E}=\sum_{j\in\mathcal{J}}\mathcal{K}_j^{EH}$} shows the total number of EH users, and $\mathcal{K}=\mathcal{K_I}+\mathcal{K_E}$ gives the total number of users in the network.
The receivers in this chapter are particularly configured based on the separated receiver topology in a multi-cell system using SWIPT, where each cell has two ring-shaped boundary regions.
The near users inside the inner boundary can only harvest energy from the SBSs, whereas the far users only receive information signals from the SBSs in the outer region.
We additionally assume that the entire frequency band of $\mathscr{B}$ is divided for $N$ orthogonal subcarrier, each having a bandwidth of $\mathscr{W}$.
Furthermore, we consider that each subcarrier is assigned to at most one ID user.
Moreover, we assume that the perfect CSI is available at the resource allocator to design the resource allocation policy so as to unveil the performance upper bound of the considered network.
For the sake of readability, we first introduce some of the essential parameters that are used to describe the system model:
\begin{itemize}
\item $h^{ID}_{j,n,k}$: The DL channel gain for the wireless information transfer from the $j^{th}$ SBS to the $k^{th}$ ID user over the $n^{th}$ subcarrier.
\item $g^{EH}_{j,n,k}$: The DL channel gain for the wireless power transfer from the $j^{th}$ SBS to the $k^{th}$ user over the $n^{th}$ subcarrier.
\item $a_{j,n,k}$: Binary subcarrier indicators from the $j^{th}$ SBS  corresponding to the $k^{th}$ user when the $n^{th}$ subcarrier is selected.
\item $p_{j,n,k}$: The corresponding transmit power from the $j^{th}$ SBS to the $k^{th}$ user over the $n^{th}$ subcarrier.
\end{itemize}
\begin{figure}[p]
\centering
\includegraphics[width=21cm,trim=4 4 4 4,clip,angle=90]{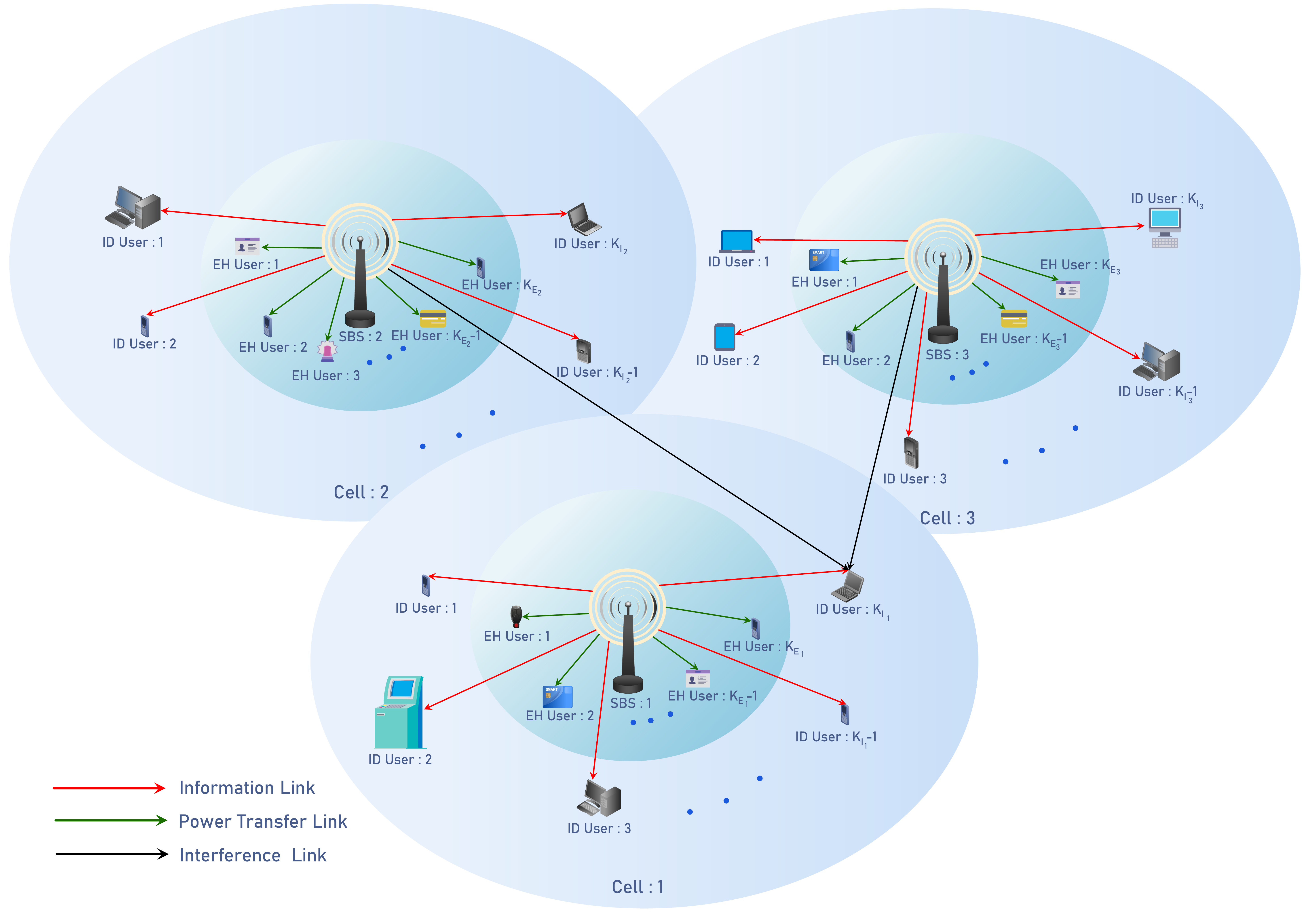}
\caption{SWIPT in a DL of a multi-user multi-cell OFDMA network with a separated receiver architecture.}
\label{fig:4-1}
\end{figure}
Then, the received signal at the $j^{th}$ SBS over the $n^{th}$ subcarrier corresponding to the $k^{th}$ ID user is given by
\begin{equation}\label{F4-1}
y^{ID}_{j,n,k}=
a_{j,n,k}\sqrt{p_{j,n,k}}h^{ID}_{j,n,k}+
\sum_{\substack{j'\neq j \\ j' \in \mathcal{J}}}
\sum_{\substack{k'\neq k \\ k' \in \mathcal{K}}}
a_{j',n,k'}\sqrt{p_{j',n,k'}}h^{ID}_{j',n,k}+ 
z^{ID}_{j,n,k},
\end{equation}
where $z^{ID}_{j,n,k}$ is the additive white Gaussian noise at an ID user.~More specifically, $z^{ID}_{j,n,k}$ is an additive white Gaussian noise (AWGN) random variable with zero mean and variance $|\sigma_{j,n,k}^{{ID}}|^{^2}$ denoted by $z^{ID}_{j,n,k} \sim  \mathcal{CN}(0,|\sigma_{j,n,k}^{{ID}}|^{^2})$.
Furthermore,~the harvested signal from the $j^{th}$ SBS for the EH user $k$ over the subcarrier $n$ is given by
\begin{equation}
y^{EH}_{j,n,k}=
\sqrt{p_{j,n,k}}g^{EH}_{j,n,k} + 
z^{EH}_{j,n,k},
\end{equation}
where $z^{EH}_{j,n,k}$ is the additive white Gaussian noise at the EH user with a circularly symmetric Gaussian distribution referred to as $z^{EH}_{j,n,k} \sim  \mathcal{CN}(0,|\sigma_{j,n,k}^{{ED}}|^{^2})$.~According to the famous Shannon capacity formula, the data-rate of the $k^{th}$ ID user over the subcarrier $n$ inside the cell $j$ can be written as
\begin{equation}\label{F4-3}
R_{j,n,k}=\log_2
\Bigg(1+
\frac{a_{j,n,k}p_{j,n,k}|h_{j,n,k}^{ID}|^2}{|\sigma_{j,n,k}^{{ID}}|^{^2}+I_{j,n,k}}
\Bigg),
\end{equation}
where 
\begin{equation}
I_{j,n,k}=
\sum_{\substack{j'\neq j \\ j' \in \mathcal{J}}}
\sum_{\substack{k'\neq k \\ k' \in \mathcal{K}}}
a_{j',n,k'}{p_{j',n,k'}}|h^{ID}_{j',n,k}|^2, 
\end{equation}
is the interference term arising from the co-channel effect on the subcarrier $n$ that is emitted by unintended transmitters sharing the same frequency channel.

On the other hand, the transferred wireless energy can be harvested at each EH user when a simple linear EH model is considered.
Therefore, the amount of the harvested energy of the $k^{th}$ EH user in the cell $j$ can be calculate according to 
\begin{equation} \label{F4-4}
\textrm{EH}_{j,k}=
\epsilon_{j,k} \sum_{n\in \mathcal{N}}a_{j,n,k}{p_{j,n,k}}|g^{EH}_{j,n,k}|^{2},
\end{equation}
where $\epsilon_{j,k}$ is the power conversion efficiency as introduced in the previous chapter.~It should be noted that the contribution of the noise power, i.e., $|\sigma_{j,n,k}^{{ID}}|^{^2}$, to EH formula in (\ref{F4-4}) is ignored due to being reportedly small compared to the other existing term in (\ref{F4-4}).

Let us now define $\textbf{p}_{kj}=[p_{j,1,k},...,p_{j,N,k}]$ as a row vector encompassing transmit power values assigned to the $k^{th}$ user in the $j^{th}$ cell for subcarriers, and $\textbf{a}_{kj}=[a_{j,1,k},...,a_{j,N,k}]$ as a row binary vector for the same user indicating the selected subcarriers in the $j^{th}$ cell.~Furthermore, $\textbf{p}=[\textbf{p}_{1,1},...,\textbf{p}_{|\mathcal{K}|,j}]^T$
and $\textbf{a}=[\textbf{a}_{1,1},...,\textbf{a}_{|\mathcal{K}|,j}]^T$, 
denote a vector containing total transmit power in the $j^{th}$ cell for all user and a binary vector representing the assigned subcarrier(s) to each user,~respectively. 
Therefore, we can define the total data-rate as
\begin{equation}
R^{\text{Total}}(\textbf{a},\textbf{p}) = 
\sum_{j\in \mathcal{J}}
\sum_{k\in\mathcal{K_{I}}}
\sum_{n\in \mathcal{N}} R_{j,n,k}.
\label{chap4:F4:rate}
\end{equation}
Similarly, we can define the total harvested energy in the network as 
\begin{equation}\label{4F4-7} 
\textrm{EH}^{\text{Total}}(\textbf{a},\textbf{p})=
\sum_{j\in \mathcal{J}}
\sum_{k\in \mathcal{K_E}}\text{EH}_{j,k}.
\end{equation}
\begin{figure}[p]
\centering
\includegraphics[width=21cm,trim=4 4 4 4,clip,angle=90]{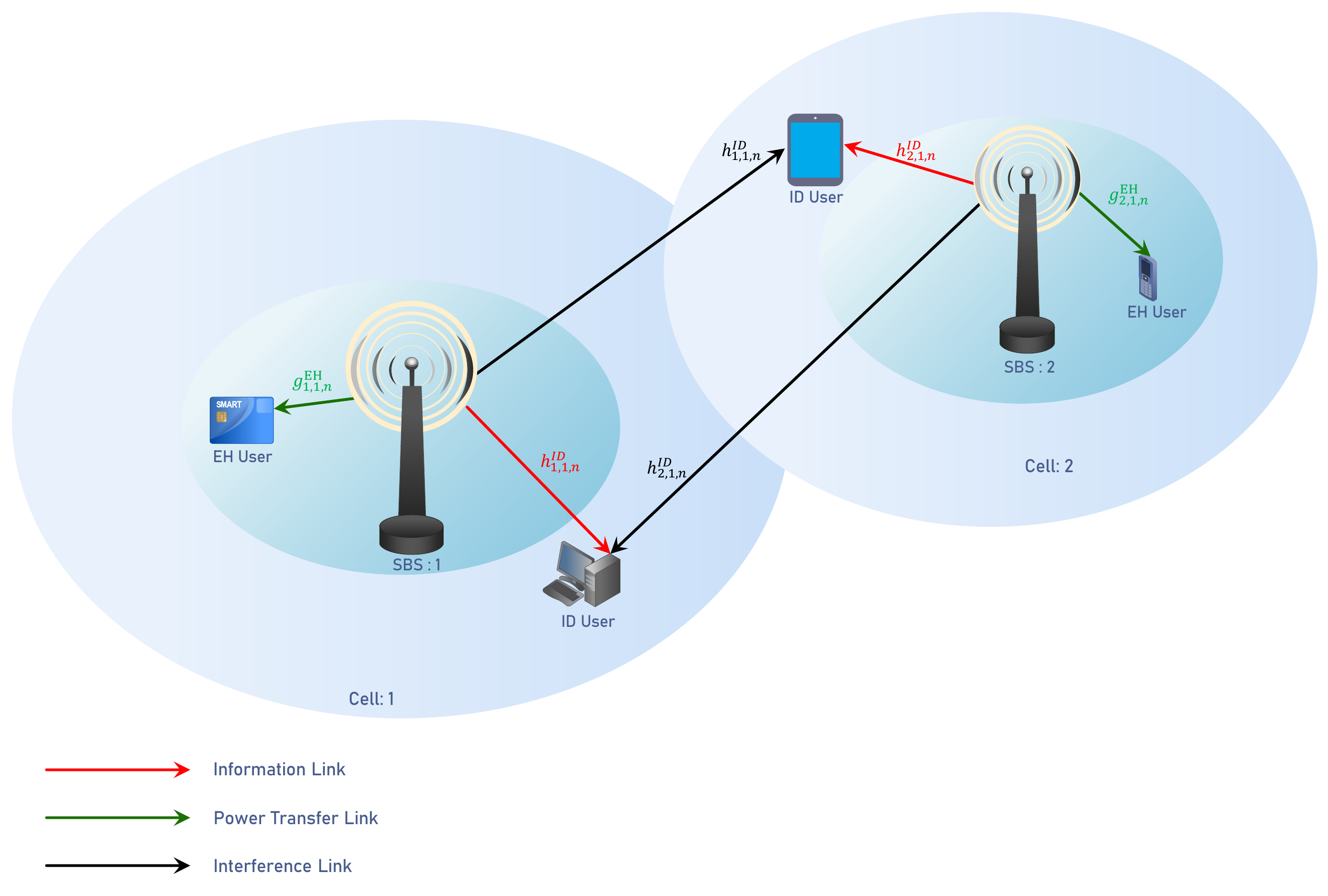}
\caption{SWIPT in a DL of an OFDMA network consisting of $J=2$ small cells, where there is one user of each type in each cell, i.e., $\mathcal{K}^{ID}_1=\mathcal{K}^{ID}_2=\mathcal{K}^{EH}_1=\mathcal{K}^{EH}_2$=1.}
\label{fig:4-2}
\end{figure}

Furthermore, to guarantee the quality of service (QoS), a minimum data-rate denoted by $R_{min}$ should be provided for ID users. That is
\begin{align}
\sum_{n \in \mathcal{N}}R_{j,n,k}\geq R_{min}, ~\forall j \in \mathcal{J}, k \in \mathcal{K_I}.
\end{align}
Moreover, a minimum harvested energy referred to as EH$_{min}$ is also considered for each EH user
\begin{align}
\text{EH}_{j,k}\geq \text{EH}_{min}, ~\forall j \in \mathcal{J}, k \in \mathcal{K_E}.
\end{align}
\section{Optimization Problem Formulation}
In this section, we aim at finding a subcarrier assignment and power allocation policy via formulating the system data-rate or throughput maximization problem,
while fulfilling a minimum harvested energy requirement for EH receivers and a minimum data-rate requirement for ID receivers.
Consequently, we introduce the following optimization problem to maximize the total data-rate of the network 
\begin{subequations}
\begin{align}
&\max_{\textbf{a},\textbf{p}}R^{\text{Total}}(\textbf{a},\textbf{p})\label{F4-5}\\
s.t.: 
&~C_{1}:
\sum_{k\in \mathcal{K_I}} 
a_{j,n,k}\leq 1,~~~~~~~~~~~~~~~~~~~
\forall j \in \mathcal{J},~
\forall n \in \mathcal{N},
\label{F4-6}\\ 
&~C_{2}:
\sum_{k\in \mathcal{K}}~
\sum_{n\in \mathcal{N}}
a_{j,n,k}\ p_{j,n,k}
\leq p_{max},~
\forall j \in \mathcal{J},
\label{F4-7}\\
&~C_{3}:
\sum_{n\in \mathcal{N}}
R_{j,n,k}\geq R_{min},~~~~~~~~~~~~~~
\forall j \in \mathcal{J},~
\forall k \in \mathcal{K_{I}},
\label{F4-8}\\
&~C_{4}:
\text{EH}_{j,k}\geq \text{EH}_{min},~~~~~~~~~~~~~~~~~~
\forall j \in \mathcal{J},~
\forall k \in \mathcal{K_{E}},
\label{F4-8-}\\
&~C_{5}:a_{j,n,k}\in\{0,1\},~~~~~~~~~~~~~~~~~~~~
\forall j \in \mathcal{J},~
\forall n \in \mathcal{N},~
\forall k \in \mathcal{K}.  
\label{F4-9} 
\end{align}
\label{chap4:f6:main}%
\end{subequations}

\vspace{-8mm}
In the optimization problem (\ref{chap4:f6:main}),~$C_{1}$ indicates that each subcarrier can be allocated to at most one ID user in each cell.
$C_{2}$ indicates that the total transmit power of SBSs should not exceed their maximum threshold, which is denoted by $p_{max}$.
In $C_{3}$, a minimum rate requirement, $R_{min}$, is guaranteed for each ID user in each cell and
$C_{4}$ makes sure of a minimum harvested energy, EH$_{min}$, is satisfied for each EH user.
Lastly, $C_{5}$ represents that the subcarrier indicator variable takes only binary values.

Due to the multiplication of two variables, the binary subcarrier allocation variables, and the interference included in the data-rate function, the problem (\ref{chap4:f6:main}) is mixed integer non-linear programming (MINLP), which is generally difficult to solve.
These challenges that make the above optimization problem complicated are explained further below:
\begin{itemize}
\item Multiplication of two variables: 
Since the multiplication of two variables is non-convex, the term~$a_{j,n,k}p_{j,n,k}$ poses a challenge in tackling the optimization problem in (\ref{chap4:f6:main}).
More specifically, because the maximum total transmit power constraint in $C_2$, the minimum data-rate requirement constraint in $C_3$, the minimum harvested energy requirement constraint in $C_4$, and also the total data-rate objective function, are multiplied by a function of both the transmit power and subcarrier allocation variables (as given in (\ref{F4-7}), (\ref{F4-8}), (\ref{F4-8-}), and (\ref{F4-5})), these mentioned constraints together with the objective function are non-convex. 
\item Interference: 
The inter-cell interference incorporated in data-rate functions, makes both the constraint $C_3$ and the objective function non-convex.
\item Binary subcarrier assignment variable: 
Discrete subcarrier assignment in $C_5$ turns (\ref{chap4:f6:main}) into a complex MINLP problem.
\end{itemize}

In the following section, we first restate the problem (\ref{chap4:f6:main}) as a mathematically tractable form to maximize the total data-rate by considering the interference in the data-rate function.
We also guarantee a minimum data-rate for ID users and a minimum harvested energy for EH users.
Furthermore, we propose a suboptimal resource allocation algorithm, which has a polynomial-time computational complexity to compromise between complexity and system performance.

\section{Solution to the Optimization Problem}
In order to address the mixed non-convex and combinatorial optimization problem (\ref{chap4:f6:main}), we first deal with the problem of variables multiplication in constraints $C_{2}$, $C_{3}$, and $C_{4}$.
In order to handle this difficulty, we adopt the big-M formulation \cite{big_M} to decouple the product terms.
Therefore, we impose the following additional constraints
\begin{align}
&C_{6}:
\tilde{p}_{j,n,k}\leq p_{max}a_{j,n,k},~~~~~~~~~~~~~~~~~~~
\forall j \in \mathcal{J},~\forall n \in \mathcal{N},~\forall k \in \mathcal{K},
\\
&C_{7}:
\tilde{p}_{j,n,k}\leq p_{j,n,k},~~~~~~~~~~~~~~~~~~~~~~~~~
\forall j \in \mathcal{J},~\forall n \in \mathcal{N},~\forall k \in \mathcal{K},
\\
&C_{8}:
\tilde{p}_{j,n,k}\geq  p_{j,n,k}-(1-a_{j,n,k})p_{max},~
\forall j \in \mathcal{J},~\forall n \in \mathcal{N},~\forall k \in \mathcal{K},
\\
&C_{9}:
\tilde{p}_{j,n,k}\geq 0,~~~~~~~~~~~~~~~~~~~~~~~~~~~~~~
\forall j \in \mathcal{J},~\forall n \in \mathcal{N},~\forall k \in \mathcal{K},
\end{align}     
where $\tilde{\textbf{p}}\in \mathbb{R}^{1\times JNK}$ is the collection of all $\tilde{p}_{j,n,k}$'s, and ${\textbf{a}}\in \mathbb{Z}^{1\times JNK}$ is the collection of all $a_{j,n,k}$'s.
Now, by revisiting the definition of the data-rate function, the data-rate of the $k^{th}$ ID user over the subcarrier $n$ inside the $j^{th}$ cell in (\ref{F4-3}), and also, the total data-rate of the network in (\ref{chap4:F4:rate}) can be rewritten respectively as 
\begin{align}
&\overline{R}_{j,n,k} = \log_2
\Bigg(1+\frac{\tilde{p}_{j,n,k}|h^{ID}_{j,n,k}|^2}
{|\sigma_{j,n,k}^{{ID}}|^{^2}+
\sum\limits_{\substack{j'\neq j \\ j' \in \mathcal{J}}}~\sum\limits_{\substack{k'\neq k \\ k' \in \mathcal{K}}}~{\tilde{p}_{j',n,k'}}|h^{ID}_{j',n,k}|^{2}}\Bigg),\\
&\widehat{\overline{R}}^{\text{Total}}(\textbf{a},\tilde{\textbf{p}}) = 
\sum_{j\in \mathcal{J}}\sum_{k\in\mathcal{K_{I}}}\sum_{n\in \mathcal{N}} \overline{R}_{j,n,k}.
\end{align}
Furthermore, the amount of the harvested energy of the $k^{th}$ EH user in the cell $j$ in (\ref{F4-4}) and the total amount of the harvested energy of the network in (\ref{4F4-7}) can be restated as 
\begin{align}
&\overline{\textrm{EH}}_{j,k}=
\epsilon_{j,k} \sum_{n\in \mathcal{N}}\tilde{p}_{j,n,k}|g^{EH}_{j,n,k}|^{2},\\
&\widehat{\overline{\textrm{EH}}}^{\text{Total}}(\textbf{a},\tilde{\textbf{p}})=
\sum_{j\in \mathcal{J}}
\sum_{k\in \mathcal{K_E}}
\overline{\textrm{EH}}_{j,k}.
\end{align}
Hence, the original optimization problem in (\ref{chap4:f6:main}) can be modified as
\begin{subequations}
\begin{align}
&\max_{\textbf{a},\textbf{p},\tilde{\textbf{p}}}
\widehat{\overline{R}}^{\text{Total}}
(\textbf{a},\tilde{\textbf{p}})\\
s.t.: 
&~C_{1},C_{5}-C_{9},\\ 
&~C_{2}:
\sum_{k\in \mathcal{K}}~
\sum_{n\in \mathcal{N}}
\tilde{p}_{j,n,k} \ \leq p_{max},~
\forall j \in \mathcal{J},\\
&~C_{3}:
\overline{R}_{j,n,k}\geq  R_{min},~~~~~~~~~~~~~
\forall j \in \mathcal{J},~
\forall k \in \mathcal{K_{I}},\\
&~C_{4}:
\overline{\textrm{EH}}_{j,k} \geq \text{EH}_{min},~~~~~~~~~~~
\forall j \in \mathcal{J},~
\forall k \in \mathcal{K_{E}}.
\end{align}
\label{chap4:f6:main2}%
\end{subequations}
Through this method, we have easily and efficiently dealt with the non-convex constraints $C_{2}-C_{4}$ by using their equivalent convex forms.
Another challenge in solving the above optimization problem is due to the incorporating interference in the data-rate functions in the constraint $C_{3}$ and objective function.
This makes the resulting optimization problem in (\ref{chap4:f6:main2}) still non-convex. 
To facilitate the solution design, we first rewrite the optimization problem in terms of the D.C. functions.
Mathematically speaking, the problem can be restated as
\begin{subequations}
\begin{align}
&\max_{\textbf{a},\textbf{p},\tilde{\textbf{p}}} \ 
\sum_{j \in \mathcal{J}}
\sum_{ k \in \mathcal{K_I}}
\mathscr{U}(\textbf{a},\tilde{\textbf{p}})-
\mathcal{V}(\textbf{a},\tilde{\textbf{p}})
\label{F4-17}\\
s.t.:~& 
\dot{C}_3: 
\mathscr{U}(\textbf{a},\tilde{\textbf{p}})-
\mathcal{V}(\textbf{a},\tilde{\textbf{p}})
\geq R_{min},~\forall j \in \mathcal{J},~\forall k \in \mathcal{K_{I}},
\label{F4-20b}\\
& C_{1}-C_{2},C_{4}-C_{9}.
\end{align}
\label{F2:20}%
\end{subequations}
where $\mathscr{U}(\textbf{a},\tilde{\textbf{p}})$~and $\mathcal{V}(\textbf{a},\tilde{\textbf{p}})$ are defined as
\begin{align}
\mathscr{U}(\textbf{a},\tilde{\textbf{p}}) & =
\sum_{n \in \mathcal{N}}\log_2\Big(
\tilde{p}_{j,n,k}|h^{ID}_{j,n,k}|^2+{|\sigma_{j,n,k}^{{ID}}|^{^2}}+
\sum_{\substack{j'\neq j \\ j' \in \mathcal{J}}}
\sum_{\substack{k'\neq k \label{4F4:21}\\ k' \in \mathcal{K}}}
\tilde{p}_{j',n,k'}{|h^{ID}_{j',n,k}|^{2}}\Big),\\
\mathcal{V}(\textbf{a},\tilde{\textbf{p}}) & =
\sum_{n \in \mathcal{N}}
\log_2\Big({{|\sigma_{j,n,k}^{{ID}}|^{^2}}+
\sum_{\substack{j'\neq j \\ j' \in \mathcal{J}}}
\sum_{\substack{k'\neq k \\ k' \in \mathcal{K}}}
\tilde{p}_{j',n,k'}|h^{ID}_{j',n,k}|^{2}}\Big).
\label{4F4:22}
\end{align}
It should be emphasized that $-\mathscr{U}(\textbf{a},\tilde{\textbf{p}})$ and $-\mathcal{V}(\textbf{a},\tilde{\textbf{p}})$ are now convex functions.
In addition, the problem in (\ref{F2:20}) belongs to the class of D.C. programming problems.
Consequently, the first-order Taylor approximation can be applied to approximate the D.C. components.
This facilitates the design of a computationally efficient iterative resource allocation algorithm for obtaining a locally close-to-optimal solution.
Thus, for any feasible point ${\textbf{a}}^{t-1}$ and $\tilde{\textbf{p}}^{t-1}$, the following approximation holds \cite{SCA,DC}
\begin{align}
\mathcal{V}(\textbf{a},\tilde{\textbf{p}}) 
\simeq
\mathcal{V}(\textbf{a}^{t-1},\tilde{\textbf{p}}^{t-1}) +
\nabla_{\tilde{\textbf{p}}}\mathcal{V}
(\textbf{a}^{t-1},\tilde{\textbf{p}}^{t-1})^T.(\tilde{\textbf{p}}-
\tilde{\textbf{p}}^{t-1})
\triangleq 
\tilde{\mathcal{V}}(\textbf{a},\tilde{\textbf{p}}),
\label{F4-25}
\end{align}
where $\tilde{\textbf{p}}^{t-1}$ is the solution of the problem at the $(t-1)^{th}$ iteration, and $\nabla_{\square}$ represents the gradient with respect to ${\square}$.
Accordingly, we rewrite the optimization problem in (\ref{F2:20}) as follows
\begin{subequations}
\begin{align}
&\max_{\textbf{a},\textbf{p},\tilde{\textbf{p}}} \ 
\sum_{j \in \mathcal{J}}
\sum_{ k \in \mathcal{K_I}}
\mathscr{U}(\textbf{a},\tilde{\textbf{p}})-
\tilde{\mathcal{V}}(\textbf{a},\tilde{\textbf{p}})
\\
s.t.:~& \dot{C}_3: 
\mathscr{U}(\textbf{a},\tilde{\textbf{p}})-
\tilde{\mathcal{V}}(\textbf{a},\tilde{\textbf{p}})\geq R_{min},~
\forall j \in \mathcal{J},~
\forall k \in \mathcal{K_{I}},
\\
~& C_{1}-C_{2},C_{4}-C_{9}. 
\end{align} 
\label{F4-26}%
\end{subequations}
It can be perceived that optimization problem in (\ref{F4-26}) is still non-convex due to the integer subcarrier allocation variable, i.e., $a_{j,n,k}$.
This binary variable turns (\ref{F4-26}) into an MINLP problem, which makes it challenging to solve with a polynomial-time complexity.
To address this issue, we adopt an approach similar to chapter \ref{CHAP3}, and substitute the constraint $C_{5}$ with the following inequalities
\begin{align}
&\dot{C}_{5}:
0\leq a_{j,n,k}\leq 1,~
\forall j \in \mathcal{J},~
\forall n \in \mathcal{N},~
\forall k \in \mathcal{K},\\
&\ddot{C}_{5}:
\sum_{j\in \mathcal{J}}
\sum_{k\in\mathcal{K}}
\sum_{n\in \mathcal{N}}
a_{j,n,k}-(a_{j,n,k})^{2}\leq 0 \label{F4:26}. 
\end{align}
In this regard, the constraint $\dot{C}_{5}$ converts the binary variable $a_{j,n,k}$ into a continuous variable with values in the close interval $[0,1]$ which contains zero, one, and all real numbers in between.
However, in the constraint $\ddot{C}_{5}$, the value of $a_{j,n,k}$ is restricted to two possible values, i.e., zero and one. 
These two integer numbers are the only numbers that satisfy the constraint $\ddot{C}_{5}$, and thereby, belong to the set $\{0,1\}$.
In the same way as discussed in previous paragraphs, we can now write the constraint $\ddot{C}_{5}$ in the D.C. format as $\nu(\textbf{a})-\mu(\textbf{a})\leq 0$ where
\begin{align}
&\nu(\textbf{a})=
\sum_{j\in \mathcal{J}}
\sum_{k\in\mathcal{K}}
\sum_{n\in \mathcal{N}}a_{j,n,k},\\
&\mu(\textbf{a})=
\sum_{j\in \mathcal{J}}
\sum_{k\in\mathcal{K}}
\sum_{n\in \mathcal{N}}
(a_{j,n,k})^{2}.
\end{align}
Similar to the approach used for the data-rate function in (\ref{4F4:21}) and (\ref{4F4:22}), we employ an equivalent methodology based on the MM approach to make the constraint $\ddot{C}_{5}$ convex. 
This is done by taking the first-order Taylor approximation of $\mu(\mathbf{a})$.
Consequently,~$\mu(\mathbf{a})$ can be approximated as
\begin{align}\label{mu}
\mu(\mathbf{a}) 
\simeq 
\mu(\mathbf{a}^{(t-1)}) +
\nabla_{\mathbf{a}}{\mu}
(\mathbf{a}^{t-1})^T(\mathbf{a}-\mathbf{a}^{t-1})
\triangleq 
\tilde{\mu}(\mathbf{a}).
\end{align}
Therefore, the constraint $\ddot{C}_5$ can be restated as $\nu(\mathbf{a})-\tilde{\mu}(\mathbf{a})\leq 0$ which is now convex.
Finally, the optimization problem at hand can be reformulated as
\begin{subequations}
\begin{align}
&\max_{\textbf{a},\textbf{p},\tilde{\textbf{p}}} \ 
\sum_{j \in \mathcal{J}}
\sum_{ k \in \mathcal{K_I}}
\mathscr{U}(\textbf{a},\tilde{\textbf{p}})-
\tilde{\mathcal{V}}
(\textbf{a},\tilde{\textbf{p}})\\
s.t.:~
& \dot{C}_3: 
\mathscr{U}(\textbf{a},\tilde{\textbf{p}})-
\tilde{\mathcal{V}}(\textbf{a},\tilde{\textbf{p}})\geq R_{min}
\label{F4-18},~
\forall j \in \mathcal{J},~
\forall k \in \mathcal{K_{I}},\\
& \ddot{C}_5: 
\nu(\mathbf{a})-\tilde{\mu}(\mathbf{a})
\leq 0,\label{F4-19}\\
& C_{1}-C_{2},C_{4},\dot{C}_5,C_{6}-C_{9}.
\end{align}
\label{F4-26-}%
\end{subequations}
The optimization problem (\ref{F4-26-}) is convex and can be solved efficiently via D.C. approximation based on the interior point methods~\cite{MM}.~As a result, the solution of (\ref{F4-26-}) would be an approximation to the solution of the original problem given in (\ref{chap4:f6:main}).
However, in D.C. programming, the iteration begins from a feasible initial point and solves the optimization problem iteratively until it eventually approaches a close-to-optimal solution~\cite{che2014joint,DC,CL_Ata}.
Besides, it worth mentioning that the MM approach produces a sequence of improved feasible solutions with the adopted D.C. approximation, which will ultimately converge to a locally optimal solution $(\textbf{a}^{*},\textbf{p}^*,\tilde{\textbf{p}}^*)$ using standard convex program solvers such as CVX.

\begin{proposition}\label{proposition3}
The solution obtained from (\ref{F4-26-}) satisfies the constraints $\dot{C}_3$ and $\ddot{C}_5$ in (\ref{F4-20b}) respectively (\ref{F4:26}), by using the first-order concave function for $\mathcal{V}(\textbf{a},\tilde{\textbf{p}})$ and the first-order convex function for ${\mu}(\textbf{a})$.
\end{proposition}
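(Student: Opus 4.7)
The plan is to show that both $\dot{C}_3$ and $\ddot{C}_5$ are automatically satisfied whenever the corresponding surrogate constraints from (\ref{F4-26-}) hold. The mechanism in each case is the standard MM property that a first-order Taylor expansion of a convex (respectively concave) function is a global lower (respectively upper) bound, so replacing the true function by its linearization \emph{tightens} the constraint rather than relaxing it. Feasibility of the approximated problem therefore implies feasibility of the original, and the statement reduces to two short monotonicity arguments plus a convexity check on $\mathcal{V}$ and $\mu$.

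For $\dot{C}_3$, I will first argue that $\mathcal{V}(\textbf{a},\tilde{\textbf{p}})$ defined in (\ref{4F4:22}) is concave in $\tilde{\textbf{p}}$: each summand is the logarithm of a nonnegative affine combination of entries of $\tilde{\textbf{p}}$, and concavity is preserved under summation. Note that $\mathcal{V}$ does not actually depend on $\textbf{a}$, so the concavity is in $\tilde{\textbf{p}}$ alone. By the first-order characterization of concave functions (the counterpart of (\ref{F2:2}) applied to $-\mathcal{V}$), the Taylor expansion around the current iterate satisfies
\begin{equation}
\mathcal{V}(\textbf{a},\tilde{\textbf{p}}) \leq \mathcal{V}(\textbf{a}^{t-1},\tilde{\textbf{p}}^{t-1}) + \nabla_{\tilde{\textbf{p}}}\mathcal{V}(\textbf{a}^{t-1},\tilde{\textbf{p}}^{t-1})^{T}(\tilde{\textbf{p}}-\tilde{\textbf{p}}^{t-1}) = \tilde{\mathcal{V}}(\textbf{a},\tilde{\textbf{p}}),
\end{equation}
with equality at $\tilde{\textbf{p}}=\tilde{\textbf{p}}^{t-1}$. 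Therefore $\mathscr{U}-\mathcal{V}\geq\mathscr{U}-\tilde{\mathcal{V}}$, and any $(\textbf{a},\tilde{\textbf{p}})$ feasible for the surrogate constraint in (\ref{F4-18}) \emph{a fortiori} satisfies the original $\dot{C}_3$ in (\ref{F4-20b}).

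The treatment of $\ddot{C}_5$ is symmetric. Here $\mu(\textbf{a}) = \sum_{j,n,k}(a_{j,n,k})^{2}$ is a sum of convex quadratics and hence convex in $\textbf{a}$, so by (\ref{F2:2}) its first-order Taylor expansion (\ref{mu}) around $\textbf{a}^{t-1}$ is a global lower bound: $\tilde{\mu}(\textbf{a})\leq\mu(\textbf{a})$. It follows that $\nu(\textbf{a})-\mu(\textbf{a})\leq\nu(\textbf{a})-\tilde{\mu}(\textbf{a})$, and the surrogate inequality (\ref{F4-19}) therefore implies the original $\ddot{C}_5$ in (\ref{F4:26}); combined with $\dot{C}_5$ this forces each $a_{j,n,k}\in\{0,1\}$, recovering the integrality of the subcarrier assignment.

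The argument is essentially routine once concavity of $\mathcal{V}$ and convexity of $\mu$ are established, and the hard part is conceptual rather than technical: the claim is really about feasibility preservation across MM iterations, namely that the tangency condition (\ref{F2:14}) at the expansion point together with the dominance condition (\ref{F2:15}) guarantees that the feasible region of (\ref{F4-26-}) is an inner approximation of the feasible region of the original problem (\ref{chap4:f6:main}). I anticipate no subtle difficulties beyond bookkeeping in the two inequality chains above.
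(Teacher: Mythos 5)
Your proof is correct and follows essentially the same route as the paper: both rest on the two inequalities $\mathcal{V}(\textbf{a},\tilde{\textbf{p}})\leq\tilde{\mathcal{V}}(\textbf{a},\tilde{\textbf{p}})$ (concavity of $\mathcal{V}$, so the linearization overestimates) and $\tilde{\mu}(\textbf{a})\leq\mu(\textbf{a})$ (convexity of $\mu$, so the linearization underestimates), from which feasibility of the surrogate constraints implies feasibility of $\dot{C}_3$ and $\ddot{C}_5$. The paper phrases this in the language of supergradients and subgradients of locally Lipschitz functions, whereas you invoke the first-order characterization of convexity directly and additionally justify why $\mathcal{V}$ is concave and $\mu$ is convex, but the substance is identical.
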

\begin{proof}
It is straightforward to show that $\mathcal{V}(\textbf{a},\tilde{\textbf{p}})$ is a concave function.
Hence, the gradient of $\mathcal{V}(\textbf{a},\tilde{\textbf{p}})$ is a supergradient~\cite{KENNETH-LANGE}.~This yields 
\begin{equation} \label{F4-29}
\mathcal{V}(\textbf{a},\tilde{\textbf{p}})\leq\tilde{\mathcal{V}}(\textbf{a},\tilde{\textbf{p}}). 
\end{equation}
Subsequently, one may easily conclude from the inequality in (\ref{F4-25}) that 
\begin{equation}
\mathscr{U}(\textbf{a},\tilde{\textbf{p}})-
\mathcal{V}(\textbf{a},\tilde{\textbf{p}})
\geq 
\mathscr{U}(\textbf{a},\tilde{\textbf{p}})-
\tilde{\mathcal{V}}(\textbf{a},\tilde{\textbf{p}}).
\end{equation}
Similarly, since $\mu(\textbf{a})$ is a convex function, its gradient is a subgradient~\cite{KENNETH-LANGE}.~Thus, we have
\begin{equation}
\nu(\textbf{a})-\mu(\textbf{a})
\leq 
\nu(\textbf{a})-\tilde{\mu}(\textbf{a}).
\end{equation}
Nevertheless, it should be noted that the existence of subgradients and supergradients are closely tied to the fact that $\mu(\textbf{a})$ and $-\mathcal{V}(\textbf{a},\tilde{\textbf{p}})$ are locally Lipschitz \cite{KENNETH-LANGE}.
Accordingly, the constraint $\mathscr{U}(\textbf{a},\tilde{\textbf{p}})-{\mathcal{V}}(\textbf{a},\tilde{\textbf{p}})\geq R_{min}$ is satisfied if  $\mathscr{U}(\textbf{a},\tilde{\textbf{p}})-\tilde{\mathcal{V}}(\textbf{a},\tilde{\textbf{p}})$ is greater than $R_{min}$ for each user.
In the same way, if $\nu(\textbf{a})-\tilde{\mu}(\textbf{a}) \leq 0$, the inequality $\nu(\textbf{a})-{\mu}(\textbf{a}) \leq 0$ holds as well. 
This completes the proof. 
\end{proof}

\begin{proposition}
The surrogate functions that are employed to approximate the non-convex optimization problem in (\ref{F4-26-}) provide a tight lower bound for the objective function in (\ref{F2:20}).  
\end{proposition}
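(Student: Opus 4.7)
The plan is to establish the two defining properties of a valid MM surrogate: (i) the surrogate is everywhere dominated by (i.e., below) the true objective on the feasible set, and (ii) the surrogate touches the true objective at the current iterate $(\textbf{a}^{t-1},\tilde{\textbf{p}}^{t-1})$. Together these two facts imply tightness of the lower bound, which is precisely the claim. I would borrow directly from Proposition \ref{proposition3}, since the inequality $\mathcal{V}(\textbf{a},\tilde{\textbf{p}}) \leq \tilde{\mathcal{V}}(\textbf{a},\tilde{\textbf{p}})$ has already been established there, and build on it.

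First, I would recall the structure: the true objective in (\ref{F2:20}) is $\sum_{j}\sum_{k}\bigl(\mathscr{U}(\textbf{a},\tilde{\textbf{p}})-\mathcal{V}(\textbf{a},\tilde{\textbf{p}})\bigr)$, while the surrogate in (\ref{F4-26-}) is $\sum_{j}\sum_{k}\bigl(\mathscr{U}(\textbf{a},\tilde{\textbf{p}})-\tilde{\mathcal{V}}(\textbf{a},\tilde{\textbf{p}})\bigr)$, with $\tilde{\mathcal{V}}$ the first-order Taylor expansion of $\mathcal{V}$ around $(\textbf{a}^{t-1},\tilde{\textbf{p}}^{t-1})$ as defined in (\ref{F4-25}). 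Since $\mathcal{V}$ is a sum of log terms affine in $\tilde{\textbf{p}}$, it is concave, so by the supporting hyperplane characterization of concavity, the affine Taylor expansion lies globally above the function:
\begin{equation}
\mathcal{V}(\textbf{a},\tilde{\textbf{p}}) \;\leq\; \mathcal{V}(\textbf{a}^{t-1},\tilde{\textbf{p}}^{t-1}) + \nabla_{\tilde{\textbf{p}}}\mathcal{V}(\textbf{a}^{t-1},\tilde{\textbf{p}}^{t-1})^{T}(\tilde{\textbf{p}}-\tilde{\textbf{p}}^{t-1}) = \tilde{\mathcal{V}}(\textbf{a},\tilde{\textbf{p}}).
\end{equation}
Negating and adding the common term $\mathscr{U}$ (which is unaffected by the approximation) and summing over $j \in \mathcal{J}$ and $k \in \mathcal{K_I}$ yields
\begin{equation}
\sum_{j\in \mathcal{J}}\sum_{k\in\mathcal{K_I}}\bigl(\mathscr{U}(\textbf{a},\tilde{\textbf{p}})-\tilde{\mathcal{V}}(\textbf{a},\tilde{\textbf{p}})\bigr) \;\leq\; \sum_{j\in \mathcal{J}}\sum_{k\in\mathcal{K_I}}\bigl(\mathscr{U}(\textbf{a},\tilde{\textbf{p}})-\mathcal{V}(\textbf{a},\tilde{\textbf{p}})\bigr),
\end{equation}
which is exactly the lower-bound property.

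Second, I would verify tangency. Substituting $(\textbf{a},\tilde{\textbf{p}})=(\textbf{a}^{t-1},\tilde{\textbf{p}}^{t-1})$ into (\ref{F4-25}) makes the gradient term vanish, so $\tilde{\mathcal{V}}(\textbf{a}^{t-1},\tilde{\textbf{p}}^{t-1}) = \mathcal{V}(\textbf{a}^{t-1},\tilde{\textbf{p}}^{t-1})$, and the surrogate coincides with the true objective at the iterate. Combined with the global inequality above, this establishes that the lower bound is tight at the expansion point, and the approximation error vanishes there. The main obstacle, if any, is just the careful bookkeeping: confirming the concavity of $\mathcal{V}$ as a function of the joint variable $(\textbf{a},\tilde{\textbf{p}})$ in view of the change of variable $\tilde{p}_{j,n,k}=a_{j,n,k}p_{j,n,k}$ introduced through the big-M reformulation — but since after the reformulation $\mathcal{V}$ depends only on $\tilde{\textbf{p}}$ in an affine-logarithmic manner, concavity and hence the supporting hyperplane inequality apply directly, and no further subtlety is needed.

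Finally, as a brief corollary I would note that by the descent property derived from (\ref{F2:14})–(\ref{F2:18}), solving the surrogate problem (\ref{F4-26-}) iteratively produces a monotonically non-decreasing sequence of objective values for (\ref{F2:20}), confirming not only that the bound is valid and tight but also that the successive tightenings drive the algorithm to a stationary point of the original D.C. program, in line with the MM convergence framework summarized in \textbf{Algorithm~\ref{alg2D.C.}}.
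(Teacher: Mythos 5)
Your proposal is correct and follows essentially the same route as the paper: it invokes the concavity of $\mathcal{V}$ (as in Proposition~\ref{proposition3}) to get $\mathcal{V}(\textbf{a},\tilde{\textbf{p}})\leq\tilde{\mathcal{V}}(\textbf{a},\tilde{\textbf{p}})$, negates and adds $\mathscr{U}$ to obtain the lower bound, and checks equality at $(\textbf{a}^{t-1},\tilde{\textbf{p}}^{t-1})$ for tightness. If anything, your version is written more carefully than the paper's (whose displayed inequality has a sign slip on the gradient term), so no changes are needed.
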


\begin{proof}
Borrowing the first part of the proof from the proof of \textbf{Proposition~\ref{proposition3}}, the following inequality holds for the objective function in (\ref{F4-26-})
\begin{equation}
\mathscr{U}(\textbf{a},\tilde{\textbf{p}})-
\mathcal{V}(\textbf{a},\tilde{\textbf{p}})
\geq 
\mathscr{U}(\textbf{a},\tilde{\textbf{p}})-
\mathcal{V}(\textbf{a}^{t-1},\tilde{\textbf{p}}^{t-1}) +
\nabla_{\tilde{\textbf{p}}}\mathcal{V}(\textbf{a}^{t-1},\tilde{\textbf{p}}^{t-1})^T.(\tilde{\textbf{p}}-\tilde{\textbf{p}}^{t-1}),
\end{equation}
where, the equality holds when $\textbf{a}=\textbf{a}^{t-1}$ and $\tilde{\textbf{p}}=\tilde{\textbf{p}}^{t-1}$.~Thus, the MM updates furnish a candidate subgradient in each iteration.~This demonstrates the tightness of the lower bound and completes the proof.
\end{proof}

\begin{proposition}
By incorporating D.C. approximation, the solution of (\ref{F4-26-}) improves after each iteration
\end{proposition}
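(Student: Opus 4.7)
The plan is to establish the monotone ascent property of the iterative scheme by invoking three structural facts about the D.C. surrogate used in (\ref{F4-26-}): the global affine upper-bound property of the linearization $\tilde{\mathcal{V}}$ of the concave part $\mathcal{V}$, the exact tangency of $\tilde{\mathcal{V}}$ and $\tilde{\mu}$ at the current iterate, and the optimality of the new iterate for the convex surrogate problem. These are precisely the ingredients that turn an MM iteration into an ascent method, and the argument mirrors the one already written in (\ref{F2:18}) and in the analogous proposition of chapter \ref{CHAP3}.

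Denote by $(\textbf{a}^{t+1},\textbf{p}^{t+1},\tilde{\textbf{p}}^{t+1})$ the optimal solution of (\ref{F4-26-}) when the linearizations are taken at iterate $t$. I would then chain the following three inequalities:
\begin{align*}
\mathscr{U}(\textbf{a}^{t+1},\tilde{\textbf{p}}^{t+1}) - \mathcal{V}(\textbf{a}^{t+1},\tilde{\textbf{p}}^{t+1})
&\geq \mathscr{U}(\textbf{a}^{t+1},\tilde{\textbf{p}}^{t+1}) - \tilde{\mathcal{V}}(\textbf{a}^{t+1},\tilde{\textbf{p}}^{t+1}) \\
&\geq \mathscr{U}(\textbf{a}^{t},\tilde{\textbf{p}}^{t}) - \tilde{\mathcal{V}}(\textbf{a}^{t},\tilde{\textbf{p}}^{t}) \\
&= \mathscr{U}(\textbf{a}^{t},\tilde{\textbf{p}}^{t}) - \mathcal{V}(\textbf{a}^{t},\tilde{\textbf{p}}^{t}).
\end{align*}
The first inequality is the upper-bound property $\mathcal{V} \leq \tilde{\mathcal{V}}$ already established in \textbf{Proposition \ref{proposition3}}; the second uses the optimality of $(\textbf{a}^{t+1},\textbf{p}^{t+1},\tilde{\textbf{p}}^{t+1})$ for the surrogate problem combined with the fact that the previous iterate is itself feasible for that surrogate; and the third is the tangency $\tilde{\mathcal{V}}(\textbf{a}^{t},\tilde{\textbf{p}}^{t}) = \mathcal{V}(\textbf{a}^{t},\tilde{\textbf{p}}^{t})$ coming directly from the definition in (\ref{F4-25}). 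Summing over $j \in \mathcal{J}$ and $k \in \mathcal{K_I}$ then yields $\widehat{\overline{R}}^{\text{Total}}(\textbf{a}^{t+1},\tilde{\textbf{p}}^{t+1}) \geq \widehat{\overline{R}}^{\text{Total}}(\textbf{a}^{t},\tilde{\textbf{p}}^{t})$, which is the claimed per-iteration improvement.

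The main obstacle I anticipate is not the ascent chain itself but the feasibility transfer that justifies its second inequality, namely the inductive claim that $(\textbf{a}^{t},\textbf{p}^{t},\tilde{\textbf{p}}^{t})$ lies in the feasible set of the surrogate problem whose linearization is taken at the same iterate $t$. For constraint $\dot{C}_3$ this reduces, via the tangency of $\tilde{\mathcal{V}}$, to verifying the true rate constraint $\mathscr{U} - \mathcal{V} \geq R_{min}$ at $(\textbf{a}^{t},\tilde{\textbf{p}}^{t})$; but this in turn follows from the first step of the ascent chain at the previous iteration, since the surrogate constraint at step $t-1$ produced a point already satisfying the true $C_3$. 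The identical argument handles the relaxed integrality constraint $\ddot{C}_5$ after replacing $\tilde{\mathcal{V}}$ with $\tilde{\mu}$, using the convex-linearization inequality (\ref{mu}). Combined with the base case furnished by the feasible initial point required by the algorithm, this closes the induction and completes the proof of per-iteration improvement.
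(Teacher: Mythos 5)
Your proof is correct and follows essentially the same route as the paper's: the identical three-step chain (surrogate upper-bounds the true objective, optimality of the new iterate over the surrogate problem, tangency of the linearization at the current iterate), summed over $j$ and $k$. Your additional inductive argument that the previous iterate remains feasible for the surrogate constraints $\dot{C}_3$ and $\ddot{C}_5$ is a detail the paper's proof leaves implicit, and it is a worthwhile tightening rather than a different approach.
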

\begin{proof}
For the objective function in (\ref{F2:20}), we have the following in the $t^{th}$ iteration
\begin{equation}
\mathscr{U}(\textbf{a}^t,\tilde{\textbf{p}}^t)-\mathcal{V}(\textbf{a}^t,\tilde{\textbf{p}}^t) 
\end{equation}
Subsequently, in the next iteration, we have
\begin{align}
\mathscr{U}(\textbf{a}^{t+1},\tilde{\textbf{p}}^{t+1})-
\mathcal{V}(\textbf{a}^{t+1},&\tilde{\textbf{p}}^{t+1}) 
\nonumber\\
& \geq 
\mathscr{U}(\textbf{a}^{t+1},\tilde{\textbf{p}}^{t+1})-
\mathcal{V}(\textbf{a}^{t},\tilde{\textbf{p}}^{t})-
\nabla_{\tilde{\textbf{p}}}\mathcal{V}(\textbf{a}^{t},\tilde{\textbf{p}}^{t})^{T}.(\tilde{\textbf{p}}-\tilde{\textbf{p}}^{t}) 
\nonumber \\ 
&=\max_{\textbf{a},\tilde{\textbf{p}}}\mathscr{U}(\textbf{a},\tilde{\textbf{p}})-
\mathcal{V}(\textbf{a}^{t},\tilde{\textbf{p}}^{t})-
\nabla_{\tilde{\textbf{p}}}\mathcal{V}(\textbf{a}^{t},\tilde{\textbf{p}}^{t})^{T}.(\tilde{\textbf{p}}-\tilde{\textbf{p}}^{t})
\nonumber\\
&\geq 
\mathscr{U}(\textbf{a}^{t},\tilde{\textbf{p}}^{t})-
\mathcal{V}(\textbf{a}^{t},\tilde{\textbf{p}}^{t}) -
\nabla_{\tilde{\textbf{p}}}\mathcal{V}(\textbf{a}^{t},\tilde{\textbf{p}}^{t})^{T}.(\tilde{\textbf{p}}^{t}-\tilde{\textbf{p}}^{t})
\nonumber\\
&= 
\mathscr{U}(\textbf{a}^{t},\tilde{\textbf{p}}^{t})-
{\mathcal{V}}(\textbf{a}^{t},\tilde{\textbf{p}}^{t}). 
\nonumber
\end{align} 
This completes the proof. 
\end{proof}
One can readily verify that the objective function of (\ref{F4-26-}) takes larger values as the iteration proceeds.~Therefore, the solution to the optimization problem improves gradually.
Hence, we adopt an iterative solution to tighten the obtained upper bound based on the \textbf{Algorithm~\ref{euclid}}.
Like so, the proposed iterative resource allocation scheme generates a monotonically non-decreasing sequence of feasible solution, i.e., $\textbf{a}^{t+1}$, $\textbf{p}^{t+1}$, and $\tilde{\textbf{p}}^{t+1}$, by solving the convex problem in (\ref{F4-26-}).

\begin{algorithm}[t]
\caption{Proposed Iterative Method via D.C. Programming Based on the MM
Approach}
\label{euclid}
\begin{algorithmic}[1]
\STATE {$\mathbf{Initialize}$} \\{
\begin{addmargin}[1em]{0em}
{MM iteration index $t=0$ with maximum number of MM iteration $T_{max}$\\
and feasible set vector $\mathbf{a}^{0}$, $\mathbf{p}^{0}$, and $\tilde{\mathbf{p}}^0$}.
\end{addmargin}}
\STATE{\textbf{repeat}}
\STATE{
\begin{addmargin}[1em]{0em}
Update $\tilde{\mathcal{V}}(\textbf{a},\tilde{\textbf{p}})$,~$\tilde{\mu}(\mathbf{a})$ as presented in (\ref{F4-25}) and (\ref{mu})~respectively. 
\end{addmargin}}
\STATE{
\begin{addmargin}[1em]{0em}
Solve optimization problem of (\ref{F4-26-}) and store the intermediate resource allocation policy $\mathbf{a}^t$, $\mathbf{p}^t$, and $\tilde{\mathbf{p}}^t$.
\end{addmargin}}
\STATE{
\begin{addmargin}[1em]{0em}
Set $t=t+1$.
\end{addmargin}}
\STATE{
\begin{addmargin}[1em]{0em}
Set \{$\mathbf{a}^t,\mathbf{p}^t$,$\tilde{\mathbf{p}}^t$\} $=$ \{$\mathbf{a},\mathbf{p}$,$\tilde{\mathbf{p}}$\}.
\end{addmargin}}
\STATE \textbf{until} Convergence or $t=T_{max}$
\STATE \textbf{return} \{$\mathbf{a}^{*},{\mathbf{p}}^{*}$,$\tilde{\mathbf{p}}^*$\} $=$ \{$\mathbf{a}^{t},{\mathbf{p}}^{t}$,$\tilde{\mathbf{p}}^t$\}
\end{algorithmic}
\end{algorithm}

\section{Computational Complexity}
In this section, we aim at investigating the computational complexity of the proposed algorithm. 
The optimization problem (\ref{F4-26-}) includes $NJK$ variables and $J(1+N+K)+5JNK$ linear convex constraints.
Therefore, the computational complexity is that of order $\mathcal{O}(NJK)^{3}(J(1+N+K)+5JNK)$.
Despite the polynomial-time computational complexity of our proposed algorithm, the computation cost is still high and may become unaffordable for resource allocators with predefined capabilities.
However, it should not be neglected that the computational complexity of \textbf{Algorithm~\ref{euclid}} is lower compared to the exhaustive search approaches.
Moreover, it is worth mentioning that \textbf{Algorithm~\ref{euclid}} provides a locally optimal solution that is closely approaching the optimal solution.
Nevertheless, in what follows, we provide a low complexity algorithm for designing the resource allocation policy.

\section{Low Complexity Algorithm Design (Lower Bound)}\vspace{-3mm}
In this section, we introduce another algorithm with even lower computational complexity to improve the practicality of \textbf{Algorithm~\ref{euclid}}.
In order to handle the non-convexity of data-rate functions, we first assume that there is an upper bound for the interference term and impose the following constraint on the optimization problem
\begin{align} 
I_{j,n,k} \leq 
I^{n}_{max},
\end{align}
where $I^{n}_{max}$ is the maximum tolerable inter-cell interference parameter.
In this way, we can derive an efficient resource allocation algorithm by trimming the solution design.
To improve performance, we can control the interference level in each subcarrier by the resource allocator policy through varying the value of $ I^{n}_{max}$~\cite{6294504}.
Also, we can have a concave function and therefore a computable data-rate, if we replace $I_{j,n,k}$ by $I^{n}_{max}$ in data-rate functions.
At this point, we describe a tractable solution methodology for the original problem in the following subsection.
\vspace{-3mm}

\subsection{Low Complexity Power Control and Subcarrier Assignment}\vspace{-2mm}
In this subsection, we seek to attain a low complexity suboptimal subcarrier assignment and power allocation algorithm.
To derive a cost-efficient resource allocation design, it is required to relax the binary subcarrier assignment constraint. 
This is done by transforming the subcarrier assignment variable $a_{j,n,k}$ into a continuous constraint with values within the close interval~$[0,1]$.
In this sense, the fraction of time that subcarrier $n$ is assigned to the user $k$ would be the physical interpretation of the continuous $a_{j,n,k}$.
Notwithstanding the non-convexity of the original optimization problem, strong duality still holds as a consequence of the time-sharing condition addressed in \cite{Time,6825834}.
Finally, by defining a new power allocation variable as $\tilde{q}_{j,n,k}=a_{j,n,k} p_{j,n,k}$, the modified optimization problem can be expressed as
\vspace{-2mm}
\begin{subequations}
\begin{align}
&\max_{\textbf{a},\textbf{p},\tilde{\textbf{q}}}
\underline{\mathcal{R}}^{\text{Total}}(\textbf{a},\tilde{\textbf{q}})
\label{F4-32}
\\
s.t.: &~C_{1}:
\sum_{k\in \mathcal{K_{I}}} a_{j,n,k}\leq 1,~~~~~~~~~~~~~~~~~~~~~~~~~~~~~~~~~~
\forall j \in \mathcal{J},~
\forall n \in \mathcal{N},
\\ 
&~C_{2}:
\sum_{k\in \mathcal{K}}~
\sum_{n\in \mathcal{N}}\tilde{q}_{j,n,k}\leq p_{max},~~~~~~~~~~~~~~~~~~~~~~~~
\forall j \in \mathcal{J},\\
&~C_{3}:
\sum_{n\in \mathcal{N}} 
\log_2
\Bigg(1+\frac{\tilde{q}_{j,n,k}\ |h^{ID}_{j,n,k}|^2}{{|\sigma_{j,n,k}^{{ID}}|^{^2}}+I^{n}_{max}}\Bigg)
\geq   
R_{min},~
\forall j \in \mathcal{J}, k \in \mathcal{K_{I}},
\\
&~C_{4}:
\epsilon_{j,k} 
\sum_{n \in \mathcal{N}}\tilde{q}_{j,n,k}|g^{EH}_{j,n,k}|^{2}\geq \text{EH}_{min},~~~~~~~~~~~~~~
\forall j \in \mathcal{J},~
\forall k \in \mathcal{K_{E}},\\
&~C_{5}:
0\leq a_{j,n,k}\leq 1,~~~~~~~~~~~~~~~~~~~~~~~~~~~~~~~~~~~
\forall j \in \mathcal{J},~
\forall n \in \mathcal{N},~
\forall k \in \mathcal{K},\\
&~C_{6}: 
\sum_{\substack{j'\neq j \\ j' \in \mathcal{J}}}
\sum_{\substack{k'\neq k \\ k' \in \mathcal{K}}}
\tilde{p}_{j',n,k'}|h^{ID}_{j',n,k}|^2\leq I^{n}_{max},~~~~~~~~~~~~
\forall n \in \mathcal{N},~
\forall k \in \mathcal{K},
\end{align}
\label{F4-32:main}%
\end{subequations}
where 
\vspace{-4mm}
\begin{align}
\underline{\mathcal{R}}^{\text{Total}}(\textbf{a},\tilde{\textbf{q}}) = 
\sum_{j\in \mathcal{J}}\sum_{k\in\mathcal{K_{I}}}\sum_{n\in \mathcal{N}} 
\log_2
\Bigg(1+\frac{\tilde{q}_{j,n,k}|h^{ID}_{j,n,k}|^2}
{|\sigma_{j,n,k}^{{ID}}|^{^2}+
I^{n}_{max}}\Bigg).
\end{align}
It is easy to verify that Slater's condition holds for the above convex optimization problem. Therefore, solving the dual problem is equivalent to solving the primal problem due to strong duality.
In order to obtain the corresponding resource allocation policy, the Lagrangian method is applied to the convex optimization problem in (\ref{F4-32:main}).
Hence, the {\text{Lagrangian function is as follows}}
\vspace{-2mm}
\begin{align}
\mathcal{L}(\textbf{a},\mathbf{p},\tilde{\textbf{q}},\boldsymbol{\chi},\boldsymbol{\phi},\boldsymbol{\zeta},\boldsymbol{\tau},\boldsymbol{\theta})= &  
\quad 
\underline{\mathcal{R}}^{\text{Total}}(\textbf{a},\tilde{\textbf{q}})
\nonumber 
\label{F4-38}\\ 
&-\boldsymbol{\chi}\bigg(\sum_{k\in \mathcal{K_{I}}} a_{j,n,k}- 1\bigg)
\nonumber \\
&-\boldsymbol{\phi}\bigg(\sum_{k\in \mathcal{K}}~\sum_{n\in \mathcal{N}}\tilde{q}_{j,n,k}
- p_{max}\bigg)\nonumber\\
&+\boldsymbol{\zeta}\bigg(\sum_{n\in \mathcal{N}}\log_2\big(1+\frac{\tilde{q}_{j,n,k}\ |h^{ID}_{j,n,k}|^2}{{|\sigma_{j,n,k}^{{ID}}|^{^2}}+I^{n}_{max}}\big)-R_{min}\bigg)
\nonumber \\ \nonumber
&+\boldsymbol{\tau}\bigg(\epsilon_{j,k}\sum_{n \in \mathcal{N}} \tilde{q}_{j,n,k}|g^{EH}_{j,n,k}|^{2}-\text{EH}_{min}\bigg)\\
&-\boldsymbol{\theta}\big(\sum_{\substack{j'\neq j \\ j' \in \mathcal{J}}}\sum_{\substack{k'\neq k \\ k' \in \mathcal{K}}}\tilde{p}_{j',n,k'}|h^{ID}_{j',n,k}|^2- I^{n}_{max}\big),
\end{align}
where $~\boldsymbol{\chi},~\boldsymbol{\phi},~\boldsymbol{\zeta},~\boldsymbol{\tau},~\boldsymbol{\theta}$ are the Lagrangian vectors associated with the constraints.
Specifically, the Lagrange multiplier vector with respect to the OFDMA constraint, i.e., the first constraint $C_{1}$, has its elements as $\chi_{j,n}$'s where $j \in \{1,2,...,J\}$ and $n \in \{1,2,..., N\}$.
The $\boldsymbol{\phi}$ is the Lagrange multiplier vector accounting for the maximum transmit power constraint $C_{2}$ with {$\boldsymbol{\phi}_{j}$'s where $j \in \{1,2,...,J\}$}.
The Lagrange multiplier vector $\boldsymbol{\zeta}$ that corresponds to the data-rate constraint $C_{3}$ posses the elements $\zeta_{j,k}$'s, where $j \in \{1,2,...,J\}$ and $k \in \{1,2,..., \mathcal{K}_{I}\}$.
The vector $\boldsymbol{\tau}$ is the Lagrange multiplier vector for the constraint $C_{4}$ with the elements $\tau_{j,k}$'s, where {$j \in \{1,2,...,J\}$ and $k \in \{1,2,..., \mathcal{K_{E}}\}$}.
Finally, the Lagrange multipliers vector for the interference threshold constraint $C_{6}$ is $\boldsymbol{\theta}$ that has components $\theta_{j,n,k}$'s, where ${j \in \{1,2,...,J\}}$,~${n \in \{1,2,..., N\}}$, and ${k \in \{1,2,..., \mathcal{K_{I}}\}}$.
It should be pointed out that the boundary constraints are absorbed into the Karush-Kuhn-Tucker (KKT) conditions when deriving the resource allocation policy.
Thus, the dual problem of (\ref{F4-32:main}) is given by
\begin{align}
\min_{\boldsymbol{\chi},\boldsymbol{\phi},\boldsymbol{\zeta},\boldsymbol{\tau},\boldsymbol{\theta}} \ \ 
\max_{\textbf{a},\mathbf{p},\tilde{\textbf{q}}}\ \
\mathcal{L}(\textbf{a},\mathbf{p},\tilde{\textbf{q}},\boldsymbol{\chi},\boldsymbol{\phi},\boldsymbol{\zeta},\boldsymbol{\tau},\boldsymbol{\theta}).
\label{F4-39}
\end{align}
In the following, we solve the above dual problem iteratively by decomposing it into two layers.
The first layer, Layer 1, consists of subproblems with identical structures while the second layer, Layer 2, is the master dual problem to be solved with the gradient method.

\textbf{\textit{Dual Decomposition and Layer 1 Solutions}:}
By dual decomposition, the first layer can be written as follows
\begin{align}
\mathcal{D}(\boldsymbol{\chi},\boldsymbol{\phi},\boldsymbol{\zeta},\boldsymbol{\tau},\boldsymbol{\theta})=
\max_{\textbf{a},\mathbf{p},\tilde{\textbf{q}}} \ \ 
\mathcal{L}(\textbf{a},\mathbf{p},\tilde{\textbf{q}},\boldsymbol{\chi},\boldsymbol{\phi},\boldsymbol{\zeta},\boldsymbol{\tau},\boldsymbol{\theta}).
\label{F4-40}
\end{align}
For a fixed set of Lagrange multipliers, (\ref{F4-40}) is a convex optimization problem, for which a unique optimal solution can be obtained using the Lagrange dual function.
Forming the Lagrangian, taking the derivative of the Lagrangian with respect to $\tilde{\textbf{q}}$ and setting the derivative equal to zero, the transmit power $\tilde{\textbf{q}}$ is obtained.
Using standard optimization techniques and the KKT conditions, the power allocation for user $k$ on subcarrier $n$ in the cell $j$ is obtained as
\begin{align}
\tilde{q}^*_{j,n,k}=
a_{j,n,k}{p}^*_{j,n,k}=a_{j,n,k}
\Bigg[\frac{1}{\ln{(2)}}
\Bigg(
\frac{1+\zeta_{j,k}}{\phi_{j} + 
\theta_{j,n,k}|h^{ID}_{j,n,k}|^2- 
\tau_{j,k}\epsilon_{j,k}|g^{EH}_{j,n,k}|^2}
\Bigg)-
\frac{{|\sigma_{j,n,k}^{{ID}}|^{^2}}+I^{n}_{max}}{|h^{ID}_{j,n,k}|^2}\Bigg]^{+}.
\label{F4-41}
\end{align}  
The power allocation has the form of multilevel water-filling.
It can be seen that the data-rate prevents energy inefficient transmission by truncating the water-levels~\cite{David-tse}.

In order to obtain the optimal subcarrier allocation, we take the derivative of the subproblem's objective function with respect to $a_{j,n,k}$, that is
\begin{equation}
\frac{\partial{}\mathcal{L}
(\textbf{a},\mathbf{p},\tilde{\textbf{q}},\boldsymbol{\chi},\boldsymbol{\phi},\boldsymbol{\zeta},\boldsymbol{\tau},\boldsymbol{\theta})}{\partial a_{j,n,k}}
\bigg|_{\tilde{q}^*_{j,n,k}}=
\mathscr{S}_{j,n,k},
\end{equation}
where $\mathscr{S}_{j,n,k}\geq 0$ can be interpreted as the marginal benefit, as discussed in \cite{975766}, for allocating subcarrier $n$ to user $k$ and is given by
\begin{align}
\mathscr{S}_{j,n,k}=~ &
(1+ \zeta_{j,k})
\Bigg[
\log_2\Bigg(1+\frac{ {p}^*_{j,n,k} |h_{{j,n,k}}|^2}
{{|\sigma_{j,n,k}^{{ID}}|^{^2}}+I^{n}_{max}}\Bigg) -
\frac{1}{\ln(2)}
\Bigg(\frac{ {p}^*_{j,n,k} |h_{j,n,k}|^2}{\tilde{p}^*_{j,n,k} |h_{{j,n,k}}|^2
+{|\sigma_{j,n,k}^{{ID}}|^{^2}}+I^{n}_{max}}\Bigg)
\Bigg]\nonumber\\&+ 
\tau_{j,k}\epsilon_{j,k} {p}^*_{j,n,k} |g^{EH}_{j,n,k}|^2- 
\theta_{j,n,k}{p}^*_{j,n,k} |h^{ID}_{j,n,k}|^2- 
\phi_{j}{p}^*_{j,n,k}- 
\chi_{j,n}.
\label{F4-42}
\end{align}
It should be noted that $ \mathscr{S}_{j,n,k}\geq 0$ has a physical meaning that users with negative scheduled data-rate on subcarrier $n$ are not selected as they can only provide a negative marginal benefit to the system.
Subsequently, the subcarrier allocation should satisfy the following region
\begin{align}
\label{F4-43}
a^{*}_{j,n,k} = \left\{ \begin{array}{ll}
1,~~~\textrm{if}~~\mathscr{S}_{j,n,k}\geq \chi_{j,n},\\
0,~~~ \textrm{otherwise}.
\end{array} \right.
\end{align}

\textbf{\textit{Solution of Layer 2 Master Problem}:}
To find the optimum subcarrier assignment (\ref{F4-43}), we must first determine the threshold $\mathscr{S}_{j,n,k}$.
However, the subcarrier assignment depends on the Lagrangian variables $\tilde{q}_{j,n,k}$.
Therefore, we employ the subgradient method to find the Lagrangian multipliers for a given $\tilde{\textbf{q}}$.~Hence, we have
\begin{align}
\phi_{j}^{i+1}&=
\bigg[\phi_{j}^{i}+
\alpha_{1}
\bigg(\sum_{k\in \mathcal{K}}~\sum_{n\in \mathcal{N}}\tilde{q}_{j,n,k}- p_{max}\bigg)\bigg]^{+},
\label{F4-44}\\
\zeta_{j,k}^{i+1}&=
\bigg[\zeta_{j,k}^{i}-
\alpha_{2}
\bigg(\sum_{n \in \mathcal{N}}
\log_2\Big(1+\frac{\tilde{q}_{j,n,k}\ |h_{j,n,k}|^2}{|\sigma_{j,n,k}^{{ID}}|^{^2}+
I^{n}_{max}}\Big)-R_{min}\bigg)\bigg]^{+},
\label{F4-45}\\
\tau^{i+1}_{j,k}&=
\bigg[\tau_{j,k}^{i}+
\alpha_{3}
\bigg(\epsilon_{j,k}\sum_{n \in \mathcal{N}}{\tilde{q}_{j,n,k}}|g^{EH}_{j,n,k}|^{2}-
\text{EH}_{min}\bigg)\bigg]^{+}, 
\label{F4-46-1}\\
\theta_{j,n,k}^{i+1}&=
\bigg[\theta_{j,n,k}^{i}-
\alpha_{4}
\bigg(\sum_{\substack{j'\neq j \\ j' \in \mathcal{J}}}
\sum_{\substack{k'\neq k \\ k' \in \mathcal{K}}}\tilde{p}_{j',n,k'}|h^{ID}_{j',n,k}|^2- I^{n}_{max}\bigg)\bigg]^{+},
\label{F4-46}
\end{align}
where index $i\geq 0$ is iteration index, and $\alpha_{q}$'s, $q \in \{1,2,3,4\}$, are positive step sizes.
The details of the low complexity algorithm are sketched in \textbf{Algorithm \ref{A_Low_Complexity_Algorithm_Design}}.

\begin{algorithm}[H]
\caption{Low Complexity Power Control and Subcarrier Assignment}
\label{A_Low_Complexity_Algorithm_Design}
\begin{algorithmic}[1]
\STATE {$\mathbf{Initialize}$} \\{
\begin{addmargin}[1em]{0em}
{iteration index $i=0$ with the maximum number of iteration $\mathcal{I}_{max}$\\
and Lagrangian variables vectors $\boldsymbol{\chi},\boldsymbol{\phi},\boldsymbol{\zeta},\boldsymbol{\tau},\boldsymbol{\theta}$} for a feasible set vector \{$\mathbf{a}^0,{\mathbf{p}}^0$,$\tilde{\mathbf{q}}^0$\}.
\end{addmargin}}
\STATE{\textbf{repeat} }
\STATE{
\begin{addmargin}[1em]{0em}
Update power allocation policy using (\ref{F4-41}).
\end{addmargin}}
\STATE{
\begin{addmargin}[1em]{0em}
Calculate $\mathscr{S}_{j,n,k}$ based on (\ref{F4-42}) and find optimal subcarrier assignment using (\ref{F4-43}). 
\end{addmargin}}
\STATE{
\begin{addmargin}[1em]{0em}
Update  Lagrangian variables vectors based on (\ref{F4-44})-(\ref{F4-46}).
\end{addmargin}}
\STATE{
\begin{addmargin}[1em]{0em}
Set $i=i+1$.
\end{addmargin}}
\STATE \textbf{until} Convergence or $i=\mathcal{I}_{max}$
\STATE{\textbf{return}~\{$\mathbf{a}^{*},{\mathbf{p}}^{*}$,$\tilde{\mathbf{q}}^*$\}}
\end{algorithmic}
\end{algorithm}

\section{Simulation Results}
In this section, the performance gain of the proposed subcarrier and power allocation algorithms for SWIPT in the DL direction of a multi-cell multi-user OFDMA system is evaluated through extensive simulations.
There are $J=3$ cells in the network topology with $K_{j}=4$ users in each cell.
From the considered four users in each cell with ring-shaped boundary regions, two are uniformly and randomly located inside the inner-circle while two are in the outer-zone, i.e., ${K}^{EH}_j=2={K}^{ID}_j=2$ $\forall j \in \mathcal{J} = \{1,2,3\}$.
We set the radius of a cell, $d_{max}$, as 20 meters, with a reference distance, $d_{0}$, of 5 meters, where the EH users are placed in the interval (0, $d_{0}$] while ID used are inside ($d_{0}$, $d_{max}$).
Additionally, we consider a frequency-selective fading channel and further assume the central carrier frequency is set to be 3 GHz.
The number of subcarriers is $N = 16$, where the bandwidth of each subcarrier is set to 180 kHz.
It should be noted that as the power of the background noise for both EH and ID receivers is rather small compared to maximum transmit power, $p_{max}$, it is assumed to have {$|\sigma_{j,n,k}^{{EH}}|^{^2} = |\sigma_{j,n,k}^{{ID}}|^{^2} = \sigma^{2} = $ -120 dBm} in all simulations.
Since a line-of-sight (LoS) signal is expected in the received signal, the small-scale fading channel is modeled as Rician fading with Rician factor $\rho=3$ dB.
\begin{table}[!b]
{\caption{Simulation Parameters}
\label{chap:4:Simulation_Parameters}
\centering
\begin{tabular}{|c|c|}\hline
{\bf Parameter} & {\bf Value} \\ \hline \hline
{Cell coverage ($d_{max}$)} & {$20$ m} \\ 
{Reference distance ($d_{0}$)} & {$5$ m} \\ 
{The number of cell ($J$)} & {$3$}\\
{The number of ID user in each cell ($K^{ID}_{j}$)} & {$2$}\\
{The number of EH user in each cell ($K^{EH}_{j}$)} & {$2$}\\
{The number of subcarrier (N)} & {$16$} \\
{Noise power ($\sigma^{2}$)} & {$-120$} dBm \\
{The bandwidth of each subcarrier} & {$180$ kHz}\\
{Path loss exponent ($\alpha$)} & {$2.76$} \\
{Path loss model for cellular links} & {$31.7+27.6 \log(\frac{d}{d_0})$} \\
{Multi-path fading distribution}& {Rician fading with factor 3~dB}\\
{Power conversion efficiency ($\epsilon$)}& {30\%}\\ 
The maximum transmit power of the SBS ($p_{\textrm{max}}$) & {$30$ dBm} \\
The minimum data-rate requirement for $k^{th}$ ID user~($R_{\textrm{min}}$) & $1$ bps/Hz \\
{The minimum harvested requirement ($\text{EH}_{min}$)}& {$0$~dBm}\\
{The maximum interference threshold ($I^n_{max}$)}&$-70$~dBm\\
Channel realization number & $100$\\\hline
\end{tabular}}
\end{table}
Moreover, the Rician flat fading channel gains include a distance-dependent path loss component of $31.7+10 \alpha \log(\frac{d}{d_0})$ [dB] (where $d$ is the distance between the transmitter and the receiver) and a log-normal shadowing component with~$8$ dB standard deviation where the path loss exponent is equal to $\alpha = 2.8$~\cite{339880}.
These parameters for propagation modeling and simulations follow the suggestions in 3GPP evaluation methodology~\cite{chap3:3GPP}.
The power conversion efficiency of all EH users, $\epsilon_{j,k}$, is assumed to be the same and is equal to  $\epsilon_{j,k} = \epsilon = 0.3 $.
The target transmission rate $R_{min}=1$ bps/Hz for each ID used unless otherwise stated.
The minimum harvested energy is EH$_{min}$ for each EH user.
Besides, a maximum interference threshold of -70 dBm is considered for the low complexity design algorithm~\cite{6294504}.
Furthermore, we conduct Monte Carlo simulations by generating random realizations of the channel gains to obtain the average data-rate of the network.
In fact, the channel gain between a transmitter and a receiver is calculated using independent and identically distributed Rician flat fading and the figures shown in this section are obtained by estimating the average of results over different realizations of path-loss as well as multi-path fading.
The rest of the simulation parameters are given in \textbf{Table}~(\ref{chap:4:Simulation_Parameters}) unless otherwise is specified.

\begin{figure}[!b]
\centering
\includegraphics[width=12cm]{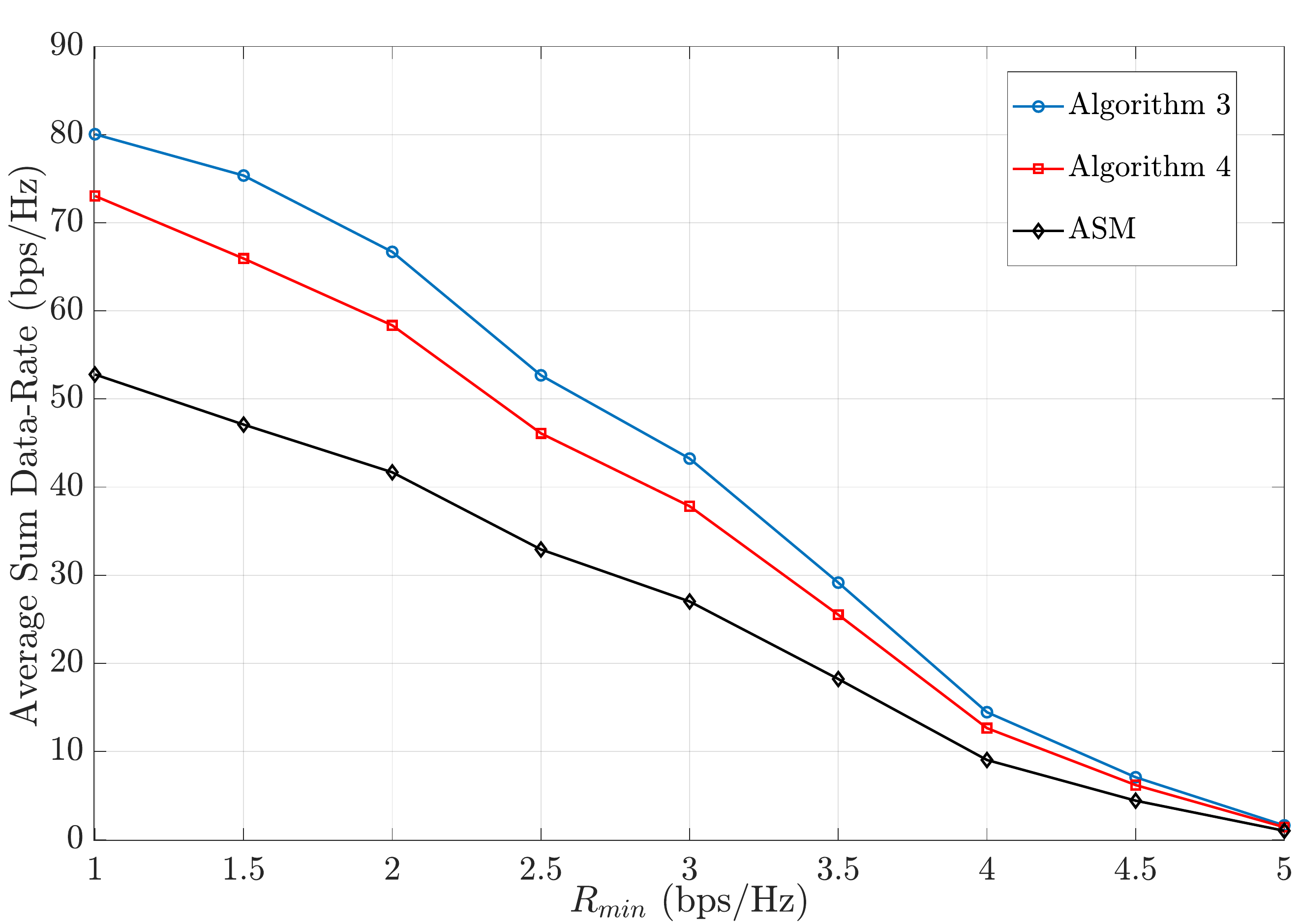}
\caption{Average sum data-rate versus minimum data-rate requirement.}
\label{plot:4.1}
\end{figure}
\subsection{Average Sum Data-rate versus Minimum Data-rate Requirement}
In figure (\ref{plot:4.1}), the average sum data-rate versus minimum data-rate requirement is depicted. 
It is illustrated that as $R_{min}$ increases, the average sum data-rate decreases. 
The reason is that when $R_{min}$ is high, more subcarrier have to be assigned to ID users to satisfy the minimum data-rate requirement, especially to those users with poor channel conditions in an extremely deep fade.
This is in conjunction with the necessity of higher transmit power for reaching a certain data-rate.
However, more transmit power also means more interference that obstructs the data transmission.
This excess interference adversely shows itself in data-rate functions that would substantially decrease the achievable data-rate.
In fact, as the maximum transmit power increases, the interference power arising from co-channel becomes more severe, degenerating the received user signals.
We also observe that \textbf{Algorithm~\ref{euclid}} outperforms both \textbf{Algorithm~\ref{A_Low_Complexity_Algorithm_Design}} and the alternative search method (ASM).
It can be concluded that the proposed \textbf{Algorithm~\ref{euclid}} has considerably better performance due not only to performing a joint resource allocation policy, but also acknowledging the interference term as a variable in data-rate functions, which in turn stresses the dependency between power and subcarrier allocation.
For the ASM, we employ a heuristic search method in which we decouple the problem of joint subcarrier assignment and power allocation to maximize the system throughput based on~\cite{jalal}.

\begin{figure}[!b]
\centering
\includegraphics[width=12cm]{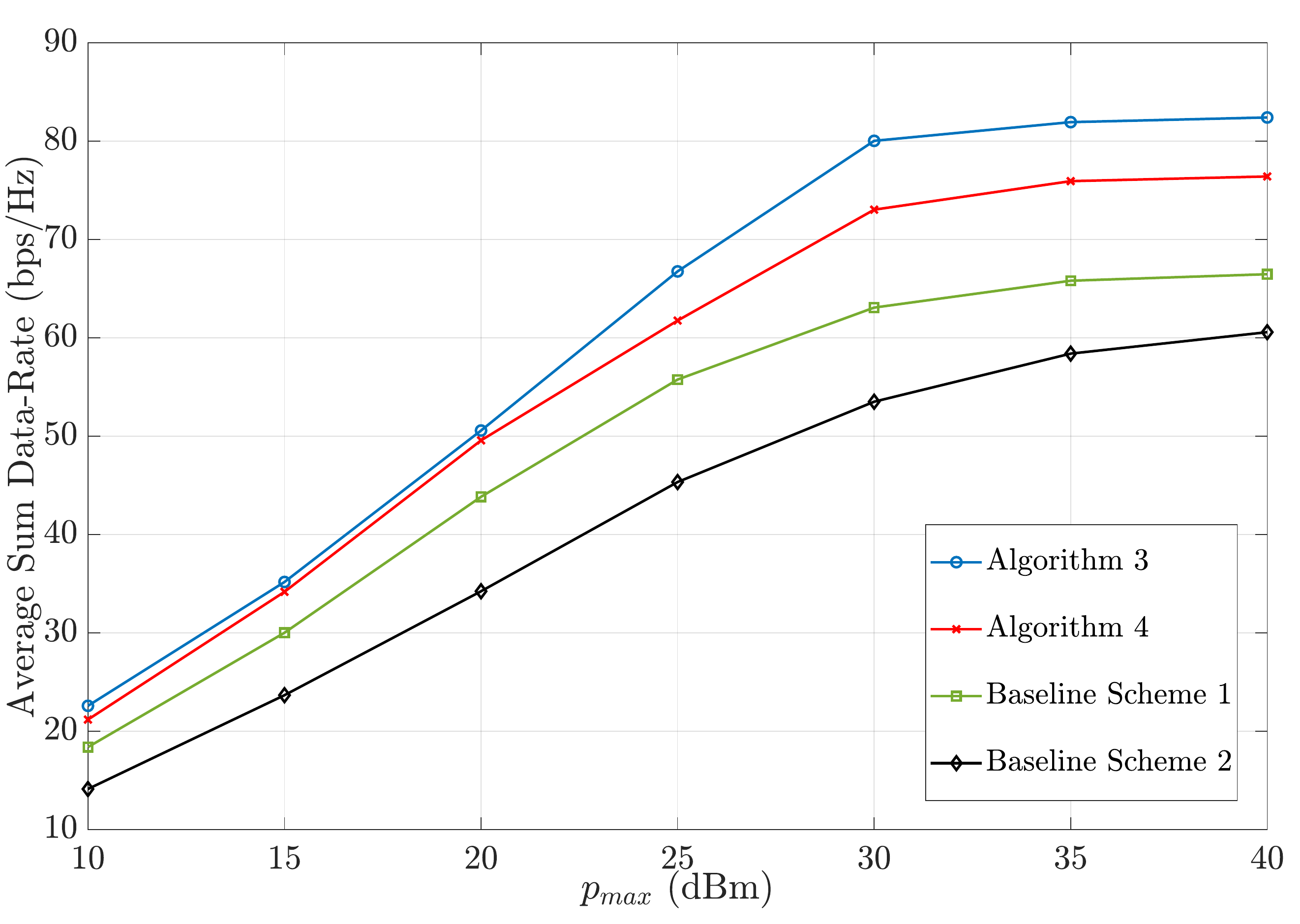}
\caption{Average sum data-rate versus maximum transmit power.}
\label{plot:4.2}
\end{figure}
\subsection{Average Sum Data-rate versus Maximum Transmit Power}
Figure (\ref{plot:4.2}) shows the average system data-rate versus maximum allowance transmit power, $p_{max}$, for SBSs.
It can be seen from this figure that the average sum data-rate increases by raising the maximum transmit power. We also observe that the average sum data-rate grows monotonically up to 35 dBm.
Nevertheless, the slope of the curve in the average sum data-rate is declined as the maximum transmit power gets larger values.
Particularly, the average sum data-rate starts to saturate when $p_{max}\geq 35$ dBm.
However, there exists an important point that should be kept in mind: increasing the transmit power does not improve data-rate perpetually.
More transmit power also means more intensified interference level that hampers the information transmission. 
These excess interference terms from the co-channel show itself in the data-rate functions negatively.
Therefore, the slope of the sum data-rate tends to decrease at some point, which bounds the maximum achievable data-rate to an almost constant value.
Furthermore, for comparison and evaluation of our proposed method, we consider two baseline schemes.
Baseline scheme 1 is based on the decoupling of the subcarrier assignment and power allocation variables in which the original problem is divided into two disjoints optimization problems~\cite{jalal}.
For baseline scheme~2, only power allocation is performed while the subcarrier assignment is done randomly.
It can be seen that the proposed \textbf{Algorithm~\ref{euclid}} outperforms the other methods due to solving the optimization problem jointly based on the MM approach. 
This yields a close-to-optimal solution.

\subsection{Average Sum Data-rate Versus Number of Iterations}
In figure (\ref{plot:4.4}), we examine the convergence behavior of our proposed iterative method via the MM approach under different initialization of power.
It can be observed that \textbf{Algorithm~\ref{euclid}} has a quick convergence at an equal power allocation over all subcarriers, that is $\mathbf{p}^0(i) = \frac{p_{max}}{N}$, whereas it requires a little more number of iterations for zero power, i.e., the extreme case with $\mathbf{p}^0(i) = 0$.
This figure also demonstrates even though the speed of convergence differs from one case to another, our proposed method quickly converges to a stationary point only after a small limited number of iterations. 

\begin{figure}[!b]
\centering
\includegraphics[width=12cm]{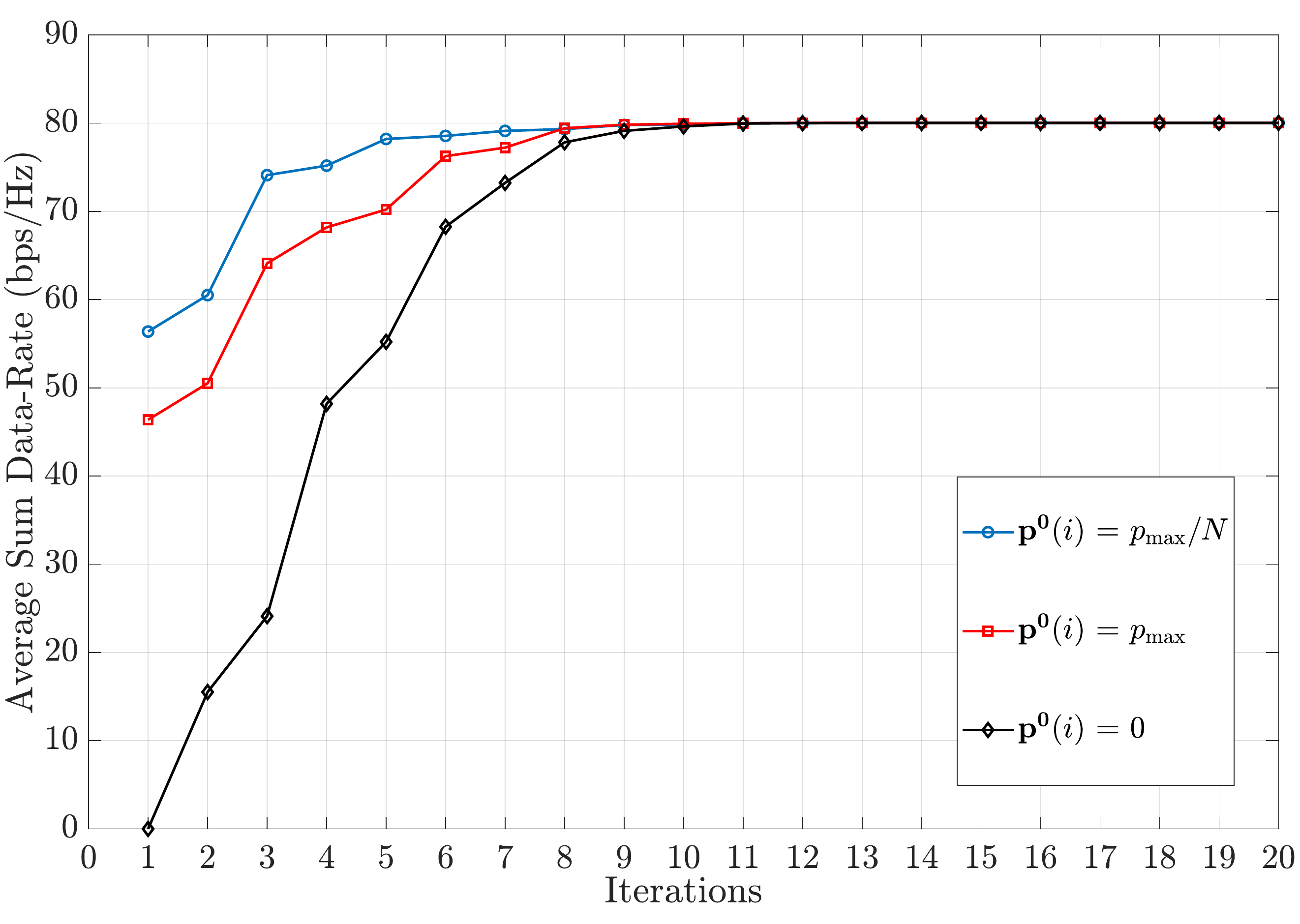}
\caption{Average sum data-rate versus number of iteration under different initialization of the power.}
\label{plot:4.4}
\end{figure}
\begin{figure}[!t]
\centering
\includegraphics[width=12cm]{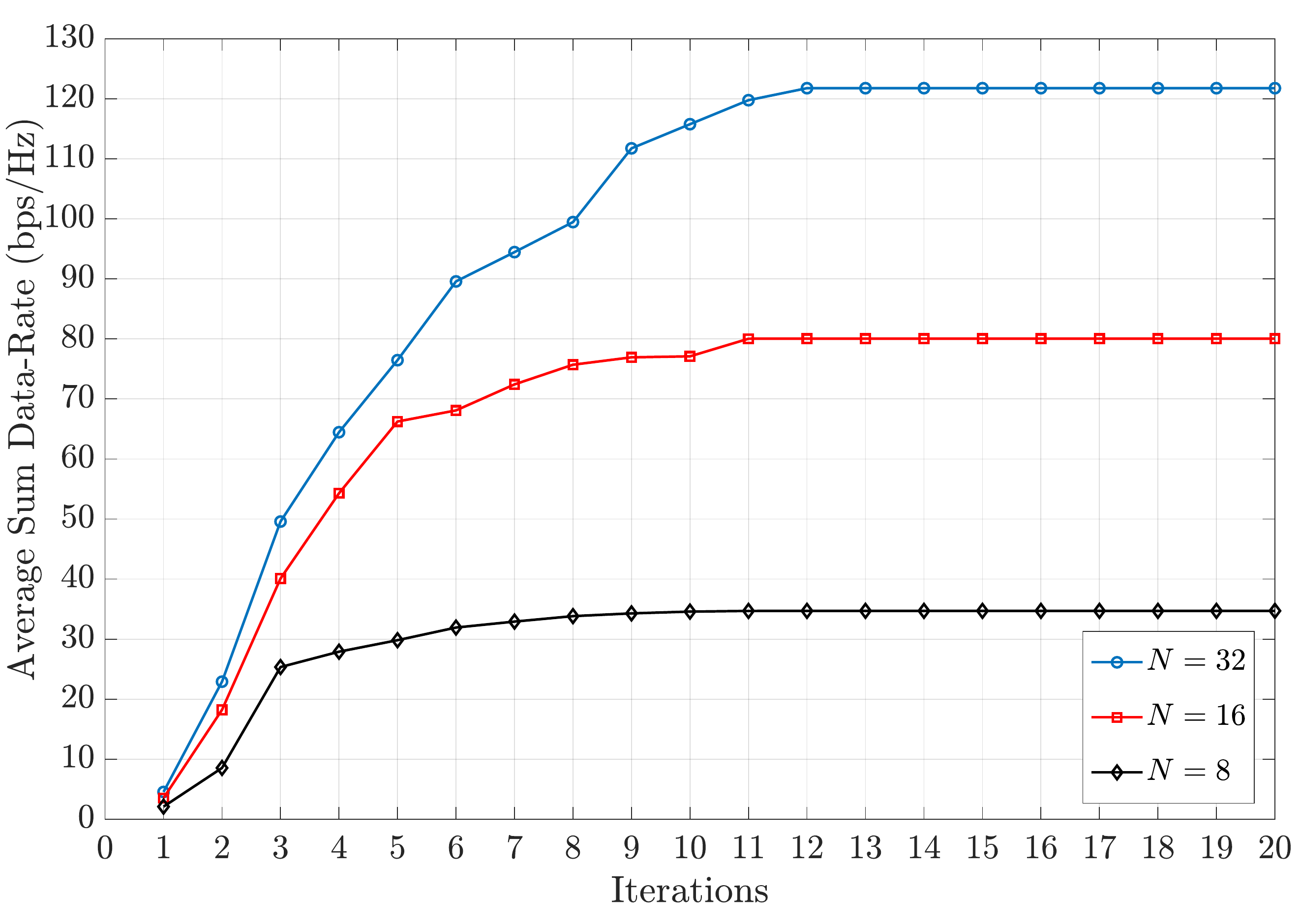}
\caption{Average sum data-rate versus number of iteration under different number of subcarriers.}
\label{plot:4.3}
\end{figure}
To give another perspective, figure (\ref{plot:4.3}) is plotted that shows the average sum data-rate versus the number of iteration for a different number of subcarriers.
It can be seen that as the number of subcarrier increases, the average sum data-rate increases as well.
This is because with increasing the number of subcarriers, an ID user is more likely to choose subcarriers with higher channel quality gains, leading to more significant system throughput in agreement with the effect of channel diversity.
Another interesting observation in this figure is the convergence behavior of our proposed algorithm.
It should be emphasised that the number of subcarriers affects convergence behavior of \textbf{Algorithm~\ref{euclid}}.
The more the number of subcarriers, the more iterations are needed for our proposed algorithm to converge.
The reason for this can be attributed to an expansion in the search region of the optimization problem in (\ref{F4-26-}).
Due to the existence of more binary subcarriers allocation variables that must accordingly be assigned to ID users, extra iterations are required for our proposed algorithm based on the MM approach to settle in a stationary point.  

\subsection{Average Sum Data-rate versus Number of Cells}
Figure (\ref{plot:4.5}) investigates the sum data-rate of $J$ small-cells when the number of small-cells varies from 3 to 7.
With the number of small-cells going up, the average sum data-rate improves even more, although the total number of subcarrier is fixed in the entire network.
This is because each subcarrier has more candidate users of small-cells to choose from (according to the channel quality between the SBS and each ID user), when the number of small-cells increases.
This is known as multi-user diversity.
Consequently, each small-cell throughput improves, which results in an enhancement of the average sum data-rate of the whole network.
Moreover, we can further conclude from the same figure that \textbf{Algorithm~\ref{euclid}} performs better in comparison to other algorithms due to designing a joint resource allocation framework. 

\begin{figure}[!t]
\centering
\includegraphics[width=12cm]{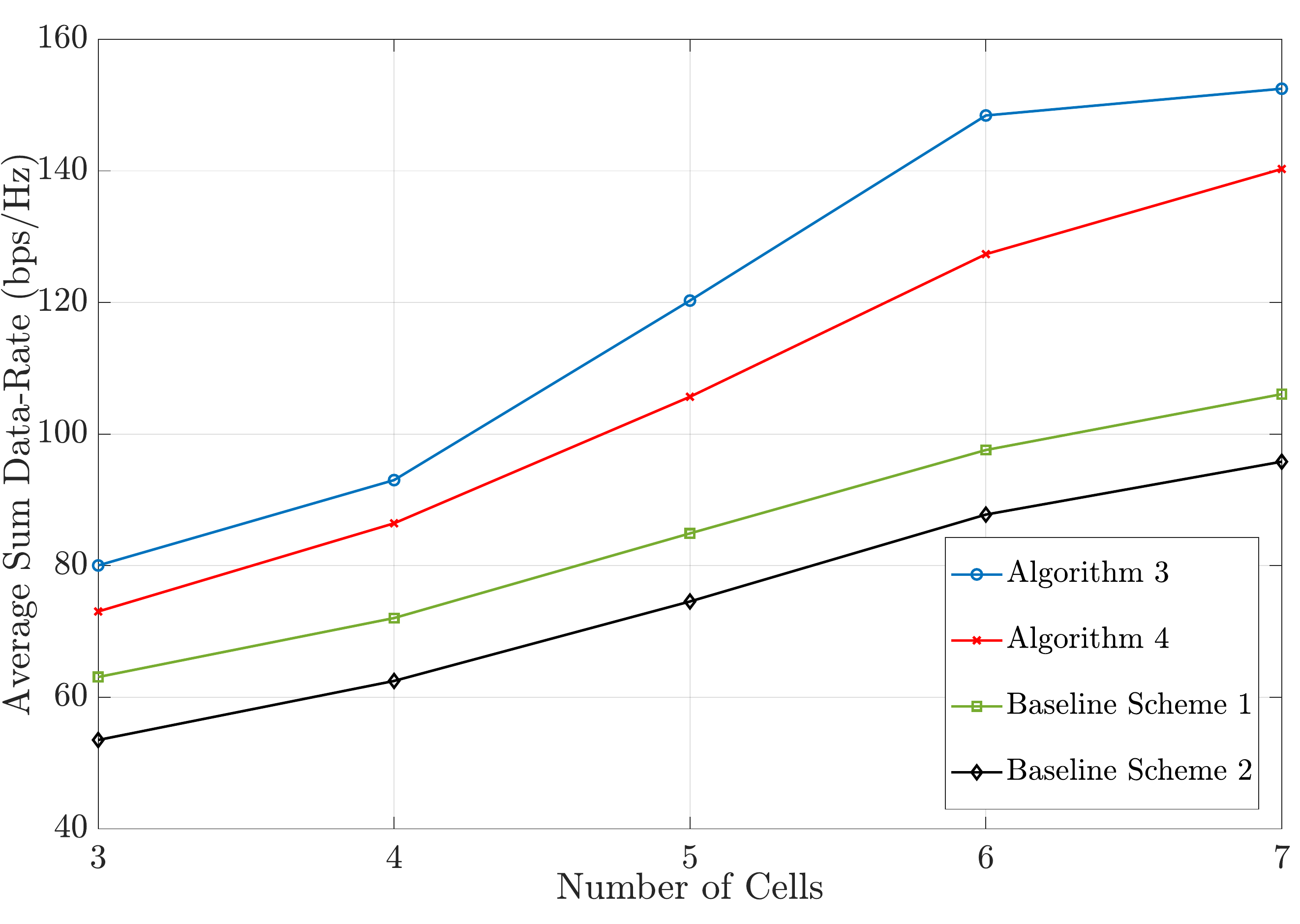}
\caption{Average sum data-rate versus number of cells.}
\label{plot:4.5}
\end{figure}

\section{Summary}
In this chapter, we studied throughput or data-rate maximization for indoor SWIPT-enabled OFDMA multi-user multi-cell networks.
In particular, we considered the separated receiver architecture in which the ID and EH users are separated in the coverage area of a SBS and located in two distinct regions.
Taking into account the subcarrier assignment and power allocation, a resource allocation problem was formulated to maximize data-rate while respecting the minimum required data-rate for each ID user and minimum harvesting energy for each EH user.
The underlying problem was non-convex MINLP.
We employed the MM approach, where a surrogate function serves to approximate the non-convex term.
Since the computational complexity of MM approach was high, we also proposed a suboptimal subcarrier assignment and power allocation algorithm to solve the problem with lower complexity .
Through simulation results, we demonstrated the excellent performance of our proposed algorithms compared to state-of-the-art algorithms that have been addressed in the literature.
Furthermore, numerical results clearly demonstrated that our proposed scheme would substantially improve the system throughput, although it converges to a stationary point even after small number of iterations.
\newpage\thispagestyle{empty}\mbox{}  
\chapter{Generalized Antenna Switching Technique in SWIPT}
\label{CHAP5}
\fancyhf{}
\renewcommand{\headrulewidth}{2pt}
\fancyhead[LE,RO]{\thepage}
\fancyhead[RE]{\textit{ \nouppercase{\leftmark}} }
\fancyhead[LO]{\textit{ \nouppercase{\rightmark}} }
\renewcommand{\footrulewidth}{0.1pt}
\fancyfoot[CE,CO]{\nouppercase{\leftmark}}
\fancyfoot[LE,RO]{JFMJ}

\vspace{15mm}

Most works about simultaneous wireless information and power transfer (SWIPT) focus either on maximizing energy harvesting, which we discuss in chapter~\ref{CHAP3}, or maximizing throughput, in chapter~\ref{CHAP4}.
Nevertheless, deploying algorithms to reap more harvested energy in the overall network topology adversely affects information transfer, and this in turn causes the quality of service (QoS) of the system to degenerate.
Besides, global commitments to sustainable development are contravened if the system design merely seeks to maximize the spectral efficiency (SE) by improving throughput in light of the inexorable increase of network power consumption.
Therefore, energy efficiency (EE) maximization is the focus of this chapter.
It should be noted that EE, conventionally defined as the quantity of transmitted information bits per unit energy (bits/joule), and the need for a high SE are key performance indicators in communication networks.

More features can be added to SWIPT networks by employing multiple antennas at the transmitter and receiver – commonly referred to as multiple-input multiple-output (MIMO) systems.
With respect to receivers, multiple receive antennas can help users harvest more energy because of the broadcast nature of wireless transmission.
Furthermore, the efficiency of information and energy transfer can be significantly improved by using multiple transmit antennas.
That is why it is not surprising that MIMO is regarded as a promising technology – not only for improving network throughput and radio communication system reliability, but also because of the distinct features of SWIPT systems.
However, pre-processing and post-processing are required at both transmitters and receivers in multiple antenna systems, which increases the cost and complexity of system designs.
Various antenna selection methods have been proposed in~\cite{jalal,13,21} as low-cost and simpler solutions for exploiting the performance gain, by achieving a diversity gain, that is promised by multiple antenna systems. 

\begin{figure}[!t]
\vspace*{10mm} 
\centering
\hspace*{-7mm} 
\includegraphics[width=17.1cm,trim=4 4 4 4,clip]
{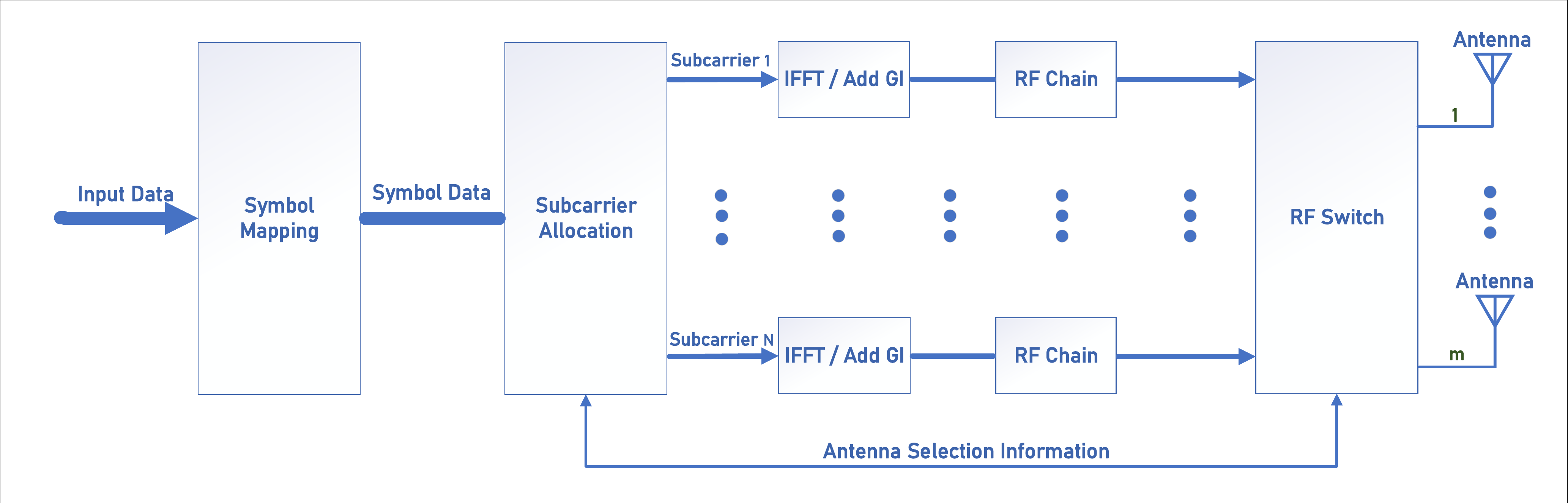}
\caption{Antenna selection architecture.}
\label{fig:5-1}
\end{figure}

The fundamental idea of antenna selection is that the limited available radio frequency (RF) chains, which provide wireless links with the most reliable signal-to-noise ratio (SNR), are assigned to transmit and receive antennas.
The building block of an antenna selection unit is depicted in figure (\ref{fig:5-1}).
Since antenna selection is considered a novel technique that provides higher flexibility to the system operator while also offering a better policy for resource allocation, we briefly explain its basic working principle below.
As can be seen, the input data stream is mapped to the corresponding symbol data at the transmitter by the symbol mapping block.
A symbol data frame, generated by the symbol mapping block, goes to the subcarrier allocation block, which allocates the data to the antenna selected by the RF switch associated with a specific subcarrier.
Then the output sequences from the subcarrier allocation block are applied to inverse fast Fourier transform (IFFT) blocks, and a guard interval (GI) is added to each time-domain signal being transmitted by its respective transmit antenna.

There are two basic approaches to deploying antenna selection in OFDMA systems: bulk selection and per-subcarrier selection. 
In bulk selection, the same antennas are chosen for all subcarriers, whereas in per-subcarrier selection the antennas for each subcarrier are selected independently~\cite{6111188}.
The same principle holds for antenna selection at the receiver.
This subject is developed in \cite{7100915}.
The uplink (UL) of the fourth generation (4G) standard of long term evolution, known as LTE-Advanced, uses the antenna selection technique because of its low implementation cost and the small amount of feedback required – compared to existing techniques such as beamforming and precoding \cite{6316788}.

To the authors' best knowledge, antenna selection in SWIPT networks with a focus on resource allocation design has not yet been studied.
We therefore present a theoretical study of an EE optimization problem that considers an antenna selection approach in a SWIPT system.
In the introduction we defined the antenna switching (AS) scheme as a SWIPT enabler, whereby the user is equipped with independent antennas for energy harvesting (EH) and information decoding (ID) operations.
The antenna selection technique at the receiver can be viewed as a generalization of the AS scheme in a co-located SWIPT network with distinct subsets of antennas that could be assigned to EH and ID operations if permitted by channel quality and the channel state information (CSI).
We refer to this observation as a “generalized AS technique” in SWIPT.
By this we mean that the generalized AS acts as a “switch” in the operation mode of the antenna; each antenna is capable of both ID and EH operations. 
To further explore the EE of the generalized AS-based SWIPT system, we study this scheme in a downlink (DL) of a multi-user multi-cell orthogonal frequency division multiple access (OFDMA) network. 
Intuitively, a higher data-rate and more harvested energy would be expected if more receive antennas were activated. 
The objective function of this chapter's optimization problem is under the condition of satisfying the minimum data-rate requirement and respecting the maximum power transfer constraints. This is achieved by jointly optimizing subcarrier assignment and power allocation with the receive active antenna set selection. 

This particular EE optimization problem is complicated because it is non-convex and fractional-combinatorial.
We begin by dividing the main problem into two subproblems.
The first subproblem concerns the joint small base station (SBS)-subcarrier assignment and power allocation while the second subproblem seeks to determine the best antennas based on the scheduling (joint SBS-subcarrier assignment and power allocation) chosen for ID or EH operations.
We confirm the validity of our theoretical findings for the generalized AS-based SWIPT systems by providing simulation results to draw design insights and demonstrate how our proposed algorithm achieves excellent performance while also reducing the computational cost at the receiver.

\section{System Model}
In this section, we consider a DL of an OFDMA network in a multi-user multi-cell network using SWIPT, as shown in figure (\ref{fig:5.1}).
We assume that the coverage area of the specific region is provided via a set of $j \in {\mathcal{J} = \{1,2,...,J\}}$ cells with one serving SBS in each cell.
Moreover, only one single transmit antenna is handling the corresponding users in the associating cell, even though all SBSs are equipped with multiple antennas.
We additionally assume that the entire frequency band of $\mathscr{B}$ is divided into $N$ orthogonal subcarrier, each having a bandwidth of $\mathscr{W}$.
All cells share the subcarrier set $\mathcal{N}=\{1,2,..,N\}$, where $|\mathcal{N}|=N$ indicates the total number of subcarriers.
The set of all users in a given cell $j$ is represented by $\mathcal{K}_{j}=\{1,2,...,K_j\}$, where the total number of users in that cell is $|\mathcal{K}_{j}|=K_j$.
Furthermore, $\mathcal{K}=\sum_{j\in\mathcal{J}}\mathcal{K}_j$ gives the total number of users in the network.
Besides, each user is equipped with multiple antennas, where the set of antennas is represented by $m \in \mathcal{M}=\{1,2,...,M\}$ with $|\mathcal{M}|=M$ for each user.
It is also assumed that the perfect CSI is available at the resource allocator to design the resource allocation policy.
Note that there is a centralized controller that is connected to all the SBSs.
Specifically, it is presumed that each SBS broadcasts orthogonal preambles, pilot signals, in the DL to the users.
Then, through a feedback channel, each user estimates the CSI and transfers this information back to the associated SBS.
Afterward, the corresponding SBS listens to the sounding reference signals communicated by the users and sends the CSI to the centralized controller for resource allocation design.
\begin{figure}[p]
\centering
\includegraphics[width=21cm,trim=4 4 4 4,clip,angle=90]
{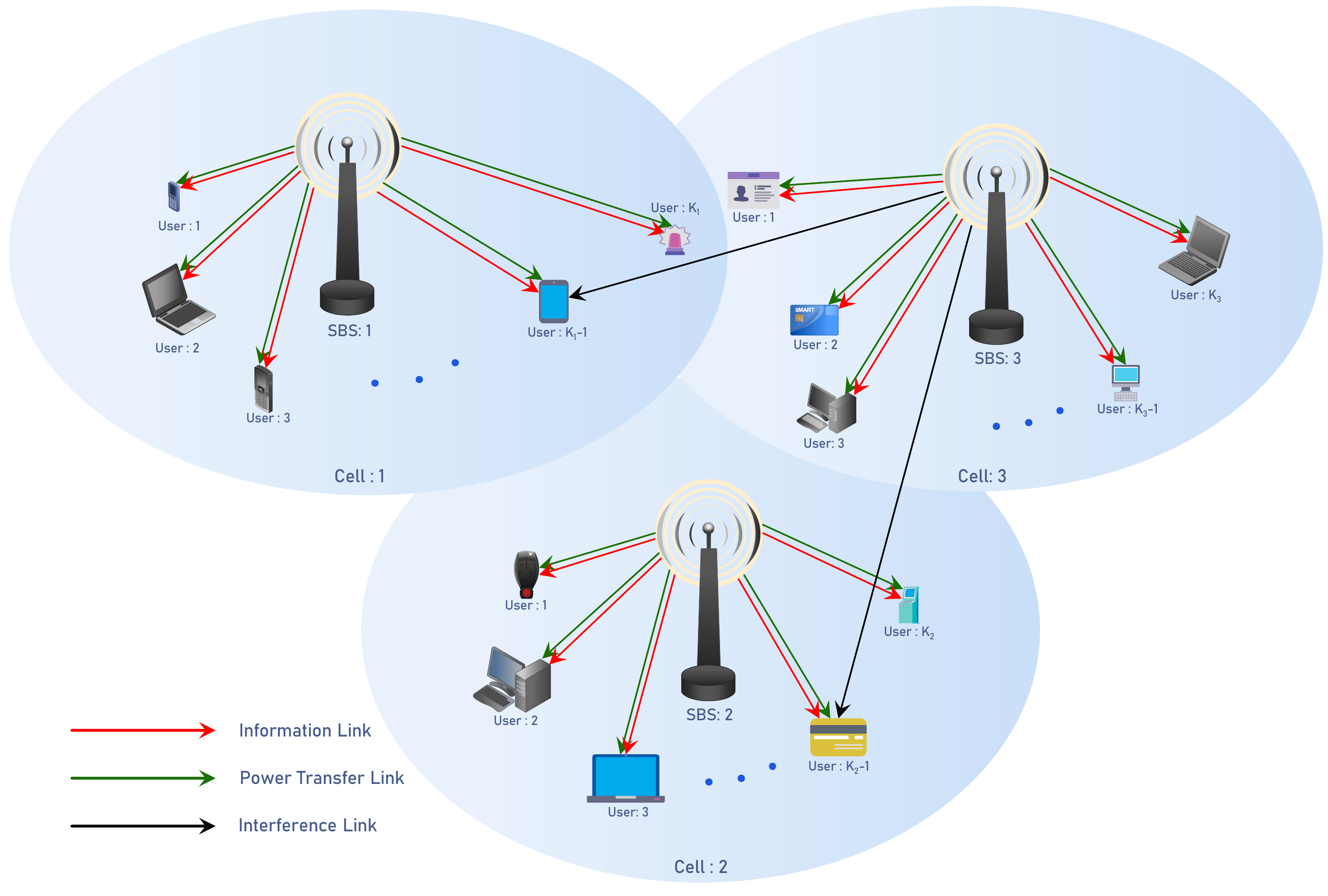}
\caption{SWIPT in a DL of a multi-user multi-cell OFDMA network with generalized AS receivers.}
\label{fig:5.1}
\end{figure}

Furthermore, with multiple antenna setting in each user, the best antenna can be selected for both ID and EH operation in each subcarrier based on the optimization problem. 
However, these operation modes cannot be done over the same antenna at the same time.
Beside, we assume a per-subcarrier selection method for the generalized AS approach in this chapter. 
In this regard, all the assigned subcarriers can be used to receive signals from different antennas resulting in a better degree of freedom by increasing the throughput of the system.~For the sake of readability, we first introduce some of the essential parameters that are used to describe the system model:   
\begin{itemize}
\item $h^{m}_{j,n,k}$: The DL channel gain for the wireless information transfer from the $j^{th}$ SBS to the user $k$ using its $m$ antenna over the $n^{th}$ subcarrier.
\item $g^{m}_{j,n,k}$: The DL channel gain for the wireless power transfer from the $j^{th}$ SBS to the user $k$ using its $m$ antenna over the $n^{th}$ subcarrier.
\item $x^{m}_{j,n,k}$: Binary antenna selection indicators from the $j^{th}$ SBS to the $k^{th}$ user over the $n^{th}$ subcarrier when the $m^{th}$ antenna is selected for harvesting energy.
\item $a_{j,n,k}$: Binary subcarrier indicators from the $j^{th}$ SBS to the $k^{th}$ user when the $n^{th}$ subcarrier is selected.
\item $p_{j,n,k}$: The corresponding transmit power from the $j^{th}$ SBS to the $k^{th}$ user in the $n^{th}$ subcarrier.
\end{itemize}

The generalized AS technique is performed to distinguish between the information and power transfer signals.
Through this methodology, the receiver antennas can be separated into two groups.
One group of antennas is used for harvesting energy, whereas the other group handles wireless information reception.~In fact, each antenna of each user can switch its operation mode in different subcarriers for either ID or EH.~Then, the received ID signal from the $j^{th}$ SBS to the $k^{th}$ user over the $n^{th}$ subcarrier is given by
\begin{align}
y^{\text{ID}}_{j,n,k} & =  
\sum_{m \in \mathcal{M}}
(1-x^{m}_{j,n,k})a_{j,n,k}\sqrt{p_{j,n,k}}h^{m}_{j,n,k} 
\nonumber \\ 
& \quad +
\sum_{\substack{j'\neq j \\ j' \in \mathcal{J}}}
\sum_{\substack{k'\neq k \\ k' \in \mathcal{K}}}
\sum_{\substack{m'\neq m \\ m' \in \mathcal{M}}}
(1-x^{m'}_{j,n,k})a_{j',n,k'}\sqrt{p_{j',n,k'}}h^{m'}_{j',n,k} 
+z^{\text{ID}}_{j,n,k}, 
\end{align}
where $z^{ID}_{j,n,k}$ is the additive white Gaussian noise at the $k^{th}$ user when its ID antennas are switched on.
More specifically, $z^{ID}_{j,n,k}$ is an additive white Gaussian noise (AWGN) random variable with zero mean and variance $|\sigma_{j,n,k}^{{ID}}|^{^2}$ denoted by $z^{ID}_{j,n,k} \sim  \mathcal{CN}(0,|\sigma_{j,n,k}^{{ID}}|^{^2})$.
Moreover,~the EH signal from the $j^{th}$ SBS for the user $k$ over the subcarrier $n$ is given by
\vspace{5mm}
\begin{align}
&y^{EH}_{j,n,k}=
\sum_{m \in \mathcal{M}}(x^{m}_{j,n,k})
\sqrt{p_{j,n,k}}g^{m}_{j,n,k} \nonumber\\& \quad \quad \quad +
\sum_{\substack{j'\neq j \\ j' \in \mathcal{J}}}
\sum_{\substack{k'\neq k \\ k' \in \mathcal{K}}}
\sum_{\substack{m'\neq m \\ m' \in \mathcal{M}}}
(x^{m'}_{j,n,k})\sqrt{p_{j',n,k'}}g^{m'}_{j',n,k}+z^{EH}_{j,n,k}, 
\label{F5-2}
\end{align}
where $z^{EH}_{j,n,k}$ is the additive white Gaussian noise with a circularly symmetric Gaussian distribution, referred to as $z^{EH}_{j,n,k} \sim  \mathcal{CN}(0,|\sigma_{j,n,k}^{{ED}}|^{^2})$, at the $k^{th}$ user when its EH antennas are activated.
According to the famous Shannon capacity formula, the data-rate of the user $k$ using its $m$ antenna over the subcarrier $n$ inside the cell $j$ can be written as
\begin{equation}
R^{m}_{j,n,k}=
\log_2
\bigg(
1+
\frac{a_{j,n,k}p_{j,n,k}|h^{m}_{j,n,k}|^{2}}
{{|\sigma^m_{j,n,k}|}^2+I_{j,n,k}}
\bigg),
\end{equation}
where 
\vspace{5mm}
\begin{align}
I_{j,n,k}=
\sum_{\substack{j'\neq j \\ j' \in \mathcal{J}}}
\sum_{\substack{k'\neq k \\ k' \in \mathcal{K}}}
a_{j',n,k'}{p_{j',n,k'}}|h^{m}_{j',n,k}|^{2},    
\end{align}
is the interference term arising from the co-channel effect on the subcarrier $n$ which is emitted by unintended transmitters sharing the same frequency channel.
For facilitating the presentation, we denote $\textbf{p} \in \mathbb{R}^{1\times JNK}$, $\textbf{a} \in \mathbb{Z}^{1\times JNK}$, and $\textbf{x} \in \mathbb{Z}^{1\times JNK}$ as vectors of optimization problem for power allocation, subcarrier assignment, and antenna selection, respectively.~Consequently, the data-rate of the user $k$  when using all its active ID antennas can be stated as 
\begin{align}
R_k(\mathbf{a},\mathbf{x},\mathbf{p}) = 
\sum_{n\in \mathcal{N}}
\sum_{m\in \mathcal{M}}
(1-x^{m}_{j,n,k})R^{m}_{j,n,k}.
\end{align}
\begin{figure}[p]
\centering
\vspace{-5mm}
\includegraphics[width=20cm,trim=4 4 4 4,clip,angle=90]
{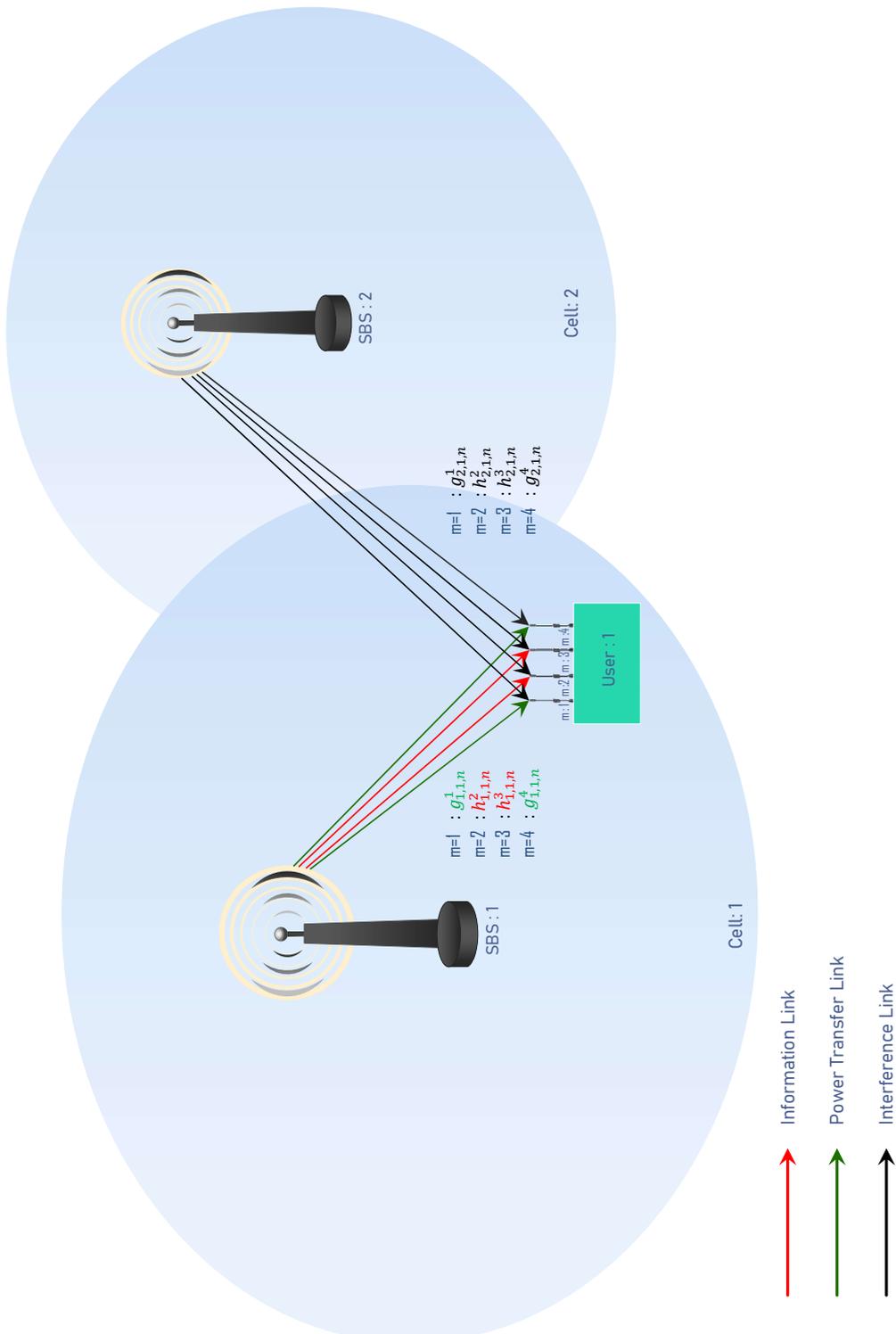}
\caption{SWIPT in a DL of an OFDMA network consisting of $J=2$ small cells, where there is one user in the intersection of the two cells, i.e., $\mathcal{K}_1=$1. The user has for antennas; two of which are employed four ID and the rest for EH. Also, the user is served in the first cell and receives interference from the second cell.}
\label{fig:5.2}%
\end{figure}
Hence, the total system throughput, denoted by $R^{\textrm{Total}}(\mathbf{a},\mathbf{x},\mathbf{p})$, is obtained as
\begin{equation}
R^\textrm{Total}(\mathbf{a},\mathbf{x},\mathbf{p}) = 
\sum_{j \in \mathcal{J}}
\sum_{k\in \mathcal{K}}
R_k(\mathbf{a},\mathbf{x},\mathbf{p}).
\end{equation}
On the other hand, to compute the total power consumption of the network, we use the following linear model in which the transmit power consumption, circuit energy consumption, and the harvested energy are taken into account. 
In this model, there exist coefficients that represent the efficiency of power amplifiers in network devices as well as the power efficiency of EH antennas, which will be explained later on.
In particular, the total power consumption of the considered system $P^{\textrm{Total}}(\textbf{a},\textbf{x},\textbf{p})$ consists of three major terms and can be expressed as
\begin{align}
P^{\textrm{Total}}(\textbf{a},\textbf{x},\textbf{p}) = 
\bigg(
\sum_{j \in \mathcal{J}}
\sum_{k\in \mathcal{K}}
\sum_{n\in \mathcal{N}}
\big(
\frac{a_{j,n,k}p_{j,n,k}}
{\kappa_j}+P_c^\mathrm{SBS}
\big)
\bigg)
-P^{\textrm{EH}}(\textbf{x},\textbf{p}),
\end{align}
where
\begin{equation}\label{Chap5:5-5}
P^{\textrm{EH}}(\textbf{x},\textbf{p})=
\sum_{j \in \mathcal{J}}
\sum_{k\in \mathcal{K}}
\sum_{n\in \mathcal{N}}
\sum_{m\in \mathcal{M}}
\epsilon^m_{j,k}x^{m}_{j,n,k}{p_{j,n,k}}|g^{m}_{j,n,k}|^{2},
\end{equation}
is the total harvested energy in the network topology using the active EH antenna set of each user.
In the above equation, $P^\mathrm{SBS}_\mathrm{c}$ denotes the circuit energy consumption of SBSs, where $\kappa_j$ is the power amplifier of SBSs that takes its values from the interval of $0< \kappa_j < 1$.
Moreover, $0<\epsilon^m_{j,k}<1$ is the power conversion efficiency for the $m^{th}$ active EH antenna of the $k^{th}$ receiver in cell $j$ as introduced in the previous chapters.
It should also be noted that the contribution of the noise power, i.e., ${|\sigma^m_{j,n,k}|}^2$, to $P^{\text{EH}}$ formula in (\ref{Chap5:5-5}) is ignored, since its value is very small compared to the other existing term in (\ref{Chap5:5-5}).
In this chapter, we define the EE as the ratio of system throughput to the corresponding network energy consumption in bits/joule, and denote it by $\mathscr{E}_{eff}(\mathbf{a},\mathbf{p},\mathbf{q})$, where
\begin{align} \label{F5-6}
\mathscr{E}_{eff}(\mathbf{a},\mathbf{x},\mathbf{p}) =
\frac{R^\textrm{Total}(\mathbf{a},\mathbf{x},\mathbf{p})}
{P^\textrm{Total}(\mathbf{a},\mathbf{x},\mathbf{p})}.
\end{align}
In what follows, we first formulate the problem to maximize the EE while considering the feasibility of the transmitted power, minimum data-rate requirement as well as the OFDMA constraint in multi-user multi-cell network with SWIPT.
Then, we propose a solution to solve the optimization problem.

\section{Optimization Problem Formulation}
Notwithstanding the fact that EE is an effective resource allocation metric in cellular networks, some limitations might turn it into an undesired metric in many of the modern applications.
In energy-efficient algorithms, the main goal is to maximize system throughput and minimize the corresponding energy consumption simultaneously, without differentiating between the priority of these competing objectives. 

In this section, we first formulate the optimization problem of joint SBS-subcarrier assignment and power allocation together with the antenna selection for the EE maximization problem of a multi-user multi-cell OFDMA network with a generalized AS-based SWIPT framework. 
Afterward, we present our proposed algorithm to solve the stated problem.
Hence, we introduce the following optimization problem to maximize the system EE 
\begin{subequations}
\begin{align} 
&\max_{\textbf{a},\textbf{x},\textbf{p}}
\mathscr{E}_{eff}(\mathbf{a},\mathbf{x},\mathbf{p})
\label{F5-7}\\
s.t.: 
&~C_{1}:
\sum_{k\in \mathcal{K}} 
a_{j,n,k}\leq 1,~~~~~~~~~~~~~~~~~~~~
\forall {j \in \mathcal{J},~ 
\forall n\in \mathcal{N}},
\label{F5-8}\\ 
&~C_{2}:
\sum_{k\in \mathcal{K}}~
\sum_{n\in \mathcal{N}}
a_{j,n,k}\ p_{j,n,k}\leq p_{max},~
\forall j \in \mathcal{J}, 
\label{F5-9}\\
&~C_{3}:
\sum_{j\in \mathcal{J}}
R_k(\mathbf{a},\mathbf{x},\mathbf{p})
\geq  R_{min},~~~~~~~~
\forall k \in \mathcal{K},\label{F5-10}\\
&~C_{4}:
\sum_{m\in \mathcal{M}} 
x^{m}_{j,n,k}=1,~~~~~~~~~~~~~~~~~~
\forall j\in \mathcal{J},~
\forall n \in \mathcal{N},~
\forall k \in \mathcal{K},
\label{F5-11}\\
&~C_{5}:
a_{j,n,k}\in\{0,1\} ,~~~~~~~~~~~~~~~~~~~~
\forall {j \in \mathcal{J},~ 
\forall n\in \mathcal{N},~ 
\forall  k\in \mathcal{K}},
\label{F5-12}\\
&~C_{6}:
x^{m}_{j,n,k}\in\{0,1\},~~~~~~~~~~~~~~~~~~~~
\forall {j \in \mathcal{J},~ 
\forall n\in \mathcal{N},~ 
\forall  k\in \mathcal{K},  ~ 
\forall m\in \mathcal{M}},
\label{F5-13}\\
&~C_{7}:
a_{j,n,k} + a_{j',n',k}\leq 1,~~~~~~~~~~~~~~
\forall {j\neq j' \in \mathcal{J},~ 
\forall n,n' \in \mathcal{N}},~ 
\forall  k\in \mathcal{K}.
\end{align}
\label{chap5:5:main}%
\end{subequations}
In the optimization problem (\ref{chap5:5:main}), constraint $C_1$ indicates that each subcarrier can be allocated to at most one user.
The constraint $C_2$ makes sure that the total transmit power of SBSs should not exceed their maximum threshold denoted by $p_{max}$.~In constraint $C_3$, the minimum data-rate requirement, $R_{min}$, is guaranteed for each user in DL in each cell.
Constraint $C_4$ indicates that each user utilizes only one antenna in each subcarrier.
$C_5$ and $C_6$ indicate that the subcarrier assignment and the antenna selection variables take only binary values.
Finally,~$C_7$ presents that each user can be assigned to at most one SBS in each subcarrier.
It is worth mentioning that this last constraint is the so-called user association in the literature \cite{6845056}. 

Due to the binary subcarrier assignment and antenna selection variables as well as the interference included in the data-rate function, and the fractional form of the objective function, the optimization problem~(\ref{chap5:5:main}) is a mixed-integer non-linear programming (MINLP) problem, which is generally complicated. 
These challenges that make the above optimization problem difficult to solve are further explained below:
\begin{itemize}
\item Fractional form of the objective function: As a fractional objective function, $\mathscr{E}_{eff}(\mathbf{a},\mathbf{x},\mathbf{p})$ is non-convex.
\item Multiplication of two variables: Since the multiplication of two variables is non-convex, the term $a_{j,n,k}p_{j,n,k}$ in constraint $C_2$, poses a challenge in tackling the optimization problem in (\ref{chap5:5:main}). 
Because the maximum total transmit power constraint in $C_2$, the minimum data-rate QoS requirement constraint in $C_3$, and also the total data-rate objective function are multiplied by a function of joint SBS-subcarrier assignment and the transmit power variables (as given in (\ref{F5-9}), (\ref{F5-10}), and (\ref{F5-7})), these mentioned constraints together with the objective function are non-convex.
\item Interference: The inter-cell interference incorporated in data-rate functions,~makes both constraints $C_3$ and the objective function non-convex. 
\item Binary antenna selection and SBS-subcarrier assignment variable: Discrete antenna selection and SBS-subcarrier assignment variables turn (\ref{chap5:5:main}) into a complex MINLP problem.
\end{itemize}

\section{Solution to the Optimization Problem}
To cope with the complexity of the MINLP problem, we decompose the optimization problem (\ref{chap5:5:main}) into two subproblems: 
1) joint SBS-subcarrier assignment and power allocation, and 
2) antenna selection.
In order to find the locally optimal SBS-subcarrier allocation, power assignment, and antenna selection, the following iterative procedure is employed.
At the beginning of each iteration, the optimal SBS-subcarrier allocation and power assignment are obtained from an optimal antenna selection of the previous iteration, i.e., $\mathbf{x}^{t-1}$, using the results of layer \ref{Chap5:layer1}.\textbf{A}.
Knowing the best SBS-subcarrier allocation and power assignment, the best antenna for either EH or ID operation is selected by incorporating the results of layer \ref{Chap5:layer2}.\textbf{B}.
The corresponding update rule is summarized as follows
\begin{align}\label{update}
& \underbrace{(\mathbf{a}^{0},\mathbf{p}^{0}) 
\rightarrow 
\mathbf{x}^{0}}_{\text{Initialization}}
\rightarrow ...\rightarrow  
\underbrace{(\mathbf{a}^{t-1},\mathbf{p}^{t-1}) \rightarrow\mathbf{x}^{t-1}}_{\text{Iteration \textit{t-}1}}
\nonumber\\ 
\rightarrow 
& \underbrace{(\mathbf{a}^{t},\mathbf{p}^{t}) \rightarrow\mathbf{x}^{t}}_{\text{Iteration \textit{t}}}
\rightarrow ...\rightarrow
\underbrace{(\mathbf{a}^{opt},\mathbf{p}^{opt})
\rightarrow\mathbf{x}^{opt}}_{\text{Optimal Solution}}.
\end{align}
After solving each subproblem by algorithms that are discussed in the following two layers methodology, an iterative procedure is employed in which the solution of the previous subproblem is used as the input of the current problem.
Through this iterative procedure, we are able to enhance the accuracy of our obtained solution.
The iterative algorithm in (\ref{update}) needs an initial setting for $\mathbf{a,x}$, and $\mathbf{p}$.
In order to converge to a locally optimal solution, these initial settings must be appropriately selected.
To this end, we first note that the optimization problem (\ref{chap5:5:main}) is feasible, as we can find initial settings $\mathbf{a}^0, \mathbf{p}^0$, and $\mathbf{x}^0$, satisfying the constraints.
Such a setting that respects all the constraints can be found as follows. First, for the initial SBS-subcarrier assignment $\mathbf{a}^0$ and power allocation $\mathbf{p}^0$, we assume each SBS-subcarrier is assigned to the small-cell user with the highest channel gain, where equal power is allocated to all small-cell users across all subcarriers. 
Lastly, for the antenna selection $\mathbf{x}^0$, a set of antennas are selected for each of the small-cell users based on quasi-stationary channel statistics. 
The selected antennas in each user can perform either EH or ID operation accordingly.

Having equipped with the necessary background, a tractable solution procedure to the original problem (\ref{chap5:5:main}) is described in a two-layer format in the next following subsections.

\subsection{A. Joint SBS-subcarrier Assignment and Power Allocation}\label{Chap5:layer1}
In this subsection, we carry out the subproblem of the SBS-subcarrier assignment and power allocation in order to maximize EE. 
Assuming the antenna set selection is given from the previous iteration, we aim to tackle the non-convexity of the multiplication of a binary variable with the transmit power in constraints $C_2$ and $C_3$ along with the objective function.
In order to handle this issue,~we adopt the big-M formulation \cite{big_M} to decouple the product terms.
Therefore, we introduce the following additional constraints into the optimization problem in (\ref{chap5:5:main}) 
\begin{align}
&C_{8}:
\tilde{p}_{j,n,k}\leq p_{max}a_{j,n,k},~~~~~~~~~~~~~~~~~~~~
\forall j \in \mathcal{J},~
\forall n \in \mathcal{N},~
\forall k \in \mathcal{K},\\
&C_{9}:
\tilde{p}_{j,n,k}\leq p_{j,n,k},~~~~~~~~~~~~~~~~~~~~~~~~~~
\forall j \in \mathcal{J},~
\forall n \in \mathcal{N},~
\forall k \in \mathcal{K},\\
&C_{10}:
\tilde{p}_{j,n,k}\geq  
p_{j,n,k}-(1-a_{j,n,k})p_{max},~
\forall j \in \mathcal{J},~
\forall n \in \mathcal{N},~
\forall k \in \mathcal{K},\\
&C_{11}:
\tilde{p}_{j,n,k}\geq 0,~~~~~~~~~~~~~~~~~~~~~~~~~~~~~~
\forall j \in \mathcal{J},~
\forall n \in \mathcal{N},~
\forall k \in \mathcal{K},
\end{align}  
where $\tilde{\textbf{p}} \in \mathbb{R}^{1\times JNK}$ is the collection of all $\tilde{p}_{j,n,k}$'s.
Through this method, we can easily deal with the non-convexity of the multiplication of two variables, both in the objective and also the constraints. 
Next, we relax the integer SBS-subcarrier assignment variable by converting it into a continuous variable between zero and one as
\begin{equation}
\dot{C}_{5}:~ 0\leq  a_{j,n,k} \leq 1,~\forall j \in \mathcal{J},~\forall n \in \mathcal{N},~\forall k \in \mathcal{K}.
\end{equation}
Moreover,~by inspiring the same approach in previous chapters, we impose the following region to the optimization problem
\begin{align}
\ddot{C}_{5}:~
\sum_{j \in \mathcal{J}}
\sum_{ n \in \mathcal{N}}
\sum_{ k \in \mathcal{K}} 
\left( a_{j,n,k} - (a_{j,n,k})^2 \right) \leq 0.
\end{align}
Subsequently , the original optimization problem in (\ref{chap5:5:main}) can be reformulated as follows
\begin{subequations}
\begin{align}
(\mathbf{a}^{t},\tilde{\mathbf{p}}^{t}) =
& \arg \max_{\mathbf{a},\mathbf{p},\tilde{\mathbf{p}}}
\frac{{\overline{{R}}}^\textnormal{Total}
(\mathbf{a}^{t},\mathbf{x}^{t-1},\tilde{\mathbf{p}}^{t})}
{{\overline{{P}}}^\textnormal{Total}
(\mathbf{a}^{t},\mathbf{x}^{t-1},\tilde{\mathbf{p}}^{t},\textbf{p}^{t})} 
\label{F5-13_new}\\
s.t.: 
&~C_{1},C_{4},\dot{C}_{5},\ddot{C}_{5},{C}_{7}-C_{11},\\
&~C_{2}:
\sum_{k\in \mathcal{K}}~
\sum_{n\in \mathcal{N}}
\tilde{p}_{j,n,k}\leq p_{max},~~
\forall j \in \mathcal{J}, \\ 
&~C_{3}: \sum_{j\in \mathcal{J}} \overline{R}_k(\mathbf{a},\mathbf{x},\tilde{\mathbf{p}}) 
\geq R_{min},~
\forall k \in \mathcal{K}.
\end{align}
\label{F5:18}%
\end{subequations}
Furthermore, the numerator and denominator of the objective function in the above optimization problem (\ref{F5:18}) are as follows
\begin{equation}
{\overline{R}}^\textrm{Total}(\mathbf{a},\mathbf{x},\tilde{\mathbf{p}}) = 
\sum_{j\in \mathcal{J}}
\sum_{k\in \mathcal{K}}
\underbrace{
\sum_{n\in \mathcal{N}}
\sum_{m\in \mathcal{M}}(1-x^{m}_{j,n,k})
\log_2
\bigg(
1+\frac{\tilde{p}_{j,n,k}|h^{m}_{j,n,k}|^{2}}{{|\sigma^m_{j,n,k}|}^2+I_{j,n,k}}
\bigg)}_{\text{\normalsize $\overline{R}_k(\mathbf{a},\mathbf{x},\tilde{\mathbf{p}})$}},
\end{equation}
\begin{align}
{\overline{P}}^{\textrm{Total}}(\textbf{a},\textbf{x},\tilde{\mathbf{p}},\textbf{p}) 
= & \bigg(
\sum_{j \in \mathcal{J}}
\sum_{k\in \mathcal{K}}
\sum_{n\in \mathcal{N}}
(\frac{\tilde{p}_{j,n,k}}{\kappa_j}+
P_c^\mathrm{SBS})\bigg)-
P^{\textrm{EH}}(\textbf{x},\textbf{p}),
\end{align}
where
\begin{equation}
I_{j,n,k}=
\sum_{\substack{j'\neq j \\ j' \in \mathcal{J}}}
\sum_{\substack{k'\neq k \\ k' \in \mathcal{K}}}
\tilde{p}_{j',n,k'}|h^{m}_{j',n,k}|^{2}.
\end{equation}
Although we addressed the issue of the coupling variables in constraints $C_2$ (\ref{F5-9}), $C_3$ (\ref{F5-10}), and the objective function, the main problem in (\ref{F5:18}) is still non-convex due to existence of the interference in logarithmic data-rate functions. 
In order to handle this problem, we employ the majorization minimization (MM) algorithm \cite{MM} to make the constraint $C_{3}$ in (\ref{F5-13_new}) and the data-rate function in the objective function of (\ref{F5-13_new}) convex, as becomes clear in what follows.
To do so, we first restate the constraint $C_3$ as a difference of convex (D.C.) functions \cite{SCA,DC} as
\begin{equation}\label{F5-17}
\dot{C}_{3}:~
\sum_{j\in \mathcal{J}}
\sum_{n\in \mathcal{N}}
\sum_{m\in \mathcal{M}}
\mathcal{U}(\mathbf{a},\mathbf{x},\tilde{\mathbf{p}}) - 
\sum_{j\in \mathcal{J}}
\sum_{n\in \mathcal{N}}
\sum_{m\in \mathcal{M}}
\mathcal{V}(\mathbf{a},\mathbf{x},\tilde{\mathbf{p}}) 
\geq 
R_{min},~
\forall k \in \mathcal{K},
\end{equation}
where $\mathcal{U}(\mathbf{a},\mathbf{x},\tilde{\mathbf{p}})$ and $\mathcal{V}(\mathbf{a},\mathbf{x},\tilde{\mathbf{p}})$ are
\begin{align}
\mathcal{U}(\mathbf{a},\mathbf{x},\tilde{\mathbf{p}}) 
&= 
(1-x^{m}_{j,n,k})\log_2\bigg(\tilde{p}_{j,n,k}h^{m}_{j,n,k}
+ {|\sigma^m_{j,n,k}|}^2+I_{j,n,k}\bigg),
\\
\mathcal{V}(\mathbf{a},\mathbf{x},\tilde{\mathbf{p}})
&= (1-x^{m}_{j,n,k})\log_2\bigg({|\sigma^m_{j,n,k}|}^2+I_{j,n,k}\bigg).
\end{align}
To obtain a concave approximation for the constraint $\dot{C}_3$, we apply the MM algorithm \cite{MM}, and we construct a surrogate function for $\mathcal{V}(\mathbf{a},\mathbf{x},{\tilde{\mathbf{p}}})$ using first-order Taylor approximation as
\begin{align}
\label{F5-18}
{{\mathcal{V}}}(\mathbf{a},\mathbf{x},\tilde{\mathbf{p}}) \simeq &
\sum_{j\in \mathcal{J}}
\sum_{n\in \mathcal{N}}
\sum_{m\in \mathcal{M}}
\mathcal{V}(\mathbf{a}^{(t-1)},\mathbf{x}^{(t-1)},\tilde{\mathbf{p}}^{(s-1)}) \nonumber \\+ &
\sum_{j\in \mathcal{J}}
\sum_{n\in \mathcal{N}}
\sum_{m\in \mathcal{M}}
\nabla_{\tilde{\mathbf{p}}}{\mathcal{V}^T}(\mathbf{a}^{(t-1)},\textbf{x}^{(t-1)},\tilde{\mathbf{p}}^{(s-1)}) (\tilde{\mathbf{p}}-\tilde{\mathbf{p}}^{(s-1)}) \triangleq \tilde{{\mathcal{V}}}(\mathbf{a},\mathbf{x},\tilde{\mathbf{p}}),
\end{align}
where ${\tilde{\mathbf{p}}}^{(s-1)}$ is the solution of the problem at the $(s-1)^{th}$ iteration, and $\nabla_{\square}$ represents the gradient operation with respect to ${\square}$.
In a similar manner, we handle the total data-rate function in the nominator of the objective function in the optimization problem (\ref{F5-13_new}) as
\begin{align}\label{F5-16_new}
\overline{R}^\textrm{Total}(\mathbf{a},\mathbf{x},\tilde{\mathbf{p}}) = 
\mathscr{U}(\mathbf{a},\mathbf{x},{\tilde{\mathbf{p}}})- \mathscr{V}(\mathbf{a},\mathbf{x},{\tilde{\mathbf{p}}}),
\end{align}
where
\begin{equation}
\mathscr{U}(\mathbf{a},\mathbf{x}, {\tilde{\mathbf{p}}})=
\sum_{j \in \mathcal{J}} 
\sum_{n \in \mathcal{N}}
\sum_{k \in \mathcal{K}}
\sum_{m\in \mathcal{M}}
\mathcal{U}(\mathbf{x},\tilde{{\mathbf{p}}}),   
\end{equation}
and 
\begin{align}
\mathscr{V}(\mathbf{a},\mathbf{x}, {\tilde{\mathbf{p}}})=
\sum_{j \in \mathcal{J}}
\sum_{n \in \mathcal{N}}
\sum_{k \in \mathcal{K}}
\sum_{m\in \mathcal{M}}
\mathcal{V}(\mathbf{x},\tilde{{\mathbf{p}}}).   
\end{align}
With the same reasoning as before, it can be concluded that numerator of the objective function in (\ref{F5-16_new}) is not a concave function.
However, equation (\ref{F5-16_new}) belongs to the class of D.C. functions.
To approximate a concave function in (\ref{F5-16_new}), the MM approach is applied again to make a concave approximation via the first-order Taylor approximation as 
\begin{align}\label{F5-30}
{\mathscr{V}}(\mathbf{a},\mathbf{x},{\tilde{\mathbf{p}}}) & \simeq
\mathscr{V}\big(\mathbf{a}^{(s-1)},\mathbf{x}^{(t-1)} , {\tilde{\mathbf{p}}}^{(s-1)}\big) \nonumber \\& +
\nabla_{\tilde{\mathbf{p}}}
{\mathscr{V}^T\!\big(\mathbf{a}^{(s-1)},\mathbf{x}^{(t-1)} , {\tilde{\mathbf{p}}}^{(s-1)}\big)} 
\big({\tilde{\mathbf{p}}} - 
{\tilde{\mathbf{p}}}^{(s-1)}\big) \triangleq \tilde{\mathscr{V}}(\mathbf{a},\mathbf{x},{\tilde{\mathbf{p}}}), 
\end{align}
where the total data-rate of the network can be rewritten as
\begin{equation}
\widehat{\overline{R}}^\textnormal{Total}
(\mathbf{a},\mathbf{x},\tilde{\mathbf{p}}) =
\mathscr{U}(\mathbf{a},\mathbf{x}, \tilde{\mathbf{p}})- 
\tilde{\mathscr{V}}(\mathbf{a},\mathbf{x}, \tilde{\mathbf{p}}).
\end{equation}
By using approximation (\ref{F5-30}), the MM principles is satisfied. 
This makes a tight lower bound of equation (\ref{F5-16_new}) \cite{che2014joint,DC}.
Now, the optimization problem in (\ref{chap5:5:main}) can be restated as 
\begin{align}\label{F5-20}
(\mathbf{a}^{t},\mathbf{p}^{t}) = 
& \arg \max_{\mathbf{a},\mathbf{p},\tilde{\mathbf{p}}}
\frac{\widehat{\overline{R}}^\textnormal{Total}(\mathbf{a}^{t},\mathbf{x}^{t-1},\tilde{\mathbf{p}}^{t}) }
{\overline{P}^\textnormal{Total}(\mathbf{a}^{t},\mathbf{x}^{t-1},\tilde{\mathbf{p}}^{t},\mathbf{p}^t)} \\
s.t.: &~C_{1}-C_{2},\dot{C}_{3},C_{4},\dot{C}_{5},\ddot{C}_{5},{C}_{7}-C_{11}. \nonumber
\end{align}
However, the optimization problem in (\ref{F5-20}) is still non-convex due to non-convexity of $\ddot{C}_{5}$. 
Nevertheless, let us rewrite $\ddot{C}_{5}$ as the difference of two convex functions as $\mu(\mathbf{a})-\nu(\mathbf{a})$, where 
\begin{align}
&\mu(\mathbf{a})=\sum_{j \in \mathcal{J}}
\sum_{ n \in \mathcal{N}}\sum_{ k \in \mathcal{K}}( a_{j,n,k}),\\
&\nu(\mathbf{a})=\sum_{j \in \mathcal{J}}
\sum_{ n \in \mathcal{N}}\sum_{ k \in \mathcal{K}}(a_{j,n,k})^2. 
\end{align}
Similar to the approach used for data-rate functions in (\ref{F5-17}) and (\ref{F5-16_new}), we employ the same methodology based on the MM algorithm to convexify the constraint by taking the first-order Taylor approximation from  $\nu(\mathbf{a})$.
Consequently,~$\nu(\mathbf{a})$ can be approximated as
\begin{align}
\nu(\mathbf{a}) 
\simeq
{\nu}(\tilde{\mathbf{a}}^{(s-1)}) +
\nabla_{\tilde{\mathbf{a}}}{\nu^T}(\tilde{\mathbf{a}}^{(s-1)}) 
(\tilde{\mathbf{a}}-\tilde{\mathbf{a}}^{(s-1)}) 
\triangleq 
\tilde{\nu}(\mathbf{a}).
\label{F5-34}
\end{align}
Hence, the constraint $\ddot{C}_{5}$ can be restated as $\mu(\mathbf{a})-\tilde{\nu}(\mathbf{a})$ that would be a  convex function.~Finally, the optimization problem at hand can be recast as 
\begin{subequations}
\begin{align}
(\mathbf{a}^{t},\mathbf{p}^{t}) = 
& \arg \max_{\mathbf{a},\mathbf{p},\tilde{\mathbf{p}}}
\frac{\widehat{\overline{R}}^\textnormal{Total}(\mathbf{a}^{t},\mathbf{x}^{t-1},\tilde{\mathbf{p}}^{t}) }
{\overline{P}^\textnormal{Total}(\mathbf{a}^{t},\mathbf{x}^{t-1},\tilde{\mathbf{p}}^{t},\mathbf{p}^t)} \\s.t.: 
&~\ddot{C}_{5}:~\mu(\mathbf{a})-\tilde{\nu}(\mathbf{a}) \leq 0,\\
&~C_{1}-C_{2},\dot{C}_{3},C_{4},\dot{C}_{5},{C}_{7}-C_{11}. 
\end{align}
\label{F5-26_new}%
\end{subequations}
So far, we tackled the issues with the multiplication of two variables, the interference in data-rate functions, and the binary SBS-subcarrier assignment variable.
As for the last step in our solution design, we must take care of the issue with the fractional form of the objective function that makes the optimization problem non-convex. 
Thus, we now address the fractional objective function in the optimization problem in (\ref{F5-26_new}) by describing a technique to treat fractional programming problems.
It can be concluded that the optimization problem in (\ref{F5-26_new}) can be solved via a well-known algorithm, namely, the Dinkelback method \cite{Dinkelbach}.
For this matter, let us denote $\mathscr{E}_{eff}^{^{*}}$ as the optimal EE point of the optimization problem (\ref{F5-26_new}) that belongs to the set of feasible solutions spanned by its constraints.
Therefore, we can solve the following non-fractional optimization problem given the data from the previous round of the main loop for the chosen set of antennas, i.e., $t-1$, and the $i^{th}$ iteration for the Dinkelback algorithm.
Thus, by considering the \textbf{Algorithm~\ref{Dinkelbach-algorithm}}, an optimization problem with a transformed objective function is introduced to obtain the resource allocation policy as follows
\begin{align}\label{F5-36_new}
(\mathbf{a}^{i},\tilde{\mathbf{p}}^{i}) =
& \arg \max_{\mathbf{a},\mathbf{p},\tilde{\mathbf{p}}}
\widehat{\overline{R}}^\textnormal{Total}
(\mathbf{a}^i,\mathbf{x}^{{t-1}},\tilde{\mathbf{p}}^i)-
\mathscr{E}_{eff}^{^i} \overline{P}^\textnormal{Total}
(\mathbf{a}^{*^i},\mathbf{x}^{t-1},\tilde{\mathbf{p}}^{*^i},\mathbf{p}^{*^i}) \\
s.t.: &~C_{1}-C_{2},\dot{C}_{3},C_{4},\dot{C}_{5},\ddot{C}_{5},{C}_{7}-C_{11}. \nonumber 
\end{align}
In the above optimization problem, we have {$\mathscr{E}_{eff}^{^i} = \frac{\widehat{\overline{R}}^\textnormal{Total}(\mathbf{a}^i,\mathbf{x}^{{t-1}},\tilde{\mathbf{p}}^i)}{\overline{P}^\textnormal{Total}(\mathbf{a}^i,\mathbf{x}^{{t-1}},\tilde{\mathbf{p}}^i,{\mathbf{p}}^i)}$} in which {$\{\mathbf{a}^i,\mathbf{x}^{{t-1}},\tilde{\mathbf{p}}^i,{\mathbf{p}}^i\}$} are the corresponding resource allocation parameters. 
It is easy to demonstrate that the optimization problem in (\ref{F5-36_new}) is now convex with respect to all variables.
Therefore, the \text{\textbf{Algorithm \ref{Dinkelbach-algorithm}}} terminates when $\mathscr{E}_{eff}^{^i}$ converges and so does the solution to problem (\ref{F5-36_new}). 
That is \{$\mathbf{a}^{*},\tilde{\mathbf{p}}^{*}$,$\mathbf{p}^*$\} are eventually achieved.

\begin{proposition}\label{fractional_programming}
The optimal EE, i.e., $\mathscr{E}_{eff}^{^*}$, can be used to obtain the resource allocation policy if and only if
\begin{align}
\max_{\mathbf{a},\tilde{\mathbf{p}},\mathbf{p} \in \mathcal{S_F}}\quad 
\widehat{\overline{R}}^\textnormal{Total}(\mathbf{a}^{i}, \mathbf{x}^{t-1}, \tilde{\mathbf{p}}^{i})-& 
\mathscr{E}_{eff}^{^i}\overline{P}^\textnormal{Total}(\mathbf{a}^{i}, \mathbf{x}^{t-1}, \tilde{\mathbf{p}}^i,{\mathbf{p}}^i) = \nonumber \\
\widehat{\overline{R}}^\textnormal{Total}(\mathbf{a}^{*},\mathbf{x}^{t-1},\tilde{\mathbf{p}}^{*})-& 
\mathscr{E}_{eff}^{^*}\overline{P}^\textnormal{Total}(\mathbf{a}^{*},\mathbf{x}^{t-1},\tilde{\mathbf{p}}^*,{\mathbf{p}}^*) =0 ,
\end{align}
for $\widehat{\overline{R}}^\textnormal{Total}(\mathbf{a}^{*},\mathbf{x}^{t-1},\tilde{\mathbf{p}}^{*}) \geq 0$ and $\overline{P}^\textnormal{Total}(\mathbf{a}^{*},\mathbf{x}^{t-1},\tilde{\mathbf{p}}^*,{\mathbf{p}}^*) \geq 0$, where $\mathbf{a^*}$,~$\mathbf{x}^{t-1}$, $\tilde{\mathbf{p}}^*$,~and $\mathbf{{p}^*}$ yield the optimal solution to the convex optimization problem in (\ref{F5-36_new}).
\end{proposition}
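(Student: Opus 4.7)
The plan is to recognise this proposition as the classical Dinkelbach equivalence for fractional programming, which converts the ratio maximisation into a one-parameter family of subtractive maximisations, and then prove the ``only if'' and ``if'' directions separately using only two ingredients already granted in the statement: the non-negativity of the numerator $\widehat{\overline{R}}^\textnormal{Total}$ and the strict positivity of the denominator $\overline{P}^\textnormal{Total}$ (guaranteed by the strictly positive circuit term $P_c^\textnormal{SBS}$).

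For the necessity (``only if'') direction, I would take an optimal triple $(\mathbf{a}^*, \tilde{\mathbf{p}}^*, \mathbf{p}^*) \in \mathcal{S_F}$ achieving the ratio $\mathscr{E}_{eff}^*$ and note that, by the very definition of $\mathscr{E}_{eff}^*$ as the ratio at this point, the rearrangement $\widehat{\overline{R}}^\textnormal{Total}(\mathbf{a}^*,\mathbf{x}^{t-1},\tilde{\mathbf{p}}^*) - \mathscr{E}_{eff}^* \overline{P}^\textnormal{Total}(\mathbf{a}^*,\mathbf{x}^{t-1},\tilde{\mathbf{p}}^*,\mathbf{p}^*) = 0$ holds trivially. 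For any other feasible point $(\mathbf{a}, \tilde{\mathbf{p}}, \mathbf{p}) \in \mathcal{S_F}$, optimality of the ratio enforces $\widehat{\overline{R}}^\textnormal{Total}/\overline{P}^\textnormal{Total} \leq \mathscr{E}_{eff}^*$, and multiplying through by the strictly positive $\overline{P}^\textnormal{Total}$ preserves the inequality to give $\widehat{\overline{R}}^\textnormal{Total} - \mathscr{E}_{eff}^* \overline{P}^\textnormal{Total} \leq 0$. Hence the supremum of the parametric objective over $\mathcal{S_F}$ is exactly $0$ and is attained precisely at $(\mathbf{a}^*, \tilde{\mathbf{p}}^*, \mathbf{p}^*)$, which is what the statement requires.

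For the sufficiency (``if'') direction, I would reverse the argument: if the parametric maximum equals zero and is attained at some feasible $(\mathbf{a}^*, \tilde{\mathbf{p}}^*, \mathbf{p}^*)$, then every competing feasible point satisfies $\widehat{\overline{R}}^\textnormal{Total} - \mathscr{E}_{eff}^* \overline{P}^\textnormal{Total} \leq 0$, so dividing by $\overline{P}^\textnormal{Total} > 0$ yields $\widehat{\overline{R}}^\textnormal{Total}/\overline{P}^\textnormal{Total} \leq \mathscr{E}_{eff}^*$. Because the parametric objective equals zero at the maximizer, $(\mathbf{a}^*, \tilde{\mathbf{p}}^*, \mathbf{p}^*)$ realises the ratio $\mathscr{E}_{eff}^*$, so this value is indeed the optimum of the original fractional program and the two problems share the same optimising triple.

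I do not expect a substantive obstacle in this argument; the result is a routine consequence of the Jagannathan--Dinkelbach equivalence, and the most delicate point is merely to justify the sign-preserving manipulations by invoking $\overline{P}^\textnormal{Total} > 0$ (ensured by $P_c^\textnormal{SBS} > 0$) and $\widehat{\overline{R}}^\textnormal{Total} \geq 0$ as stated in the proposition. The equivalence established this way is precisely what legitimises the Dinkelbach update $\mathscr{E}_{eff}^{i+1} \leftarrow \widehat{\overline{R}}^\textnormal{Total}(\mathbf{a}^i,\mathbf{x}^{t-1},\tilde{\mathbf{p}}^i)/\overline{P}^\textnormal{Total}(\mathbf{a}^i,\mathbf{x}^{t-1},\tilde{\mathbf{p}}^i,\mathbf{p}^i)$ embedded in \textbf{Algorithm~\ref{Dinkelbach-algorithm}} and guarantees that repeatedly solving the convex parametric problem (\ref{F5-36_new}) drives the iterates towards $\mathscr{E}_{eff}^*$.
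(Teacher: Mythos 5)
Your argument is correct and, for the necessity direction, follows essentially the same route as the paper: fix the ratio-optimal point, observe that the parametric objective vanishes there by definition of $\mathscr{E}_{eff}^{^*}$, and use $\overline{P}^\textnormal{Total}>0$ to multiply the inequality $\widehat{\overline{R}}^\textnormal{Total}/\overline{P}^\textnormal{Total}\leq \mathscr{E}_{eff}^{^*}$ through at every other feasible point, so the parametric maximum is zero and is attained at the same maximizer. The one substantive difference is that you also prove the sufficiency direction by reversing the argument, whereas the paper's proof establishes only the forward implication even though the proposition is stated as an ``if and only if''; your version is therefore the more complete of the two, and your explicit remark that strict positivity of the denominator (via $P_c^\mathrm{SBS}>0$) is what licenses the sign-preserving division is a useful precision, since the proposition itself only asserts $\overline{P}^\textnormal{Total}\geq 0$.
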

\begin{proof}
We define $\mathscr{E}_{eff}^{^*}$ and $\{\mathbf{a}^{*},\mathbf{x}^{t-1},\mathbf{p}^{*},\tilde{\mathbf{p}}^{*}\}\in \mathcal{S_F}$ as the optimal EE and the optimal resource allocation policy of the original objective function in (\ref{chap5:5:main}), respectively, which can be obtained as
\begin{align}
\label{F5-14}
\mathscr{E}_{eff}^{^*} & = 
\max_{\mathbf{a},\mathbf{{p},\tilde{\mathbf{p}}}\in \mathcal{S_F}}
{ \frac{\widehat{\overline{R}}^\textnormal{Total}(\mathbf{a},\mathbf{{x}}^{t-1},{\mathbf{p}})}
{\overline{P}^\textnormal{Total}(\mathbf{a},\mathbf{{x}}^{t-1},\tilde{\mathbf{p}},\mathbf{p})} }.
\end{align}
Then, the optimal EE would be
\begin{align}
\mathscr{E}_{eff}^{^*} = &
\frac{\widehat{\overline{R}}^\textnormal{Total}
(\mathbf{a}^{*},\mathbf{x}^{t-1},\tilde{\mathbf{p}}^{*})}
{P^\textrm{Total}(\mathbf{a}^{*},\mathbf{x}^{t-1},\tilde{\mathbf{p}}^{*},\mathbf{p}^{*})} 
\nonumber
\\
\geq &
\frac{\widehat{\overline{R}}^\textnormal{Total}(\mathbf{a},\mathbf{x}^{t-1},\tilde{\mathbf{p}})}
{P^\textrm{Total}(\mathbf{a},\mathbf{x}^{t-1},\tilde{\mathbf{p}},\mathbf{p})}, 
\forall \{\mathbf{a}^{*},\mathbf{x}^{t-1},\tilde{\mathbf{p}}^{*},\mathbf{p}^{*}\}\in \mathcal{S_F}.
\end{align}
Therefore, it can easily be seen that 
\begin{align}
& \widehat{\overline{R}}^\textnormal{Total}(\mathbf{a},\mathbf{x}^{t-1},\tilde{\mathbf{p}})-
\mathscr{E}_{eff}^{^*}\overline{P}^\textnormal{Total}(\mathbf{a},\mathbf{x}^{t-1},\tilde{\mathbf{p}},\mathbf{p})
\leq 0,
\end{align}
Hence, one can conclude that 
\begin{align}
\widehat{\overline{R}}^\textnormal{Total}(\mathbf{a}^{*},\mathbf{x}^{t-1},\tilde{\mathbf{p}}^{*})-
\mathscr{E}_{eff}^{^*}\overline{P}^\textnormal{Total}(\mathbf{a}^{*},\mathbf{x}^{t-1},\tilde{\mathbf{p}}^{*},\mathbf{p}^{*})=0.
\end{align}
Thus, we have
\begin{align}
\max_{\mathbf{a},\tilde{\mathbf{p}},\mathbf{{p}} \in \mathcal{S_F}} \quad 
{\widehat{\overline{R}}^\textnormal{Total}(\mathbf{a},\mathbf{x}^{t-1},\tilde{\mathbf{p}}) -\mathscr{E}_{eff}^{^*}\overline{P}^\textnormal{Total}(\mathbf{a},\mathbf{x}^{t-1},\tilde{\mathbf{p}},\mathbf{p})}=0,    
\end{align}
and this would be achievable by the resource allocation policy. 
This completes the proof.
\end{proof}

Since the problem in (\ref{F5-36_new}) is a convex optimization problem at each iteration, it can be solved efficiently using the optimization packages, including interior point methods such as CVX \cite{boyd2004convex}.
Moreover, it is worth mentioning that the proposed \textbf{Algorithm \ref{Dinkelbach-algorithm}} can obtain an optimal solution to the problem (\ref{chap5:5:main}), if the inner loop of the optimization problem in (\ref{F5-36_new}) can be solved optimally in each iteration.
For solving the inner problem of (\ref{F5-36_new}), we have to check the approximations for the data-rate functions and check a condition for their convergence.
Let us consider the data-rate approximation at the $s^{th}$ iteration of the MM method as $\widehat{\overline{R}}^{(s)}$. 
In order to make sure that the Taylor approximation is a tight lower bound, we investigate the difference of data-rate functions for two consecutive iterations of the MM method.
The proof of convergence of the MM procedure in the inner loop of \textbf{Algorithm~\ref{Dinkelbach-algorithm}} is similar to the proof given in {\text{\textbf{Proposition~\ref{proposition3}}}} of chapter~\ref{CHAP4}.
We also note that the optimal solution of the Dinkelback algorithm can be considered as the solution for the $(t-1)^{th}$ iteration of the main loop for obtaining the antenna selection set for either an ID or EH operation.
In the next subsection, we discuss the solution methodology for solving the antenna selection problem. 

\begin{algorithm}[ht]
\caption{Resource Allocation Algorithm for Solving Joint Subcarrier and Power Assignment}
\label{Dinkelbach-algorithm}
\centering
\begin{algorithmic}[1]
\STATE {$\mathbf{Initialize}$} \\
{\begin{addmargin}[1em]{0em}
{iteration index of resource allocation policy $i=0$ with maximum allowed tolerance $\Theta>0$, \\
MM iteration index $s=0$ with maximum number of MM iteration $T_{max}$ and $\mathbf{\Psi}>0$, \\
feasible set vector $\mathbf{a}^{0}$, $\mathbf{x}^{t-1}$, $\tilde{\mathbf{p}}^{{0}}$, and $\mathbf{p}^{0}$},\\
and the penalty factor $\lambda \gg 1$.
\end{addmargin}}
\STATE Set maximum EE for resource allocation policy $\mathscr{E}_{eff}^{^0}=0$.
\STATE \textbf{while} {$|\mathscr{E}_{eff}^{^i}- \mathscr{E}_{eff}^{^{i-1}}| > \Theta$} \textbf{do}
\STATE{
\begin{addmargin}[1em]{0em}
{\textbf{repeat}}
\end{addmargin}}
\STATE{
\begin{addmargin}[2em]{0em}
{Update $\tilde{\mathcal{V}}(\mathbf{a},\mathbf{x},{\tilde{\mathbf{p}}})$, $\tilde{\mathscr{V}}(\mathbf{a},\mathbf{x},{\tilde{\mathbf{p}}})$, and  $\tilde{\nu}(\mathbf{a})$
using equations (\ref{F5-18}), (\ref{F5-30}), and (\ref{F5-34}), respectively.}  
\end{addmargin}}
\STATE{
\begin{addmargin}[2em]{0em}
{Set $s=s+1$}. 
\end{addmargin}}
\STATE{
\begin{addmargin}[2em]{0em}
{Solve $\mathscr{U}(\mathbf{a},\mathbf{x},{\tilde{\mathbf{p}}})- \tilde{\mathscr{V}}(\mathbf{a},\mathbf{x},{\tilde{\mathbf{p}}})$ to obtain the data-rate as well as  $\mathbf{a}^{s}$~and $\tilde{\mathbf{p}}^{{s}}$}.
\end{addmargin}}
\STATE{
\begin{addmargin}[1em]{0em}
\textbf{until} $|\widehat{\overline{R}}^{(s)}-\widehat{\overline{R}}^{(s-1)}| \leq \mathbf{\Psi}$ or $s =T_{max}$
\end{addmargin}}
\STATE{
\begin{addmargin}[1em]{0em}
Set \{$\mathbf{a}^i,\tilde{\mathbf{p}}^i$,$\mathbf{p}^i\} =$ $\{\mathbf{a}^{{s}^*},\tilde{\mathbf{p}}^{{s}^*},\mathbf{p}^{{s}^*}\}$.
\end{addmargin}}
\STATE{
\begin{addmargin}[1em]{0em}
Set $i=i+1$.
\end{addmargin}}
\STATE{
\begin{addmargin}[1em]{0em}
Set {$\mathscr{E}_{eff}^{^i} = \frac{\widehat{\overline{R}}^\textnormal{Total}(\mathbf{a}^i,\mathbf{x}^{{t-1}},\tilde{\mathbf{p}}^i)}{\overline{P}^\textnormal{Total}(\mathbf{a}^i,\mathbf{x}^{{t-1}},\tilde{\mathbf{p}}^i,{\mathbf{p}}^i)}$}.
\end{addmargin}}
\STATE \textbf{end while}
\STATE Set \{$\mathbf{a}^{*},\tilde{\mathbf{p}}^{*},\mathbf{p}^*\} =$ $\{\mathbf{a}^{i-1},\tilde{\mathbf{p}}^{i-1},\mathbf{p}^{i-1}\}$.
\STATE \textbf{return} \{$\mathbf{a}^{*},\tilde{\mathbf{p}}^{*}$,$\mathbf{p}^*$\}
\end{algorithmic}
\end{algorithm} 

\subsection{B. Antenna selection}\label{Chap5:layer2}
Since the system throughput is by itself a function of the transmit power, it can be perceived that the effect of total energy consumption on the EE is generally much more substantial than that of the system throughput. 
Here, we aim at investigating the optimal antenna selection for each user among all available antennas to perform ID and EH appropriately.
Assuming the SBS-subcarrier assignment and the power allocation are obtained from a previous iteration {\text{($\mathbf{a}^{t-1}$, $\mathbf{p}^{t-1}$)}} in the first layer, we formulate the following optimization problem to find the optimal antenna set {\text{selection}} for a generalized AS-based SWIPT in the DL direction of a multi-user multi-cell OFDMA network
\begin{subequations}
\begin{align}\label{max_prob.6}
\mathbf{x}^t =
& \arg \max_{\mathbf{x}}
\frac{R^\textnormal{Total}(\mathbf{a}^{t-1},\mathbf{x}^t,\mathbf{p}^{t-1})}
{P^\textnormal{Total}(\mathbf{a}^{t-1},\mathbf{x}^t,\mathbf{p}^{t-1})} \\
\propto
& \arg \min_{\textbf{x}}P^{\textrm{Total}}(\textbf{a}^{t-1},\textbf{x}^{t},\textbf{p}^{t-1}) \\
= & \arg  \min_{\textbf{x}}\bigg(
\sum_{j \in \mathcal{J}}
\sum_{k\in \mathcal{K}}
\sum_{n\in \mathcal{N}}
\big(\frac{{a}_{j,n,k}{p}_{j,n,k}}{\kappa_j}+ P_c^\mathrm{SBS}\big)\bigg)-P^{\textrm{EH}}(\textbf{x}^{t},\textbf{p}^{t-1})\\
s.t.: 
&~C_{3}: 
\sum_{j\in \mathcal{J}} 
R_k(\mathbf{a},\mathbf{x},\mathbf{p}) 
\geq R_{min},~
\forall k \in \mathcal{K},\\
&~C_{4}:
\sum_{m\in \mathcal{M}} 
x^{m}_{j,n,k}=1,~~~~~~~~~~~~
\forall j\in \mathcal{J},~
\forall n \in \mathcal{N},~
\forall k \in \mathcal{K},\\
&~C_{6}:
x^{m}_{j,n,k}\in\{0,1\},~~~~~~~~~~~~~
\forall {j \in \mathcal{J}},~ 
\forall {n\in \mathcal{N}},~  
\forall {k\in \mathcal{K}},~  
\forall {m\in \mathcal{M}}.
\end{align}
\label{chap5:F5_35-new}%
\end{subequations}
It can be noticed that the first term in (\ref{chap5:F5_35-new}) is constant.~Hence, the objective function in (\ref{chap5:F5_35-new}) can be restated as follows 
\begin{align}\label{Chap5:F5-35}
\mathbf{x}^t=
& \arg \max_{\textbf{x}}P^{\textrm{EH}}
(\textbf{x}^{t},\textbf{p}^{t-1})\\
s.t.: &~C_{3}-C_{4},C_{6}. \nonumber
\end{align}
It is worth mentioning that when the power and the corresponding SBS-subcarrier is fixed, the problem can be decomposed into M subproblems, where each subproblem in (\ref{Chap5:F5-35}) can be solved using well-established optimization packages including CVX.
One may conclude that the solution of (\ref{Chap5:F5-35}) would require the knowledge of all branch SNRs.
However, there are various techniques to address this issue based on the quasi-stationary property of the channel gains despite the difficulty of knowing all SNRs simultaneously.
For instance, one may use a training signal in a preamble. 
During this preamble, when the receiver scans the antennas, the highest channel gain is selected for receiving the next data burst or power signal. 

Finally, the pseudo-code of the iterative solution for the antenna selection with a joint SBS-subcarrier assignment and power control is given in \textbf{Algorithm~\ref{last-alg}}.

\begin{algorithm}[t]
\caption{Proposed Iterative Method}
\label{last-alg}
\begin{algorithmic}[1]
\STATE {$\mathbf{Initialize}$} \\
{\begin{addmargin}[1em]{0em}
{iteration index $t = 1$ with the maximum number of iterations $\Delta_{max}$. }
\end{addmargin}}
\STATE{\textbf{repeat} \{Main Loop\}
\\ \textbf{\quad Joint SBS-Subcarrier Assignment and Power Control:}}
\STATE{
\begin{addmargin}[1em]{0em}
For a given antenna set $\textbf{x}^{t-1}$, find the optimal subcarrier assignment $\textbf{a}^{t}$ and power allocation based on (\ref{F5-36_new}) using \textbf{Algorithm \ref{Dinkelbach-algorithm}}. \\
\textbf{Antenna Selection Policy:}
\end{addmargin}}
\STATE{
\begin{addmargin}[1em]{0em}
For a fixed subcarrier assignment $\textbf{a}^{t-1}$ and power allocation $\textbf{p}^{t-1}$,~determine the best antenna based on~(\ref{Chap5:F5-35}).
\end{addmargin}}
\STATE {\quad Set $t = t + 1$.}
\STATE \textbf{until} Convergence with $\mathscr{E}_{eff}^{^*}$ or $t = \Delta_{max}$
\STATE \textbf{return} \{$\mathbf{a}^{*},{\mathbf{x}}^{*}$,$\mathbf{p}^*$\}
\end{algorithmic}
\end{algorithm}

\section{Complexity Analysis}
Our proposed iterative algorithm includes two subproblems: 
1) joint SBS-subcarrier assignment and power allocation and 2) 
antenna selection.
The solution for the first subproblem includes a two-layer approach.
The outer layer provides a solution according to the Dinkelbach algorithm, whereas the inner layer's solution is based on the MM algorithm.
It can be seen that the inner solution to the optimization problem (\ref{F5-36_new}) includes $JNK$ variables and $J+K+JN+6JNK+J^{2}N^{2}K$ linear convex constraints.
As a result, the computational complexity of the first subproblem is $\mathcal{O}(JNK)^{2}(J+K+JN+6JNK+J^{2}N^{2}K)$. 
This can be asymptotically estimated as $\mathcal{O}(JN)^{4}(K)^{3}$, showing a polynomial-time complexity.
Moreover, the complexity of the outer layer is $\mathcal{O}(T_{\text{Dinkelbach}})$, where $T_{\text{Dinkelbach}}$ is the number of iterations needed for the convergence of the outer layer. 
Consequently, the overall complexity of our proposed scheme becomes $\mathcal{O}(T_{\text{Dinkelbach}}T_{\text{MM}}(JN)^{4}(K)^{3})$ in which $T_{\text{MM}}$ is the number of iterations required for reaching convergence in the MM method.
Now, we aim at calculating $T_{\text{MM}}$ for the D.C. programming that includes the interior point method. 
It should be noted that when CVX is adopted based on the interior point method to solve the optimization problem (\ref{F5-36_new}), it requires $\log \frac{JN+J+K+2JNK+JNK+J^{2}N^{2}K}{t^{0}\phi \xi}$ iterations, where $t^{0}$ is the initial point for approximating the accuracy of the interior point method, $0<\phi\ll 1$ is the stopping criterion, and $\xi$ shows the accuracy of the method \cite{Ata_2020}.
For the antenna selection subproblem in (\ref{Chap5:F5-35}), the computational complexity is $\mathcal{O}(JNKM)$.
This computational complexity is substantially lower than the algorithm for the joint power allocation and subcarrier assignment subproblem.
Note that this subproblem can be solved via MOSEK or Gurobi solver, which provides a polynomial-time complexity. 
Besides, it is worth noting that this algorithm reduces the number of variables by half in each optimization subproblem and changes the original problem (\ref{chap5:5:main}) into a mathematically tractable form to be solved.
Furthermore, the number of iterations for this subproblem is $T_{AS}=\log \frac{K+NK+JNKM}{t^{0}\phi \xi}$.
Finally, the total complexity order of the optimization problem, including a two-layer approach, is $\mathcal{O}(T_{\text{Dinkelbach}}T_{\text{MM}}T_{\text{AS}}(JN)^{4}(K)^{3})$. 
This denotes the total number of iteration that is required for reaching convergence of the EE optimization problem.

\section{Simulation Results}
In this section, the performance gain of the proposed scheme for a generalized AS-based SWIPT in the DL direction of a multi-user multi-cell OFDMA system is evaluated under various system parameters.
There are $J=3$ cells in the network topology. 
The radius of a cell, $d_{max}$, is 20 meters, with a reference distance, $d_{0}$, of 5 meters.
Moreover, there are $K_j = 4$ users in each cell uniformly located between the reference distance, $d_{0}$, and maximum coverage of the cell, $d_{max}$.
Also, each user is equipped with two antennas ($M=2$), where the receiver antennas are capable of both ID and EH operations.
Additionally, we consider a frequency-selective fading channel and further assume the central carrier frequency is set to 3 GHz.
The number of subcarriers is $N = 16$, where the bandwidth of each subcarrier is set to 180 kHz.
It should be noted as the power of the background noise on all antennas of each receiver is rather small compared to maximum transmit power, $p_{max}$, it is assumed {$|\sigma_{j,n,k}^{m}|^{^2} = \sigma^{2} = $ -120 dBm} in all simulations.
Since a line-of-sight (LoS) signal is expected in the received signal, the small-scale fading channel is modeled as Rician fading with Rician factor $\rho=3$ dB.
Moreover, the Rician flat fading channel gains include a distance-dependent path loss component of $31.7+10 \alpha \log(\frac{d}{d_0})$ [dB] (where $d$ is the distance between the transmitter and the receiver) and a log-normal shadowing component with~$8$ dB standard deviation, where the path loss exponent is equal to $\alpha = 2.8$~\cite{339880}.
These parameters for propagation modeling and simulations follow the suggestions in 3GPP evaluation methodology~\cite{chap3:3GPP}.
The power conversion efficiency of all active EH antennas, $\epsilon^m_{j,k}$, is assumed to be the same and is equal to  $\epsilon^m_{j,k} = \epsilon = 0.3 $.
For the power consumption model, a constant consumed circuit power, $P_{\textrm{c}}^{\textrm{SBS}}$, is considered for all SBSs and is equal to~23 dBm. 
The power amplifier efficiency of all SBSs is also supposed to be the same and is $\kappa_j = \kappa = 0.2$.   
The target transmission rate is $R_{min}=1$ bit/second/Hz (bps/Hz) for each user.
Furthermore, we conduct Monte Carlo simulations by generating random realizations of the channel gains to obtain the average data-rate of the network.
In fact, the channel gain between a transmitter and a receiver is calculated using independent and identically distributed Rician flat fading and the figures shown in this section are obtained by estimating the average of results over different realizations of the path-loss and the multi-path fading.
The rest of the simulation parameters are given in \textbf{Table}~(\ref{chap:5:Simulation_Parameters}) unless otherwise is specified.

\begin{table}[t]
{\caption{Simulation Parameters}
\label{chap:5:Simulation_Parameters}
\centering
\begin{tabular}{|c|c|}\hline
{\bf Parameter} & {\bf Value} \\ \hline \hline
{Coverage cell ($d_{max}$)} & {$20$ m} \\ 
{Reference distance ($d_{0}$)} & {$5$ m} \\ 
{The number of cell ($J$)} & {$3$}\\
{The number of user in each cell ($K_{j}$)} & {$4$}\\
{The number of antenna pf each user ($M$)} & {$2$}\\
{The number of subcarrier (N)} & {$16$} \\
{Noise power ($\sigma^{2}$)} & {$-120$} dBm \\
{The bandwidth of each subcarrier} & {$180$ kHz}\\
{Path loss exponent ($\alpha$)} & {$2.76$} \\
{Path loss model for cellular links} & { $31.7+27.6 \log(\frac{d}{d_0})$} \\
{Multi-path fading distribution}& {Rician fading with factor 3~dB}\\ 
{Power conversion efficiency of EH antennas ($\epsilon$)}& {30\%}\\ 
{Power amplifier efficiency of SBSs ($\kappa$ )}& {20\%}\\
The maximum transmit power of the SBS ($p_{\textrm{max}}$) & {$30$ dBm} \\
The circuit power consumption of SBSs ($P_{\textrm{c}}^{\textrm{SBS}}$) & {$23$ dBm} \\
The minimum data-rate requirement for each user~($R_{\textrm{min}}$) & $1$ bps/Hz \\
Channel realization number & $100$\\\hline
\end{tabular}}
\end{table}

\subsection{Convergence Speed}
Figure (\ref{plot:5.1}) depicts the average system EE versus the number of iterations of the proposed algorithm under different initialization of the power control; $\mathbf{p}^0(i) = \frac{p_{max}}{N}$, i.e., equal power is assigned to all small-cell users over all subcarriers, $\mathbf{p}^0(i) = p_{max}$, i.e., all subcarriers has the maximum power, and $\mathbf{p}^0(i) = 0$, no power is assigned to the subcarriers initially.
As can be seen, our proposed iterative algorithm runs until it converges to a fixed value. 
Moreover, the convergence rate of the proposed algorithm is considered fast as it reaches to a specific value only after a small number of iterations.
This figure also demonstrates that although the speed of convergence differs from one case to another, in all cases, our proposed algorithm converges to a stationary point only after a small limited number of iterations.

\begin{figure}[t]
\centering
\includegraphics[width=12cm]{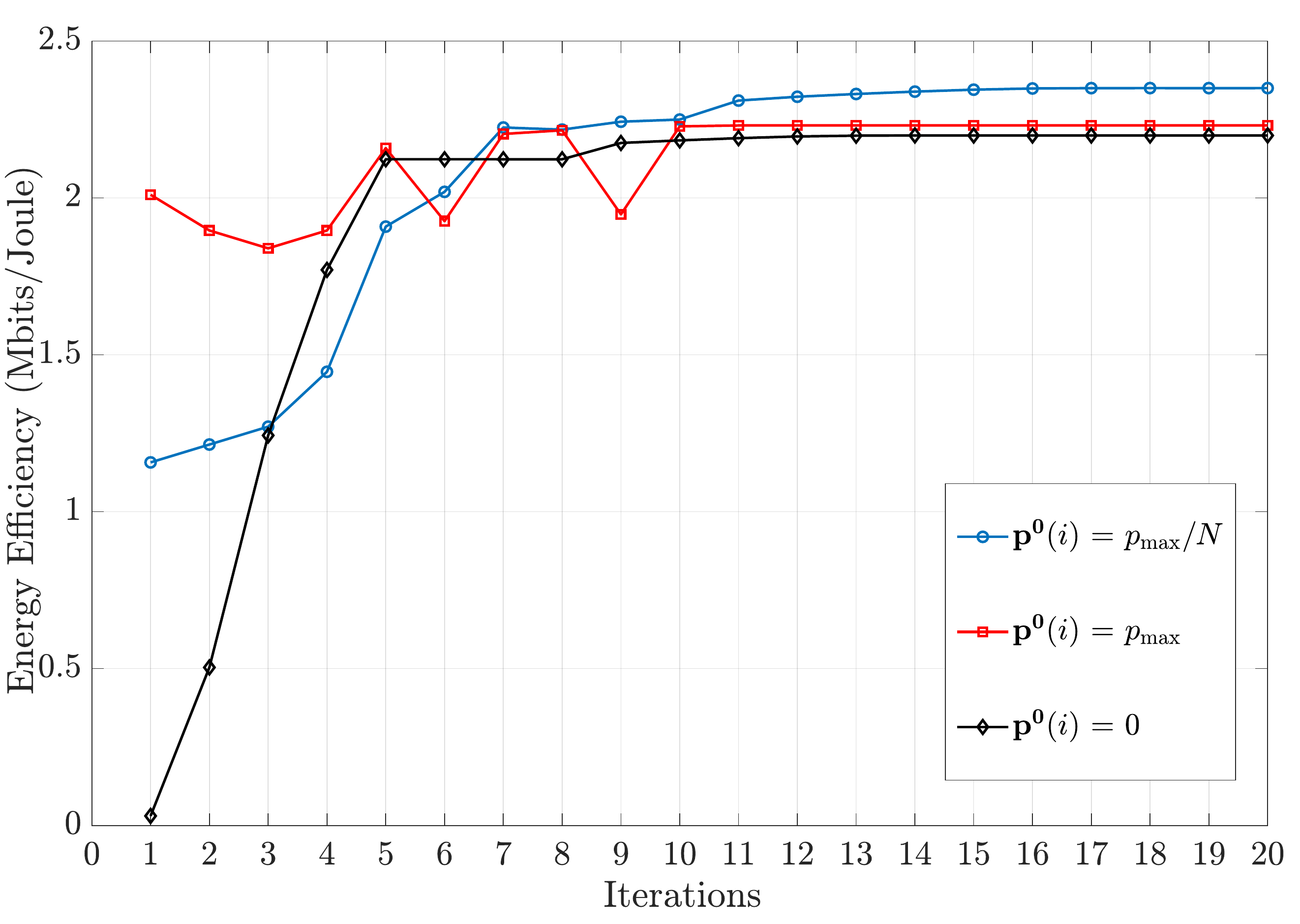}
\caption{Convergence speed.}
\label{plot:5.1}
\end{figure}

\subsection{Energy Efficiency versus Maximum Allowed Transmit Power}
In figure (\ref{plot:5.2}), we present the average EE versus the maximum transmit power $p_{max}$.
As can be observed from this figure, the average system EE for the resource allocation schemes is monotonically non-decreasing with the maximum allowed transmit power. 
This is because the received SINR at the users can be enhanced by allocating the additional available transmit power via the solution of the problem, which leads to an improvement of the system EE. 
However, there is a diminishing return in the average system EE when $p_{max}$ is higher than $30$ dBm. 
As a matter of fact, with an increase in maximum transmit power, the interference power level arising from the other SBSs becomes more severe, resulting in a degradation of the received users' signals. 
Consequently, the throughput of the users deteriorates, which results in a reduction of the EE. 
In particular, by increasing the value of $p_{max}$, the system EE quickly increases at first, and then starts to saturate when $p_{max}$ is higher than $30$ dBm. 
The reason for this is quite evident as the resource allocator is not willing to consume more power, once the maximum EE is obtained.

This figure also consists of four baseline schemes for EE maximization, i.e., Methods B-E, and compares their performance with the proposed iterative \textbf{Algorithm~\ref{last-alg}}, that we call Method A. 
For Method B, we optimize the system EE maximization problem by considering the power splitting receiver architecture. 
In particular, Method B considers receivers with two antennas ($M=2$). Each antenna in this architecture is capable of both EH and ID operations at the same time through a power splitting method with fixed power splitting ratios. 
Moreover, Method~C examines the proposed method in~\cite{TWC_Ata}. 
It considers the EE maximization with respect to the subcarrier assignment, power allocation, and antenna selection for users without the energy harvesting capability. 
Furthermore, Method D is our proposed algorithm based only on the power allocation optimization when random scheduling of the subcarrier assignment and antenna selection variables is performed to obtain the resource allocation policy. 
Finally, Method E is the full power approach in the sense that equal power is allocated across subcarriers for each user.

\begin{figure}[!b]
\centering
\includegraphics[width=12cm]{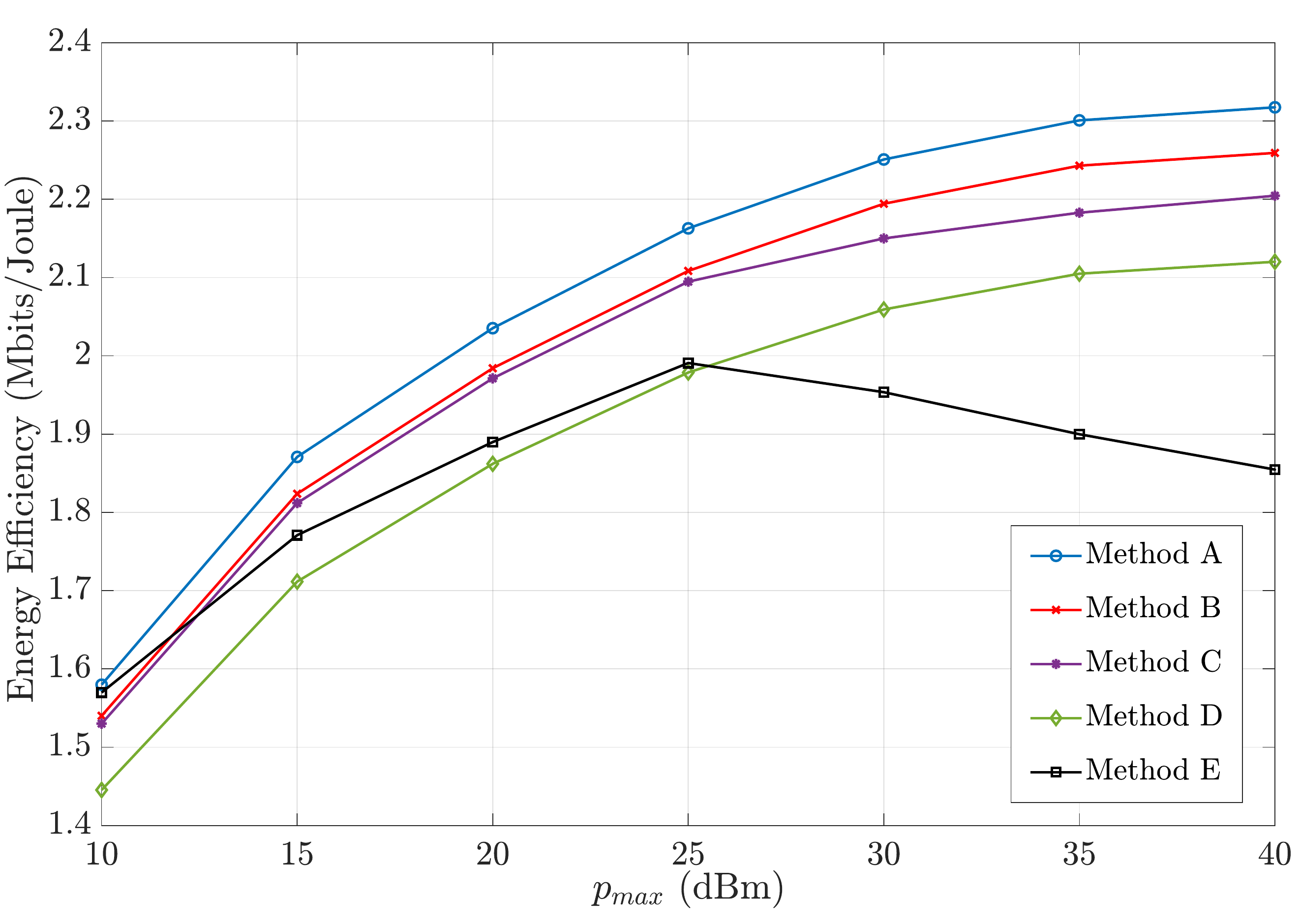}
\caption{Energy efficiency versus maximum allowed transmit power.}
\label{plot:5.2}
\end{figure}

It can be concluded that our proposed iterative algorithm has a better performance in comparison with other methods as we optimize the resource allocation jointly and use a generalized AS-based harvesting technique at the receiver based on the antenna selection architecture. 
We also observe that power control has a significant impact on system EE. As it can be seen in the low transmit power regime, the received power of the desired signal at receivers may not be sufficiently large for simultaneous information decoding and energy harvesting.
It can be realized that in the low transmit power regime, the received power of the desired signal may not be sufficiently large for simultaneous information decoding and energy harvesting.
We also note that for the higher value of the maximum allowed transmit power, the maximum EE achieved by the system via EH active antennas or EH users cannot be attained by a system without energy harvesting capabilities through increasing the transmit power. 
This confirms that energy harvesting contributes to the average system EE. 

\begin{figure}[!b]
\centering
\includegraphics[width=12cm]{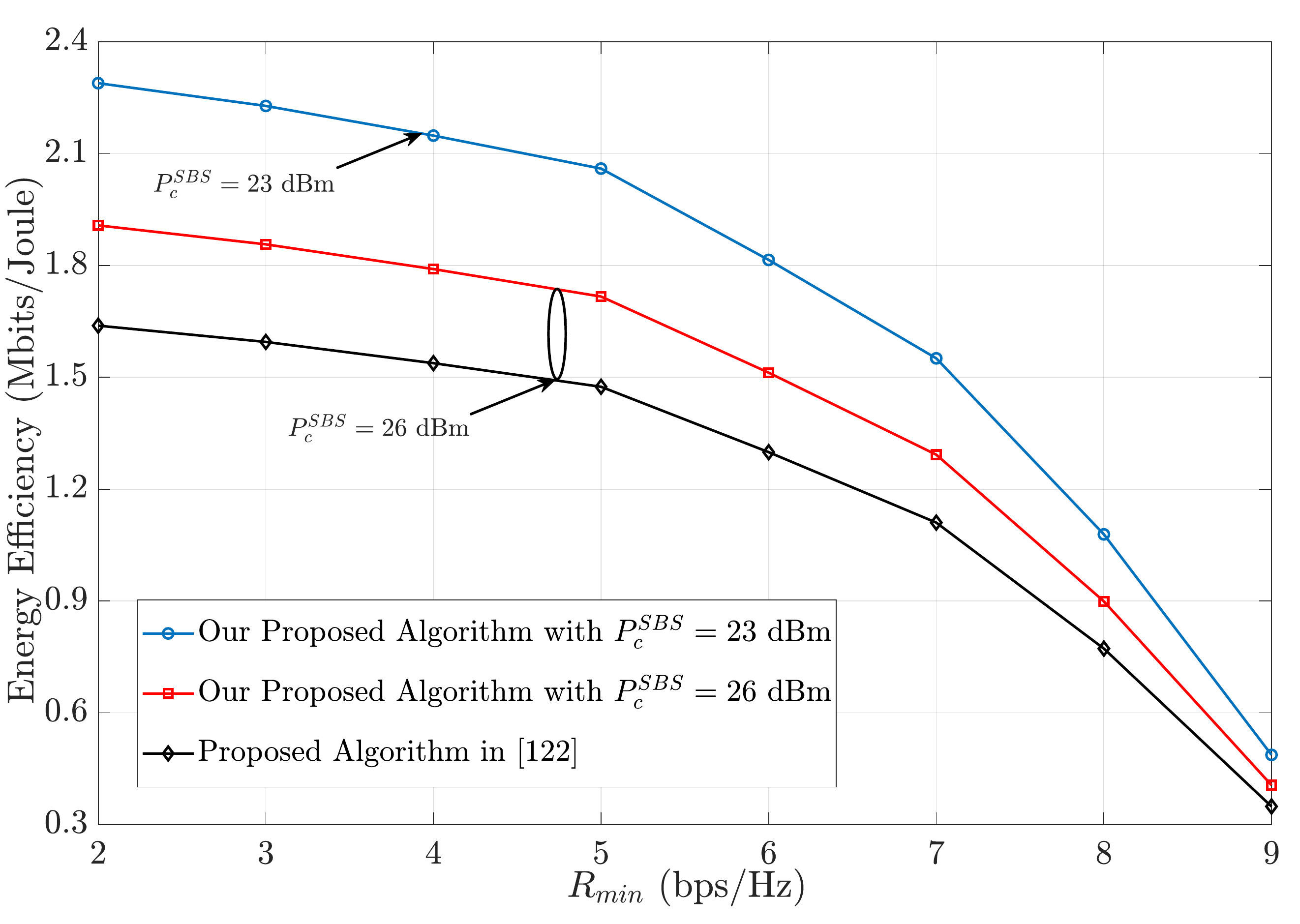}
\caption{Energy efficiency versus minimum data-rate requirement.}
\label{plot:5.3}
\end{figure}

\subsection{Energy Efficiency versus Minimum Data-rate Requirement}
In this section, we show the maximum average EE under different data-rate requirements for different values of circuit power consumption, $P_{\textrm{c}}^{\textrm{SBS}}$, in figure (\ref{plot:5.3}). 
It can be acknowledged that by increasing the data-rate, the system EE stays almost the same up to a specific minimum data-rate requirement value, but starts to decline afterward. 
This is because the required transmit power is similarly low in order to satisfy the QoS provisioning for a lower value of the minimum data-rate requirement. 
However, by increasing the data-rate requirement, more subcarrier are needed and more ID antennas have to be proportionally activated in users with a more mediocre channel quality to meet the QoS requirement. 
This is in addition to a need for an energy-efficient design that operates at a higher transmit power for achieving the optimal system EE.
Moreover, we can observe that our proposed iterative \textbf{Algorithm~\ref{last-alg}} outperforms the baseline scheme algorithm in \cite{jalal} due to performing a joint resource allocation policy as well as obtaining a locally optimal solution.
Figure~(\ref{plot:5.3}) also compares the effect of static circuit power consumption on the system EE. 
From there, we can see that EE decreases with increased circuit power due to higher total power consumption in the network.

\subsection{Energy Efficiency versus Distance}
We investigate the average EE versus different reference distances.
As can be observed from figure~(\ref{plot:5.5}), the system EE decreases with increasing the distance.
Consequently, as the channel strength deteriorates by increasing the distance between transmitter and the receiver due to the effect of the path-loss, more ID antennas have to be activated, and more power must be allocated to users to meet the minimum required data-rate. 
Hence, the average system EE would decline due to increasing the total power consumption in addition to decreasing the total data-rate of the network.
However, the system EE achieved by our proposed iterative \textbf{Algorithm~\ref{last-alg}} still has superior performance as compared to the other algorithms.
It should be noted that the Methods A-D are the same as defined earlier in figure (\ref{plot:5.2}). 

\begin{figure}[t]
\centering
\includegraphics[width=12cm]{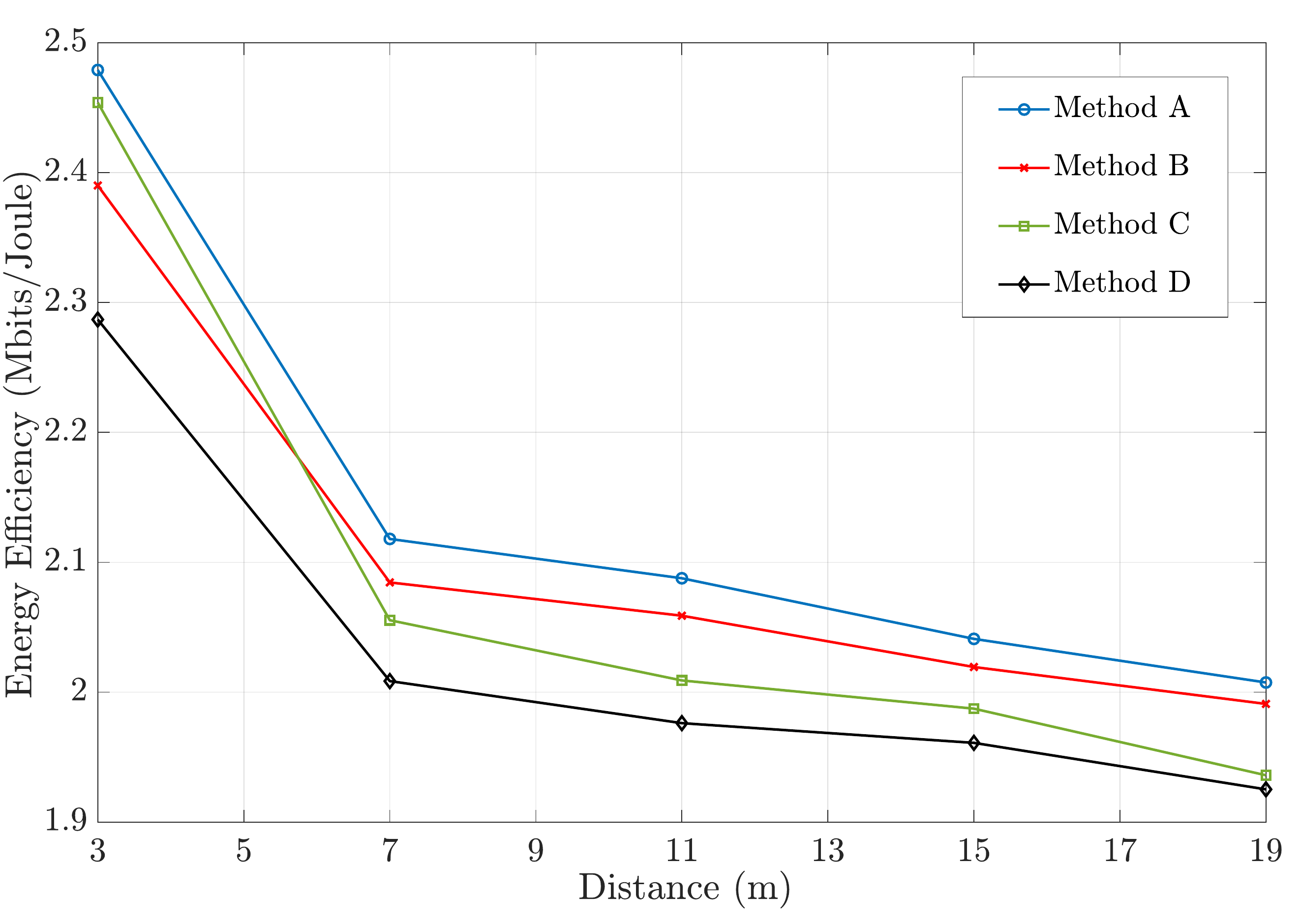}
\caption{Energy efficiency versus distance.}
\label{plot:5.5}
\end{figure}

\subsection{Average System Throughput versus Maximum Allowed Transmit Power}
In figure (\ref{plot:5.6}), we plot the average throughput or data-rate of the network versus the maximum allowed transmit power, $p_{max}$, for different schemes. 
For $p_{max}\leq 30~\text{dBm}$, it can be perceived that the average system throughput of the proposed iterative \textbf{Algorithm~\ref{last-alg}}, that is the Method A in the figure, raises with the maximum transmit power allowance. 
However, the slope of the curve of the system throughput starts to decline and reach a saturation in the high transmit power regime, i.e.,~$p_{max}\geq 30~\text{dBm}$.
In other words, there is a diminishing return in the average system throughput when the maximum transmit power is higher than 30 dBm. 
In fact, as the maximum transmit power increases, the interference power level arising from the other SBSs becomes more severe, which degrades the received users' signal quality.
To compare, we also plot the curve based on the alternative search method (ASM), i.e., Method B, in which the original problem is divided into three disjoint subproblems. 
In this method, the SBS-subcarrier assignment, power allocation, and antenna selection are selected based on the values that are determined in the previous round \cite{jalal}. 
Method C is the proposed algorithm based only on the power allocation when random scheduling of the subcarrier allocation and antenna selection variables is performed to obtain the resource allocation policy. 
Method D is the full power allocation approach with equal power across subcarriers for each user in which the SBS-subcarrier assignment and antenna selection variables are chosen based on our proposed algorithm. It can be perceived that power allocation can significantly increase the performance gain due to controlling interference term arising from other SBSs.

\begin{figure}[!t]
\centering
\includegraphics[width=12cm]{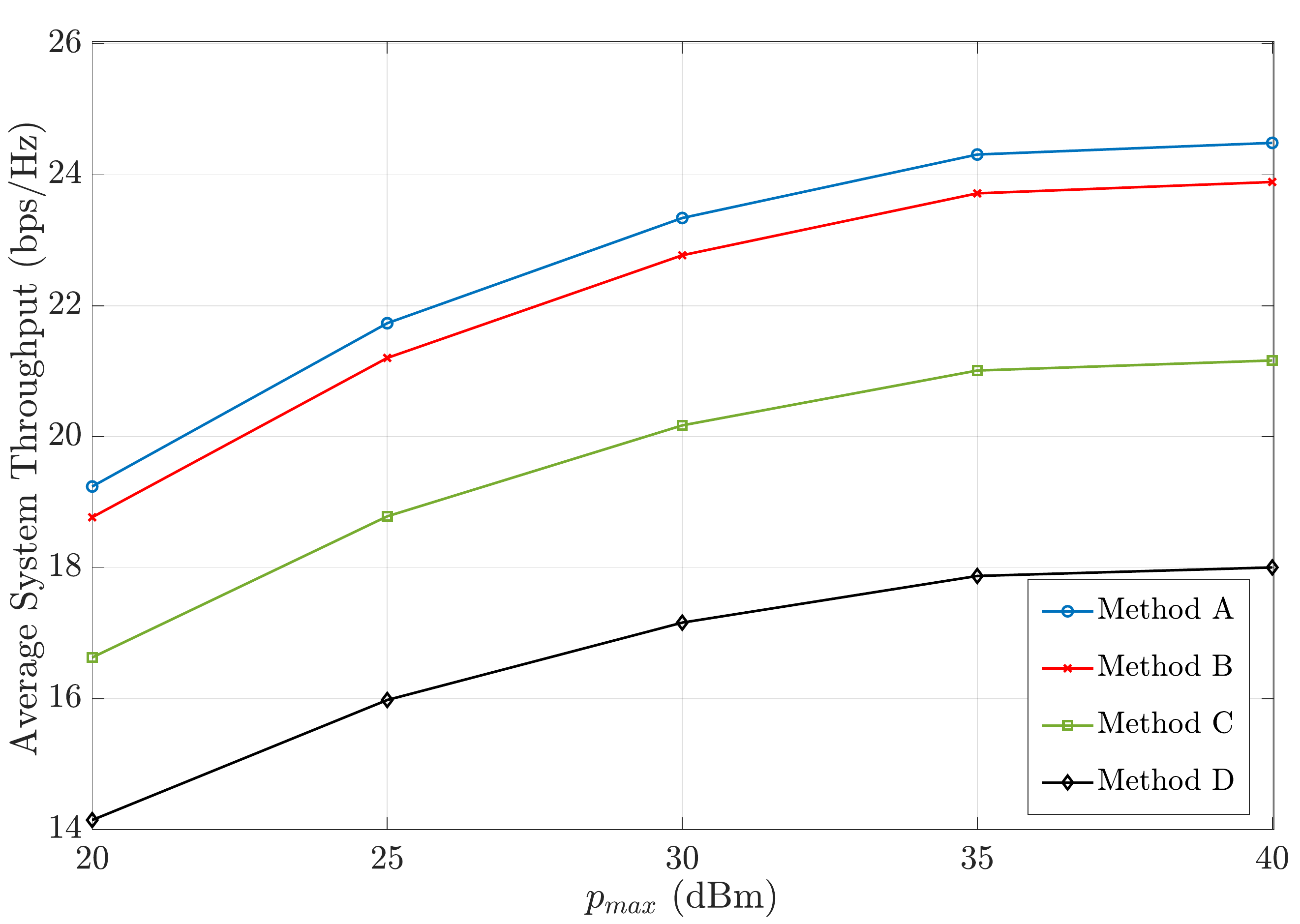}
\caption{Average system throughput versus maximum allowed transmit power.}
\label{plot:5.6}
\end{figure}

\subsection{Average Harvested Power versus Maximum Allowed Transmit Power }
Figure (\ref{plot:5.7}) illustrates the average harvested power versus the maximum allowed transmitting power, $p_{max}$.
As the $p_{max}$ increases, the harvested energy also increases in all considered Methods~A-D. 
However, it is noticeable that for a large value of $p_{max}$, the amount of average harvested power gets saturated.
The reason for this inclination is that the transmitter stops to increase the transmit power for the system EE maximization.
In order to evaluate our performance gain, we also compare our results from the proposed iterative \textbf{Algorithm~\ref{last-alg}}, i.e., Method A, with three baseline Methods B-D.
In Method B, we consider the proposed algorithm in \cite{Wireless_Information}. 
Method C splits the received signal into two power streams for a finite discrete set of power splitting ratios. 
Method A works better than Method B and C due to using a different antenna in each subcarrier, leading to a better degree of freedom of the network.
Finally, Method D is our proposed algorithm with a different objective.
Specifically,~we maximize the system throughput with respect to the same constraints and also further consider minimum energy harvesting requirements for each user.
It can be observed that our proposed algorithm reaches higher performance gain due to employing the antenna selection strategy via the multiplexing gain.
Moreover, by using the antenna selection, we enhance the performance gain due to increasing flexibility in the resource allocation design.

\begin{figure}[t]
\centering
\includegraphics[width=12cm]{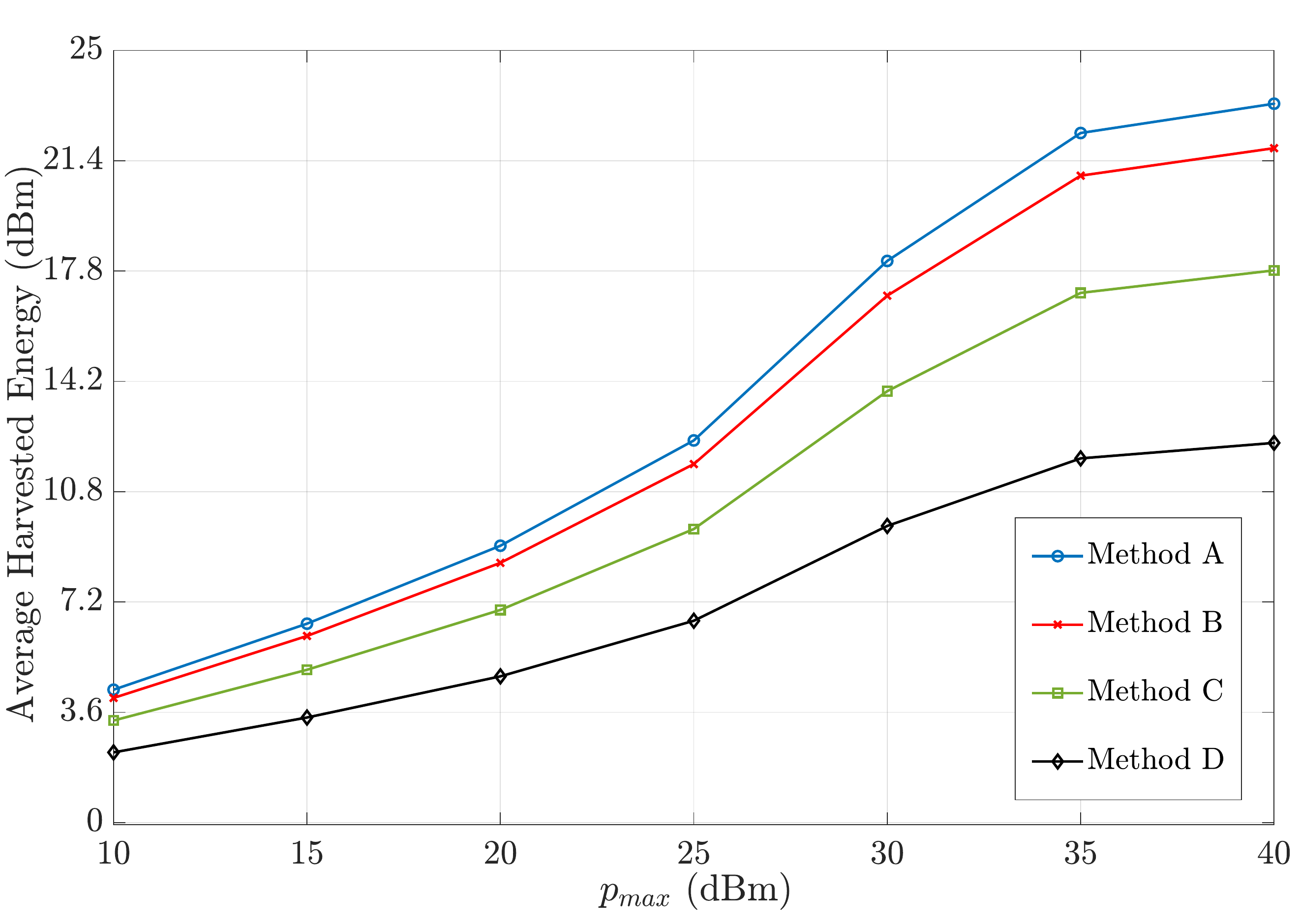}
\caption{Average harvested power versus maximum allowed transmit power.}
\label{plot:5.7}
\end{figure}

\section{Summary}
In this chapter, we addressed the EE optimization problem for the DL of a multi-user multi-cell OFDMA network with the generalized AS-based co-located receivers using SWIPT.
Considering a practical linear power model in which the transmit power consumption, circuit energy consumption, and the harvested energy (by active receiver EH antennas) are taken into account, our goal was to maximize the EE whilst satisfying the minimum data-rate requirement for each user.
The EE optimization problem, which involves a joint optimization of the SBS-subcarrier assignment and power allocation along with an optimal antenna selection, was non-convex and non-linear. 
This made the optimization problem extremely difficult to tackle directly. 
Hence, to obtain a feasible solution for this problem, we employed the MM approach by constructing a sequence of surrogate functions to approximate the non-convex optimization. 
In particular, based on the Dinkelbach method, an optimization problem with a transformed objective function was designed that uses the MM method in its inner loop.
Simulation results revealed the superiority of our proposed method over existing works.
Furthermore, the proposed antenna selection scheme demonstrated that our algorithm provides a good balance of improving in terms of the throughput as well as EE.
\chapter{Conclusion and Future Work}
\label{CHAP6}

\vspace{15mm}
In this chapter, we summarize our overall conclusions.~We also propose some of the future research directions that emerge from this work.

\section{Conclusion}
In chapter \ref{CHAP3}, we introduced a new approach to harvesting ambient energy.
In this approach, a designated portion of the spectrum was used for information decoding (ID) and the rest for energy harvesting (EH), with two separate filters used at the receivers.
We used neither splitters nor switches, which significantly simplified the complexity of the receiver.
Furthermore, we formulated an optimization problem to maximize the harvested energy via a joint subcarrier assignment and power allocation using the simultaneous wireless information and power transfer (SWIPT) scheme for a downlink (DL) of a multi-user small single-cell orthogonal frequency division multiple access (OFDMA) network, fulfilling each user's minimum data-rate requirement.
Our extensive simulation results indicate that our proposed algorithm outperformed other algorithms in the literature by complying with the policy designed for resource allocation.

In chapter \ref{CHAP4}, we extended the system model for SWIPT-enabled single-cell OFDMA proposed in the previous chapter to a multi-cell network based on separated receiver architecture with the goal of maximizing system throughput while respecting the maximum power transfer allowed, the minimum of energy harvested for EH receivers, and the minimum amount of data-rate required for ID receivers.
The resulting problem, which jointly optimizes subcarrier assignment and power allocation, was mixed-integer non-linear programming (MINLP). 
This is intractable because of the multiplication of two variables, the binary subcarrier assignment variable, and the intertwingled interference term in the data-rate function.
We applied the majorization minimization (MM) approach to manage resource allocation policy in this complex problem.
We also analyzed the design of a low complexity algorithm with an upper bound for the interference term by imposing a limiting interference threshold in each subcarrier that can be controlled by the resource allocator to improve system performance.
Simulation results showed the excellent performance gain of our suggested algorithms.

Finally, in chapter \ref{CHAP5}, we described a novel harvesting technique at the receiver that is based on the receiver antenna selection for a multi-user multi-cell SWIPT OFDMA system with a co-located architecture. 
This we named a “generalized antenna switching technique”.
We then maximized energy efficiency (EE) as a performance metric in a two-layered approach to determine a resource allocation policy that optimizes subcarrier assignment, power allocation, and antenna selection with few constraints.
The underlying problem in this chapter was neither linear nor convex due to the fractional form of the objective function, the multiplication of variables, incorporating interference, and integer variables.
We relaxed the integer variable and applied the big-M formulation to make sure that relaxed variables take binary values.
After that, we used the MM approach based on difference of two convex functions (D.C.) programming to find a feasible solution for the inner problem, employing a first-order Taylor approximation to convexify the non-convex functions. 
Next we applied the Dinkelback algorithm to transform the objective function into a non-fractional function.
We observed that generalized antenna switching, also known as “antenna selection strategy”, increases the EE of the system by providing more degrees of freedom as a result of assigning resources via different antennas.
We concluded the chapter with simulation results that demonstrate the superiority of our proposed method. 

\vspace{-3mm}
\section{Future Work}
\vspace{-3mm}
The field of SWIPT in energy-constrained communication systems is a remarkably rich research discipline with great potential.
The following fundamental research directions could be pursued in future work.

\textbf{NOMA-based SWIPT:} 
 
Non-orthogonal multiple access (NOMA) has been suggested as one of the fundamental techniques for beyond fifth generation (5G) and the forthcoming sixth generation (6G) in order to enhance spectral efficiency (SE) while permitting some degree of multiple access interference at receivers by having users share the same spectrum.
NOMA schemes are designed to concurrently serve two or more users at the same base station (BS) or access point (AP) in a single orthogonal resource block.
They can be categorized into main two classes – single-carrier NOMA and multi-carrier NOMA.
The primary principle of single-carrier NOMA is allocating the same time/frequency resources to multiple users by utilizing distinct power levels (“power-domain” NOMA).
In the multi-carrier NOMA scheme, multiple users are multiplexed on different subcarriers by using different codes (sparse code multiple access) and patterns (pattern division multiple access) for each subcarrier~\cite{8452949}.
The successive interference cancellation technique is employed at the receiver end to eliminate expected interference and guarantee improved overall fairness and throughput. 
Most importantly, SE can be achieved in NOMA networks.
It is worth noting that the resource allocation design based on SWIPT-enabled NOMA cellular networks could provide both SE improvement through NOMA and EE improvement via SWIPT.
The very first attempt to maximize EE under multiple restraints and achieve energy-efficient resource management in SWIPT-enabled NOMA was conducted in \cite{8891923}.
Research on this topic has just begun to flourish, with many more potentials awaiting discovery~\cite{jalal2}.

\textbf{Massive MIMO- and mmWave-based SWIPT}:

Propagation losses of broadcasted radio waves may deteriorate SWIPT efficiency, which makes SWIPT particularly applicable to short-distance uses.
Massive multiple-input multiple-output (MIMO) and millimeter-wave (mmWave) technologies would enable a BS or an AP to reliably transfer power to energy-constrained users through ultra-sharp energy beams (that only concentrate transmission energy at certain points) as well as provide  diversity and multiplexing gain~\cite{6736761,6894453,7593259}.
Integrating SWIPT with massive MIMO and mmWave technologies can therefore help overcome SWIPT's deficiencies and further enhance performance in terms of achievable data-rate, overall SE, and EE.
Since the beam-width of the massive MIMO and mmWave based SWIPT system is very small due to the shorter wavelengths (and wider bandwidths), effective initial beam association, beam selection, and beam alignment algorithms are desirable. 
These topics were investigated in~\cite{7491259,8680660,8718514,8828094,8485639,8907878}.
Moreover, considering that the antenna selection technique in multiple antenna systems would significantly decrease the system's power consumption, an exciting research direction would be verifying its applicability to massive MIMO- and mmWave-based SWIPT-enabled networks. 

\textbf{UAV-based SWIPT}:

Unmanned Aerial Vehicles (UAVs) or remotely piloted aircraft have attracted significant attention.~UAVs can improve traditional cellular communication as relays and BSs.
Compared to conventional cellular communication, some of the advantages of UAVs include increased wireless connectivity, high maneuverability, extensive coverage, simple implementation, and cost-effective communication – with increasing affordability.
An unobstructed line-of-sight (LoS) link might be able to help further improve the reliability of wireless communication systems.
UAVs could be effectively integrated with SWIPT by virtue of the dominant presence of LoS connections – acting as flying BSs or APs – to provide a variety of services in areas lacking infrastructure.
However, a major drawback to UAV-based applications is that UAV devices are typically power-hungry devices with limited energy storage performing operations in flight. 
Deployment, trajectory design, and resource allocation must be improved for the efficient utilization of UAVs within a reasonable range of energy-constrained  receivers~\cite{8876702,8892608,8915758,8918344,8941314,8976230,9045325}.

\textbf{MEC-based SWIPT}:

Mobile edge computing (MEC) is a new paradigm for facilitating the real-time implementation of computationally heavy tasks for massive low-power devices (e.g., sensors) by providing cloud-like computing at the edge of mobile networks.
MEC offers solutions for resource-limited wireless devices with high-quality wireless services and multi-media applications because it is capable of offloading some or all the computing tasks of resource-limited devices to nearby APs or BSs, where integrated MEC servers can remotely handle their tasks~\cite{8016573}.
Although MEC covers the computational aspect of massive computation-intensive devices via local computing or offloading, one big challenge remains: providing a sustainable and cost-effective energy supply to resource-limited energy-constrained wireless devices.
Implementing resource allocation policies for mobile users with high computing tasks via MEC and SWIPT would be an interesting topic to explore, with MEC-based SWIPT permitting ubiquitous computing~\cite{8642372,8845134,8904282,8907790,8952620,8970293,9057476}.
\newpage\thispagestyle{empty}\mbox{}  
\newpage\thispagestyle{empty}\mbox{}  
\bibliography{referenties}
\newpage\thispagestyle{empty}\mbox{}  
\newpage\thispagestyle{empty}\mbox{}  
\backmatter
\end{document}